 \title{Bandwidth selection for kernel density estimators of multivariate level sets and highest density regions} 
 \author[1]{Charles R. Doss\thanks{Email: \href{mailto:cdoss@umn.edu}{cdoss@umn.edu}. Supported in part by NSF Grant DMS-1712664}}
 \author[1]{Guangwei Weng\thanks{Email:
     \href{mailto:wengx076@umn.edu}{wengx076@umn.edu}. Supported in part by a University of Minnesota Grant-in-Aid grant}}
 \affil[1]{School of Statistics\protect\\ University of
   Minnesota\protect \\
   Minneapolis,  MN 55455}
\begin{document}

\maketitle
\begin{abstract}
    We consider bandwidth matrix selection for kernel density estimators (KDEs)
    of density level sets in $\RR^d$, $d \ge 2$.  We also consider estimation
    of highest density regions, which differs from estimating level sets in
    that one specifies the probability content of the set rather than
    specifying the level directly. This complicates the problem.  Bandwidth
    selection for KDEs is well studied, but the goal of most methods is to
    minimize a global loss function for the density or its derivatives.  The
    loss we consider here is instead the measure of the symmetric difference of
    the true set and estimated set.  We derive an asymptotic approximation to
    the corresponding risk.  The approximation depends on unknown quantities
    which can be estimated, and the approximation can then be minimized to
    yield a choice of bandwidth, which we show in simulations performs well.
    We provide an \proglang{R} package \pkg{lsbs} for implementing our procedure.
\end{abstract}

{\hypersetup{linkcolor=black}
  \tableofcontents
  }

\newpage

\section{Introduction}

As computing power has become greater and as data sets have become simultaneously larger and more complicated, demand for statistical methods that are increasingly flexible and data driven has increased.  Two related methods for capturing the complex structure of a data set from a true density $f_0$ are to estimate either the density's {\em level sets} (LS's)  or the density's {\em highest-density regions} (HDR's).  (We will explain the difference between estimating LS's and estimating HDR's shortly.)
For a density function $f_0$ defined on
$\bb{R}^d$ and a given constant $c> 0$, the $c$-level set (sometimes known as a {\em density contour})  of $f_0$ is
$  \beta(c):=\{\bx\in\bb{R}^d:f_0(\bx)= c\},$
and the corresponding super-level set is
\begin{align}
  \label{eq:suplevel-t}
  \mc{L}(c):=\{\bx\in\bb{R}^d:f_0(\bx)\ge c\}.
\end{align}
Under some basic regularity conditions, the density super-level set is a set of minimum volume having $f_0$-probability at least $\int_{ \mc{L}(c)}f_0(\bx)\,d\bx$ \citep{garcia2003level}.  For this reason, perhaps the most common use for HDR estimation occurs in Bayesian statistics.  An HDR of a posterior density is a so-called (minimum volume) credible region, which is one of the most fundamental tools in Bayesian statistics.  There are quite a wide range of other applications for estimation of density LS's or density HDR's and these estimation problems have received increasing attention in the statistics and machine learning literatures in recent years.  (We consider estimation of density level sets and estimation of density super-level sets to be equivalent tasks.)  The applications of LS or HDR estimation include outlier/novelty detection \citep{Lichman:2014jn,Park2010:cx}, discriminant analysis \citep{MR1765618} and clustering analysis \citep{hartigan1975clustering,rinaldo2010generalized, cuevas2001cluster}.  LS estimation is one of the fundamental tools in estimation of cluster trees and persistence diagrams, used in topological data analysis  (\cite{Chen:2017wn}, \cite{Wasserman:2016ua}).

A common way to estimate the density super-level set $\mc{L}(c)$ based on independent and identically distributed (i.i.d.) $\bX_1,\ldots, \bX_n \in \RR^d$ is to replace the density function in
\eqref{eq:suplevel-t} with a kernel density estimator (KDE)
\begin{align}
  \label{eq:kde-def}
  \ffnH (\bx) :=
  \inv{n} \sum_{i=1}^n K( \bH^{-1/2}(\bx-\bX_i)) |\bH|^{-1/2},
\end{align}
where 
$\bH \in \RR^{d\times d}$ is a symmetric positive definite bandwidth matrix and $K$ is a kernel function.  This gives us the so-called plug-in estimator
\begin{align}
  \label{eq:plug-in-t}
  \widehat{\mc{L}}_{n,\bH}(c):=\{\bx\in\bb{R}^d:\ffnH(\bx)\ge c\}.
\end{align}
We now explain the difference between ``LS estimation'' and ``HDR estimation.'' Often the level of interest is only specified indirectly through a given probability $\tau\in(0,1)$ which  yields a level
$f_{\tau,0}:=\inf \{y > 0 :\int_{\bb{R}^d}f_0(\bx)\one_{\{f(\bx)\ge   y\}}\,d\bx\le 1-\tau\}$.
Then the corresponding super-level set is
\begin{align}
  \label{eq:suplevel-ft}
  \mc{L}(f_{\tau,0}):=\{\bx\in\bb{R}^d:f_0(\bx)\ge f_{\tau,0}\},
\end{align}
and the corresponding plug-in estimators are
$$\hat{f}_{\tau,n}:=\inf \lb y\in(0,\infty):\int_{\bb{R}^d}\ffnH(\bx)\one_{\{\ffnH(\bx)\ge
  y\}}\,d\bx\le 1-\tau \rb$$ and
\begin{align}
  \label{eq:plug-in-ft}
  \widehat{\mc{L}}_{n,\bH}(\hat{f}_{\tau,n}):=\{\bx\in\bb{R}^d:\ffnH(\bx)\ge \hat{f}_{\tau,n}\}.
\end{align}
Estimating \eqref{eq:suplevel-ft}
based on specifying $\tau$ is known as the {\em HDR estimation} problem; this has extra complication over the LS estimation problem because $f_{\tau,0}$ has to be estimated rather than being fixed in advance.  Thus we use the phrase {\em LS estimation} to mean estimation  of
\eqref{eq:suplevel-t} with $c$ fixed in advance (equivalently, estimation of
\eqref{eq:suplevel-ft} with $f_{\tau,0}$ fixed).  When we use the phrase {\em HDR estimation} we mean estimation of
\eqref{eq:suplevel-ft} with $\tau$ (but not $f_{\tau,0}$) fixed in advance.
Thus, LS's and HDR's are mathematically equivalent, but estimating LS's and estimating HDR's are statistically different tasks.

Early work on LS
or HDR estimation includes \cite{Hartigan:1987hx},
\cite{Muller:1991vn},
\cite{Polonik:1995kr},
\cite{Tsybakov:1997jv}, and
\cite{Walther:1997eu}.
Some recent work has focused on asymptotic properties of KDE plug-in estimators, including results about consistency, limit distribution theory, and statistical inference.
\citet{Baillo:2001fe} show that the probability content
of the plug-in estimator converges to the probability  of the true
super-level
set as the sample size tends to infinity.
\citet{Baillo:2003ds} proves the
strong consistency of the plug-in  estimator under an integrated
symmetric difference
metric.
\citet{Cadre:2006db}  further obtains the rate of
convergence of  the plug-in estimator when the loss is given by  the  generalized symmetric
difference  of sets.
\citet{Mason:2009dk} give the asymptotic normality of estimated super-level
sets under the same metric as \citet{Cadre:2006db}. \citet{Chen:2015uj} find a more practically usable
limiting distribution of the plug-in estimator for LS's by using Hausdorff
distance as the metric for set difference and provide methods for
constructing confidence regions for LS's based on this limiting distribution.
\cite{Jankowski:2012wv} and \cite{Mammen:2013hs} also investigate the formation of confidence regions for LS's.

It is well known that KDE's are sensitive to the choice of the bandwidth (matrix).  The optimal bandwidth (matrix) depends on the objective of estimation.  There are many tools that have been developed for selecting the bandwidth when $d=1$ or the bandwidth matrix when $d > 1$; these include minimizing an asymptotic approximation to an appropriate risk function, as well as computational methods such as the bootstrap or cross-validation, and are largely focused on globally estimating the density or its derivatives well. A good summary of those methods can be found in \citet{Wand:1995kv}, \citet{sain1994cross}, or \cite{Jones:1996vg}.

However, \citet[page 505]{Duong:2009ek} state that, ``a number of practical issues in highest density region estimation, such as good data-driven rules for choosing smoothing parameters, are yet to be resolved.''  \citet{Samworth:2010cj} is the only published work we know of that investigates the problem of selecting bandwidths for HDR estimation (and we know of no published works that directly investigate bandwidth selection for LS estimation).  \citet{Samworth:2010cj} study the KDE plug-in estimator when $d=1$, and show by simulation that the kernel density estimator aiming for HDR estimation can be very different from the one aiming for global density estimation. They also propose an asymptotic approximation to a risk function that is suitable for HDR estimation and a corresponding bandwidth selection procedure based on the approximation, all when $d=1$.

In this paper, we consider the multivariate setting, where $d\ge
2$. In this case, we are estimating a level set manifold, which
involves some added technical difficulties over the case $d=1$ (in
which case the level set is a finite point set), but we believe that
LS or HDR estimation when $d\ge 2$ is of great practical interest because of the large variety of complicated structures that multivariate level sets can reveal.  We derive asymptotic approximations to a risk function for LS estimation and to a risk function for HDR estimation.  We believe that our approximations and derivations will be very valuable for any future procedures that do (either) LS or HDR bandwidth selection.  Our calculations shed light on the important quantities relating to LS or HDR estimation.  Furthermore, we develop a ``plug-in'' bandwidth selector method based on minimizing an estimate of the LS or the HDR risk approximation.  This approach can be used to optimize over all positive definite bandwidth matrices or over restricted classes of matrices (e.g., diagonal ones).  Our theory applies for all $ d \ge 2$.  We have developed code to implement our bandwidth selector when $d=2$.  It is straightforward to implement a numeric approximation to Hausdorff integrals that appear in our approximations (see Subsection~\ref{subsec:notation-assumptions} for discussion of the Hausdorff measure) when $d=2$.  It is less immediately obvious how to implement such approximations when $d \ge 3$, although
we indeed believe that implementation is feasible for such approximations.
In fact, we believe that computational feasibility is an important benefit of using a closed-form approximation to the risk, particularly in the multivariate setting that we consider in this paper. As will be discussed later in the paper, many simple problems in the univariate setting are more complicated in the multivariate setting and must be solved by Monte Carlo.  Thus performing bootstrap or cross-validation, which involves nested Monte Carlo computations, quickly becomes infeasible.


During the development of the present paper we became aware of the recent
related work, \cite{Qiao:2017wq}.
\cite{Qiao:2017wq} also considers
problems about bandwidth selection for KDE's
in settings related to level set estimation.
However, the main focus of
\cite{Qiao:2017wq} is somewhat different than the one here.
In fact, \cite{Qiao:2017wq} states
that bandwidth selection for multivariate HDR estimation is ``far from
trivial'' and does not consider this problem.
We will discuss the  approach taken by \cite{Qiao:2017wq} again in the
\nameref{sec:conclusion} section.


The structure of the paper is as follows.
We present  our two asymptotic risk approximation theorems, as well as corollaries about the risk approximation minimizers, in Section~\ref{sec:asymptotic-risk-expansions}.
We present methodology to select bandwidth matrices  
in
Section~\ref{sec:methodology}.
In Section~\ref{sec:simulations-data}
we study the performance of our bandwidth selector in  simulation
experiments as well as in analysis of two real data sets, the Wisconsin Breast Cancer
Diagnostic data and the Banknote Authentication data.
We give concluding discussion in Section~\ref{sec:conclusion}.
Proofs of the main results are given in Appendix~\ref{app:A-main-proofs},
and further details, technical results, and intermediate lemmas are given
in Appendix~\ref{app:additional-thms}
and Appendix~\ref{sec:proofs-intermediate}.
Some notation and assumptions are presented in Subsections~\ref{subsec:notation-assumptions} and \ref{subsec:assumptions}.

\section{Asymptotic risk results}
\label{sec:asymptotic-risk-expansions}

\subsection{Notation}
\label{subsec:notation-assumptions}

We use the following notation throughout.  For a density function
$f_0$ on $\RR^d$ and a Borel measurable set $A\subset \RR^d$, define
the measure $\mu_{f_0}(A)=\int_Af_0(\bx)\,d\bx$.
For a function $f$ on
$\RR^d$, a measure $P$, and $1 \le p < \infty$, we let $\| f\|_{p,P}^p
= \int_{\RR^d} |f(\bs{z})|^p dP(\bs{z})$ if this quantity is finite.
If $P$ is Lebesgue measure we abbreviate $\| f \|_{p,P} \equiv \|
f\|_p$, $ 1 \le p < \infty$.  Let $\| f \|_{\infty} = \sup_{\bs{z} \in
  \RR^d} |f( \bs{z}) |$, and for a function $g$ with vector or matrix
values, that is, $g:\bb{R}^d\rightarrow \bb{R}^{p\times q}$, let
$\|g\|_{\infty}=\max_{1\le i\le p,1\le j\le q}\|g_{ij}\|_{\infty}$.
We let $\Vert {\bs x} \Vert = ( \sum_{i=1}^d x_i^2 )^{1/2}$ for $\bx
\in \RR^d$.  Let $\nabla f$ be the gradient (column) vector of $f$ and
let $\nabla^2 f$ be the Hessian matrix $\lp \derivtwo{x_i}{x_j}[f]
\rp_{i,j}$.  Let $\cH$ be $d-1$ dimensional Hausdorff measure
\citep{Evans:2015uy}.  The Hausdorff measure is  useful for measuring the volume of lower dimensional sets, like manifolds, embedded in a higher dimensional
ambient space.
Let $\lambda$  denote Lebesgue measure.
Recall that $\beta(c):=\{\bx\in\RR^d:f_0(\bx)=c\}$ and
$\mc{L}(c):=\{\bx\in\RR^d:f_0(\bx)\ge c\}$, we let $\mc{L}_{\tau}\equiv \mc{L}(f_{\tau,0})$ and
$\widehat{\mc{L}}_{\tau,\bH}\equiv\widehat{\mc{L}}_{\bH}(\widehat{f}_{\tau,n})$. We
generally use bold to denote vectors. We use ``$\equiv$'' to denote
notational equivalences and ``$:=$'' or ``$=:$'' for definitions.
Any integral whose domain is not specified explicitly is taken over
all of $\RR^d$.  We will occasionally omit the integrating
variable when there's no confusion in doing so. We use $\mc{S}$ to denote
the set of all $d\times d$ symmetric positive definite matrices. For a symmetric matrix $\bs{A}$, we use
$\lambda_{\max}(\bs{A})$ and $\lambda_{\min}(\bs{A})$ to denote the
largest and the smallest eigenvalues of $\bs{A}$ respectively. In this paper, we will use the $f_0$-probability volume of the symmetric difference
as the distance between the true  set and its estimator. We
use $\Delta$ to denote the symmetric difference operation between two
sets: for two sets $A$ and $B$, $A\Delta B :=(A\cup B)\setminus
(A\cap B)$ where ``$\setminus$'' is set difference. Figure~\ref{fig:symm-diff} shows  the
symmetric difference between the $0.02$ super-level set of standard
bivariate normal distribution and an ``estimated'' super-level set. We let $A^c$ be the complement of a set $A$.  For $\delta > 0$ and $\bx \in \RR^d$, let $B(\bx, \delta) := \lb \bs y \in \RR^d \colon \| \bs y - \bx \| \le \delta \rb$, and
for a set $A$, let $A^\delta := \cup_{\bx \in A} B(\bx, \delta)$.
\begin{figure}[htbp]
  \centering
  \includegraphics[width=0.5\textwidth]{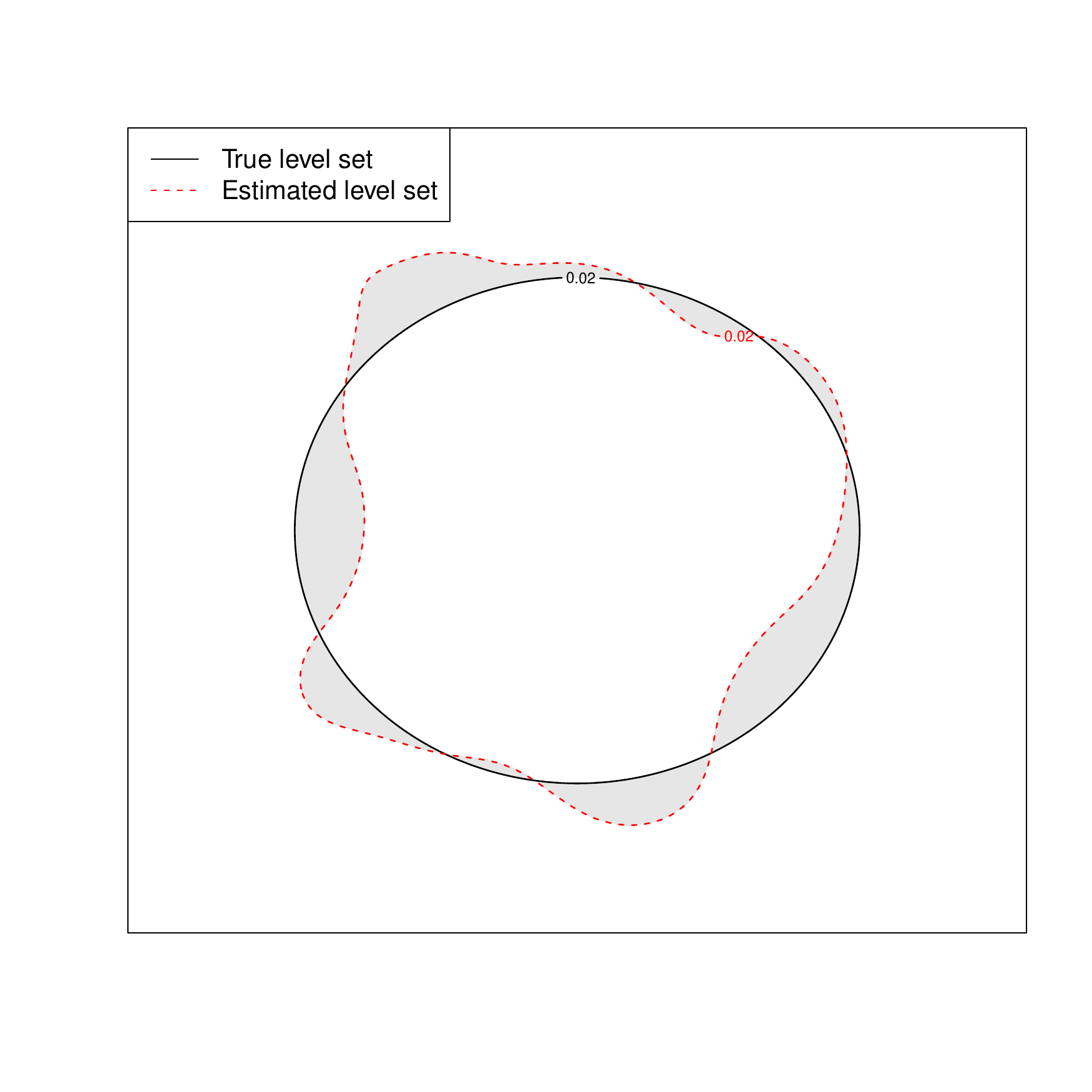}
  \caption{Symmetric difference between the true level set and an estimated level set. The solid black line is the boundary of the true level set and the dashed red line is the boundary of the estimated level set. The shaded area is the symmetric difference of the two sets.  \label{fig:symm-diff}}
\end{figure}

\subsection{Assumptions}
\label{subsec:assumptions}

To derive our asymptotic
expansion, we make the following basic assumptions on the underlying density,
kernel function and bandwidth matrix.
\begin{assumption}{D1a}
  \label{assm:DA-ls}
  $\phantom{blah}$
  \begin{enumerate}
  \item \label{assm:DA:item1}
    Let ${\bX}_1, \ldots, \bX_n$ be  i.i.d.\ from a bounded density $f_0$ on $\RR^d$, $d \ge 2$.
  \item\label{assm:DA:item2} Fix $ \inf_{x \in \RR^d} f_0(x) < c < \| f_0 \|_\infty$.  There exists a
    constant $a>0$ such that
    \begin{enumerate*}
    \item  \label{item:DA-ls-derivs} $f_0$ has two 
      bounded continuous partial derivatives over
      $U_a := \{\bs{x}:c-a\le f_0(\bx)\le c+a\}$,
    \item \label{item:DA-ls-grad}
      $\inf_{U_a}\|\nabla f_0\|>0$, and
    \item \label{item:DA-ls-localization} $U_a$ is contained in $\beta(c)^\delta$ for some $\delta > 0$.
    \end{enumerate*}
  \end{enumerate}
\end{assumption}

\begin{assumption}{D1b}
  \label{assm:DA-hdr}
  $\phantom{blah}$
  \begin{enumerate}
  \item \label{assm:DA:item1}
    Let ${\bX}_1, \ldots, \bX_n$ be  i.i.d. from a bounded density $f_0$ on $\RR^d$, $d \ge 2$.
  \item\label{assm:DA:item2} The density $f_0$ has two bounded continuous partial
    derivatives  for all $\bx\in\RR^d$.
  \item \label{assm:DA:item3} There exists a constant $a > 0$ such that
    $U_a:= \{\bx:f_{\tau,0}-a\le f_0(\bx)\le f_{\tau,0}+a\}$ satisfies
    \begin{enumerate*}
    \item \label{item:DA-hdr-grad} $\inf_{U_a}\|\nabla f_0\|>0$,
      and
    \item \label{item:DA-hdr-localization}
      $U_a$ is contained in  $\beta_\tau^\delta$ for some $\delta > 0$.
    \end{enumerate*}
    \begin{mylongform}
      \begin{longform}
      \item \label{assm:DA:item 4}For any $c>0$, $\lambda(\{\bx\in \RR^d:f_0(\bx)=c\})=0$, where
        $\lambda$ is the Lebesgue measure in $\RR^d$. (indicated by 3, see \citet{NunezGarcia:2003ci})
      \end{longform}
    \end{mylongform}
  \end{enumerate}
\end{assumption}

Assumption~\ref{assm:DA-ls} will be used for LS estimation and
Assumption~\ref{assm:DA-hdr} for HDR estimation.  We need the stronger global twice differentiability assumption in HDR estimation because of the need to estimate $\fftau$ (which involves estimating the $f_0$-probability  content of $\mc{L}_\tau$).  The global twice differentiability assumption in Assumption~\ref{assm:DA-hdr} could be weakened to an assumption of twice differentiability either on  $\mc{L}_\tau^\delta$ or on $(\mc{L}_\tau^c)^\delta$.

Assumptions~\ref{assm:DA-ls} and
\ref{assm:DA-hdr}
entail that the gradient of $f_0$ is nonzero on (a neighborhood of) the level set of interest.  This implies by the preimage theorem that the level set $\beta$, taken to be either $\beta(c)$ or $\beta_\tau$, is a $(d-1)$-dimensional (boundaryless) manifold
\citep{Guillemin:1974ti}. 
The only additional assumption we need is one of compactness, which rules out only very pathological cases, where $f_0$ has ``spikes'' of increasingly small width going out towards infinity.

\begin{assumption}{D2}
  \label{assm:BA}
  Let $ \inf_{x \in \RR^d} f_0(x) < c < \| f_0 \|_\infty$ or $0 < \tau < 1$ be as in
  Assumptions~\ref{assm:DA-ls} and \ref{assm:DA-hdr}.  Assume that $\beta(c)$
  or $\beta_\tau$ is compact.
\end{assumption}


Our assumption on the kernel will come in the form of a so-called
{\em Vapnik-Chervonenkis (VC)} \citep{Dudley:1999dc} type of assumption.
For a
metric space $(T,d)$ and $\tau > 0$, the covering number $N( T, d, \tau)$ is
the smallest number of balls of radius $\tau$ (and centers which may or may
not be in $T$) needed to cover $T$.  If a class of functions ${\mc F}$ is a VC class, we have that
\begin{equation}
  \label{eq:VC-defn}
  \sup_P  N( {\mc F}, \| \cdot \|_{2,P}, \tau  \| F \|_{2, P} ) \le \lp \frac{A}{\tau} \rp^{v}
\end{equation}
for some positive $A, v$, where the sup is over all probability measures $P$, and where $F$ is the
envelope of ${\mc F}$ meaning $\sup_{f \in {\mc F}} |f| \le F$ (Chapter 2.6,
\cite{vanderVaart:1996tf}). We will simply directly assume that the needed
classes satisfy \eqref{eq:VC-defn}.  Thus our assumptions are as follows.
\begin{assumption}{K}
  \label{assm:KA}
  $\phantom{blah}$ 
  \begin{enumerate}[leftmargin=*] 
  \item \label{assm:KA-1}
    The kernel $K$ is an everywhere continuously differentiable  bounded density
    on $\RR^d$ with bounded partial derivatives.
    Both $\int K^2 \, d\lambda$ and
    $\int (\grad K) (\grad K)' \, d\lambda$ are finite or have finite entries,
    respectively.  Assume
    $\int K(\bx)\bx\,d\bx=\boldsymbol{0}$, $\int
    \bx\bx'K(\bx)\,d\bx=\mu_2(K)\boldsymbol{I}$, where
    $\boldsymbol{I}$ is the identity matrix and $\mu_2(K)=\int
    x_i^2K(\bx)\,d\bx$ is independent of $i$.
  \item  \label{assm:KA-VC}
    Assume that  \eqref{eq:VC-defn} is satisfied with $\mc{F}$ taken
    to be
    \begin{align}
      \label{eq:assumption-VC-K}
      & \lb K \lp  \bH^{-1/2}  (t - \cdot) \rp: t \in \RR^d,  \bH \in {\mc S} \rb
        \qquad       \text{ and }
      \\
      \label{eq:assumption-VC-gradK}
      & \lb \| \grad K \big(  \bH^{-1/2}  (t - \cdot) \big) \|
        : t \in \RR^d,  \bH \in {\mc S} \rb.
    \end{align}
  \end{enumerate}
\end{assumption}
\noindent Let $R(K) := \int K^2 d\lambda$ and let $R( \grad K)$ be the largest eigenvalue of $\int (\grad K) (\grad K)' \, d\lambda$.

\begin{assumption}{H}
  \label{assm:HA}
  $\phantom{blah}$
  \begin{enumerate}[leftmargin=*]
  \item\label{assm:HA-1}   Let $\bH \equiv \bH_n \in {\mc S}$, such that for some $c > 0$,
    $| \bH | \searrow 0$,
    $n |\bH|^{1/2} / \log |\bH|^{-1/2} \to \infty$,
    $\log \log n / \log |\bH|^{-1/2} \to 0$,
    as $n \to \infty$,
    and $|\bH_n|^{1/2} \le c | \bH_{2n}|^{1/2}$.
  \item \label{assm:HA-2}  Assume that
    $\lambda_{\max}(\bH)=O\{\lambda_{\min}(\bH)\}$ and
    $n |\bH|^{1/2} \lambda_{\min}(\bH) /
    \log |\bH|^{-1/2} \to \infty$
    and    $\lambda_{\max}=O(n^{-2/(4+d)})$ as $n\to\infty$.
    \begin{mylongform}
      \begin{longform}
        $\lambda_+^{(d+4)/4}=o\{(\log n/n)^{1/2}\}$,     $\lambda_+^{(d+8)/2}=O(n^{-1/2}\log n)$
      \end{longform}
    \end{mylongform}
  \end{enumerate}
\end{assumption}

\noindent Here, $a_n \searrow 0$ means that $a_n$ decreases
monotonically to $0$.
\begin{mylongform}
  \begin{longform}
    From the above assumption, we know
    $\lambda_{\max}(\bH)$ and $\lambda_{\min}(\bH)$ are of the same
    order. So we make  further assumptions on $\bH$  through
    $\lambda_{\max}(\bH)$. Let $\lambda_1\equiv\lambda_{\max}(\bH)$, and
    then we know $|\bH|\sim \lambda_1^d$ and $\tr(\bH)\sim \lambda_1$. We
    list out the assumptions we need for $\bH$ during the proof.
    \begin{enumerate}
    \item $n|\bH|^{1/2}\to \infty$.   This is equivalent to
      $n\lambda_1^{d/2}\to \infty$.
    \item For the proof of Lemma~\ref{lem:hdr-step2}, we require $\log
      n=O(n|\bH|^{1/2})$. So this requires $\liminf
      n\lambda_1^{d/2}/\log n>0$.
    \item In the proof of Lemma~\ref{lem:hdr-step3}, we need
      $\lambda_1=o(\delta_n)$ while $\log
      n=O(\delta_n^2n|\bH|^{1/2})$.
    \item When we apply Lemma~\ref{lem:COV-approx} before
      Lemma~\ref{lem:hdr-step5}, we need
      $\delta_n^2=o(\inv{\sqrt{n|\bH|^{1/2}}})$.
    \end{enumerate}

  \end{longform}
\end{mylongform}
\begin{mylongform}
  \begin{longform}
    We can see the above requirement are satisfied by our assumption
    on $\bH$. Since for any $n$ and  $\bH$, we can always let
    \begin{align*}
      \delta_n^2\sim \frac{\log n}{n\lambda_1^{d/2}}=\frac{\log n}{\sqrt{n\lambda_1^{d/2}}}\frac{1}{\sqrt{n\lambda_1^{d/2}}},
    \end{align*}
    then by our assumption on $\lambda_{-}$, $\log
    n=O(\delta_n^2n|\bH|^{1/2})$ and
    $\delta_n^2=o(\inv{\sqrt{n|\bH|^{1/2}}})$ are satisfied. We need
    to verify that $\lambda_1=o(\delta_n)$, which can be shown by
    \begin{align*}
      \frac{\lambda_1}{\delta_n}=\frac{\lambda_1}{\sqrt{\log
      n/(n\lambda_1^{d/2})}}=\frac{\lambda_1^{(d+4)/4}}{\sqrt{\log
      n/n}}\rightarrow 0,
    \end{align*}
    by our assumption of $\lambda_+$.
  \end{longform}
\end{mylongform}
Assumptios~\ref{assm:DA-ls} and \ref{assm:DA-hdr} are standard in the KDE literature
(see, e.g., page 95 of \cite{Wand:1995kv}). 
Note that Assumption~\ref{assm:DA:item3} of Assumption~\ref{assm:DA-hdr}
implies that there exists a constant $L > 0$ such that  for $\delta>0 $ small enough that
$\lambda( f_0^{-1}( [\fftau - \delta, \fftau+\delta])) \le L \delta $;
this is a standard type of assumption that appears in the level set estimation literature \citep{Polonik:1995kr}.
Assumption~\ref{assm:BA} is not very limiting and only rules out pathological cases.

Our Assumption~\ref{assm:KA} 
on the kernel function is not restrictive and all of the conditions imposed are fairly standard.  For
Assumption~\ref{assm:KA-1} 
see, e.g., page 95 of \cite{Wand:1995kv} where similar conditions are imposed.
Assumption~\ref{assm:KA-VC} is also fairly standard in the KDE literature
(e.g., \cite{Chen:2015uj} uses similar conditions in the context of
inference for level sets).  This assumption is needed to apply the results
of \cite{Gine:2002jc} to get almost sure convergence rates of $\ffnH$ and $\grad \ffnH$.  Assumption $K_1$ of \cite{Gine:2002jc} (or
Assumption~K, page 2572, of \cite{MR2078551}) is an easy-to-verify
condition that implies Assumption~\ref{assm:KA-VC} holds, and shows that
Assumption~\ref{assm:KA-VC} holds for
Gaussian kernels and for many compactly supported kernels.

The expansions given in our Theorem~\ref{thm:levelset} and \ref{thm:hdr}
hold for the range of bandwidths given in Assumption~\ref{assm:HA}.  This
is sufficient to develop a practical bandwidth selector, since larger or
smaller bandwidths can be easily ruled out.  See
Corollaries~\ref{cor:ls-oracle-bandwidth} and \ref{cor:hdr-oracle-bandwidth}.

\subsection{Asymptotic risk expansions}

Our main results are stated in the following
two theorems.  The first gives  the asymptotic risk expansion for level set estimation.
Let $\Phi(\cdot)$ and $\phi(\cdot)$ denote the
standard normal distribution function and density function, respectively.

\begin{theorem}
  \label{thm:levelset}
  For given constant $c$ with $ \inf_{\bx \in \RR^d} f_0(\bx) <c<\|f_0\|_{\infty}$,  let Assumptions
  \ref{assm:KA}, \ref{assm:HA}, \ref{assm:DA-ls} and \ref{assm:BA} hold. Moreover, the kernel function $K$ has bounded
  support. Then
  \begin{equation*}
    \bb{E}\ls \mu_{f_0}\{\mc{L}(c)\Delta\widehat{\mc{L}}_{\bH}(c)\}\rs
    =     \LS(\bH)      + o \lb (n |\bH|^{1/2})^{-1/2}+\tr(\bH)\rb
  \end{equation*}
  as $n\rightarrow \infty$,
  where
  \begin{align*}
    \LS(\bH) :=
    \frac{c}{\sqrt{n|\bH|^{1/2}}} \int_{\beta(c)} \frac{2\phi(B_{\bx}(\bH))+2\Phi(B_{\bx}(\bH))B_{\bx}(\bH) - B_{\bx}(\bH)}{-A_{\bx}}\,d\mc{H}(\bx),
  \end{align*}
  \begin{equation}
    \label{eq:27}
    A_{\bx} := -\frac{\|\nabla f_0(\bx)\|}{\sqrt{R(K)c}},
    \quad \text{ and } \quad
    B_{\bx}(\bH) := -\frac{\sqrt{n|\bH|^{1/2}}D_1(\bx,\bH)}{\sqrt{R(K)c}},
  \end{equation}
  with $D_1(\bx,H):=\frac{1}{2}\mu(K)\tr(\bH\nabla^2f_0(\bx))$.
\end{theorem}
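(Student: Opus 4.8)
The plan is to reduce the set-symmetric-difference risk to a one-dimensional problem along the normals of $\beta(c)$, to linearize the boundary of $\widehat{\mc{L}}_{\bH}(c)$, and then to apply a uniform Gaussian approximation to $\ffnH - f_0$ on $\beta(c)$. I would begin by localizing: deterministically $\mc{L}(c)\Delta\widehat{\mc{L}}_{\bH}(c)\subseteq\{\bx:|f_0(\bx)-c|\le\|\ffnH-f_0\|_\infty\}$, since $\bx\in\mc{L}(c)\setminus\widehat{\mc{L}}_{\bH}(c)$ forces $0\le f_0(\bx)-c<f_0(\bx)-\ffnH(\bx)$, and symmetrically. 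By Assumption~\ref{assm:DA-ls} (specifically the lower bound on $\|\nabla f_0\|$ near $\beta(c)$, which via the coarea formula and Assumption~\ref{assm:BA} yields a Polonik-type bound $\lambda(\{|f_0-c|\le\eta\})\le L\eta$ for small $\eta$) together with boundedness of $f_0$, the risk is always at most $\|f_0\|_\infty L\,\|\ffnH-f_0\|_\infty$. Using the a.s.\ uniform rate and exponential tail bound for $\ffnH$ obtained from \cite{Gine:2002jc} under Assumptions~\ref{assm:KA} and \ref{assm:HA}, I would then fix $\delta_n\to 0$ of the order of the bias plus the uniform stochastic rate so that, by Cauchy--Schwarz and super-polynomial smallness of $\PP(\|\ffnH-f_0\|_\infty>\delta_n)$, the contribution of that event to the expected risk is $o((n|\bH|^{1/2})^{-1/2}+\tr(\bH))$; on its complement $\mc{L}(c)\Delta\widehat{\mc{L}}_{\bH}(c)$ lies in a shrinking tube around $\beta(c)$, which by the preimage theorem and Assumptions~\ref{assm:DA-ls},~\ref{assm:BA} is a compact $C^2$ hypersurface.

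On this tube I would use the coordinates $(\bs{s},t)\mapsto\bs{s}+t\bs{u}(\bs{s})$ with unit normal $\bs{u}(\bs{s}):=\nabla f_0(\bs{s})/\|\nabla f_0(\bs{s})\|$, a $C^1$ diffeomorphism onto a neighborhood of $\beta(c)$ with Jacobian $J(\bs{s},t)=1+O(t)$ uniformly. Along each ray $t\mapsto\bs{s}+t\bs{u}(\bs{s})$, both $f_0-c$ and $\ffnH-c$ are strictly increasing near $t=0$, with zeros at $t=0$ and at a unique small $t=\widehat{t}(\bs{s})$ solving $\ffnH(\bs{s}+t\bs{u}(\bs{s}))=c$ (existence on the good event follows since $\nabla\ffnH\approx\nabla f_0\neq 0$), so the symmetric difference meets the ray exactly in the segment between $0$ and $\widehat{t}(\bs{s})$. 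The coarea/tubular formula then gives $\mu_{f_0}\{\mc{L}(c)\Delta\widehat{\mc{L}}_{\bH}(c)\}=\int_{\beta(c)}(c+O(\widehat{t}(\bs{s})))\,|\widehat{t}(\bs{s})|\,d\mc{H}(\bs{s})$ (using $f_0(\bs{s})=c$). A first-order Taylor expansion of $\ffnH$ along the ray, with the a.s.\ rate $\|\nabla\ffnH-\nabla f_0\|_\infty\to 0$ on a neighborhood of $\beta(c)$, yields $\widehat{t}(\bs{s})=-(\ffnH(\bs{s})-f_0(\bs{s}))/\|\nabla f_0(\bs{s})\|+r_n(\bs{s})$ with $r_n(\bs{s})$ quadratic in $(\ffnH-f_0,\nabla\ffnH-\nabla f_0)$ at $\bs{s}$, so $\EE|r_n(\bs{s})|=O((n|\bH|^{1/2})^{-1}+(\tr\bH)^2)$ uniformly. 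Taking expectations (Tonelli, $1$-Lipschitzness of $|\cdot|$) and using $\delta_n^2=o((n|\bH|^{1/2})^{-1/2})$ under Assumption~\ref{assm:HA}, the problem reduces to evaluating $\int_{\beta(c)}\frac{c}{\|\nabla f_0(\bs{s})\|}\,\EE|\ffnH(\bs{s})-f_0(\bs{s})|\,d\mc{H}(\bs{s})$ up to a remainder of order $o((n|\bH|^{1/2})^{-1/2}+\tr\bH)$.

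To evaluate the integrand, write $\ffnH(\bs{s})-f_0(\bs{s})=\sigma_n(\bs{s})(Z_n(\bs{s})-B_{\bs{s}}(\bH))+o(\tr\bH)$, where $Z_n(\bs{s}):=(\ffnH(\bs{s})-\EE\ffnH(\bs{s}))/\sigma_n(\bs{s})$, $\sigma_n^2(\bs{s}):=\Var\ffnH(\bs{s})=R(K)c/(n|\bH|^{1/2})\,(1+o(1))$ on $\beta(c)$, and the normalized bias equals $-B_{\bs{s}}(\bH)+o(1)$ by the standard KDE bias expansion $\EE\ffnH(\bs{s})-f_0(\bs{s})=D_1(\bs{s},\bH)+o(\tr\bH)$. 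Since $\ffnH(\bs{s})$ is an i.i.d.\ average, a Berry--Esseen bound gives $\sup_x|\PP(Z_n(\bs{s})\le x)-\Phi(x)|=O((n|\bH|^{1/2})^{-1/2})$ uniformly over the compact set $\beta(c)$, while $\EE Z_n(\bs{s})^2=1$ gives uniform integrability; hence $\EE|Z_n(\bs{s})-B_{\bs{s}}(\bH)|\to\EE|N(0,1)-B_{\bs{s}}(\bH)|=2\phi(B_{\bs{s}}(\bH))+2B_{\bs{s}}(\bH)\Phi(B_{\bs{s}}(\bH))-B_{\bs{s}}(\bH)$, uniformly in $\bs{s}$. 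Therefore $\EE|\ffnH(\bs{s})-f_0(\bs{s})|=\sigma_n(\bs{s})\big(2\phi(B_{\bs{s}}(\bH))+2B_{\bs{s}}(\bH)\Phi(B_{\bs{s}}(\bH))-B_{\bs{s}}(\bH)\big)(1+o(1))+o(\tr\bH)$. Substituting into the reduced integral and using $c\,\sigma_n(\bs{s})/\|\nabla f_0(\bs{s})\|=\frac{c}{\sqrt{n|\bH|^{1/2}}}\cdot\frac{\sqrt{R(K)c}}{\|\nabla f_0(\bs{s})\|}=\frac{c}{\sqrt{n|\bH|^{1/2}}}\cdot\frac{1}{-A_{\bs{s}}}$ identifies the leading term with $\LS(\bH)$; collecting the $o$-terms from the three steps (each of order $o((n|\bH|^{1/2})^{-1/2}+\tr\bH)$) gives the claim.

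The main obstacle is the content of the third step combined with the bookkeeping across the first two: it is not enough to have a CLT for $\ffnH(\bs{s})-f_0(\bs{s})$; one needs convergence of the \emph{first absolute moment}, with error that is $o(1)$ relative to $\sigma_n(\bs{s})$ and \emph{uniform} over $\bs{s}\in\beta(c)$, while simultaneously choosing the tube width $\delta_n$ small enough that the $O(\delta_n^2)$ linearization and Jacobian errors are negligible yet large enough to trap the entire symmetric difference --- precisely the balance that Assumption~\ref{assm:HA} is calibrated to make possible. Establishing the bias and variance expansions uniformly on $\beta(c)$, and controlling the joint fluctuations of $\ffnH$ and $\nabla\ffnH$ in the implicit-function step for $\widehat{t}(\bs{s})$, are the remaining points requiring care.
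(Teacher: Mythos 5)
Your proposal is correct in its overall structure and arrives at the right formula, but it organizes the calculation quite differently from the paper, and the difference is worth articulating. The paper's proof applies Tonelli immediately to write the risk as $\int_{\mc{L}(c)^c} f_0 \, P(\ffnH \ge c) + \int_{\mc{L}(c)} f_0 \, P(\ffnH < c)$, localizes the Lebesgue integral to a tube $\beta(c)^{\delta_n}$, converts it to a Hausdorff integral on $\beta(c)$ via Lemma~\ref{lem:COV-approx}, and then applies Berry--Esseen pointwise to $P(\ffnH(\bx^t) < c)$ for $t$ in a shrinking window, finishing with the explicit identity $\int_{\RR}|\Phi(a\tau+b)-\one_{\{\tau<0\}}|\,d\tau = (2\phi(b)+2b\Phi(b)-b)/(-a)$ (Lemma~\ref{lem:normal-integral}). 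Your route is essentially the Fubini dual of this: you parameterize the random set $\mc{L}(c)\Delta\widehat{\mc{L}}_\bH(c)$ as the union of normal segments of (signed) random length $\widehat{t}(\bs s)$, so that $\mu_{f_0}(\mc{L}\Delta\widehat{\mc{L}}) \approx \int_{\beta(c)} c\,|\widehat t(\bs s)|\,d\cH(\bs s)$, and then compute $\EE|\widehat t(\bs s)|$ directly as the first absolute moment of an approximately normal variable, $\EE|N(0,1)-B_{\bs s}(\bH)| = 2\phi(B_{\bs s})+2B_{\bs s}\Phi(B_{\bs s})-B_{\bs s}$. These two computations are the same by $\EE|\widehat t| = \int_\RR P(\tau \text{ lies between } 0 \text{ and } \widehat t)\,d\tau$; the paper starts from the right-hand side, you start from the left. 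What your route buys is that the final Gaussian algebra is completely transparent (it is just $\EE|Z-b|$), and you bypass the "$[t^*_\bx - t_n, t^*_\bx + t_n]$" shrinking step (Lemma~\ref{lem:hdr-step5}-style) the paper uses to control the $t$-tails. What the paper's route buys is that Tonelli is applied once and for all at the outset, so the object being approximated is a deterministic integral of tail probabilities, and no delicate random-set parameterization or implicit-function argument is needed; your version requires establishing, uniformly on the good event, that the symmetric difference meets each normal ray in exactly one segment and that $\widehat t(\bs s)$ admits the linearization $-(\ffnH(\bs s)-f_0(\bs s))/\|\nabla f_0(\bs s)\|$ plus a controllable remainder. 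You handle all of this, though I would flag two small inaccuracies: (i) your claimed remainder rate $\EE|r_n(\bs s)| = O((n|\bH|^{1/2})^{-1}+(\tr\bH)^2)$ is too optimistic --- the cross term $\|\ffnH-f_0\|_\infty\|\nabla\ffnH-\nabla f_0\|_\infty$ is of order roughly $n^{-3/(d+4)}\log n$ at the critical bandwidth, which is not $O((\tr\bH)^2)\sim n^{-4/(d+4)}$, though it is still $o((n|\bH|^{1/2})^{-1/2}+\tr\bH)$, which is all you need; and (ii) the step ``$\EE|Z_n(\bs s)-B_{\bs s}(\bH)|\to \EE|N(0,1)-B_{\bs s}(\bH)|$'' needs to be read as an asymptotic uniformly over the (bounded, $n$-dependent) family $\{B_{\bs s}(\bH):\bs s\in\beta(c)\}$; your invocation of Berry--Esseen plus $\EE Z_n^2 = 1$ for uniform integrability is the right mechanism for this, but it is a first-moment rather than Kolmogorov-distance statement and deserves a sentence of its own rather than being folded into a single ``hence.''
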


\medskip
\par\noindent
Note that the first summand (including the factor $c/\sqrt{n|\bH|^{1/2}}$) in the integral defining $\LS(\bH)$ is of
the order of magnitude of a variance term in a mean-squared error
decomposition, and the second two summands are of the same order of
magnitude of a squared bias term. The next theorem gives the HDR asymptotic risk expansion.

\begin{theorem}
  \label{thm:hdr}
  Let Assumptions \ref{assm:DA-hdr},\ref{assm:BA},\ref{assm:KA} and
  \ref{assm:HA} hold. Then
  \begin{equation*}
    \bb{E}\ls\mu_{f_0}\{\mc{L}_{\tau}\Delta\widehat{\mc{L}}_{\tau,\bH}\}\rs
    =  \HDR(\bH) + o \lb( n |\bH|^{1/2})^{-1/2}+\tr(\bH)\rb
  \end{equation*}
  as   $n \to \infty$,  where
  \begin{equation*}
    \HDR(\bH)
    :=   \frac{f_{\tau,0}}{\sqrt{n|\bH|^{1/2}}}\int_{\beta_{\tau}}
    \frac{2\phi(C_{\bx}(\bH)) + 2\Phi(C_{\bx}(\bH))C_{\bx}(\bH) - C_{\bx}(\bH)}{-A_{\bx}} \,
    d\mc{H}(\bx),
  \end{equation*}
  \begin{align*}
    C_{\bx}(\bH):=B_{\bx}(\bH)+\sqrt{\frac{n|\bH|^{1/2}}{R(K)f_{\tau,0}}}D_2(\bH).
  \end{align*}
  $A_{\bx}$ and $B_{\bx}(\bH)$ are defined in the same way as in
  Theorem~\ref{thm:levelset} with $c$ replaced by $f_{\tau,0}$. And
  \begin{align*}
    D_2(\bH)&:=w_0\left\{V_1(\bH)+V_2(\bH)\right\},
  \end{align*}
  with $ w_0:=(\int_{\beta_{\tau}}1/\nabla
  f_0\,d\mc{H})^{-1}$ and
  \begin{align*}
    V_1(\bH):=\int_{\beta_{\tau}}\frac{D_1(\bx,\bH)}{\|\nabla
    f_0(\bx)\|}\,d\mc{H}(\bx)\qquad
    V_2(\bH):=\inv{f_{\tau,0}}\int_{\mc{L}_{\tau}}D_1(\bx,\bH)\,d\bx.
  \end{align*}
\end{theorem}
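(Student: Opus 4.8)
The plan is to build on the proof of Theorem~\ref{thm:levelset}: the only genuinely new feature of the HDR problem is that the plug-in level $\fftaun$ is random rather than fixed, so I would first obtain a sufficiently sharp expansion for $\fftaun - \fftau$ and then rerun essentially the same local analysis of the symmetric difference that underlies Theorem~\ref{thm:levelset}, now at the level $\fftaun$ while still targeting $\mc{L}_\tau = \mc{L}(\fftau)$. Write $g(y) := \mu_{f_0}(\mc{L}(y)) = \int f_0\one_{\{f_0 \ge y\}}$ and $\hat g(y) := \int \ffnH\one_{\{\ffnH \ge y\}}$, so that $g(\fftau) = 1-\tau = \hat g(\fftaun)$ (the latter by continuity of $\ffnH$ under Assumption~\ref{assm:KA}), and hence $(\hat g - g)(\fftaun) = g(\fftau) - g(\fftaun)$.

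\emph{Step 1: an expansion for $\fftaun - \fftau$.} I would show $\fftaun - \fftau = D_2(\bH) + \xi_n$ with $\EE|\xi_n| = o(\tr(\bH) + (n|\bH|^{1/2})^{-1/2})$. The coarea formula \citep{Evans:2015uy} gives $g'(\fftau) = -\int_{\beta_\tau} f_0/\|\grad f_0\|\,d\mc{H} = -\fftau/w_0 < 0$; combined with uniform consistency of $\ffnH$ near $\beta_\tau$ (a consequence of \cite{Gine:2002jc} under Assumptions~\ref{assm:KA} and \ref{assm:HA}) and the gradient condition in Assumption~\ref{assm:DA-hdr}, this yields $\fftaun \to \fftau$ and, by a mean-value expansion, $\fftaun - \fftau = (w_0/\fftau)\,(\hat g - g)(\fftaun)\,(1 + o_p(1))$. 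I would then freeze the random argument, replacing $(\hat g - g)(\fftaun)$ by $(\hat g - g)(\fftau)$ via the uniform-in-bandwidth, uniform-in-level empirical-process bounds of \cite{Gine:2002jc} (which control $\sup_{|y - \fftau| \le \delta_n}|(\hat g - g)(y) - (\hat g - g)(\fftau)|$ at the needed rate), and split
\begin{equation*}
  (\hat g - g)(\fftau) = \int (\ffnH - f_0)\one_{\{\ffnH \ge \fftau\}}\,d\bx + \int f_0\lp\one_{\{\ffnH \ge \fftau\}} - \one_{\{f_0 \ge \fftau\}}\rp\,d\bx .
\end{equation*}
The first integral equals $\int_{\mc{L}_\tau}(\ffnH - f_0)\,d\bx$ up to a term of order $\|\ffnH - f_0\|_\infty^2$, negligible by Assumption~\ref{assm:HA}; localising the second integral near $\beta_\tau$ and changing to signed distance along $\grad f_0$ identifies it, to leading order, with $\fftau\int_{\beta_\tau}(\ffnH - f_0)/\|\grad f_0\|\,d\mc{H}$. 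Taking expectations, using the standard KDE bias expansion $\EE[\ffnH(\bx)] - f_0(\bx) = D_1(\bx,\bH) + o(\tr(\bH))$ uniformly on a neighbourhood of $\mc{L}_\tau$ (this is where the global twice-differentiability in Assumption~\ref{assm:DA-hdr} enters, since $V_2$ integrates $D_1$ over all of $\mc{L}_\tau$), and recalling the definitions of $V_1, V_2$ and $w_0$, gives $\EE[(\hat g - g)(\fftau)] = \fftau\,(V_1(\bH) + V_2(\bH)) + o(\tr(\bH) + (n|\bH|^{1/2})^{-1/2})$, whence $\EE[\fftaun - \fftau] = w_0(V_1 + V_2) + \cdots = D_2(\bH) + \cdots$. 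A direct second-moment bound shows the centred part of $\fftaun - \fftau$ has standard deviation of order $n^{-1/2}\lambda_{\max}(\bH)^{-1/4}$, which is $o((n|\bH|^{1/2})^{-1/2})$ precisely because $d \ge 2$; combining these yields the claimed expansion.

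\emph{Step 2: the symmetric difference.} Repeating the localisation of the proof of Theorem~\ref{thm:levelset} around $\beta_\tau$ (Assumptions~\ref{assm:DA-hdr} and \ref{assm:BA}) and parametrising a point near $\beta_\tau$ as $\bx + t\,\grad f_0(\bx)/\|\grad f_0(\bx)\|$ with $\bx \in \beta_\tau$ and $t$ small, one finds, since $\widehat{\mc{L}}_{\tau,\bH} = \{\ffnH \ge \fftaun\}$, that this point lies in $\widehat{\mc{L}}_{\tau,\bH}$ iff to leading order $t \ge \lp(\fftaun - \fftau) - (\ffnH - f_0)(\bx)\rp/\|\grad f_0(\bx)\|$, while it lies in $\mc{L}_\tau$ iff $t \ge 0$; integrating $f_0$ over the resulting strips and taking expectations,
\begin{equation*}
  \EE\mu_{f_0}\!\lb\mc{L}_\tau\Delta\widehat{\mc{L}}_{\tau,\bH}\rb = \fftau\int_{\beta_\tau}\frac{\EE\lv(\ffnH - f_0)(\bx) - (\fftaun - \fftau)\rv}{\|\grad f_0(\bx)\|}\,d\mc{H}(\bx) + o\!\lb(n|\bH|^{1/2})^{-1/2} + \tr(\bH)\rb ,
\end{equation*}
with the remainder controlled exactly as in Theorem~\ref{thm:levelset} (the only change is the extra, comparably small, shift $\fftaun - \fftau$). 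This is precisely the level-set integrand with $\ffnH - f_0$ replaced by $(\ffnH - f_0) - (\fftaun - \fftau)$. By the triangle inequality together with Step~1, $\EE\lv(\ffnH - f_0)(\bx) - (\fftaun - \fftau)\rv$ may be replaced by $\EE\lv(\ffnH - f_0)(\bx) - D_2(\bH)\rv$ at a cost at most $\EE|\xi_n|$, which integrates to $o((n|\bH|^{1/2})^{-1/2} + \tr(\bH))$ over the compact $\beta_\tau$; and the Gaussian (Berry--Esseen type) approximation already used in the proof of Theorem~\ref{thm:levelset} gives $\EE\lv(\ffnH - f_0)(\bx) - D_2(\bH)\rv = \sigma(\bx)\,h\!\lp(D_1(\bx,\bH) - D_2(\bH))/\sigma(\bx)\rp + o(\cdots)$, where $\sigma(\bx)^2 := R(K)\fftau/(n|\bH|^{1/2})$ and $h(a) := 2\phi(a) + 2a\Phi(a) - a$, which is an even function. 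Using $(D_1(\bx,\bH) - D_2(\bH))/\sigma(\bx) = -C_{\bx}(\bH)$ and $\fftau\,\sigma(\bx)/\|\grad f_0(\bx)\| = \fftau/(-A_{\bx}\sqrt{n|\bH|^{1/2}})$, the displayed integral becomes $\HDR(\bH)$, which finishes the proof.

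The step I expect to be the main obstacle is Step~1: namely, the uniform-in-level linearisation of $\hat g$ that allows the random evaluation point $\fftaun$ to be frozen at $\fftau$, and the verification that the stochastic fluctuation of $\fftaun$ is of strictly smaller order than the pointwise fluctuation $\sigma(\bx)$ of $\ffnH$ — this last fact being exactly what licenses treating $\fftaun - \fftau$ as the deterministic quantity $D_2(\bH)$ in Step~2. A related subtlety is interchanging expectation with the indicator $\one_{\{\ffnH \ge \fftau\}}$: naively this could create an extra variance-order contribution to the bias of $\fftaun$, but the identity $\int_{\RR}(\Phi(v) - \one_{\{v \ge a\}})\,dv = a$ shows the $\sigma(\bx)$ factors cancel and the contribution is exactly the linear term $\fftau V_1(\bH)$.
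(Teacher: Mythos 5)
Your proposal follows the same overall strategy as the paper --- localize the risk to a tubular neighbourhood of $\beta_\tau$, expand $\fftaun-\fftau$ into a deterministic bias $D_2(\bH)$ plus a negligible remainder, and then apply a Gaussian approximation along each normal line --- and your decomposition of $(\hat g - g)(\fftau)$ and the resulting identification $\EE[\fftaun - \fftau] \approx w_0(V_1 + V_2)$ is exactly the paper's expansion \eqref{eq:fftaun-ftau} in different notation (the paper writes $\psi(\tilde f, \tilde f_\tau) = 0 = \psi(f_0, \fftau)$). Your sharper characterization of the variance order $n^{-1}\lambda_{\max}(\bH)^{-1/2}$, and the observation that this beats $(n|\bH|^{1/2})^{-1}$ precisely because $d\ge 2$, is correct and is actually more explicit than the paper's Lemma~\ref{lem:ftaun-var-exp}, which only asserts $o(n^{-1}|\bH|^{-1/2})$.

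Where you genuinely depart from the paper is in the final Gaussian step. You invoke the identity $\int_{\RR}|P(U<t)-\one_{\{t<0\}}|\,dt = \EE|U|$ to reduce the inner line integral to $\EE\lv(\ffnH-f_0)(\bx) - (\fftaun-\fftau)\rv$, and then use the reverse triangle inequality to swap the random shift $\fftaun - \fftau$ for the deterministic $D_2(\bH)$. This cleanly decouples $\fftaun$ from $\ffnH(\bx)$, whereas the paper handles that correlation by writing $\ffnH(\bx^t) - \fftaun = \bar{Y}_n + R_n$ with $\bar{Y}_n$ an explicit i.i.d.\ average and applying Berry--Esseen to $\bar{Y}_n$ directly. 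Your route is conceptually more transparent, but it pays a price you do not acknowledge: to write the inner integral as $\EE|V(\bx)|$ with $V(\bx) := (\ffnH-f_0)(\bx) - (\fftaun-\fftau)$, you must approximate $\ffnH(\bx + s u_{\bx})$ by $f_0(\bx + s u_{\bx}) + (\ffnH-f_0)(\bx)$, freezing the \emph{stochastic} part of the KDE at the boundary point. The paper's proof never makes this replacement --- it treats $\ffnH(\bx^t)$ exactly at each $t$ --- so the remainder is \emph{not} ``controlled exactly as in Theorem~\ref{thm:levelset}'' as you claim. You need a separate bound on the variation $(\ffnH-f_0)(\bx^s)-(\ffnH-f_0)(\bx)$, which is of order $|s|\,\|\nabla(\ffnH-f_0)\|_\infty$; under Assumption~\ref{assm:HA} (specifically the lower bound on $n|\bH|^{1/2}\lambda_{\min}(\bH)$) and Theorem~\ref{thm:KDE-sup-rate} this is indeed $o_p((n|\bH|^{1/2})^{-1/2})$ over $|s| \le \delta_n$, but the check needs to be made --- and in $L^1$, not just in probability. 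Similarly, once you have $\EE|(\ffnH-f_0)(\bx) - D_2(\bH)|$, the passage to the closed form $\sigma(\bx)\,h(C_{\bx}(\bH))$ requires truncating the implicit $t$-integral (the pointwise Berry--Esseen error does not integrate to zero over an infinite range), which is the role of Lemma~\ref{lem:hdr-step5}; you gesture at this but should not elide it. These are gaps in exposition rather than in conception, and the rest of the argument is sound.
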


\medskip
We defer  the proofs to the appendix.
Next, we would like to study the theoretical behavior of the
minimizers of $\LS(\cdot)$ and $\HDR(\cdot)$.
Note that the minimizers of $\LS(\cdot)$ or of $\HDR(\cdot)$ are not
practically usable bandwidth matrices, since $\LS(\cdot)$
and $\HDR(\cdot)$  depend on the true, unknown density $f_0$.  We will
discuss estimation of $\HDR(\cdot)$ and of $\LS(\cdot)$ and
practical bandwidth selectors in the next section.  Presently, we
consider the minimizers of  $\LS(\cdot)$ and $\HDR(\cdot)$, which serve as  {\em oracle} bandwidth selectors.

Unfortunately, $\LS(\cdot)$ and $\HDR(\cdot)$ are  quite  complicated functions so studying their minimizers in general is not at all straightforward.  Thus we will make some simplifying assumptions.  We will
consider $f_0$ that is unimodal and spherically symmetric about some
point (taken to be the origin in
Corollary~\ref{cor:ls-oracle-bandwidth} and \ref{cor:hdr-oracle-bandwidth}).  We will consider optimizing
over the subclass $\mc{S}_1 := \lb h^2 \bs{I} : h > 0 \rb $ of
bandwidth matrices, where $\bs{I}$ is the $d \times d$ identity
matrix.
These assumptions are made largely for simplicity and ease of presentation of the following two corollaries, and are far from necessary for the conclusions to hold.  We discuss these assumptions again after the corollaries.
By a slight abuse of notation, we let $\LS(h) \equiv \LS(
h^2 \bs{I})$ and $\HDR(h) \equiv \HDR( h^2 \bs{I}).$

\begin{corollary}
  \label{cor:ls-oracle-bandwidth}
  Let the assumptions of Theorem~\ref{thm:levelset} hold.
  Assume further  that  $f_0(x) = g( \|x\|)$
  and that the function $g(r) $ defined for $r > 0$ is strictly decreasing on
  $[0,\infty)$.
  Then
  there exists
  a constant $s_{\text{opt}}$ depending on $f_0$ and $K$ (but not on $n$) such that
  there is  a unique positive number $h_{\text{opt}} = \argmin_{h \in [0,\infty)} \LS(h)$
  satisfying
  \begin{equation*}
    h_{\text{opt}} = s_{\text{opt}} n^{-1 / (d+4)}
    \quad \text{ and } \quad
    h_0 =  h_{\text{opt}} 
    (1+o(1))
    \qquad     \text{ as } n \to \infty,
  \end{equation*}
  where $h_0$ is any minimizer of $\EE[ \mu_{f_0}\{\mc{L}(c)\Delta\widehat{\mc{L}}_{\bH}(c)\} ]$.
\end{corollary}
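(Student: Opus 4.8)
The plan is to exploit spherical symmetry to reduce the Hausdorff integral defining $\LS(h)$ to a scalar expression in $h$, and then to perform a one-dimensional calculus argument. First I would note that under $f_0(x) = g(\|x\|)$ with $g$ strictly decreasing, the level set $\beta(c)$ is the sphere of radius $r_c := g^{-1}(c)$, so $\|\nabla f_0(\bx)\| = |g'(r_c)|$ is constant on $\beta(c)$, and $\nabla^2 f_0(\bx)$ has a rotation-equivariant form whose trace $\tr(\nabla^2 f_0(\bx)) = g''(r_c) + (d-1)g'(r_c)/r_c$ is also constant on $\beta(c)$. Consequently $A_{\bx} \equiv A$ and $D_1(\bx, h^2\bs I) = \tfrac12 \mu_2(K) h^2 [g''(r_c) + (d-1)g'(r_c)/r_c]$ is constant in $\bx$, so $B_{\bx}(h^2\bs I) \equiv B(h) = \kappa \, \sqrt{n h^d}\, h^2 = \kappa\, n^{1/2} h^{(d+4)/2}$ for an explicit constant $\kappa$ (depending on $f_0,K,c$), and $\int_{\beta(c)} d\cH = \cH(\beta(c)) = d \omega_d r_c^{d-1}$ where $\omega_d$ is the volume of the unit ball. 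This collapses the integral to
\begin{equation*}
  \LS(h) = \frac{c\, \cH(\beta(c))}{(-A)\sqrt{n h^d}}\, \psi(B(h)),
  \qquad
  \psi(t) := 2\phi(t) + 2\Phi(t)\, t - t,
\end{equation*}
with $n h^d = (n^{1/2} h^{d/2})^2$; writing $u := n^{1/2} h^{d/2}$ and recalling $B(h) = \kappa u h^2$, one sees $\LS$ as a product of a decreasing factor $u^{-1}$ and a factor $\psi(\kappa u h^2)$.

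Second, I would analyze the scalar function $\psi$. Note $\psi'(t) = 2\phi'(t) + 2\Phi(t) + 2t\phi(t) - 1 = 2\Phi(t) - 1$ (the $\phi'$ and $t\phi$ terms cancel since $\phi'(t) = -t\phi(t)$), and $\psi''(t) = 2\phi(t) > 0$, so $\psi$ is strictly convex, minimized at $t = 0$ with $\psi(0) = 2\phi(0) > 0$, and $\psi(t)/t \to 1$ as $t \to +\infty$ while $\psi(t) \to 0$ as $t\to -\infty$ (more precisely $\psi(t) = 2\phi(t) + o(\phi(t))$ for $t \to -\infty$). Since $D_1 > 0$ here — this is the key sign: $g$ strictly decreasing forces $g'(r_c) < 0$, and under the regularity hypotheses $\tr(\nabla^2 f_0)$ at the level set has a definite sign (this is where I expect to need to be slightly careful, see below) — we have $B(h) > 0$, hence $\psi(B(h)) > \psi(0) > 0$ and $\psi$ is increasing in $h$ on the relevant range. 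Then $\LS$ is the product of the strictly decreasing factor $c\cH(\beta(c))/((-A) u)$ and the strictly increasing factor $\psi(B(h))$; substituting $h = s\, n^{-1/(d+4)}$ makes $u = n^{1/2} h^{d/2} = n^{2/(d+4)} s^{d/2}$ scale-free in the product $u^{-1}\cdot(\text{const})$ only up to the explicit $n$-power, and $B(h) = \kappa\, n^{1/2} h^{(d+4)/2} = \kappa\, s^{(d+4)/2}$ becomes \emph{independent of $n$}. Therefore $\LS(s\, n^{-1/(d+4)}) = n^{-2/(d+4)} \cdot G(s)$ where $G(s) := \frac{c\,\cH(\beta(c))}{(-A)\, s^{d/2}}\, \psi(\kappa\, s^{(d+4)/2})$ does not depend on $n$. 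One checks $G(s) \to \infty$ as $s \to 0^+$ (from $s^{-d/2}$ and $\psi \to \psi(0)>0$) and as $s\to\infty$ (since $\psi(\kappa s^{(d+4)/2}) \sim \kappa s^{(d+4)/2}$ dominates $s^{-d/2}$), so $G$ has a minimizer; computing $G'(s) = 0$ and using $\psi' = 2\Phi - 1 > 0$ and the strict convexity gives that the stationary point is unique, yielding $s_{\text{opt}}$ and hence $h_{\text{opt}} = s_{\text{opt}} n^{-1/(d+4)}$, with $h_{\text{opt}}$ the unique minimizer of $\LS(h)$ over $(0,\infty)$.

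Third, for the statement $h_0 = h_{\text{opt}}(1 + o(1))$, I would use Theorem~\ref{thm:levelset}: the true risk equals $\LS(h^2\bs I) + o((nh^d)^{-1/2} + h^2)$ uniformly over bandwidths in the range of Assumption~\ref{assm:HA}, and $\LS(h^2\bs I)$ is exactly of order $(nh^d)^{-1/2} + h^2$ at $h \asymp n^{-1/(d+4)}$ (both terms balance there, each of order $n^{-2/(d+4)}$). So near $h_{\text{opt}}$ the remainder is of smaller order than $\LS$ itself; combined with the fact that $G$ is continuous with a strict, unique, non-degenerate minimum at $s_{\text{opt}}$, any sequence $h_0 = s_n n^{-1/(d+4)}$ minimizing the true risk must have $s_n \to s_{\text{opt}}$, i.e.\ $h_0/h_{\text{opt}} \to 1$. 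The main obstacle I anticipate is the second step's \textbf{sign of $D_1$}: for the product-monotonicity argument I want $B(h)$ of one definite sign on the whole range, i.e.\ $\tr(\nabla^2 f_0)$ at the level set to have a fixed sign; for a unimodal spherically symmetric density this need not be automatic at every radius (the Laplacian $g'' + (d-1)g'/r$ can change sign), so I would either (i) invoke a convexity/concavity-type refinement of the hypotheses on $g$ near $r_c$, or (ii) handle the case $D_1 < 0$ (so $B(h) < 0$) separately — there $\psi(B(h))$ is \emph{decreasing} in $h$, but $\psi$ stays bounded below by $0$ and in fact $\psi(B(h))\sim 2\phi(B(h))$ is exponentially small, so the rescaled $G(s) = c\cH(\beta(c))/((-A)s^{d/2})\cdot\psi(\kappa s^{(d+4)/2})$ with $\kappa<0$ still blows up as $s\to 0^+$ and decays as $s\to\infty$, again giving a minimizer, and one argues uniqueness from $G'(s)=0 \iff$ a transcendental equation shown to have one root using $\psi'=2\Phi-1$. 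A secondary, purely technical point is verifying that $h_{\text{opt}} = s_{\text{opt}} n^{-1/(d+4)}$ lands inside the admissible bandwidth range of Assumption~\ref{assm:HA} (it does: $\lambda_{\max} = h^2 = s_{\text{opt}}^2 n^{-2/(d+4)} = O(n^{-2/(4+d)})$, and the lower-bound conditions are satisfied since $nh^d = s_{\text{opt}}^d n^{4/(d+4)}\to\infty$ fast enough), so that Theorem~\ref{thm:levelset} genuinely applies at the relevant scale.
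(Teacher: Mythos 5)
Your overall strategy — collapse the Hausdorff integral to a scalar via spherical symmetry, substitute $s=(nh^{d+4})^{1/2}$ so the $n$-dependence factors out, and then do one-variable calculus on the resulting $n$-free function — is the same as the paper's proof of the companion Corollary~\ref{cor:hdr-oracle-bandwidth}. You also correctly computed $\psi'(t)=2\Phi(t)-1$ and $\psi''(t)=2\phi(t)>0$ for $\psi(t):=2\phi(t)+2\Phi(t)t-t$.

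However, there is a concrete error in your asymptotics of $\psi$, and it is exactly what generates your (spurious) ``main obstacle.'' You claim $\psi(t)\to 0$ as $t\to -\infty$, with $\psi(t)=2\phi(t)+o(\phi(t))$. This is false: $\psi$ is \emph{even}, since
\begin{equation*}
\psi(-t)=2\phi(-t)+(-t)\bigl(2\Phi(-t)-1\bigr)=2\phi(t)+t\bigl(2\Phi(t)-1\bigr)=\psi(t),
\end{equation*}
so $\psi(t)\sim|t|$ as $|t|\to\infty$ in \emph{both} directions. This is the identity the paper records explicitly as $2\phi(C(h))+(2\Phi(C(h))-1)C(h)=2\phi(|C(h)|)+(2\Phi(|C(h)|)-1)|C(h)|$, and it completely dissolves the sign issue: $\LS(h)$ depends on $B_{\bx}(\bH)$ only through $|B_{\bx}(\bH)|$, so the sign of $D_1$ (i.e.\ of $\tr\nabla^2 f_0$ on $\beta(c)$) is irrelevant. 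Define $G:=|B(h)|/(nh^{d+4})^{1/2}\ge 0$, a constant depending only on $f_0,K,c$, and never branch on the sign. Your separate treatment of the ``$D_1<0$'' case is also internally wrong for the same reason: with $\kappa<0$ you assert $\psi(\kappa s^{(d+4)/2})$ is exponentially small, but by evenness it grows like $|\kappa|s^{(d+4)/2}$.

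With that corrected, the rest of your plan goes through exactly as the paper does it: write $s=(nh^{d+4})^{1/2}$ so $\LS(h)\propto n^{-2/(d+4)}\LS^*(s)$ with $\LS^*(s)=2s^{-d/(d+4)}\phi(Gs)+Gs^{4/(d+4)}(2\Phi(Gs)-1)$, observe $\LS^*\to\infty$ as $s\searrow 0$ and as $s\to\infty$, and reduce $(\LS^*)'(s)=0$ (after dividing by $2(d/(d+4))s^{-(2d+4)/(d+4)}\phi(Gs)$) to $-1+(2/d)Gs(2\Phi(Gs)-1)/\phi(Gs)=0$, whose left side is strictly increasing, negative at $0$, and $\to+\infty$, giving a unique root $s_{\text{opt}}$; then $h_{\text{opt}}=s_{\text{opt}}^{2/(d+4)}n^{-1/(d+4)}$ (your normalization $h=sn^{-1/(d+4)}$ differs by a power of $s$ but is equivalent). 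Your closing step invoking Theorem~\ref{thm:levelset} to conclude $h_0=h_{\text{opt}}(1+o(1))$ matches the paper's (equally brief) argument.
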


\begin{corollary}
  \label{cor:hdr-oracle-bandwidth}
  Let the assumptions of Theorem~\ref{thm:hdr} hold.
  Assume further  that  $f_0(x) = g( \|x\|)$
  and that the function $g(r) $ defined for $r > 0$ is strictly decreasing on
  $[0,\infty)$.
  Then
  there exists
  a constant $s_{\text{opt}}$ depending on $f_0$ and $K$ (but not on $n$) such that
  there is  a unique positive number $h_{\text{opt}} = \argmin_{h \in [0,\infty)} \HDR(h)$
  satisfying
  \begin{equation*}
    h_{\text{opt}} = s_{\text{opt}} n^{-1 / (d+4)}
    \quad \text{ and } \quad
    h_0 =  h_{\text{opt}} 
    (1+o(1))
    \qquad     \text{ as } n \to \infty,
  \end{equation*}
  where $h_0$ is any minimizer of $\EE[ \mu_{f_0}\{\mc{L}_{\tau}\Delta\widehat{\mc{L}}_{\tau,\bH}\} ]$.
\end{corollary}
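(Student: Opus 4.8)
The plan is to follow the same route as the proof of Corollary~\ref{cor:ls-oracle-bandwidth}: restrict to $\bH=h^2\bs I$, reparametrize so that minimizing $\HDR$ becomes a one–dimensional problem that is free of $n$, show that problem has a unique solution, and then transfer the conclusion to the true risk via Theorem~\ref{thm:hdr}. Set $\eta:=\sqrt{n}\,h^{(d+4)/2}=\sqrt{n|\bH|^{1/2}}\,h^2$, so that $h=\eta^{2/(d+4)}n^{-1/(d+4)}$ and $(n|\bH|^{1/2})^{-1/2}=n^{-2/(d+4)}\eta^{-d/(d+4)}$. Since $f_0(\bx)=g(\|\bx\|)$ is $C^2$ with $g$ strictly decreasing, $\beta_{\tau}$ is the sphere $\{\|\bx\|=r_\tau\}$ for the unique $r_\tau$ with $g(r_\tau)=f_{\tau,0}$, and on $\beta_{\tau}$ both $\|\nabla f_0\|$ and $\tr\nabla^2 f_0$ are constant; in particular $A_{\bx}\equiv A_0:=-\|\nabla f_0\|_{\beta_{\tau}}/\sqrt{R(K)f_{\tau,0}}$. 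Plugging $\bH=h^2\bs I$ into Theorem~\ref{thm:hdr}, $D_1(\cdot,h^2\bs I)$ is a constant multiple of $h^2$ on $\beta_{\tau}$; using $w_0^{-1}=\int_{\beta_{\tau}}\|\nabla f_0\|^{-1}d\mc H$ one finds that the $V_1$ contribution to $C_{\bx}(h^2\bs I)$ exactly cancels $B_{\bx}(h^2\bs I)$, while by the divergence theorem $\int_{\mc L_\tau}\tr\nabla^2 f_0\,d\bx=-\|\nabla f_0\|_{\beta_{\tau}}\mc H(\beta_{\tau})$ (the outward normal derivative of $f_0$ on $\beta_{\tau}$ equals $-\|\nabla f_0\|_{\beta_{\tau}}$ by radial monotonicity). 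Hence $C_{\bx}(h^2\bs I)\equiv-\kappa\,\eta$ for a constant $\kappa>0$ depending only on $f_0$ and $K$, and, since $\psi$ is even and $A_{\bx},C_{\bx}$ are constant on $\beta_{\tau}$,
\[
\HDR(h)=C_1\,n^{-2/(d+4)}\,G(\eta),\qquad G(\eta):=\eta^{-d/(d+4)}\,\psi(\kappa\eta),\qquad \psi(u):=2\phi(u)+2\Phi(u)u-u,
\]
with $C_1:=f_{\tau,0}\,\mc H(\beta_{\tau})/(-A_0)>0$; thus $\argmin_{h>0}\HDR(h)$ and $\argmin_{\eta>0}G(\eta)$ are in bijection through $h=\eta^{2/(d+4)}n^{-1/(d+4)}$.

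The analysis of $G$ hinges on the identity $\psi'(u)=2\Phi(u)-1$, whence $\psi''(u)=2\phi(u)>0$ and $\psi\ge\psi(0)=\sqrt{2/\pi}$, with $\psi(u)\sim|u|$ as $|u|\to\infty$. Consequently $G(\eta)\to\infty$ both as $\eta\downarrow 0$ (the prefactor diverges) and as $\eta\to\infty$ (there $G(\eta)\sim\kappa\,\eta^{4/(d+4)}$), so $G$ attains its minimum on $(0,\infty)$. Writing $\alpha:=d/(d+4)\in(0,1)$ and $v:=\kappa\eta$, a direct computation gives $G'(\eta)=\eta^{-\alpha-1}\varphi_1(\kappa\eta)$ with
\[
\varphi_1(v):=(1-\alpha)\,v\,\bigl(2\Phi(v)-1\bigr)-2\alpha\,\phi(v),
\]
and $\varphi_1(0)=-2\alpha\phi(0)<0$, $\varphi_1(v)\to\infty$, while $\varphi_1'(v)=2v\phi(v)+(1-\alpha)\bigl(2\Phi(v)-1\bigr)>0$ for $v>0$. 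Hence $\varphi_1$ has a unique positive zero $v_{\text{opt}}$, $G'$ changes sign exactly once (from negative to positive), and $G$ has the unique minimizer $\eta_{\text{opt}}=v_{\text{opt}}/\kappa$. Taking $s_{\text{opt}}:=\eta_{\text{opt}}^{2/(d+4)}$, which depends only on $d$, $f_0$, $K$, yields the unique minimizer $h_{\text{opt}}=\argmin_{h\in[0,\infty)}\HDR(h)=s_{\text{opt}}\,n^{-1/(d+4)}$.

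It remains to relate $h_0$ to $h_{\text{opt}}$. From the displayed formula for $\HDR$ together with $\psi\ge\sqrt{2/\pi}$ and $\psi(u)\sim|u|$, one checks that the remainder scale $(n|\bH|^{1/2})^{-1/2}+\tr(\bH)=n^{-2/(d+4)}\bigl(\eta^{-d/(d+4)}+d\,\eta^{4/(d+4)}\bigr)$ is $O(\HDR(h))$ uniformly in $h>0$; hence Theorem~\ref{thm:hdr} gives $\EE[\mu_{f_0}\{\mc L_\tau\Delta\widehat{\mc L}_{\tau,\bH}\}]=\HDR(h)(1+o(1))$ along any admissible bandwidth sequence. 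The sequence $h_{\text{opt}}(n)=s_{\text{opt}}n^{-1/(d+4)}$ is admissible (it satisfies Assumption~\ref{assm:HA}), and any minimizer $h_0(n)$ of the true risk is admissible for $n$ large, since bandwidths of a different order are ruled out by a cruder argument: there the squared bias or the variance of the plug-in estimator is of larger order than $n^{-2/(d+4)}$, so the true risk there exceeds $\EE[\mu_{f_0}\{\mc L_\tau\Delta\widehat{\mc L}_{\tau,h_{\text{opt}}^2\bs I}\}]\asymp n^{-2/(d+4)}$. Setting $\eta_0:=\sqrt n\,h_0^{(d+4)/2}$ and combining $\HDR(h_{\text{opt}})\le\HDR(h_0)$ with the optimality of $h_0$ for the true risk gives $G(\eta_0)\le G(\eta_{\text{opt}})(1+o(1))$; since $G$ is continuous with a unique minimum and blows up at both ends of $(0,\infty)$, the sublevel sets $\{G\le G(\eta_{\text{opt}})+\epsilon\}$ shrink to $\{\eta_{\text{opt}}\}$ as $\epsilon\downarrow 0$, so $\eta_0\to\eta_{\text{opt}}$, i.e.\ $h_0=h_{\text{opt}}(1+o(1))$. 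I expect the main obstacle to be precisely this last step: making the remainder in Theorem~\ref{thm:hdr} uniform over the relevant shrinking band of bandwidths (equivalently, over $\eta$ in a fixed compact subset of $(0,\infty)$) and rigorously excluding bandwidths of a different order; by contrast, the one–dimensional uniqueness argument for $G$ is self-contained once the identity $\psi'=2\Phi-1$ is noticed.
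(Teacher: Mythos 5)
Your proposal follows the same overall route as the paper: restrict to $\bH=h^2\bs I$, scale to a one-dimensional variable $\eta=\sqrt n\,h^{(d+4)/2}$ (the paper's $s$), use spherical symmetry to make $A_{\bx}$, $B_{\bx}$, $C_{\bx}$ constant on $\beta_\tau$, then show the resulting $n$-free function $\eta\mapsto\eta^{-d/(d+4)}\psi(\kappa\eta)$ has a unique minimizer by an elementary calculus argument, and finally transfer to $h_0$ via Theorem~\ref{thm:hdr}. Your bookkeeping for the uniqueness step (the function $\varphi_1$) is an alternate way of writing the same calculation the paper does with its ratio $-1+\frac{2}{d}\frac{Gs(2\Phi(Gs)-1)}{\phi(Gs)}$; the two are equivalent.

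Where your proposal genuinely improves on the paper is the divergence-theorem step. You observe that in the spherically symmetric case the $V_1$ contribution to $C_{\bx}(h^2\bs I)$ exactly cancels $B_{\bx}(h^2\bs I)$ (since $w_0V_1=D_1$ when $D_1$ and $\|\nabla f_0\|$ are constant on $\beta_\tau$), so that $C_{\bx}$ is proportional only to $w_0V_2$, and you then evaluate $\int_{\mc L_\tau}\tr\nabla^2 f_0\,d\lambda=\int_{\beta_\tau}\nabla f_0\cdot\nu\,d\cH=-\|\nabla f_0\|_{\beta_\tau}\cH(\beta_\tau)$ by Gauss's theorem (the outward normal on $\beta_\tau=\partial\mc L_\tau$ is $-\nabla f_0/\|\nabla f_0\|$). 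This gives a closed form for $\kappa$ and, crucially, shows $\kappa>0$. The paper sets $G:=|F_2-F_1|$ but never verifies $G\neq0$; if $G$ were zero the rescaled objective would reduce to $2s^{-d/(d+4)}\phi(0)$, which is monotone with no interior minimum, and the argument would collapse. Your calculation closes that gap. (Incidentally, the paper's displayed expression for $F_2$ contains a small slip, writing $\tr(\grad_\tau^2 f_0)\int_{\mc L_\tau}d\bx$ rather than $\int_{\mc L_\tau}\tr(\grad^2 f_0)\,d\bx$; only the latter is correct since the Laplacian of a radial $f_0$ is not constant on the ball $\mc L_\tau$, though this does not affect the validity of the argument once $F_2-F_1$ is evaluated correctly.) On the transfer step your discussion is more explicit than the paper's one-line appeal to Theorem~\ref{thm:hdr}; you correctly flag the need for uniformity of the $o$-term over the relevant band of bandwidths, which is a legitimate concern that the paper glosses over, and your outline of how to rule out bandwidths of a different order is the right idea.
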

\medskip
\noindent
The proof of the two corollaries follows exactly the same way, so we
provide the proof for HDR estimation and omit that for LS
estimation.  The corollaries tell us the order of magnitude of the true optimal bandwidths and  of the oracle bandwidths.  We used the assumptions of unimodality and spherical symmetry because
these assumptions imply that $f_0$, $\grad f_0$, and $\hess f_0$ are
constant on $\beta_{\tau}$ and $\beta(c)$.  We believe that (an analogous form of) the conclusions of
Corollary~\ref{cor:ls-oracle-bandwidth} and
\ref{cor:hdr-oracle-bandwidth} hold for $\bH_{\text{opt}} \in
\argmin_{\bH \in \mc{S}} \HDR(\bH)$ and for $\bH_{\text{opt}} \in
\argmin_{\bH \in \mc{S}} \LS(\bH)$, and for much more general densities $f_0$.  Our simulations show that our practical bandwidth selector (studied in the next section) does not require such extreme assumptions.

\section{Bandwidth selection methodology}
\label{sec:methodology}
In the previous section, we provided asymptotic expansions of
symmetric risks for  HDR estimation and LS estimation, which
could be used as guidance for bandwidth selection in those two
scenarios. Minimizers of $\LS(\bH)$ and $\HDR(\bH)$  are natural
bandwidth selectors for HDR estimation and LS estimation, respectively.
The theoretical performance of the bandwidth selector using ``oracle''
knowledge of  the functionals of the true density  is studied in
Corollary~\ref{cor:ls-oracle-bandwidth} and \ref{cor:hdr-oracle-bandwidth}. Of course, in practice, one does
not have this oracle knowledge. In the present section, we develop an
effective practical bandwidth selection procedure for HDR estimation
(a procedure for level set estimation is simpler and can be derived in a
similar way). We will also study the theoretical performance of our bandwidth selector
restricted to a simplified class $\mc{S}_1=\{h^2\bs{I},h>0\}$.

Since there are unknown quantities that $\HDR(\bH)$ depends on, a natural ``plug-in'' approach is to estimate those quantities using
different kernel density estimators and plug the estimates in.
Moreover, the unknown functionals depend on the truth through
$f_0,\grad f_0,\grad^2f_0$, so we will  use three pilot kernel
density estimators. To be specific, we use $\widehat{f}_{n,\bH_0}$ to
estimate $f_{\tau,0}$ and $\mc{L}_{\tau}$; we use
$\nabla\widehat{f}_{n,\bH_1}$ to estimate $\nabla f_0$, and
$\beta_{\tau}$ combined with  the pilot estimator of $f_{\tau,0}$; we use
$\nabla^2\widehat{f}_{n,\bH_2}$ to estimate $\nabla^2 f_0$, where $\bH_0$,
$\bH_1$ and $\bH_2$ are corresponding pilot bandwidth matrices for the three
kernel density estimators.  (One could also use three different kernels for
$\widehat{f}_{n,\bH_i}$, $i=0,1,2$, but we will use the same kernel for all
three.)  For our theoretical results to hold, we require just  the
bandwidth matrix $\bH_r$ to be of the optimal order
for estimating the $r$th derivatives of $f_0$ (see
Corollary~\ref{cor:hdr-bandwidth-selector} and Assumption~\ref{assm:HA2},
below).  We use two-stage direct plug-in estimators for the pilot bandwidths
in our algorithm below, which converge at the correct rate. A detailed
description about plug-in estimators could be found in \citet[Chapter 3]{Wand:1995kv} and \cite{chacon2010multivariate}.

Once we have those estimated functionals, we can plug them into $\HDR(\bH)$ to obtain an estimated loss function $\widehat{\HDR}(\bH)$. Note  $\bH$ appears in the integrand of a Hausdorff integral and cannot be factored out of the integral; thus minimizing $\widehat{\HDR}(\bH)$ directly is infeasible. Instead, we minimize a discretized approximation to
$\widehat{\HDR}(\bH)$. To illustrate this idea, we use the minimization of $\HDR(\bH)$ as an example. Let $\mc{A}=\{A_i\}_{i=1}^m$ be a partition of $\beta_{\tau}$ such that $\mc{H}(A_i)$ is sufficiently small for $i=1,2,\ldots,m$. Then $w_0=(\int_{\beta_{\tau}}\frac{1}{\|\grad f_0\|}\,d\mc{H})^{-1}$ can be approximated by $ \tilde{w}_0=\sum_{i=1}^m\inv{\|\grad f_0(\tilde{\bx}_i)\|}\mc{H}(A_i)$, where $\tilde{\bx}_i$ is an arbitrary point belonging to $A_i$.
Note for $d=2$, $\mc{H}(A_i)$ is well approximated by the length of
the line segment connecting the boundary points of $A_i$. $V_1(\bH)$
and $V_2(\bH)$ can be computed approximately in  similar
ways. Replacing  $w_0$, $V_1(\bH)$, $V_2(\bH)$ with corresponding discretized approximations in $C_{\bx}(\bH)$ gives us an approximation $\tilde{C}_{\bx}(\bH)$ for each $\bx$.  Then
\begin{align}
  \HDR(\bH)
  &\approx
    \frac{f_{\tau,0}}{\sqrt{n|\bH|^{1/2}}}\int_{\beta_{\tau}}
    \frac{2\phi(\tilde{C}_{\bx}(\bH)) + 2\Phi(\tilde{C}_{\bx}(\bH))\tilde{C}_{\bx}(\bH) - \tilde{C}_{\bx}(\bH)}{-A_{\bx}} \,
    d\mc{H}(\bx)  \nonumber 
  \\
  &\approx
    \frac{f_{\tau,0}}{\sqrt{n|\bH|^{1/2}}}\sum_{i=1}^m
    \frac{2\phi(\tilde{C}_{\tilde{\bx}_i}(\bH)) + 2\Phi(\tilde{C}_{\tilde{\bx}_i}(\bH))\tilde{C}_{\tilde{\bx}_i}(\bH) - \tilde{C}_{\tilde{\bx}_i}(\bH)}{-A_{\tilde{\bx}_i}} \mc{H}(A_i).
    \label{eq:numerical-hausdorff-integral-2} 
\end{align}
The last line above provides a computable, optimizable and  close approximation to
$\HDR(\bH)$ as long as $\mc{H}(A_i)$ is small enough for each
$i$. We use $K=\phi$ throughout the algorithm.

\noindent The full algorithm for the HDR bandwidth
selector is as follows:
\begin{enumerate}
\item With given i.i.d random sample $\bX_1,\bX_2,\ldots,\bX_n$, estimate
  $\bH_0$, $\bH_1$, $\bH_2$ using two-stage direct plug-in strategies.
\item Obtain the pilot estimator of $f_0$, $\nabla f_0$, $\nabla^2
  f_0$ based on  the kernel density estimators $\widehat{f}_{n,\bH_0}$,
  $\widehat{f}_{n,\bH_1}$, $\widehat{f}_{n,\bH_2}$.
\item \label{item:step3} Let
  $\widehat{f}_{\tau,n,\bH_0}:=\inf\{y\in(0,\infty):\int_{\RR^d}\widehat{f}_{n,\bH_0}(\bx)\one_{\{\widehat{f}_{n,\bH_0}(\bx)\ge
    y\}}\,d\bx\le 1-\tau\}$ be the pilot estimator of   $f_{\tau,0}$,
  $\widehat{\mc{L}}_{\tau,\bH_0}:=\{\bx\in
  \RR^d:\widehat{f}_{n,\bH_0}(\bx)\ge \widehat{f}_{\tau,n,\bH_0}\}$ be
  the pilot estimator of
  $\mc{L}_{\tau}$ and
  $\widehat{\beta}_{\tau,\bH_1}:=\{\bx\in\RR^d:\widehat{f}_{n,\bH_1}(\bx)=\widehat{f}_{\tau,n,\bH_0}\}$
  be the pilot estimator of
  $\beta_{\tau}$.
\item   Substitute the estimators from Step 2 and 3 into the
  expressions for $C_{\bx}$  and $A_{\bx}$ to obtain
  $\widehat{C}_{\bx}$ and $\widehat{A}_{\bx}$. Then
  \begin{equation*}
    \widehat{\HDR}(\bH) 
    =\frac{\widehat{f}_{\tau,n, \bH_0}}{\sqrt{n|\bH|^{1/2}}}\int_{\widehat{\beta}_{\tau,\bH_1}}
    \frac{2\phi(\widehat{C}_{\bx}(\bH)) + 2\Phi(\widehat{C}_{\bx}(\bH))\widehat{C}_{\bx}(\bH) - \widehat{C}_{\bx}(\bH)}{-\widehat{A}_{\bx}} \,
    d\mc{H}(\bx).
  \end{equation*}
\item Minimize the discretized approximation of $
  \widehat{\HDR}(\bH)$ described in the previous paragraph with Newton's method to obtain the
  estimated optimal HDR bandwidth.
\end{enumerate}

Note for the above procedure, in step 3, unlike the pilot estimator for
$\mc{L}_{\tau}$, the pilot estimator for $\beta_{\tau}$ is obtained using
$\widehat{f}_{n,\bH_1}$ with $\widehat{f}_{\tau,n,\bH_0}$ as the
level. The reason we use $\widehat{f}_{n,\bH_1}$ instead of
$\widehat{f}_{n,\bH_0}$ is because the error bound for estimating
$\beta_{\tau}$ depends on the difference between the gradient of true
density and that of the kernel density estimator and using
$\widehat{f}_{n,\bH_1}$ yields a better error bound (See Lemma
\ref{lem:Hauss-diff}  and proof of Corollary
\ref{cor:ls-bandwidth-selector},  \ref{cor:hdr-bandwidth-selector}
for details).

Newton's method does not guarantee the optimum will be a positive
definite bandwidth matrix. Luckily, in practice the global minimum
appears to always be positive definite. The objective function
$\widehat{\HDR}$ appear to be locally convex although not globally
convex (see Figures~\ref{fig:level-risk}  and \ref{fig:hdr-risk} for
some plots of $\LS(\cdot)$ and $\HDR(\cdot)$), so one has to be slightly careful about starting values for
Newton's algorithm.

Notice also that
in
Step~\ref{item:step3} of the above algorithm
we
need to calculate the level
$\widehat{f}_{\tau,n,\bH_0}$ having $\widehat{f}_{n,\bH_0}$-probability $1-\tau$.  \citet{Hyndman:1996bf} suggests two similar methods for calculating $\widehat{f}_{\tau,n,\bH_0}$.  One is to use an appropriate empirical quantile of the values $\widehat{f}_{n,\bH_0}(\bX_i),$ $i=1,\ldots, n$ (``Approach H1'').  An approach of this type is studied by \cite{Cadre:2013fv} (and by \cite{Chen:2016vv} in calculating his $\hat{\alpha}_n(x)$).  However, this estimator is not equal to $\widehat{f}_{\tau,n,\bH_0}$, and we have not yet quantified the difference, so we choose not to use this approach.  Alternatively, \citet{Hyndman:1996bf} suggests resampling $\tilde{\bX}_1, \ldots, \tilde{\bX}_M \iid \ffnH$, and then using the appropriate empirical quantile of $\widehat{f}_{n,\bH_0}(\tilde{\bX}_i),$ $i=1,\ldots, M$ (``Approach H2'').  Any desired accuracy can be attained by taking $M$ large enough. Another method is to simply use numeric integration: one can do a binary search over $(0, \| \widehat{f}_{n,\bH_0} \|_\infty )$, computing the integral (numerically) at each level until one arrives at $\widehat{f}_{\tau,n,\bH_0}$ within desired accuracy.  When $d=2$, we found the numeric integration and binary search to be the fastest method for calculating $\widehat{f}_{\tau,n,\bH_0}$.  We suspect for higher dimensions, Approach H2 will be faster than numeric integration.
Of course, Approach H1 is faster than the other two, and so it would be helpful to study how the Approach H1 estimator compares to $\widehat{f}_{\tau,n,\bH_0}$.

In our pilot estimation process when $d=2$, we use numerical
interpolation to generate points on $\wh \beta_{\tau, \bH_1}$ and to calculate
$\mc{A}$. In more detail: we generate dense grid points along both the
$x$-axis and the $y$-axis, and we estimate the density values at those
grid points. Then we perform interpolation between grid points to get points such that the
estimated density values at those points are (approximately)
$\wh f_{\tau,n, \bH_0}$, and those points induce a partition of $\wh \beta_{\tau, \bH_1}$.
Then for any $A_i$ in the partition, $A_i$ is defined by two end points, and  $\mc{H}(A_i)$ can be approximated by the length of the line segment connecting those two end points. 
By generating
enough dense and equally spaced grid  points, we   expect those line
segments will approximate the true  partition $\mc{A}$ well and thus
the Hausdorff integral will also be well approximated.
However, this method is hard to implement in dimension larger than
$2$ because  there is no simple  approximation for  the volumes of corresponding
partition sets of $\wh \beta_{\tau,\bH_1}$.  One approach that may be fruitful for solving this problem is to use  Quasi-Monte Carlo integration to calculate the
Hausdorff integral \citep[see][]{de2018quasi}. The idea is to generate
a set of points $\bs{b}_1,\ldots,\bs{b}_m$  on the manifold
$\beta$ such that those points are
approximately uniformly distributed and then we can approximate
$\int_{\beta}\gamma(\bx)\,d\mc{H}$ by
$\frac{1}{m}\sum_{i=1}^m\gamma(\bs{b}_i)$. Analysis and numerical
simulation for the method has been done for special Hausdorff
integrals over special manifolds (cone, cylinder, sphere and
torus). There is further work needed to extend the method to the
more general manifolds that arise in our
problem, which we believe is non-trivial and beyond the scope of this paper.

Note that the method just described for computing the approximation \eqref{eq:numerical-hausdorff-integral-2} 
can be implemented as a so-called midpoint method of numerical integration,
for which
classical analysis  shows an error rate of $O(m^{-2})$ ($m$ is the number of equi-sized partitioning sets of the interval), provided that the  function being integrated has bounded second derivative and the domain being integrated is a compact interval in $\RR$
\citep{Hammerlin:1991kz}. The same error applies for using the midpoint method to numerically compute Hausdorff integrals over one dimensional compact manifolds embedded in $\RR^2$, by the change of variables Theorem 2 (page 99) of \cite{Evans:2015uy}.   Thus the errors for our selected bandwidths 
in the corollaries below will also have an error dependent on $m$, but in our experience  $m$ can be chosen large enough that this is negligible (when $d=2$), so we do not include it in the analysis.

To give the asymptotic  performance of our bandwidth selector,
we need  the following additional assumptions.
\begin{assumption}{D3}
  \label{assm:DA3}
  The true density function $f_0$ has four
  continuous bounded and square integrable derivatives.

\end{assumption}
\begin{assumption}{K2}
  \label{assm:KA2}
  $K$ is symmetric, i.e.,
  $K(x_1,\ldots,x_i,\ldots,x_d)=K(x_1,\ldots,-x_i,\ldots,x_d)$ for
  $i=1,\ldots,d$. And all the first and second partial derivatives of $K$
  are square integrable.
\end{assumption}
\begin{assumption}{H2}
  \label{assm:HA2}
  For $r=0,1,2$,
  the bandwidth matrix $\bH_r$ is symmetric, positive definite,
  such that $\bH_r \to 0$ elementwise, and
  $n^{-1} |\bH_r |^{-1/2}(\bH_r^{-1})^{\otimes r}\to 0$ as $n\to\infty$,
  where $\otimes$ stands for Kronecker product.
\end{assumption}

\noindent
This assumption and notation is as in   \citet{chacon2011asymptotics}.
Here for a matrix $\bs{A}$, $\bs A^{\otimes 0} = 1 \in \RR$ and $\bs A^{\otimes 1} = \bs{A}$.
Now, recall that
\begin{align*}  \LS(h):=\LS(h^2\bs{I})=\frac{c}{(nh^d)^{1/2}}\int_{\beta(c)}\frac{\phi(B_{\bx}(h))+2\Phi(B_{\bx}(h))B_{\bx}(h)-B_{\bx}(h)}{-A_{\bx}}\,d\mc{H}(\bx),
\end{align*}
and $B_{\bx}(h)=(bh^{d+4})^{1/2}F_{\bx}$ with
$F_{\bx}=-\frac{1}{2}\mu(K)\tr(\nabla^2 f_0(\bx))/\sqrt{R(K)c}$. And
\begin{align*}
  \text{HDR}(h):=\text{HDR}(h^2\bs{I})=\frac{f_{\tau,0}}{(nh^d)^{1/2}}\int_{\beta_{\tau}}\frac{\phi(C_{\bx}(h))+2\Phi(C_{\bx}(h))C_{\bx}(h)-C_{\bx}(h)}{-A_{\bx}}\,d\mc{H}(\bx),
\end{align*}
and
$   C_{\bx}(h)=(nh^{d+4})^{1/2}G_{\bx}$,
where
\begin{align*}
  G_{\bx}=-\frac{\mu(K)\tr(\nabla^2f_0(\bx))}{\sqrt{R(K)f_{\tau,0}}}+\frac{w_0\int_{\beta_{\tau}}\frac{\mu(K)\tr(\nabla^2
  f_0)}{2\|\nabla f_0\|}\,d\mc{H}+\frac{w_0}{f_{\tau,0}}\int_{\mc{L}_{\tau}}\frac{\mu(K)\tr(\nabla^2f_0)}{2}\,d\lambda}{\sqrt{R(K)f_{\tau,0}}}.
\end{align*}
By letting $s=(nh^{d+4})^{1/2}$, we see that minimizing $\LS(h)$  is equivalent
to minimizing
\begin{align*} \text{AR}_{\LS}(s):=s^{-d/(d+4)}\int_{\beta(c)}\frac{\phi(sF_{\bx})+2\Phi(sF_{\bx})sF_{\bx}-sF_{\bx}}{-A_{\bx}}\, d\mc{H}(\bx),
\end{align*}
and minimizing  $\text{HDR}(h)$ is equivalent to minimizing
\begin{align*}
  \text{AR}_{\HDR}(s):=s^{-d/(d+4)}\int_{\beta_{\tau}}\frac{\phi(sG_{\bx})+2\Phi(sG_{\bx})sG_{\bx}-sG_{\bx}}{-A_{\bx}}\, d\mc{H}(\bx).
\end{align*}
he following corollaries show the convergence rate of the estimated
optimal bandwidth  for $\bH\in\mc{S}_1$.
\begin{corollary}\label{cor:ls-bandwidth-selector}
  Let Assumptions \ref{assm:DA-ls}, \ref{assm:BA}, \ref{assm:DA3},
  \ref{assm:KA}, \ref{assm:KA2} and
  \ref{assm:HA2} hold. Assume further that $s_{\text{opt}}$ is a unique
  minimizer of $\text{AR}_{\LS}(s)$ for $s>0$ and
  $\text{AR}^{\prime \prime}_{\LS}(s_{\text{opt}})>0$. Then
  \begin{align*}
    \frac{\hat{h}_{\text{opt}}}{h_{\text{opt}}}=1+O_p\lp n^{-2/(d+8)}\rp
    \quad \text{and} \quad
    \frac{\hat{h}_{\text{opt}}}{h_{0}}=1+O_p\lp
    n^{-2/(d+8)}\rp,
  \end{align*}
  as $n\to\infty$, where $\hat{h}_{\text{opt}}$ is the minimizer of
  $\widehat{\text{LS}}(h)$, $h_{\text{opt}}$ is the minimizer of
  $\text{LS}(h)$ and $h_{0}$ is any minimizer of
  $\EE[\mu_{f_0}\{\mc{L}(c)\Delta \widehat{\mc{L}}_{\bH}(c)\}]$ over the
  class $\mc{S}_1=\{h^2\bs{I},h>0\}$.
\end{corollary}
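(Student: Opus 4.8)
The strategy is to push everything onto the reparametrized objective $\text{AR}_{\LS}$, show that its sample version $\widehat{\text{AR}}_{\LS}$ (together with its first two $s$-derivatives) converges to it at rate $n^{-2/(d+8)}$ on a fixed compact neighbourhood of $s_{\text{opt}}$, and then invoke a textbook $M$-estimation/Taylor argument for the argmin. First, note that the substitution $s=(nh^{d+4})^{1/2}$, equivalently $h=h(s):=s^{2/(d+4)}n^{-1/(d+4)}$, is a bijection of $(0,\infty)$ onto itself that involves only $n$ and $d$, hence commutes with taking argmin; so $h_{\text{opt}}=h(s_{\text{opt}})$ with $s_{\text{opt}}=\argmin_{s>0}\text{AR}_{\LS}(s)$, and $\hat h_{\text{opt}}=h(\hat s_{\text{opt}})$ with $\hat s_{\text{opt}}:=\argmin_{s>0}\widehat{\text{AR}}_{\LS}(s)$, where $\widehat{\text{AR}}_{\LS}$ is $\text{AR}_{\LS}$ with $\beta(c)$, $A_{\bx}$ and $F_{\bx}$ replaced by the pilot-based quantities $\widehat\beta_{c,\bH_1}:=\{\bx:\widehat f_{n,\bH_1}(\bx)=c\}$, $\widehat A_{\bx}:=-\|\nabla\widehat f_{n,\bH_1}(\bx)\|/\sqrt{R(K)c}$ and $\widehat F_{\bx}$ (built from $\nabla^2\widehat f_{n,\bH_2}$). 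Consequently $\hat h_{\text{opt}}/h_{\text{opt}}=(\hat s_{\text{opt}}/s_{\text{opt}})^{2/(d+4)}$, so since $s_{\text{opt}}$ is a fixed positive constant it suffices to prove $\hat s_{\text{opt}}-s_{\text{opt}}=O_p(n^{-2/(d+8)})$; the first displayed bound then follows from a one-term Taylor expansion of $t\mapsto t^{2/(d+4)}$ at $t=1$.

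\textbf{Uniform convergence of the estimated objective.} Fix a compact interval $I\subset(0,\infty)$ with $s_{\text{opt}}$ in its interior. I would show $\sup_{s\in I}\big|\widehat{\text{AR}}_{\LS}^{(j)}(s)-\text{AR}_{\LS}^{(j)}(s)\big|=O_p(n^{-2/(d+8)})$ for $j=0,1,2$. Under Assumptions~\ref{assm:DA3}, \ref{assm:KA2} and \ref{assm:HA2}, the pilot estimators converge on a neighbourhood of $\beta(c)$ at the rates that are optimal for their respective targets (see \citet{chacon2011asymptotics}): $\|\nabla\widehat f_{n,\bH_1}-\nabla f_0\|_\infty=O_p(n^{-2/(d+6)})$ and $\|\nabla^2\widehat f_{n,\bH_2}-\nabla^2 f_0\|_\infty=O_p(n^{-2/(d+8)})$, the latter being the slowest. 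Hence $\sup_{\bx}|\widehat A_{\bx}-A_{\bx}|=O_p(n^{-2/(d+6)})$ (using $\inf_{U_a}\|\nabla f_0\|>0$ from Assumption~\ref{assm:DA-ls}, so $\|\cdot\|$ and the square root are Lipschitz on the relevant range, and that $\widehat A_{\bx}$ is likewise bounded away from $0$ with probability tending to one), and $\sup_{\bx}|\widehat F_{\bx}-F_{\bx}|=O_p(n^{-2/(d+8)})$. Moreover, by Lemma~\ref{lem:Hauss-diff} (this is exactly why $\bH_1$, not $\bH_0$, is used for the level set), $\widehat\beta_{c,\bH_1}$ lies within Hausdorff distance $O_p(n^{-2/(d+6)})$ of $\beta(c)$, with comparable surface measure, so replacing $\beta(c)$ by $\widehat\beta_{c,\bH_1}$ as the domain of the Hausdorff integral contributes a term of the same (negligible) order. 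Finally the integrand $(s,\bx)\mapsto\{2\phi(sF)+2\Phi(sF)sF-sF\}/(-A)$ and its $s$-derivatives up to order two are bounded and Lipschitz in $(A,F)$ uniformly over $s\in I$ and $\bx$ in a neighbourhood of $\beta(c)$ (because $\phi$, $\Phi$ and polynomials are smooth with bounded derivatives on the relevant compacts and $-A$ is bounded away from $0$ there), while $s\mapsto s^{-d/(d+4)}$ is smooth on $I$; assembling these estimates gives the claimed uniform rate, the $n^{-2/(d+8)}$ coming entirely from the $\nabla^2 f_0$ pilot.

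\textbf{$M$-estimation step and the second bound.} On $I$ the function $\text{AR}_{\LS}$ is $C^2$ with $\text{AR}_{\LS}'(s_{\text{opt}})=0$ and, by hypothesis, $\text{AR}_{\LS}''(s_{\text{opt}})>0$, hence $\inf_I\text{AR}_{\LS}''>0$ after shrinking $I$; by the $j=2$ case above, $\inf_I\widehat{\text{AR}}_{\LS}''>0$ with probability tending to one, so $\widehat{\text{AR}}_{\LS}$ is strictly convex on $I$. Using the $j=0$ case together with the coercivity of $\text{AR}_{\LS}$ (immediate from its explicit form: $\text{AR}_{\LS}(s)\to\infty$ as $s\to0^+$ and as $s\to\infty$) and the assumed global uniqueness of $s_{\text{opt}}$, the minimum of $\text{AR}_{\LS}$ is well separated and hence $\hat s_{\text{opt}}$ lies in the interior of $I$ with probability tending to one, so $\widehat{\text{AR}}_{\LS}'(\hat s_{\text{opt}})=0$. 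A Taylor expansion about $s_{\text{opt}}$ then gives, for some $\xi_n$ between $\hat s_{\text{opt}}$ and $s_{\text{opt}}$,
\[
|\hat s_{\text{opt}}-s_{\text{opt}}|=\frac{|\widehat{\text{AR}}_{\LS}'(s_{\text{opt}})-\text{AR}_{\LS}'(s_{\text{opt}})|}{|\widehat{\text{AR}}_{\LS}''(\xi_n)|}=O_p(n^{-2/(d+8)}),
\]
by the $j=1$ bound for the numerator and the $j=2$ bound for the denominator. This yields $\hat h_{\text{opt}}/h_{\text{opt}}=1+O_p(n^{-2/(d+8)})$. For the second bound, write $\hat h_{\text{opt}}/h_0=(\hat h_{\text{opt}}/h_{\text{opt}})(h_{\text{opt}}/h_0)$; Corollary~\ref{cor:ls-oracle-bandwidth} gives $h_{\text{opt}}/h_0=1+o(1)$, and tracking the remainder of Theorem~\ref{thm:levelset} through the argument of that corollary (the true risk equals $\LS(h)$ up to a relative error which, at the optimal scale $h\asymp n^{-1/(d+4)}$, is of smaller order than $n^{-2/(d+8)}$) upgrades this to $h_{\text{opt}}/h_0=1+O(n^{-2/(d+8)})$, giving the second claim.

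\textbf{Main obstacle.} The calculus is routine; the delicate part is the uniform convergence step, and within it the two ingredients that are not pure perturbation bookkeeping: securing the sharp uniform rate $O_p(n^{-2/(d+8)})$ for the pilot second-derivative estimator on a neighbourhood of $\beta(c)$ (the bottleneck for the whole result), and controlling the change in the Hausdorff integral when the fixed manifold $\beta(c)$ is replaced by the random level set $\widehat\beta_{c,\bH_1}$, which requires comparing their parametrizations and surface measures and is precisely the content of Lemma~\ref{lem:Hauss-diff}. A secondary nuisance is ruling out escape of $\hat s_{\text{opt}}$ from the compact set $I$, since $\widehat{\text{AR}}_{\LS}$ is only controlled there; this needs a crude growth bound on $\widehat{\text{AR}}_{\LS}$ near $0$ and $\infty$ combined with the well-separatedness of the minimum of $\text{AR}_{\LS}$.
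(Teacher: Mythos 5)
Your proposal follows essentially the same route as the paper's proof: reparametrize via $s=(nh^{d+4})^{1/2}$, show that the estimated objective $\widehat{\text{AR}}_{\LS}$ and its $s$-derivatives converge uniformly on a compact neighborhood of $s_{\text{opt}}$ at rate $O_p(n^{-2/(d+8)})$ (driven by the $\nabla^2 f_0$ pilot rate from \citet{chacon2011asymptotics} and the manifold-replacement bound of Lemma~\ref{lem:Hauss-diff}), and then apply a one-step Taylor/$M$-estimation argument using $\text{AR}_{\LS}''(s_{\text{opt}})>0$.

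Two small presentational points where the paper differs. First, for the fixed-manifold term the paper controls $\bigl|\int_{\beta(c)}(m-\widehat m)\,d\mc{H}\bigr|$ via Jensen's inequality and Markov's inequality applied to $\int_{\beta(c)}\EE(m-\widehat m)^2\,d\mc{H}$ rather than via a $\sup_{\bx}$ Lipschitz bound as you sketch; both routes deliver the same $O_p(n^{-2/(d+8)})$. Second, for $h_{\text{opt}}/h_0$ the paper does not invoke Corollary~\ref{cor:ls-oracle-bandwidth} (which assumes unimodality and spherical symmetry, hypotheses not in force here); instead it directly sharpens the remainder of Theorem~\ref{thm:levelset} under Assumption~\ref{assm:DA3} with $t_n=\sqrt{\log n}$, $\delta_n=\sqrt{\log n/(n|\bH|^{1/2})}$ to get $\EE[\mu_{f_0}\{\mc{L}(c)\Delta\widehat{\mc{L}}_{\bH}(c)\}]=\LS(h)+O\lp n^{-4/(d+4)}(\log n)^{3/2}\rp$ and hence $h_{\text{opt}}/h_0=1+O\lp n^{-2/(d+4)}(\log n)^{3/2}\rp=1+o\lp n^{-2/(d+8)}\rp$. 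This is exactly the remainder-tracking you describe, and it relies only on the hypothesis that $s_{\text{opt}}$ exists, is unique, and $\text{AR}''_{\LS}(s_{\text{opt}})>0$, which is assumed in the statement of the corollary; citing Corollary~\ref{cor:ls-oracle-bandwidth} as you do would import assumptions not available here.
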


\begin{corollary}\label{cor:hdr-bandwidth-selector}
  Let Assumptions \ref{assm:DA-hdr}, \ref{assm:DA3}, \ref{assm:KA}, \ref{assm:KA2} and
  \ref{assm:HA2} hold. Assume further that $s_{\text{opt}}$ is a unique
  minimizer of $\text{AR}_{\HDR}(s)$ for $s>0$ and
  $\text{AR}^{\prime \prime}_{\HDR}(s_{\text{opt}})>0$. Then
  \begin{align*}
    \frac{\hat{h}_{\text{opt}}}{h_{\text{opt}}}=1+O_p\lp n^{-2/(d+8)}\rp,
  \end{align*}
  as $n\to\infty$, where $\hat{h}_{\text{opt}}$ is the minimizer of
  $\widehat{\text{HDR}}(h)$ and $h_{\text{opt}}$ is the minimizer of $\text{HDR}(h)$.
\end{corollary}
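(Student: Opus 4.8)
The plan is to pass to the scalar reparametrization $s=(nh^{d+4})^{1/2}$, under which $\HDR(h)$ equals (up to a positive constant not depending on $s$) $\text{AR}_{\HDR}(s)=s^{-d/(d+4)}\int_{\beta_{\tau}}\psi(sG_{\bx})/(-A_{\bx})\,d\mc{H}(\bx)$ with $\psi(u):=\phi(u)+2\Phi(u)u-u$, and under which the plug-in criterion $\widehat{\HDR}(h)$ of Step~4 of the algorithm becomes (again up to a positive $s$-free factor) $\widehat{\text{AR}}_{\HDR}(s):=s^{-d/(d+4)}\int_{\wh\beta_{\tau,\bH_1}}\psi(s\wh G_{\bx})/(-\wh A_{\bx})\,d\mc{H}(\bx)$, where $\wh G_{\bx},\wh A_{\bx}$ are obtained by inserting the pilot estimators of Steps~2--3. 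Since $s\mapsto h=s^{2/(d+4)}n^{-1/(d+4)}$ is a fixed smooth increasing bijection of $(0,\infty)$, minimizing $\widehat{\HDR}(h)$ over $\mc{S}_1$ is equivalent to minimizing $\widehat{\text{AR}}_{\HDR}(s)$, and the conclusion reduces to showing $\hat s_{\text{opt}}/s_{\text{opt}}=1+O_p(n^{-2/(d+8)})$ for $\hat s_{\text{opt}}:=\argmin_{s>0}\widehat{\text{AR}}_{\HDR}(s)$. I will obtain this from the classical argmin/Taylor argument: once I show that $\widehat{\text{AR}}_{\HDR}$ and its first two $s$-derivatives converge uniformly to those of $\text{AR}_{\HDR}$, at rate $n^{-2/(d+8)}$, on a fixed compact neighborhood $N$ of $s_{\text{opt}}$, then by uniform convergence on compacts together with coercivity of $\text{AR}_{\HDR}$ and the assumed uniqueness of $s_{\text{opt}}$ we get $\hat s_{\text{opt}}\to s_{\text{opt}}$ in probability; $\inf_{N}\widehat{\text{AR}}''_{\HDR}\ge\tfrac12\,\text{AR}''_{\HDR}(s_{\text{opt}})>0$ with probability tending to one (shrinking $N$ if needed, using $\text{AR}''_{\HDR}(s_{\text{opt}})>0$); and expanding $\widehat{\text{AR}}'_{\HDR}$ about $s_{\text{opt}}$,
\[
0=\widehat{\text{AR}}'_{\HDR}(\hat s_{\text{opt}})=\widehat{\text{AR}}'_{\HDR}(s_{\text{opt}})+\widehat{\text{AR}}''_{\HDR}(\xi_n)\,(\hat s_{\text{opt}}-s_{\text{opt}}),
\]
together with $\text{AR}'_{\HDR}(s_{\text{opt}})=0$, yields $\hat s_{\text{opt}}-s_{\text{opt}}=O_p(n^{-2/(d+8)})$, and hence, through the smooth bijection, $\hat h_{\text{opt}}/h_{\text{opt}}=1+O_p(n^{-2/(d+8)})$.

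First I would record the pilot rates. Under Assumption~\ref{assm:HA2} the two-stage direct plug-in bandwidths $\bH_0,\bH_1,\bH_2$ have the optimal orders $n^{-1/(d+4)},\,n^{-1/(d+6)},\,n^{-1/(d+8)}$ for estimating $f_0,\grad f_0,\hess f_0$; combined with the four bounded derivatives of Assumption~\ref{assm:DA3}, the kernel conditions of Assumptions~\ref{assm:KA} and~\ref{assm:KA2}, and the VC bound~\eqref{eq:VC-defn}, standard results on kernel derivative estimators (\citet{chacon2011asymptotics}; \citet{Gine:2002jc} for uniform rates) give, on a fixed neighborhood of $\beta_\tau$, $\|\wh f_{n,\bH_0}-f_0\|_\infty=O_p(n^{-2/(d+4)})$, $\|\grad\wh f_{n,\bH_1}-\grad f_0\|_\infty=O_p(n^{-2/(d+6)})$, and $\|\hess\wh f_{n,\bH_2}-\hess f_0\|_\infty=O_p(n^{-2/(d+8)})$, up to logarithmic factors that are immaterial here since only integrated (over $\beta_\tau$) rates are needed. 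A delta-method argument for the functional $f_0\mapsto\fftau$ (using $\inf_{U_a}\|\grad f_0\|>0$, so that $y\mapsto\int_{\RR^d}f_0\one_{\{f_0\ge y\}}\,d\lambda$ has nonvanishing derivative near $\fftau$) gives $|\wh f_{\tau,n,\bH_0}-\fftau|=O_p(n^{-2/(d+4)})$, and Lemma~\ref{lem:Hauss-diff} gives that the Hausdorff distance between $\wh\beta_{\tau,\bH_1}$ and $\beta_\tau$ is $O_p(\|\grad\wh f_{n,\bH_1}-\grad f_0\|_\infty+|\wh f_{\tau,n,\bH_0}-\fftau|)=O_p(n^{-2/(d+6)})$. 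The slowest of all these is the Hessian rate $n^{-2/(d+8)}$, which governs the final bound.

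Next I would propagate these into $\widehat{\text{AR}}_{\HDR}$, splitting $\widehat{\text{AR}}^{(j)}_{\HDR}(s)-\text{AR}^{(j)}_{\HDR}(s)$ ($j=0,1,2$) into a change-of-domain piece and an integrand piece. For the change-of-domain piece (integral over $\wh\beta_{\tau,\bH_1}$ versus over $\beta_\tau$) I would parametrize $\wh\beta_{\tau,\bH_1}$ as a normal graph over $\beta_\tau$ and use the change-of-variables formula for Hausdorff integrals (Theorem~2, p.~99, of \citet{Evans:2015uy}) together with Lemma~\ref{lem:Hauss-diff}; since $-A_{\bx}$ is bounded away from $0$ on $\beta_\tau$ by Assumption~\ref{assm:DA-hdr}(\ref{item:DA-hdr-grad}), the integrand $\psi(sG_{\bx})/(-A_{\bx})$ and its surface gradient and its first two $s$-derivatives are bounded uniformly for $s\in N$, so this piece is $O_p(n^{-2/(d+6)})$ uniformly in $s\in N$. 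For the integrand piece I would use that $u\mapsto\psi(su)$ and $v\mapsto1/v$ are Lipschitz on the relevant compact ranges uniformly for $s\in N$ (and that $-\wh A_{\bx}$ is likewise bounded away from $0$ w.h.p.\ on $\wh\beta_{\tau,\bH_1}$ and $\hat w_0$ is well defined), so the error is controlled by the pilot errors entering $\wh A_{\bx}$ and $\wh G_{\bx}$ --- the latter through $\hess\wh f_{n,\bH_2}$, $\grad\wh f_{n,\bH_1}$, $\wh f_{\tau,n,\bH_0}$, the estimated $V_2$-integral over $\widehat{\mc{L}}_{\tau,\bH_0}$, and $\hat w_0$ --- the largest of which is again $n^{-2/(d+8)}$; differentiating in $s$ keeps these bounds. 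Feeding the resulting bound $\sup_{s\in N}|\widehat{\text{AR}}^{(j)}_{\HDR}(s)-\text{AR}^{(j)}_{\HDR}(s)|=O_p(n^{-2/(d+8)})$ into the argmin expansion of the first paragraph finishes the proof.

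I expect the change-of-domain step to be the main obstacle: $\wh\beta_{\tau,\bH_1}$ is a random $(d-1)$-dimensional manifold converging only in Hausdorff distance, so one must build a tubular neighborhood and normal-coordinate chart around $\beta_\tau$, verify that $\wh\beta_{\tau,\bH_1}$ lies in it and is a normal graph w.h.p., and check that restricting the pilot derivative estimators to this random manifold still yields the asserted powers of $n$ with no spurious logarithmic loss, via the Hausdorff change-of-variables theorem; the remaining estimates are routine analysis. The proof of Corollary~\ref{cor:ls-bandwidth-selector} is identical after replacing $\text{AR}_{\HDR},G_{\bx},\beta_\tau,\fftau$ throughout by $\text{AR}_{\LS},F_{\bx},\beta(c),c$, and is strictly simpler (since $c$ is known there is no estimation of $\fftau$, and there is no $V_2$ term); its second asserted bound, for $\hat h_{\text{opt}}/h_0$, additionally uses Corollary~\ref{cor:ls-oracle-bandwidth}.
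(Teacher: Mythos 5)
Your proposal is correct and takes essentially the same route as the paper: reparametrize via $s=(nh^{d+4})^{1/2}$, reduce to showing $\widehat{\text{AR}}_{\HDR}$ is uniformly close to $\text{AR}_{\HDR}$ (and its derivatives) on a compact neighborhood of $s_{\text{opt}}$, obtain the $O_p(n^{-2/(d+8)})$ rate from the pilot estimator errors (slowest being the Hessian estimator), with change-of-domain handled via Lemma~\ref{lem:Hauss-diff}, and finish with a Taylor/argmin expansion using $\text{AR}''_{\HDR}(s_{\text{opt}})>0$. A few organizational differences worth noting, none of which affect correctness: the paper bounds the integrand discrepancy $\int_{\beta_\tau}(\hat m-m)\,d\cH$ by Jensen plus Markov (so only $L^2(\cH)$ rates of the pilot errors are needed, avoiding any sup-norm log factor cleanly), while you invoke sup-norm rates and then note the logs are immaterial — both work; the paper splits the Hausdorff change-of-domain into two explicit sub-steps, $\{f_0=\fftau\}\to\{\hat f_{n,\bH_1}=\fftau\}$ (Lemma~\ref{lem:Hauss-diff}, part 2) and $\{\hat f_{n,\bH_1}=\fftau\}\to\{\hat f_{n,\bH_1}=\fftau+\epsilon\}$ (Taylor expansion in the level using Lemma~\ref{lem:Hauss-diff}, part 1), whereas you package the two steps into a single normal-graph parametrization, which is equivalent but requires slightly more care to keep the two error sources separated; and the paper handles the $V_2$ (Lebesgue) integral over $\mc{L}_\tau$ versus $\widehat{\mc{L}}_{\tau,\bH_0}$ via Proposition A.1 of Cadre (2006) rather than by a tubular-neighborhood argument. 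Your argmin expansion is actually stated more carefully than the paper's (you explicitly control two derivatives uniformly rather than manipulating relative errors of $\widehat{\text{AR}}'$), and you correctly identify that the Hessian pilot error $n^{-2/(d+8)}$ is the bottleneck.
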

\medskip
\noindent
Corollaries \ref{cor:ls-bandwidth-selector} and
\ref{cor:hdr-bandwidth-selector} both assume existence of a point $s_{\text{opt}}$.
Corollary~\ref{cor:ls-oracle-bandwidth} and \ref{cor:hdr-oracle-bandwidth} show the existence of $s_{\text{opt}}$ under one set of assumptions, although (as discussed after those corollaries) this conclusion holds in many other scenarios.
\begin{remark}
  In Corollary~\ref{cor:ls-bandwidth-selector}, we provide the rates of convergence for both the estimated optimal bandwidth to the oracle bandwidth selector and the estimated optimal bandwidth to the true minimizer of $\EE[\mu_{f_0}\{\mc{L}(c)\Delta\widehat{\mc{L}}_{\bH}(c)\}]$, while in Corollary~\ref{cor:hdr-bandwidth-selector}, we only provide the rate of convergence for the estimated optimal bandwidth to the oracle bandwidth selector. The main difficulty for proving the convergence rate of the estimated optimal bandwidth to the true minimizer of $\EE[\mu_{f_0}\{\mc{L}_{\tau}\Delta\widehat{\mc{L}}_{\tau,\bH}\}]$, as we can see from the proof of Theorem~\ref{thm:hdr}, is understanding the $\Var \fftaun$ term. At present, we can only show that $\Var \fftaun$ is $o(\frac{1}{n|\bH|^{1/2}})$, but do not have a more explicit expression.  Thus (even with higher order derivative assumptions) we cannot say anything stronger  about $\Var \fftaun$, which is different
  than when $\beta_\tau$ is a discrete point set, in the $d=1$ case.
\end{remark}
\begin{remark}
  The rates of convergence given in Corollaries \ref{cor:ls-bandwidth-selector} and \ref{cor:hdr-bandwidth-selector} are known as {\it relative rates of convergence} since they are of the form $(\hat{h}_{\text{opt}} - \tilde{h}) / \tilde{h}$ for some $\tilde{h}$ (which is itself converging to $0$) \citep{Wand:1995kv}.  One can compare the relative rates from Corollaries \ref{cor:ls-bandwidth-selector} and \ref{cor:hdr-bandwidth-selector} to the relative rates of other KDE bandwidth selectors.  If we plug $d=1$ into the rate $n^{-2 / (d+8)}$ we recover the rate that arose in Theorem 3 of \citet{Samworth:2010cj}.  We can also make comparisons to bandwidth selector relative rates based on global loss functions.
  \begin{mylongform}
    \begin{longform}
      We consider the case where $d=1$,
      where the most results are available for comparison, even though we focus on the case $d \ge 2$ for our results in  Corollaries \ref{cor:ls-bandwidth-selector} and \ref{cor:hdr-bandwidth-selector}.
      \cite{Scott:1987vc} show that when $f_0$ has four derivatives that satisfy some further integrability conditions,
      then
      $n^{1/10} (\wh h - h_{\text{MISE}}) / h_{\text{MISE}} = O_p(1)$ where $\wh h$
      is either the bandwidth  based on
      a ``least squares (unbiased) cross validation'' procedure or
      on a ``biased cross validation'' procedure.  (This requires also some assumptions on the kernel.)
      This slow rate of $n^{-1/10}$ can be sped up; under
      various differentiability assumptions on $f_0$,
      $n^{5/14} (\wh h - h_{\text{MISE}}) / h_{\text{MISE}} = O_p(1)$ where $\wh h$
      is the bandwidth based on a ``two-stage direct plug-in'' procedure, a
      two-stage ``solve the equation'' procedure, or a ``smoothed cross
      validation'' procedure \citep{Wand:1995kv,Sheather:1991tp,Hall:1992gx}.  To
      achieve this rate, certain pilot bandwidths and kernels must be chosen
      correctly, in addition to having appropriate smoothness of $f_0$.  (In fact,
      somewhat more complicated versions of these procedures can achieve root-$n$
      rates of relative convergence, again with appropriate smoothness of $f_0$;
      but simulations show that very large sample sizes are needed for these more
      complicated procedures to perform well, see e.g.\ \cite[Section
      3.8]{Wand:1995kv}.)
      The $d=1$ rate exponent of $2/9 \approx .22$ is faster than $1/10$ but slower than $5/14 \approx .36$.
    \end{longform}
  \end{mylongform}
  \cite{Duong:2005ir} study relative rates of convergence for various bandwidth selectors to the bandwidth matrix that minimizes  mean integrated squared error, $E \int_{\RR^d} (\ffnH(\bx) - f_0(\bx))^2 \, d\bx$.
  (An alternative benchmark is the bandwidth that minimizes {\em integrated squared error}, $\int_{\RR^d} (\widehat{f}_{n,h}(\bx) - f_0(\bx))^2 \, d\bx$, for which e.g., LSCV performs well \citep{Hall:1987ik}, but the relative rates for that problem behave quite differently than the ones we study in Corollaries \ref{cor:ls-bandwidth-selector} and \ref{cor:hdr-bandwidth-selector}, so we do not mention them here.)
  Table 1 of \cite{Duong:2005ir}
  presents the  convergence rates for plug-in, unbiased cross validation, biased cross validation, and smoothed cross validation bandwidth matrix estimators.
  (See also \cite{Sain:1994hs,Wand:1994tn,Duong:2003kd,  
    Scott:1987vc,Sheather:1991tp,Hall:1992gx}.) 
  Consider $d \ge 2$.
  The unbiased and biased cross validation methods have relative convergence rates of $n^{-\min(d,4)/ (2d + 8)}$.  The smoothed cross validation method and the plug-in method of
  \cite{Duong:2003kd} both have rates of $n^{-2 / (d + 6)}$.  The
  plug-in method of \cite{Wand:1994tn} has a rate of $n^{-4 / (d + 12)}$ which is the fastest rate for all $d$.  The rate presented in our corollaries is faster than
  $n^{-\min(d,4)/ (2d + 8)}$ but slower than $n^{-2 / ( d + 6)}$.  This suggests that more careful development of our plug-in procedure, perhaps involving more careful pilot bandwidth selection procedures, could potentially improve the asymptotic rate.  However the analysis
  (in particular understanding how $\Var (\fftaun)$ behaves)
  may not be trivial.  Also,
  procedures with better asymptotics may be inferior until the sample size is unrealistically large (this is somewhat common  in bandwidth selection settings \cite[Section
  3.8]{Wand:1995kv}).
\end{remark}

\section{Simulations and data analysis}
\label{sec:simulations-data}
In Section~\ref{sec:methodology},  we used  LS$(\bH)$ and HDR$(\bH)$ to develop a
bandwidth selection procedure for level set and HDR estimation.
We have implemented our procedure in
an \proglang{R} \citep{R-core}
package \pkg{lsbs}.
In
this section, we assess the accuracy of $\text{LS}(\bH)$ and
$\text{HDR}(\bH)$ at approximating the true risks. We also use
simulation to compare our procedure with the least square cross
validation procedure (LSCV),
An established ISE-based
bandwidth selector
\citep[See][]{Rudemo:1982,Bowman:1984iv}. 
We simulate from the 12 bivariate normal mixture densities constructed by \citet{Wand:1993jl}.  These densities have a variety of shapes and have between 1 and 4 modes.  In addition to those 12 density functions, we also simulate from

\begin{align}
  \label{eq:sharpmode-density}
  \frac{2}{3}N\lp
  \begin{pmatrix}
    0\\0
  \end{pmatrix},
  \begin{pmatrix}
    1/4&0\\0&1
  \end{pmatrix}\rp+\frac{1}{3}N\lp
              \begin{pmatrix}
                0\\0
              \end{pmatrix},
  \frac{1}{50}  \begin{pmatrix}
    1/4&0\\0&1
  \end{pmatrix}\rp,
\end{align}
which is constructed to play a bivariate analogy to the sharp mode density 4 in
\citet{marron1992exact} (see also Figure~1 of Samworth and Wand
(2010)).  The specific form in \eqref{eq:sharpmode-density} is chosen to match that used by \cite{Qiao:2017wq}.

We will close this
section with a real data analysis in which we apply HDR estimation to
novelty detection for the Wisconsin Diagnostic Breast Cancer dataset
and Banknote Authentication dataset,
which are available on the UCI Machine Learning Repository (\url{http://archive.ics.uci.edu/ml/}).

\subsection{Assessment of approximation and estimation comparison}
Since it is infeasible to exactly
evaluate the true symmetric risk
$\EE[\mu_{f_0}\{\mc{L}_{\tau}\Delta\widehat{\mc{L}}_{\tau,\bH}\}]$, we approximate
the true risk through Monte Carlo. For given $n,\tau,\bH$,  for
a large Monte Carlo sample size $M$,
$
\EE[\mu_{f_0}\{\mc{L}_{\tau}\Delta\widehat{\mc{L}}_{\tau,\bH}\}]\approx\inv{M}\sum_{i=1}^M\mu_{f_0}\{\mc{L}_{\tau}\Delta\widehat{\mc{L}}^{[i]}_{\tau,\bH}\},
$
where
$\widehat{\mc{L}}^{[1]}_{\tau,\bH},\widehat{\mc{L}}^{[2]}_{\tau,\bH},\ldots,\widehat{\mc{L}}^{[M]}_{\tau,\bH}$
are $M$ independent realizations of $\widehat{\mc{L}}_{\tau,\bH}$. In  a multivariate KDE
the bandwidth matrix contains $d(d+1)/2$ parameters. For the purpose
of visualization, we restrict $\bH\in \mc{S}_1=\{h^2\bs{I}\}$ so that it can be
parametrized by a single parameter $h$.

Figures~\ref{fig:level-risk}
and \ref{fig:hdr-risk} compare the asymptotic risk approximation with
the simulated true risk for HDR estimation and LS estimation,
respectively, for  densities corresponding to Densities C, D,
E and K of \citet{Wand:1993jl}. Contour plots of the densities are
given in the top row of the figures. In Figure~\ref{fig:hdr-risk}, we choose
$\tau$ to be 0.2, 0.5 and 0.8 while in Figure~\ref{fig:level-risk}, we
use the same levels but with true level values computed from the
underlying true density functions. For both  scenarios, the sample
size is chosen to be 2000 and the kernel is set to be the Gaussian
kernel throughout the simulation (Theorem~\ref{thm:levelset} requires
$K$ to be compactly supported, but nonetheless, the simulation results
are not sensitive to the choice of Guassian kernel). We can see from Figures~\ref{fig:level-risk}
and \ref{fig:hdr-risk}, in both  scenarios, our asymptotic
expansions provide  a good approximation to the truth.  The approximation
works fairly well for the small values of bandwidth but  the discrepancy
becomes obvious when $h$ is larger, which is unlike  what was observed
from the simulation in univariate cases
\citep[see][]{Samworth:2010cj}.  This  is consistent with our
Assumption~\ref{assm:HA} which imposes an upper bound on the largest
eigenvalue of the bandwidth matrix, restricting it not to converge too
slowly. One more thing to notice from these two figures is that the
optimal bandwidth chosen from the asymptotic expansion serves as a
good approximation to the true optimal bandwidth, as we can see they
are quite close in most cases in simulation.
\begin{figure}
  \centering
  \includegraphics[width=\textwidth]{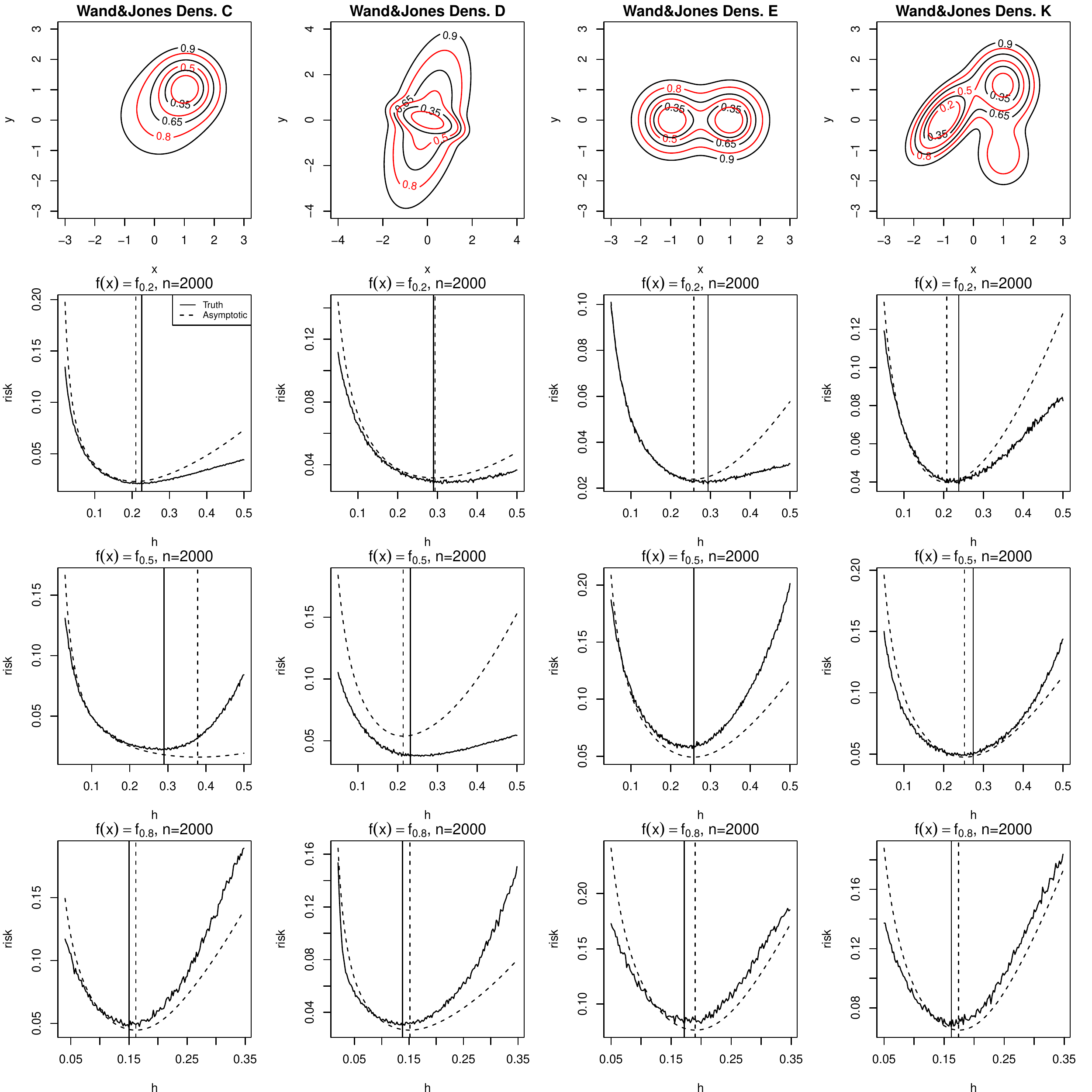}
  \caption{\label{fig:level-risk}Comparison of the simulated true risk function
    $\EE[\mu_{f_0}\{\mc{L}(c)\Delta\widehat{\mc{L}}_{\bH}(c)\}]$
    with $\text{LS}(\bH)$ for  four densities in
    \citet{Wand:1993jl}. The panels in the first row are the contour
    plots for  four densities with the contours of interest plotted
    in red color. The panels in the rest of the rows  are the
    comparison plots for the simulated true risk (solid line) and $\text{LS}(\bH)$ (dashed line) corresponding to the density
    at the top of  the column for $\tau=0.2,0.5,0.8$. The positions of the solid vertical
    line and the dashed line stand for the optimal bandwidths obtained
    from the simulated true risk and the asymptotic approximation
    respectively  over the restricted class $\mc{S}_1$. The sample size
    for all the cases is 2000.}
\end{figure}

\begin{figure}
  \centering
  \includegraphics[width=\textwidth]{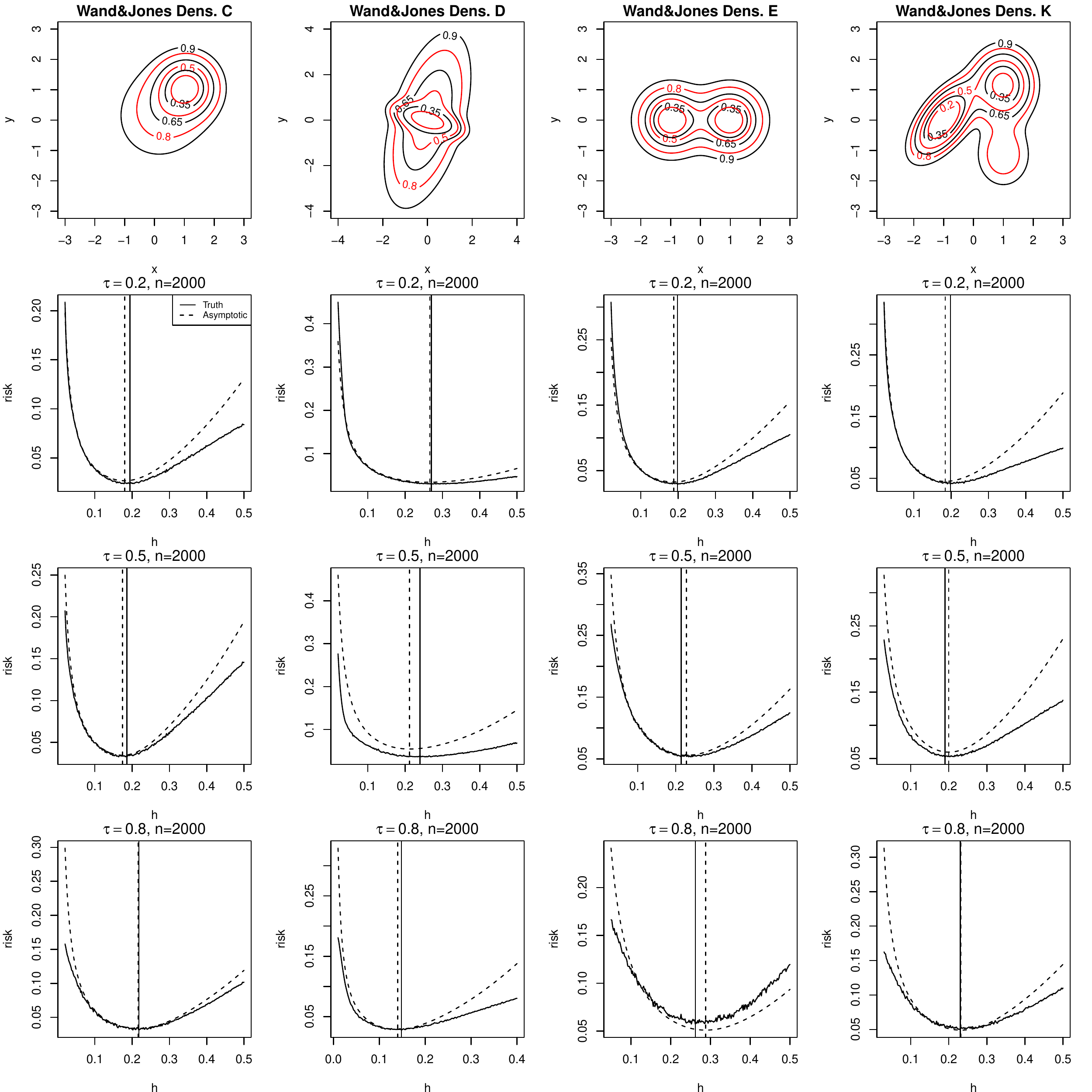}
  \caption{    \label{fig:hdr-risk}Comparison of the simulated true risk function
    $\EE[\mu_{f_0}\{\mc{L}_{\tau}\Delta\widehat{\mc{L}}_{\tau,\bH}\}]$
    with $\text{HDR}(\bH)$ for  four densities in
    \citet{Wand:1993jl}. The panels in the first row are the contour
    plots for  four densities with the contours of interest plotted
    in red color. The panels in the rest of the rows  are the
    comparison plots for the simulated true risk (solid line) and the
    $\text{HDR}(\bH)$ (dashed line) corresponding to the density
    at the top of column for $\tau=0.2,0.5,0.8$. The positions of the solid vertical
    line and the dashed line stand for the optimal bandwidths obtained
    from the simulated true risk and the asymptotic approximation
    respectively  over the restricted class $\mc{S}_1$. The sample size
    for all the cases is 2000.}
\end{figure}
We  ran a  simulation study to compare the performance of our
bandwidth selection method with LSCV for all the 12 densities  in
\citet{Wand:1993jl} and for  density~\eqref{eq:sharpmode-density}. For
each density function, 250  Monte Carlo samples with
2000 observations were generated. For each sample, we estimated the 0.2,
0.5, 0.8 HDR with bandwidth matrices chosen by our method and LSCV
respectively. The HDR error
$\mu_{f_0}\{\mc{L}_{\tau}\Delta\widehat{\mc{L}}_{\tau,\bH}\}$ was
calculated for each method in each replication. 
Figure~\ref{fig:DensityS} shows the plot of the
estimation errors generated by the two methods for
density~\eqref{eq:sharpmode-density}. Figure~\ref{fig:DensityS-contour} shows the boundaries of the
estimated HDR by HDR bandwidth and by the LSCV bandwidth selector from one of the
simulated samples.
We can see for $\tau=0.2,0.5$, the performance of HDR bandwidth selector outperformed  LSCV
bandwidth selector greatly for each simulated instance. For
$\tau=0.8$,  the HDR bandwidth performed slightly less well than the LSCV
bandwidth on average.
One hypothesis for why our method suffers when $\tau=.8$ is that
Assumption~\ref{assm:DA-hdr}
requires that $\| \grad f_0 \| > 0$ in a neighborhood of the HDR.  However, when $\tau=.8$, $f_0$ is close to having gradient zero on the true HDR which is close to the density mode.

\begin{figure}
  \centering
  \includegraphics[width=\textwidth]{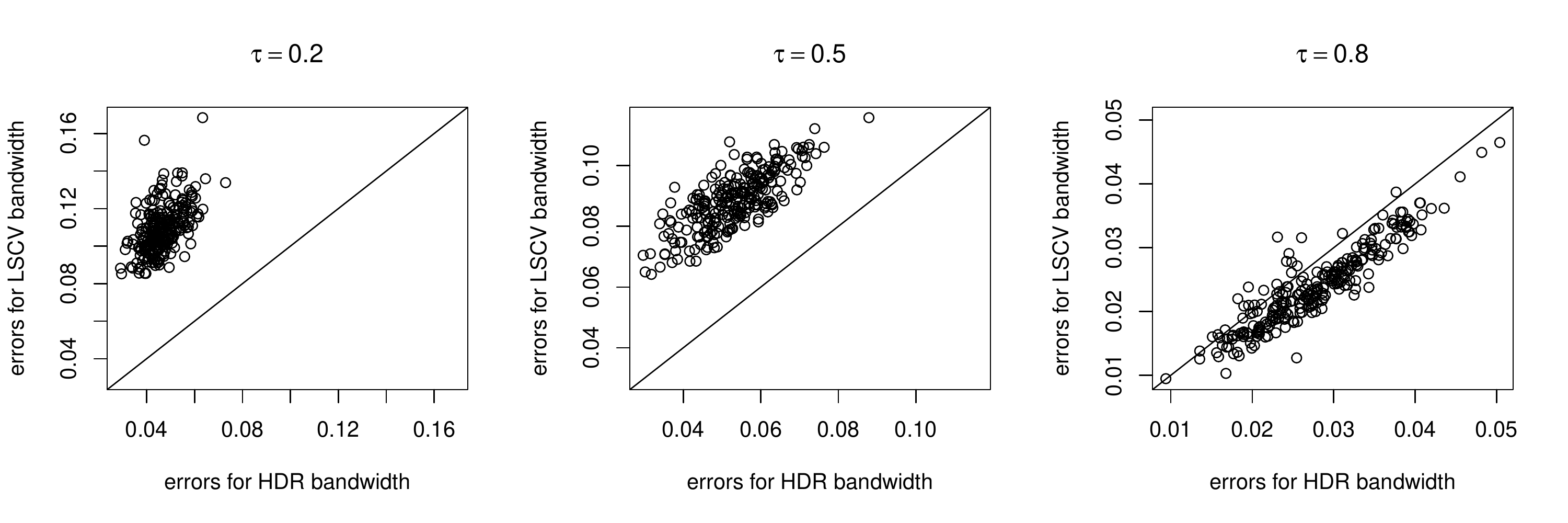}
  \caption{Plot of simulated errors generated by HDR-tailored bandwidth
    and LSCV for the sharp mode density \eqref{eq:sharpmode-density}. The horizontal axis stands for errors of HDR bandwidth
    and vertical axis stands for errors of LSCV bandwidth. }
  \label{fig:DensityS}
\end{figure}
\begin{figure}
  \centering
  \includegraphics[width=\textwidth]{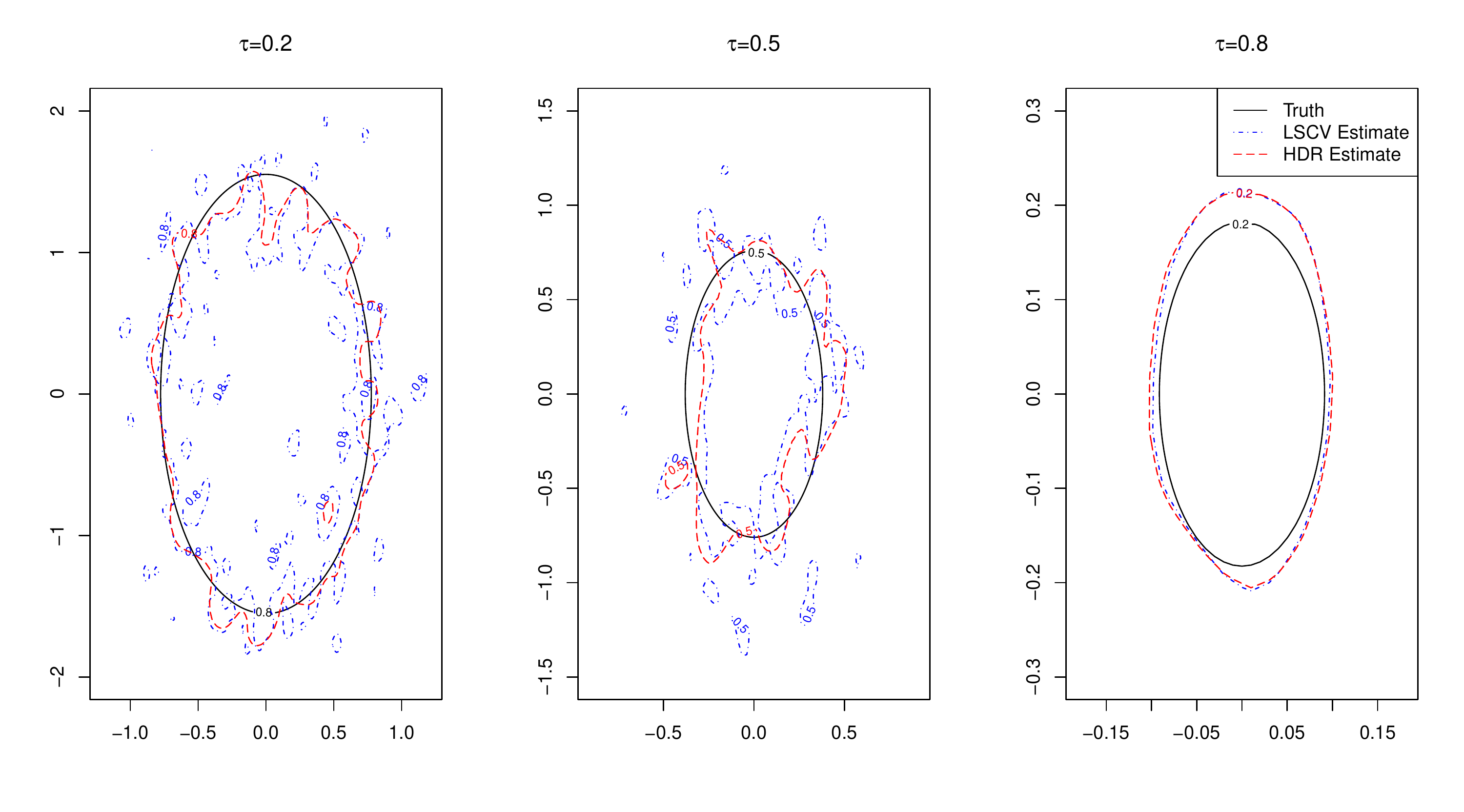}
  \caption{Plot of boundaries of true HDR, HDR estimated by HDR
    bandwidth and HDR estimated by LSCV bandwidth from one simulated
    sample with 2000 observations. The three panels correspond to
    $\tau=0.2,0.5,0.8$ respectively.}
  \label{fig:DensityS-contour}
\end{figure}
It is worth noticing in Figure~\ref{fig:DensityS-contour} that the HDR
estimated by our method discovers the true underlying topological
structure of the density, while the HDR estimated by LSCV does a very poor job of revealing the topological structure when $\tau=.2$ or $.5$ (the LSCV estimates have many spurious separate connected components rather than a single one).

Applying the Wilcoxon signed rank test to the simulated paired errors
genererated by our HDR bandwidth  and LSCV bandwidth
showed that for $\tau=0.2$, our method outperformed LSCV for 12 out of
13
density functions; for $\tau=0.5$, our method did better for 8 out of 13
density functions; for $\tau=0.8$, our method did better in 8 out of 13
density functions.

Note that for any given fixed density, it is likely to be the case for some HDR  that the MISE-optimal bandwidth and the HDR-optimal bandwidth will approximately coincide.  Thus we may not expect our method to be better than LSCV for all densities and levels simultaneously.
Of course, in practice one does not know whether LSCV will work well for the $\tau$ value one is interested in.
Our HDR method appears to work well for lower $\tau$ values, which are the useful values in many applications of HDR estimation.  For example in novelty
detection, the value of $\tau$ equals  the probability of type-I error
which is often set to be  $0.05$ or $0.1$; in clustering analysis, $\tau$
corresponds to  fraction of the data that will be discarded
during analysis and
is also set to be a value close to $0$.
As mentioned in the previous paragraph, this may be related to the assumption that $\| \grad f_0 \| > 0$ on the HDR boundary.  Relaxing this assumption is an important direction for future work, but seems likely to involve somewhat different approximations than the ones used in this paper.
\begin{mylongform}
  \begin{longform}
    \begin{table}[H]
      \centering
      \begin{tabular}{l|cc|cc|cc|}
        \hline
        &\multicolumn{2}{|c|}{$\tau=0.2$}&\multicolumn{2}{|c|}{$\tau=0.5$}&\multicolumn{2}{|c|}{$\tau=0.8$}\\
        \hline
        Density&Test statistic&p-value&Test statistic&p-value&Test
                                                               statistic&p-value\\
        \hline
        Density A&27743&<2.2e-16&25356&1.816e-14&26608&<2.2e-16\\
        Density B&27743&<2.2e-16&25297&<2.2e-16&26361&<2.2e-16\\
        Density C&23200&<2.2e-16&25419&<2.2e-16&20725&5.388e-06\\
        Density D&31349&<2.2e-16&29159&<2.2e-16&799&1\\
        Density E&28630&<2.2.e-16&22430&1.923e-09&24220&4.503e-14\\
        Density F&24756&1.159e-15&3223&1&6433&1\\
        Density G&29997&<2.2e-16&14861&0.765&16265&0.3071\\
        Density H&25968&<2.2e-16&11889&0.9996&10535&1\\
        Density I&8419&1&3713&1&3643&1\\
        Density J&29488&<2.2e-16&18411&0.008676&28905&<2.2e-16\\
        Density K&24555&4.689e-15&21411&2.861e-07&19746&0.0001958\\
        Density L&24658&2.3e-15&15659&0.5101&18997&0.001919\\
        \hline
      \end{tabular}
      \caption{Summary of Wilcoxon tests for the errors generatated by
        LSCV bandwidth and HDR bandwidth. The alternative hypothesis is
        the errors generated by LSCV bandwidth are large. Tests were done
        for 250 errors simulated from each of the 12 densities in
        \citet{Wand:1993jl} at $\tau=0.2,0.5,0.8$. Test statistics and the
        corresponding p-values are summarized in the table. Use
        $\widehat{f}_{n,\bH_0}$ to estimate $\beta_{\tau}$}
    \end{table}

    \begin{table}[H]
      \centering
      \begin{tabular}{l|cc|cc|cc|}
        \hline
        &\multicolumn{2}{|c|}{$\tau=0.5$}&\multicolumn{2}{|c|}{$\tau=0.2$}&\multicolumn{2}{|c|}{$\tau=0.8$}\\
        \hline
        Density&Test statistic&p-value&Test statistic&p-value&Test
                                                               statistic&p-value\\
        \hline
        Density A&25385&<2.2e-16&29238&<2.2e-16&26520&<2.2e-16\\
        Density B&25690&<2.2e-16&29290&<2.2e-16&23990&2.027e-13\\
        Density C&27574&<2.2e-16&24585&3.814e-15&21568&1.392e-07\\
        Density D&29298&<2.2e-16&31363&<2.2e-16&1486&1\\
        Density E&21824&4.133e-08&29934&2.2e-16&22241&5.154e-09\\
        Density F&5344&1&25023&<2.2e-16&9866&1\\
        Density G&15348&0.6168&28737&<2.2e-16&17723&0.0377\\
        Density H&16244&0.3136&25987&<2.2e-16&11732&1\\
        Density I&4573&1&9796&1&4485&1\\
        Density J&20982&<1.868e-06&29863&<2.2e-16&29830&<2.2e-16\\
        Density K&21595&1.227e-07&21842&3.788e-08&22701&4.463e-10\\
        Density L&16854&0.1542&24077&1.153e-13&20861&3.094e-06\\
        \hline
      \end{tabular}
      \caption{Summary of Wilcoxon tests for the errors generatated by
        LSCV bandwidth and HDR bandwidth. The alternative hypothesis is
        the errors generated by LSCV bandwidth are large. Tests were done
        for 250 errors simulated from each of the 12 densities in
        \citet{Wand:1993jl} at $\tau=0.2,0.5,0.8$. Test statistics and the
        corresponding p-values are summarized in the table. Use
        $\widehat{f}_{n,\bH_1}$ to estimate $\beta_{\tau}$}
    \end{table}
  \end{longform}
\end{mylongform}

\subsection{Real data analysis}
We now discuss two real datasets. The Wisconsin Diagnostic Breast Cancer data contains 699 instances of
breast cancer cases with 458 of them being benign instances and 241
being malignant instances. Nine cancer-related features were measured
for each instance. For the Banknote Authentication data, images
were taken of 1372 banknotes, some fake and some genuine. Wavelet
transformation tools were used to extract  four descriptive
features of the images.
For both  datasets, we reduced the original features to the first two
principal components.  We apply our method to perform novelty
detection for the two  data sets. Novelty detection  is like  a
classification problem where only the ``normal'' class is observed in
the training data. 
Then, for a new data point $\bx_{\text{new}}$, we want to test the
null hypothesis
$H_0:\bx_{\text{new}}\text{ is a normal point}$ (or, alternatively, to classify 
$\bx_{\text{new}}$  as ``normal'' or ``anomalous'').
For level set (HDR) based novelty detection, we can consider an oracle decision rule, or acceptance region, 
$A := \{\bx:f_0(\bx)\ge c\}$ (based on knowing $f_0$);
if $f_0(\bx_{\text{new}})\in A$, we accept the null hypothesis, and we reject
otherwise.
For the breast cancer data, ``normal'' means healthy, and for the banknote data, ``normal'' means genuine.
If we take $c=f_{\tau}$, then the oracle decision rule
will have type-I error, or False Positive Rate (FPR), of $\tau$ (under a regularity condition). Additionally, under
regularity conditions, $A$ has the minimum volume of any acceptance rule with FPR of $\tau$, since HDR's are minimum volume
sets \citep{garcia2003level}.  This property is beneficial for controlling the type-II error rate, or False Negative Rate  (although the actual False Negative Rate depends on the unknown ``anomaly'' distribution).

In this section, for each of the two data sets we use a KDE with our bandwidth selection procedure to estimate an HDR based on the ``normal'' class data and use the estimated HDR to perform classification.  We delete the observations with missing values for any covariates and randomly split the data set into two parts, training data and testing data. For the Wisconsin Breast Cancer data, 345 benign instances are contained in the training data and 200 (with half being benign and another half being malignant) are contained in the testing data. For the Banknote Authentication data, 400 genuine instances are contained in the training data and again, 200 (with half being genuine and another half being fake) are contained in the testing data. We estimate the $90\%$ HDR using our method based on the training data.  The first row of Figure~\ref{fig:cancer-bank} shows the plot of the data and the boundaries of the $90\%$ HDR which are the decision boundaries for the two classification problems.  The asymptotic 
FPR
in these two classification problems is $\tau=0.1$.  For the Wisconsin Breast Cancer data, on the test data, the observed FPR is $0.09$ and the 
True Positive Rate (TPR)
is $0.99$. For the Banknote Authentication data, the observed 
FPR is $0.04$, and the observed
TPR
is $0.61$.
We also generated full ROC curves for the two datasets which are shown in the second row of Figure~\ref{fig:cancer-bank}.
The ROC curves are based on $30$ different splits of the data into training and test sets (with the reported FPR and TPR given by the averages over the 30 test sets).
The ROC curve clearly shows that the Wisconsin Breast Cancer data is an example where HDR-based anomaly detection is highly effective.  The Banknote data is not as easy for our method; it may be the case that using an HDR based on all four variables improves the classification performance.  We leave the very interesting question of how best to combine HDR-based classification with dimension reduction for future work.
\begin{figure}
  \centering
  \includegraphics[width=\textwidth]{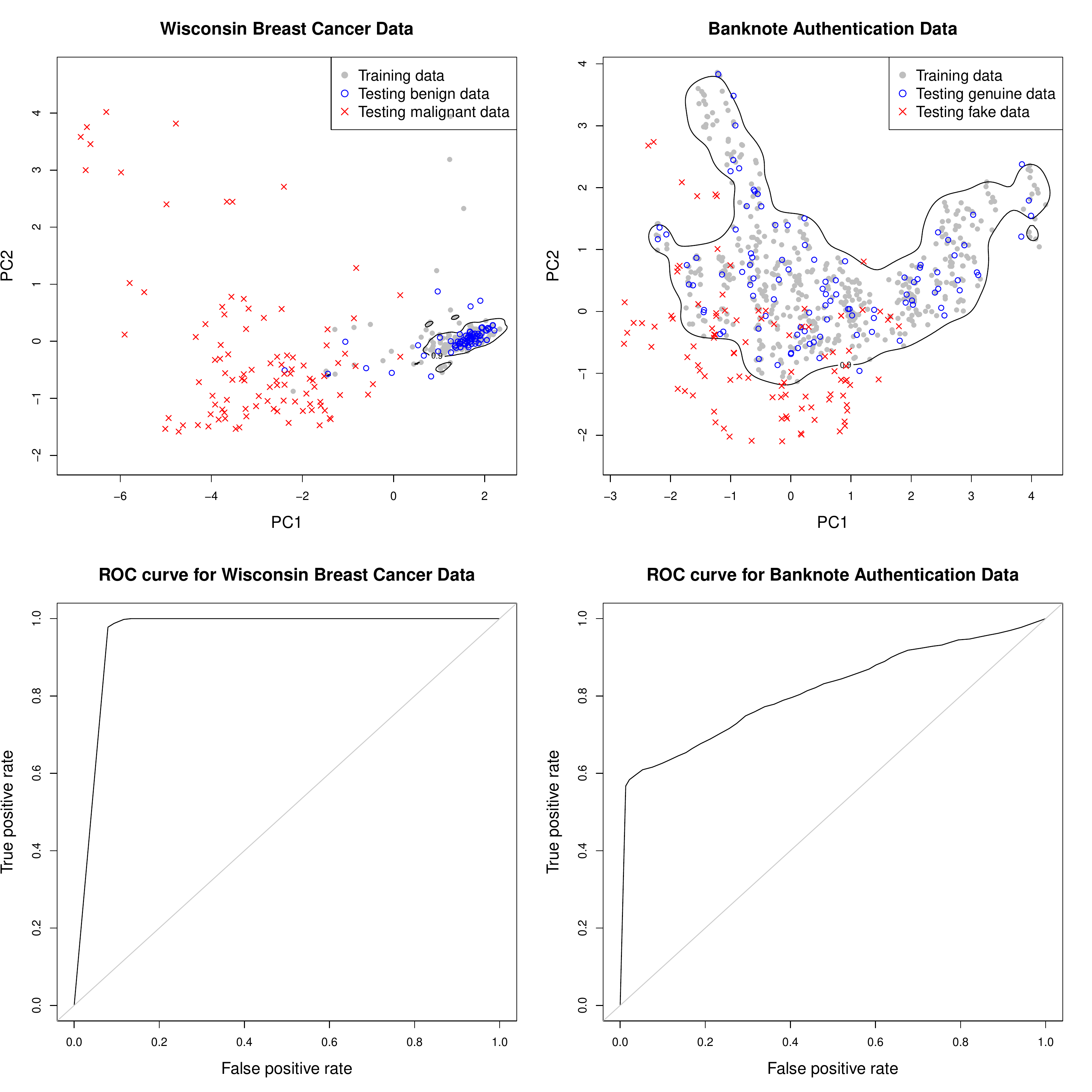}
  \caption{Plot of data and boundary of estimated $90\%$ HDR for the
    Wisconsin Diagnostic Breast Cancer Data and Banknote Authentication Data. Solid dots correspond to
    training data, circles are testing data of normal  instances and
    crosses are testing data of anomaly instances. The two panels in the second row are the corresponding ROC curves for the two classification problems.}
  \label{fig:cancer-bank}
\end{figure}



\section{Discussion}
\label{sec:conclusion}
In this paper, we derive asymptotic expansions of the symmetric risk for LS estimation and HDR estimation based on kernel density estimators. We  provide an efficient bandwidth selection procedure using  a plug-in strategy. We also study by theory and by simulation the  performance of our bandwidth selector. Simulation studies show that both our asymptotic expansion and our bandwidth selector are effective tools.  The two asymptotic risk approximations we provide may also be useful in the analysis of other procedures, developed in future work, for doing LS or HDR bandwidth selection.

As discussed in the Introduction,
the interesting paper  \cite{Qiao:2017wq}  also considers
problems of bandwidth selection for KDE's  via minimizing
asymptotic expansions of risk functions that are based on loss functions
related to level sets.
\cite{Qiao:2017wq} does not consider HDR estimation.   \cite{Qiao:2017wq} does consider the LS  estimation problem.
Our Theorem 2.1 is similar to 
\cite{Qiao:2017wq}'s Corollary~3.1;
both results consider the LS estimation setting, and give risk expansions based on loss functions that are given by integrating the symmetric set differences against $f_0$ (or against something similar).  Our theorem requires only that $f_0$ have two continuous derivatives in a neighborhood of $\beta(c)$ (which we believe to be approximately the weakest possible conditions), whereas 
\cite{Qiao:2017wq} requires four continuous derivatives.  On the other hand, 
\cite{Qiao:2017wq} allows for using higher order kernels if one has higher order smoothness of $f_0$.
While \cite{Qiao:2017wq}'s Corollary~3.1 studies the same risk function approximation, $\LS( \cdot)$, that we study 
in our Theorem~2.1,
\cite{Qiao:2017wq} does not present any algorithm for minimizing $\LS(\cdot)$ and thus presents no simulations related to $\LS(\cdot)$.  
Rather, \cite{Qiao:2017wq} focuses more attention on a different risk function (the ``excess risk'') approximation that allows for an analytic solution, at least when $d=2$.



There are many interesting avenues for extending the work done in the present paper. We describe a few here.
\begin{enumerate}[label=(\Alph*).,leftmargin=*]

\item (Regression and classification)
  In the present paper we have considered only the density estimation context, but estimation of level sets of regression functions estimated by kernel-based methods is also interesting, as is consideration of classification problems.

  Regression level set estimation has received less attention than density level set estimation, although it has been studied in some settings;  \cite{Cavalier:1997ef} studies multivariate nonparametric regression level set 
  minimax rates of convergence.

  One method for classification is to estimate densities for different classes and then classify a point by the class density having highest value at the point.  In that case, rather than estimating a level set of one density, one is estimating the $0$ level set of a difference of two densities.
  \citet[page 1110]{Mason:2009dk} discuss this approach to classification.
  In the context of an application in flow cytometry,
  \cite{Duong:2009ek} 
  also study estimation of HDR's of density differences (without specifically focusing on classification).  We believe the methods of this paper can be extended to those contexts.



\item (Topological data analysis and critical points)
  Another important avenue of research is to consider modifications of the assumptions under which our approximations hold. Level set estimation is one of the main tools in topological data analysis (TDA).
  Estimation of LS's 
  which have zero gradient (at some points) on the boundary (which is ruled out by our assumptions) is of great interest in TDA, because the topology of level sets can change as the level crosses critical points (points having zero gradient).
  In fact, in the context of using tools based on level set estimates, \citet[Section 5]{Wasserman:2016ua} states that ``the problem of choosing tuning parameters is one of the biggest open challenges in TDA''. Thus, developing tools for bandwidth selection when the gradient is zero would be very useful for TDA.  Unfortunately, at points where the gradient is zero we cannot apply the inverse function theorem which is used in Lemma~\ref{lem:hdr-step1} (implicitly) and by
  several results in Appendix~\ref{app:additional-thms}, so a very different analysis than the one we completed here may be necessary in such cases.
  In general, there are very few theoretical works on level set estimation at levels that contain critical values (points where $\grad f_0$ is $0$). In fact, the only one we know of is
  \cite{Chen:2016vv},
  in which a rate of convergence of $\lambda \lb \mc{L}(c) \Delta \widehat{\mc{L}}_{\bH}(c) \rb$ (where $\lambda$ is Lebesgue measure) is derived.

\item (MCMC level sets)
  The work in this paper is restricted to the case where $\bX_1, \ldots, \bX_n$ are independent.  An important extension is to allow  the $\bX_i$ to be samples from a Markov chain.  It is well known that KDE's often work similarly when the data exhibit weak dependence as when they are independent \citep{Wand:1995kv}.  This would allow our tools for HDR estimation to be used to form credible regions based on Markov chain Monte Carlo output in Bayesian statistical analyses.
  At present,  ad-hoc methods are often used for forming credible regions based on Markov chain Monte Carlo output.

\end{enumerate}

\appendix
\section{Proof of main results}
\label{app:A-main-proofs}
\subsection{Proof of Theorem \ref{thm:hdr}}

First, we observe that
\begin{align*}
  \mu_{f_0}(\mc{L}_{\tau}\Delta
  \hat{\mc{L}}_{\tau,\bH})&=\int_{\RR^d}f_0(\bx)\left|\one_{\lb\ffnH(\bx)\ge\fftaun\rb}-\one_{\lb
                            f_0(\bx)\ge
                            f_{\tau,0}\rb}\right|\,d\bx\\
                          &=\int_{\mc{L}_{\tau}^c}f_0(\bx)\one_{\lb\ffnH(\bx)\ge\fftaun\rb}\,d\bx
                            +\int_{\mc{L}_{\tau}}f_0(\bx)\one_{\lb\ffnH(\bx)<\fftaun\rb}\,d\bx.
\end{align*}
Then by Tonelli's Theorem \citep[Theorem 2.37]{MR1681462}, we have
\begin{equation}
  \label{eq:symrisk}
  \begin{split}
    \bb{E}\ls\mu_{f_0}\{\mc{L}_{\tau}\Delta \widehat{\mc{L}}_{\tau,\bH}\}\rs&=  \int_{\mc{L}_{\tau}^c}f_0(\bx)P\lp\ffnH(\bx)\ge\fftaun\rp\,d\bx \\
    &\quad+\int_{\mc{L}_{\tau}}f_0(\bx)P\lp\ffnH(\bx)<\fftaun\rp\,d\bx.
  \end{split}
\end{equation}
For a density function $f$ on
$\RR^d$, let $f_{\tau}(f):=\inf\{\bs{y}\ge
0:\int_{\RR^d}f(\bx)\one_{\{f(\bx)\ge \bs{y}\}}\,d\bx\le 1-\tau\}$. By
this definition, $f_{\tau,0}\equiv f_{\tau}(f_0)$. The following lemma
bounds the modulus of continuity of $f_{\tau}$ when the difference
between two density functions is sufficiently small.
\begin{lemma}
  \label{lem:hdr-step1}
  Let the assumptions of Theorem~\ref{thm:hdr} hold.
  Let $\tilde{f}$ be another uniformly continuous density function
  on $\RR^d$ and $\tilde{f}_{\tau}\equiv f_{\tau}(\tilde{f})$. Then there exists a constant $C_1\ge 1$ such that for all
  $\varepsilon>0$ sufficiently small, $|\tilde{f}_{\tau}-f_{\tau,0}|\le
  C_1\varepsilon$ whenever $\|\tilde{f}-f_0\|_{\infty}\le
  \varepsilon$.
\end{lemma}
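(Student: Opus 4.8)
The plan is to reduce the statement to a quantitative study of the probability-content functions $G(y) := \mu_{f_0}(\mc{L}(y)) = \int f_0(\bx)\one_{\{f_0(\bx)\ge y\}}\,d\bx$ and $\tilde G(y) := \int \tilde f(\bx)\one_{\{\tilde f(\bx)\ge y\}}\,d\bx$. Both are nonincreasing in $y$, and by definition $\fftau = \inf\{y > 0 : G(y)\le 1-\tau\}$ and $\tilde f_\tau = \inf\{y>0 : \tilde G(y)\le 1-\tau\}$. I would prove (i) that $G$ is bi-Lipschitz in a neighborhood of $\fftau$ with $G(\fftau)=1-\tau$; (ii) that $\sup_y|\tilde G(y)-G(y)| = O(\varepsilon)$ on that neighborhood whenever $\|\tilde f - f_0\|_\infty\le\varepsilon$; and (iii) conclude by sandwiching $\tilde f_\tau$ between two points $\fftau\pm O(\varepsilon)$ using the monotonicity of $\tilde G$.

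For step (i): Assumption~\ref{assm:DA-hdr} gives $\|\nabla f_0\| > 0$ on the tube $U_a$, and Assumption~\ref{assm:BA} gives that $\beta_\tau$ is compact. Covering $\beta_\tau$ by finitely many charts in which the implicit function theorem lets one take $f_0$ as a coordinate, a change of variables produces constants $0 < c_1 \le c_2$ and $a'\in(0,a]$ with $c_1(y'-y) \le \lambda(f_0^{-1}([y,y'])) \le c_2(y'-y)$ for $\fftau - a'\le y\le y'\le\fftau+a'$ (this is the standard Lipschitz-type level-set condition noted after Assumption~\ref{assm:DA-hdr}). Since $\fftau - a \le f_0 \le \|f_0\|_\infty$ on $f_0^{-1}([\fftau-a',\fftau+a'])$ and $G(y)-G(y') = \mu_{f_0}(\{y\le f_0 < y'\})$, this yields $c_0(y'-y)\le G(y)-G(y')\le C_0(y'-y)$ on that range with $c_0:=(\fftau-a)c_1$ and $C_0:=\|f_0\|_\infty c_2$; in particular $\lambda(f_0^{-1}(I))\le c_2|I|$ for every interval $I\subseteq[\fftau-a',\fftau+a']$. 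Since $\lambda(\beta_\tau)=0$, $G$ is continuous near $\fftau$, so the definition of $\fftau$ forces $G(\fftau)=1-\tau$ (and $\fftau>0$ since $G(0^+)=1>1-\tau$).

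For step (ii): write $\tilde G(y)-G(y) = \int(\tilde f - f_0)\one_{\{f_0\ge y\}}\,d\bx + \int \tilde f\,(\one_{\{\tilde f\ge y\}}-\one_{\{f_0\ge y\}})\,d\bx$. The first term is at most $\varepsilon\,\lambda(\{f_0\ge y\})$, which is bounded uniformly over $y\ge\fftau-a'$ because $\{f_0\ge\fftau-a'\}\subseteq\mc{L}_\tau\cup U_a$ has finite Lebesgue measure ($\lambda(\mc{L}_\tau)\le 1/\fftau$ since $1=\int f_0\ge\fftau\,\lambda(\mc{L}_\tau)$, and $U_a$ is bounded by the localization part of Assumption~\ref{assm:DA-hdr}). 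For the second term, $\|\tilde f\|_\infty\le\|f_0\|_\infty+\varepsilon$, and on $\{\tilde f\ge y\}\Delta\{f_0\ge y\}$ one necessarily has $|f_0-y|\le\varepsilon$ (because $|\tilde f - f_0|\le\varepsilon$), so this set has measure at most $2c_2\varepsilon$ by step (i). Hence there is a constant $M$ with $\sup_{|y-\fftau|\le a'/2}|\tilde G(y)-G(y)|\le M\varepsilon$ for all $\varepsilon$ small.

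For step (iii): put $\kappa:=(M+1)/c_0$ and $y_\pm:=\fftau\pm\kappa\varepsilon$; for $\varepsilon$ small, $y_\pm\in(\fftau-a'/2,\fftau+a'/2)$ and $y_->0$. By step (i) and $G(\fftau)=1-\tau$, $G(y_+)\le 1-\tau-(M+1)\varepsilon$ and $G(y_-)\ge 1-\tau+(M+1)\varepsilon$, so by step (ii), $\tilde G(y_+)\le 1-\tau-\varepsilon<1-\tau<1-\tau+\varepsilon\le\tilde G(y_-)$. Since $\tilde G$ is nonincreasing, $\tilde G(y_+)<1-\tau$ gives $\tilde f_\tau\le y_+$, while $\tilde G(y)\ge\tilde G(y_-)>1-\tau$ for every $y\le y_-$ gives $\tilde f_\tau\ge y_-$; thus $|\tilde f_\tau-\fftau|\le\kappa\varepsilon$, and $C_1:=\max(1,\kappa)$ works. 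The main obstacle is the lower bound $c_0>0$ in step (i): this is exactly where the nonvanishing-gradient and compactness hypotheses are indispensable, since they prevent the level sets near $\beta_\tau$ from degenerating, and it is the one place that genuinely uses the manifold structure rather than the soft sup-norm estimate.
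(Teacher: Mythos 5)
Your proof is correct and it rests on the same essential ingredients as the paper's (the coarea-type Proposition A.1 of Cadre, the nonvanishing-gradient and compactness hypotheses), but it is packaged differently. The paper argues in one shot: it plugs the candidate level $f_{\tau,0}-C_1\varepsilon$ into $\psi(\tilde f,\cdot)$, lower-bounds this by $\psi(f_0,\,f_{\tau,0}-(C_1-1)\varepsilon)-\varepsilon\,\lambda(\{f_0\ge f_{\tau,0}-(C_1-1)\varepsilon\})$, and expands the former term via the coarea formula, showing the total exceeds $1-\tau$; this yields the explicit constant $C_1>1+2\lambda(\mc{L}_\tau)\big/\int_{\beta_\tau}f_0/\|\nabla f_0\|\,d\cH$. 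You instead isolate a two-sided Lipschitz estimate for the probability-content function $G$ near $f_{\tau,0}$ (your step (i)), separately bound the uniform perturbation $\sup_{|y-f_{\tau,0}|\le a'/2}|\tilde G(y)-G(y)|=O(\varepsilon)$ (step (ii)), and conclude by monotonicity of $\tilde G$ (step (iii)). The modular version is cleaner about where the geometric assumptions are actually used --- the lower Lipschitz constant $c_0$ for $G$ is exactly the place the manifold structure enters, as you note --- whereas the paper's one-shot estimate produces a more explicit $C_1$. Your observation in step (ii) that $|f_0-y|\le\varepsilon$ on $\{\tilde f\ge y\}\Delta\{f_0\ge y\}$ is precisely the fact the paper uses implicitly when it replaces the indicator $\one_{\{\tilde f\ge f_{\tau,0}-C_1\varepsilon\}}$ by $\one_{\{f_0\ge f_{\tau,0}-(C_1-1)\varepsilon\}}$.
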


\medskip
\noindent

It is intuitively believable that when the sample size $n$ is sufficiently
large, the values of the two integrals on the right of \eqref{eq:symrisk}
are mostly governed by the integrals over a small neighborhood of
$\beta_\tau$.  To shrink the region of integration, for $\delta>0$, and for a
given level $t>0$, we let
$\beta^{\delta}(t) :=\bigcup_{\bx\in \beta(t)}B(\bx,\delta)$,
and
$\beta^{\delta}_{\tau}\equiv\beta^{\delta}(f_{\tau,0})$.  We also let
\begin{align*}
  \mc{L}_{\delta}(f_{\tau,0}):=\bigcup_{\bx\in
  \mc{L}_{\tau}}B(\bx,\delta)\quad\text{and}\quad
  \mc{L}_{-\delta}(f_{\tau,0}):=\mc{L}(f_{\tau,0})\backslash\beta_{\tau}^{\delta}.
\end{align*}
Then we can shrink the integral region using the following lemma.
\begin{lemma}
  \label{lem:hdr-step3}
  Let the assumptions of Theorem~\ref{thm:hdr} hold.
  Then  for a sequence $\delta_n>0$  converging to 0 such
  that $\lambda_{\max}(\bH)=o(\delta_n)$,  we will have
  \begin{equation}
    \label{eq:step3re}
    \int_{\mc{L}_{\delta_n}(f_{\tau,0})^c}f_0(\bx)P\lp\ffnH(\bx)\ge
    \fftaun\rp\,d\bx+\int_{\mc{L}_{-\delta_n}(f_{\tau,0})}f_0(\bx)P\lp\ffnH(\bx)< \fftaun\rp\,d\bx
  \end{equation}
  is $o(n^{-1})$  as $n\rightarrow \infty$.
\end{lemma}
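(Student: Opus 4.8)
The plan is to localize to the event on which $\ffnH$ is uniformly close to $f_0$, show that on that event both indicators appearing in the two integrals of \eqref{eq:step3re} vanish pointwise on the respective regions of integration, and then bound the probability of the complementary (small) event by an exponential concentration inequality. Write $m := \inf_{U_a}\|\nabla f_0\| > 0$. The first step is a purely geometric estimate: for $\bx$ with $f_0(\bx)\in[f_{\tau,0}-a,f_{\tau,0}]$ one has $f_{\tau,0}-f_0(\bx)\ge m\,\mathrm{dist}(\bx,\beta_\tau)$, and symmetrically $f_0(\bx)-f_{\tau,0}\ge m\,\mathrm{dist}(\bx,\beta_\tau)$ when $f_0(\bx)\in[f_{\tau,0},f_{\tau,0}+a]$. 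This follows by running the unit-speed steepest ascent/descent path $\gamma'(s)=\pm\nabla f_0(\gamma(s))/\|\nabla f_0(\gamma(s))\|$ from $\bx$: since $U_a$ is a band defined purely through the value of $f_0$ and $f_0\circ\gamma$ moves monotonically toward $f_{\tau,0}$ at arc-length rate at least $m$, the path stays in $U_a$ and reaches $\beta_\tau$ after arc length at most $|f_0(\bx)-f_{\tau,0}|/m$, which dominates $\mathrm{dist}(\bx,\beta_\tau)$. Consequently, for $n$ large enough that $m\delta_n<a$: every $\bx\in\mc{L}_{-\delta_n}(f_{\tau,0})$ has $f_0(\bx)\ge f_{\tau,0}+m\delta_n$ (trivial if $f_0(\bx)>f_{\tau,0}+a$, otherwise by the estimate applied inside $U_a$), and every $\bx\in\mc{L}_{\delta_n}(f_{\tau,0})^c$ has $f_0(\bx)\le f_{\tau,0}-m\delta_n$ (using also $\mathrm{dist}(\bx,\mc{L}_\tau)\le\mathrm{dist}(\bx,\beta_\tau)$ since $\beta_\tau\subset\mc{L}_\tau$).

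Next I reduce everything to one sup-norm deviation. Since $\ffnH$ is Lipschitz under Assumption~\ref{assm:KA}, Lemma~\ref{lem:hdr-step1} applies with $\tilde f=\ffnH$: there are $\varepsilon_0>0$ and $C_1\ge1$ with $\|\ffnH-f_0\|_\infty\le\varepsilon_0\Rightarrow|\fftaun-f_{\tau,0}|\le C_1\|\ffnH-f_0\|_\infty$. Fix $\bx\in\mc{L}_{-\delta_n}(f_{\tau,0})$. On $\{\|\ffnH-f_0\|_\infty\le\varepsilon_0\}$ the event $\{\ffnH(\bx)<\fftaun\}$ forces $f_{\tau,0}+C_1\|\ffnH-f_0\|_\infty>\ffnH(\bx)\ge f_0(\bx)-\|\ffnH-f_0\|_\infty\ge f_{\tau,0}+m\delta_n-\|\ffnH-f_0\|_\infty$, i.e.\ $\|\ffnH-f_0\|_\infty>m\delta_n/(1+C_1)$. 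Hence, with $c_\ast:=m/(1+C_1)$ and $n$ large enough that $c_\ast\delta_n<\varepsilon_0$, we get $\{\ffnH(\bx)<\fftaun\}\subseteq\{\|\ffnH-f_0\|_\infty>c_\ast\delta_n\}$ for every such $\bx$, and symmetrically $\{\ffnH(\bx)\ge\fftaun\}\subseteq\{\|\ffnH-f_0\|_\infty>c_\ast\delta_n\}$ for every $\bx\in\mc{L}_{\delta_n}(f_{\tau,0})^c$. Since $\int f_0\le 1$, the quantity in \eqref{eq:step3re} is at most $2\,P(\|\ffnH-f_0\|_\infty>c_\ast\delta_n)$.

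It then remains to bound $P(\|\ffnH-f_0\|_\infty>c_\ast\delta_n)$. Split $\|\ffnH-f_0\|_\infty\le\|\ffnH-\EE\ffnH\|_\infty+\|\EE\ffnH-f_0\|_\infty$; a second-order Taylor expansion using Assumption~\ref{assm:KA} and the global bound on $\nabla^2 f_0$ in Assumption~\ref{assm:DA-hdr} gives bias $O(\mathrm{tr}(\bH))=O(\lambda_{\max}(\bH))=o(\delta_n)$, hence below $c_\ast\delta_n/2$ for $n$ large. For the centered term, the VC-type Assumption~\ref{assm:KA-VC} together with the bandwidth conditions of Assumption~\ref{assm:HA} place us in the range of an exponential inequality of the type in \cite{Gine:2002jc}, yielding $P(\|\ffnH-\EE\ffnH\|_\infty>t)\le C e^{-C' n|\bH|^{1/2}t^2}$ for $t$ above a fixed multiple of $\sqrt{\log|\bH|^{-1/2}/(n|\bH|^{1/2})}$. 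Taking $\delta_n$ with $\delta_n^2=K_0\log n/(n|\bH|^{1/2})$ for a large constant $K_0$ — this tends to $0$ and satisfies $\lambda_{\max}(\bH)=o(\delta_n)$, since $\lambda_{\max}(\bH)^{(d+4)/4}=O(n^{-1/2})=o(\sqrt{\log n/n})$ under Assumption~\ref{assm:HA} — gives $P(\|\ffnH-f_0\|_\infty>c_\ast\delta_n)\le C n^{-C'c_\ast^2 K_0/4}$, which is $o(n^{-1})$ once $K_0$ is large. Combining with the previous paragraph proves the lemma.

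The main obstacle is the last step: $\delta_n$ must be chosen simultaneously small ($\delta_n\to0$ and $\lambda_{\max}(\bH)=o(\delta_n)$) and large enough that $n|\bH|^{1/2}\delta_n^2/\log n\to\infty$ and that $\delta_n$ exceeds both the bias and the typical size of the stochastic fluctuation of $\ffnH$; checking that such a $\delta_n$ exists, and that the corresponding deviation probability beats $n^{-1}$, is exactly where the full strength of Assumption~\ref{assm:HA} is used. Step~1 is comparatively routine given the non-vanishing-gradient hypothesis, the only subtlety being that the steepest-ascent path cannot leave $U_a$ except through $\beta_\tau$, which holds precisely because $U_a$ is defined through the value of $f_0$.
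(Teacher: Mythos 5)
Your proof is essentially correct and follows the same strategy as the paper's: reduce the two probabilities on the left of~\eqref{eq:step3re} to a single sup-norm deviation event via Lemma~\ref{lem:hdr-step1} and a geometric lower bound on $|f_0-\fftau|$ off $\beta_\tau^{\delta_n}$, then apply the bias expansion and the Gin\'e--Guillou exponential inequality (Corollary~\ref{cor:convergerate}). Two organizational differences are worth noting. First, the paper splits the argument into a fixed-$\delta$ piece plus the annular remainder $E(\delta,\delta_n)$ and uses Lemma~\ref{lem:hdr-step3-c2} (proved via the tubular neighborhood theorem) for the geometric estimate only on the annulus; your gradient-flow argument proves the sharper statement $|f_0(\bx)-\fftau|\ge m\,\mathrm{dist}(\bx,\beta_\tau)$ on all of $U_a$, which combined with the trivial bound outside $U_a$ lets you skip the intermediate fixed-$\delta$ reduction entirely. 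Second, you are explicit that the conclusion requires not just $\lambda_{\max}(\bH)=o(\delta_n)$ but also $n|\bH|^{1/2}\delta_n^2/\log n\to\infty$, and you verify that the specific choice $\delta_n^2\asymp\log n/(n|\bH|^{1/2})$ satisfies both simultaneously under Assumption~\ref{assm:HA}; the paper leaves this second requirement implicit in the lemma statement (it appears only when the exponent is evaluated in the last display of the proof, and is how the lemma is actually deployed in Theorem~\ref{thm:hdr}). So the two proofs establish the same existential statement, and yours surfaces an implicit hypothesis more clearly. One small point to tighten: when you invoke Lemma~\ref{lem:hdr-step1} with $\tilde f=\ffnH$ you should say the implication $\{\ffnH(\bx)<\fftaun\}\subset\{\|\ffnH-f_0\|_\infty>c_\ast\delta_n\}$ holds after first intersecting with $\{\|\ffnH-f_0\|_\infty\le\varepsilon_0\}$ and then absorbing the complementary event (which you do, but only via the terse clause ``$n$ large enough that $c_\ast\delta_n<\varepsilon_0$'').
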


The definition of $\fftaun$ is simple and straightforward, however
there is no explicit form for this quantity. So we want to seek an asymptotic expansion for $\fftaun$. For a uniformly continuous density $f$ on $\RR^d$ and $y\ge 0$,
we define
\begin{align*}
  \psi(f,y):=\int_{\RR^d}f(\bx)\one_{\{f(\bx)\ge y\}}\,d\bx.
\end{align*}
First, we observe for $\varepsilon>0$ sufficiently small,
\begin{align}
  \label{eq:step4-1}
  \begin{split}
    \MoveEqLeft
    \left|\psi(f_0,f_{\tau,0}+\varepsilon)-\psi(f_0,f_{\tau,0})-\varepsilon\int_{\beta_{\tau}}\frac{f_0(\bx)}{\|\nabla
        f_0(\bx)\|}\,d\mc{H}(\bx)\right|\\
    &=\left|\int_{\RR^d}f_0(\bx)\one_{\{f_{\tau,0}\le f_0(\bx)\le f_{\tau,0}+\varepsilon\}}\,d\bx-\varepsilon\int_{\beta_{\tau}}\frac{f_0(\bx)}{\|\nabla
        f_0(\bx)\|}\,d\mc{H}(\bx)\right|=O(\varepsilon^2),
  \end{split}
\end{align}
as $\varepsilon\searrow 0$,
where the last line comes from a similar  argument of
\eqref{eq:step1-leveltohauss} and
\eqref{eq:step1-intdiff}. A similar argument shows the same result
when $\varepsilon \nearrow 0$.
Next, we look at
\begin{align}
  \label{eq:hdr-step4-exp}
  \begin{split}
    \MoveEqLeft   \left| \psi(\tilde{f},\tilde{f}_{\tau}) -  \psi(f_0,\tilde{f}_{\tau}) - f_{\tau,0} \int_{\beta_{\tau}} \frac{g}{\|\nabla
        f_0 \|}\,d\mc{H}-\int_{\mc{L}_{\tau}}g \, d\lambda \right|\\
    & =\left|\int
      \tilde{f}\one_{\{\tilde{f}\ge\tilde{f}_{\tau}\}}\,d\lambda -\int
      f_0\one_{\{f_0\ge\tilde{f}_{\tau}\}}\,d\lambda  -
      f_{\tau,0}\int_{\beta_{\tau}}\frac{g}{\|\nabla
        f_0 \|}\,d\mc{H} - \int_{\mc{L}_{\tau}}g \, d\lambda \right|\\
    &=\left|\int f_0(\one_{\{\tilde{f}\ge\tilde{f}_{\tau}\}}-\one_{\{f_0\ge\tilde{f}_{\tau}\}})\,d\lambda-
      f_{\tau,0} \int_{\beta_{\tau}}\frac{g}{\|\nabla
        f_0 \|}\,d\mc{H} \right.\\
    &\qquad+\left.\int
      g(\one_{\{\tilde{f}\ge\tilde{f}_{\tau}\}}-\one_{\{f_0\ge
        f_{\tau,0}\}})\,d\lambda \right|,
  \end{split}
\end{align}
where $g(\bx)=\tilde{f}(\bx)-f_0(\bx)$. For the first integral  on the last line, since
$\one_{\{\tilde{f}\ge\tilde{f}_{\tau}\}}-\one_{\{f_0\ge\tilde{f}_{\tau}\}}\ne
0$ indicates that $\tilde{f}(\bx)\ge
\tilde{f}_{\tau},f_0(\bx)<\tilde{f}_{\tau}$ or
$\tilde{f}(\bx)<\tilde{f}_{\tau},f_0(\bx)\ge \tilde{f}_{\tau}$, we
have $
f_0(\bx)\in[\tilde{f}_{\tau}-|g(\bx)|,\tilde{f}_{\tau}+|g(\bx)|]$. Combining   \eqref{eq:hdr-step4-f0-bound} with our result in Lemma~\ref{lem:hdr-step1} yields
\begin{align}
  \label{eq:hdr-step4-f0-bound}
  f_0(\bx)=f_{\tau}+O(\|g\|_{\infty}),
\end{align}
for $\bx \in\{\bs{y}:\tilde{f}(\bs{y})\ge \tilde{f}_{\tau}\}\Delta\{\bs{y}:f_0(\bs{y})\ge \tilde{f}_{\tau}\}$.
Next we need the following lemmas.
\begin{lemma}
  \label{lem:hdr-step4-order}
  Let the assumptions of Theorem~\ref{thm:hdr} hold and the notation be as defined above.
  As $\|g\|_\infty^2 + \| g\|_\infty \| \nabla g \|_\infty \to 0$, we have
  \begin{equation}
    \label{eq:22}
    \int \one_{\{\tilde{f}\ge\tilde{f}_{\tau}\}} - \one_{\{f_0\ge\tilde{f}_{\tau}\}}\, d\lambda
    = \int_{\beta_{\tau}}\frac{g}{\|\nabla f_0 \|}\, d\mc{H} +
    O(\|g\|_\infty^2 +  \| g\|_\infty \| \nabla g \|_\infty).
  \end{equation}
\end{lemma}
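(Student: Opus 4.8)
The plan is to localize the integral to a thin tube around $\beta_\tau$, introduce normal coordinates along $\beta_\tau$, and expand $f_0$ and $g$ to first order in the normal direction; the shift $\tilde{f}_{\tau}-f_{\tau,0}$ will turn out to cancel, leaving the Hausdorff integral of $g/\|\nabla f_0\|$.

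First I would rewrite the left-hand side as a signed difference of Lebesgue measures: $\one_{\{\tilde{f}\ge\tilde{f}_{\tau}\}}-\one_{\{f_0\ge\tilde{f}_{\tau}\}}$ equals $+1$ on $\{f_0<\tilde{f}_{\tau}\le f_0+g\}$, $-1$ on $\{f_0+g<\tilde{f}_{\tau}\le f_0\}$, and $0$ elsewhere, so the integral equals $\lambda(\{f_0<\tilde{f}_{\tau}\le f_0+g\})-\lambda(\{f_0+g<\tilde{f}_{\tau}\le f_0\})$. By Lemma~\ref{lem:hdr-step1}, $|\tilde{f}_{\tau}-f_{\tau,0}|\le C_1\|g\|_\infty$. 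Since $\beta_\tau$ is compact (Assumption~\ref{assm:BA}) and $U_a\subseteq\beta_\tau^\delta$ with $U_a$ closed (Assumption~\ref{assm:DA-hdr}), $U_a$ is itself compact, and a short contradiction argument then shows that for each sufficiently small $\delta'>0$ one has $\inf\{|f_0(\bx)-f_{\tau,0}|:\bx\notin\beta_\tau^{\delta'}\}=:a'>0$. Hence, once $\|g\|_\infty<a'/2$, off $\beta_\tau^{\delta'}$ both $f_0$ and $\tilde f=f_0+g$ lie strictly on the same side of $\tilde{f}_{\tau}$, the integrand vanishes there, and both of the sets above are contained in the compact set $\beta_\tau^{\delta'}$.

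Next, because $\inf_{U_a}\|\nabla f_0\|>0$ and $f_0\in C^2$ near $\beta_\tau$, the level set $\beta_\tau$ is a compact $C^2$ hypersurface (via the preimage theorem, \cite{Guillemin:1974ti}), so for $\delta'$ small the normal map $\Psi(\bs y,t):=\bs y+t\,\nu(\bs y)$ with $\nu:=\nabla f_0/\|\nabla f_0\|$ is a $C^1$ diffeomorphism of $\beta_\tau\times(-\delta',\delta')$ onto $\beta_\tau^{\delta'}$ with Jacobian $J(\bs y,t)=1+O(t)$ uniformly. In these coordinates $f_0(\Psi(\bs y,t))=f_{\tau,0}+t\|\nabla f_0(\bs y)\|+O(t^2)$ and $g(\Psi(\bs y,t))=g(\bs y)+O(|t|\,\|\nabla g\|_\infty)$, uniformly in $\bs y\in\beta_\tau$. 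Writing $\eta:=\tilde{f}_{\tau}-f_{\tau,0}=O(\|g\|_\infty)$ and noting that the relevant values of $t$ are $O(\|g\|_\infty)$, I would solve the affine-in-$t$ inequalities (up to the controlled $O(t^2)$ and $O(|t|\,\|\nabla g\|_\infty)$ errors) defining the two sets; for each $\bs y$ the $t$-sections are intervals, and the length of the first minus the length of the second equals
\[
  \frac{g(\bs y)}{\|\nabla f_0(\bs y)\|}+O\big(\|g\|_\infty^2+\|g\|_\infty\|\nabla g\|_\infty\big),
\]
the shift $\eta$ cancelling between the two sets and the error absorbing the $O(t^2)$ term from $f_0$, the $O(|t|\,\|\nabla g\|_\infty)$ term from $g$, the $O(t)$ Jacobian correction, and the measure of the small set of $\bs y$ for which the linearization is not decisive near an interval endpoint. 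Integrating over $\bs y\in\beta_\tau$ with the change-of-variables/coarea formula (Theorem~2, p.~99 of \cite{Evans:2015uy}), exactly as in the derivation of \eqref{eq:step1-leveltohauss}--\eqref{eq:step1-intdiff}, then yields \eqref{eq:22}.

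The step I expect to be the main obstacle is the uniform bookkeeping of the normal-coordinate error terms — in particular, verifying that the boundary-layer set of $\bs y$ near the endpoints of the $t$-intervals, where the first-order description of $\{f_0<\tilde{f}_{\tau}\le f_0+g\}$ and $\{f_0+g<\tilde{f}_{\tau}\le f_0\}$ may be inaccurate, has measure only $O(\|g\|_\infty^2+\|g\|_\infty\|\nabla g\|_\infty)$; this is precisely why the hypothesis is stated through $\|g\|_\infty^2+\|g\|_\infty\|\nabla g\|_\infty$ rather than merely $\|g\|_\infty$. Everything else (existence and regularity of the tube, the Jacobian estimate, and the off-tube localization) is routine given the compactness and non-degeneracy already in force.
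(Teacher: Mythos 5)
Your proof proceeds along essentially the same lines as the paper's: both arguments introduce normal coordinates over the compact hypersurface $\beta_\tau$, Taylor-expand $f_0$ and $g$ in the normal direction, show that the (signed) length of the $t$-section of $\{\tilde f \ge \tilde f_\tau\} \Delta \{f_0 \ge \tilde f_\tau\}$ at a base point $\bx \in \beta_\tau$ is $g(\bx)/\|\nabla f_0(\bx)\|$ up to an error of order $\|g\|_\infty^2 + \|g\|_\infty\|\nabla g\|_\infty$ (with the shift $\tilde f_\tau - f_{\tau,0}$ cancelling, exactly as you anticipate), and then integrate over $\beta_\tau$ by a tubular-neighborhood change of variables. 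In the paper that last step is packaged as Lemma~\ref{lem:COV-approx}, applied to $h = \one_{\{\tilde f \ge \tilde f_\tau\}} - \one_{\{f_0 \ge \tilde f_\tau\}}$, which is exactly the change of variables you invoke from Evans--Gariepy.

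The one place where your plan introduces an unnecessary worry is the purported ``boundary layer'' of base points $\bx \in \beta_\tau$ near an interval endpoint. Once $\|\nabla g\|_\infty$ is small enough that $\nabla \tilde f \cdot \nabla f_0 > 0$ uniformly on the tube (which is the standing regime here), both $t \mapsto f_0(\bx + t u_{\bx})$ and $t \mapsto \tilde f(\bx + t u_{\bx})$ are strictly monotone for every $\bx \in \beta_\tau$; hence the $t$-section of each super-level set is exactly a half-line with a unique crossing time, and the $t$-section of the symmetric difference is exactly one interval with endpoints $\eta_1(\bx)$ and $\eta_2(\bx)$. There is no set of ``indecisive'' $\bx$ to control separately: the Lagrange remainders in the Taylor expansions go directly into the error bound for $\eta_2 - \eta_1$ (the paper's \eqref{eq:23}--\eqref{eq:24}), and the Jacobian correction $1 + O(t)$ contributes at the same order because $\eta_i = O(\|g\|_\infty)$. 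With that observation in hand, your proposal is a complete proof.
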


\begin{lemma}
  \label{lem:hdr-step4-order-g}
  Let the assumptions of Theorem~\ref{thm:hdr} hold and the notation be
  as defined above.
  As $\|g\|^2_{\infty}+\|g\|_{\infty}\|\nabla g\|_{\infty}\rightarrow 0$, we have
  \begin{align*}
    \int_{\RR^d}g(\bx)
    \left(\one_{\{\tilde{f}(\bx)\ge \tilde{f}_{\tau}\}} - \one_{\{f_0(\bx)\ge f_{\tau}\}}\right)\,d\bx=O(\|g\|^2_{\infty}).
  \end{align*}
\end{lemma}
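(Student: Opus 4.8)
\textbf{Proof proposal for Lemma~\ref{lem:hdr-step4-order-g}.}

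The plan is to bound the integrand pointwise by $\|g\|_\infty$ and then control the Lebesgue measure of the set where it is nonzero. The integrand $g(\bx)\bigl(\one_{\{\tilde{f}(\bx)\ge\tilde{f}_{\tau}\}}-\one_{\{f_0(\bx)\ge \fftau\}}\bigr)$ vanishes off the symmetric difference $S:=\{\tilde{f}\ge\tilde{f}_{\tau}\}\Delta\{f_0\ge \fftau\}$, and on $S$ we have $|g(\bx)|\le\|g\|_\infty$, so
\[
\left|\int_{\RR^d} g(\bx)\bigl(\one_{\{\tilde{f}(\bx)\ge\tilde{f}_{\tau}\}}-\one_{\{f_0(\bx)\ge \fftau\}}\bigr)\,d\bx\right|\le\|g\|_\infty\,\lambda(S).
\]
Hence it suffices to show $\lambda(S)=O(\|g\|_\infty)$. (Note $\|g\|_\infty\to0$ under the stated hypothesis, since $\|g\|_\infty^2\le\|g\|_\infty^2+\|g\|_\infty\|\nabla g\|_\infty\to0$.)

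To estimate $\lambda(S)$ I would use the set inclusion $S\subseteq S_1\cup S_2$, where $S_1:=\{\tilde{f}\ge\tilde{f}_{\tau}\}\Delta\{f_0\ge\tilde{f}_{\tau}\}$ (only the function changes) and $S_2:=\{f_0\ge\tilde{f}_{\tau}\}\Delta\{f_0\ge \fftau\}$ (only the level changes), which follows from $A\Delta C\subseteq(A\Delta B)\cup(B\Delta C)$. On $S_1$ the argument already recorded just above \eqref{eq:hdr-step4-f0-bound} shows $f_0(\bx)\in[\tilde{f}_{\tau}-|g(\bx)|,\tilde{f}_{\tau}+|g(\bx)|]\subseteq[\tilde{f}_{\tau}-\|g\|_\infty,\tilde{f}_{\tau}+\|g\|_\infty]$; combining this with Lemma~\ref{lem:hdr-step1}, which gives $|\tilde{f}_{\tau}-\fftau|\le C_1\|g\|_\infty$ once $\|g\|_\infty$ is small, yields $S_1\subseteq f_0^{-1}\bigl([\fftau-(C_1+1)\|g\|_\infty,\fftau+(C_1+1)\|g\|_\infty]\bigr)$. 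The set $S_2$ is contained in $f_0^{-1}$ of the interval with endpoints $\tilde{f}_{\tau}$ and $\fftau$, which by Lemma~\ref{lem:hdr-step1} is again contained in $f_0^{-1}\bigl([\fftau-C_1\|g\|_\infty,\fftau+C_1\|g\|_\infty]\bigr)$.

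Finally I would invoke the level-set regularity consequence of Assumption~\ref{assm:DA-hdr} noted just after the assumptions: there is a constant $L>0$ with $\lambda\bigl(f_0^{-1}([\fftau-\delta,\fftau+\delta])\bigr)\le L\delta$ for all $\delta>0$ small enough (this follows from $\inf_{U_a}\|\nabla f_0\|>0$ together with compactness of $\beta_\tau$ via the co-area formula). Applying this with $\delta=(C_1+1)\|g\|_\infty$, which is legitimate for $n$ large since $\|g\|_\infty\to0$, gives $\lambda(S_1)\le L(C_1+1)\|g\|_\infty$ and $\lambda(S_2)\le LC_1\|g\|_\infty$, hence $\lambda(S)=O(\|g\|_\infty)$ and the displayed bound becomes $\|g\|_\infty\cdot O(\|g\|_\infty)=O(\|g\|_\infty^2)$, as claimed.

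The computation is short; the only step requiring care — and the one I would present in detail — is the reduction $\lambda(S)=O(\|g\|_\infty)$, which is exactly where Lemma~\ref{lem:hdr-step1} (to pass from the data-dependent level $\tilde{f}_{\tau}$ to the fixed level $\fftau$) and the co-area bound on $\lambda(f_0^{-1}([\fftau-\delta,\fftau+\delta]))$ enter; everything else is a pointwise bound on $|g|$. Incidentally, this lemma uses only $\|g\|_\infty\to0$, not the $\|g\|_\infty\|\nabla g\|_\infty$ term in the hypothesis, but I would state it with the hypothesis as given for uniformity with Lemma~\ref{lem:hdr-step4-order}.
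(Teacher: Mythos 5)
Your proof is correct, and it takes a genuinely different route from the paper's. The paper establishes $\lambda(S)=O(\|g\|_\infty)$ by parametrizing the estimated level set $\{\tilde f=\tilde f_\tau\}$ via the tubular-neighborhood normal map ($\bs y = \bx + \eta u_{\bx}$ for $\bx\in\beta_\tau$), Taylor-expanding $\tilde f$ along the normal direction, and solving for the offset $\eta$; this requires $\nabla\tilde f(\bx+s\eta u_{\bx})'u_{\bx}$ to be bounded away from zero, which is exactly where the $\|\nabla g\|_\infty$ smallness hypothesis is consumed. You instead split $S$ via the triangle inequality for symmetric differences into $S_1$ (function changes, level fixed) and $S_2$ (level changes, function fixed), trap both in $f_0^{-1}$ of an interval of half-width $(C_1+1)\|g\|_\infty$ around $\fftau$ using Lemma~\ref{lem:hdr-step1}, and then apply the Lebesgue-measure bound $\lambda(f_0^{-1}([\fftau-\delta,\fftau+\delta]))\le L\delta$ already recorded in the paper as a consequence of Assumption~\ref{assm:DA-hdr}. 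Your route avoids the tubular-neighborhood parametrization and — as you correctly observe — shows the lemma actually holds under the weaker hypothesis $\|g\|_\infty\to 0$ alone; the paper's version does not make this explicit because its proof technique happens to use gradient control. Both arguments are sound; yours is arguably more elementary and makes the dependence on the hypotheses more transparent, at the cost of not reusing machinery (the normal parametrization) that is shared with the proof of Lemma~\ref{lem:hdr-step4-order}.
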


\noindent Now with Lemma~\ref{lem:hdr-step4-order},
\ref{lem:hdr-step4-order-g} and \eqref{eq:hdr-step4-f0-bound}, we see that
\eqref{eq:hdr-step4-exp} equals $O(\|g\|_\infty^2 +  \| g\|_\infty \| \nabla g
\|_\infty)$.
\begin{mylongform}
  \begin{longform} Because
    \begin{align*}
      \MoveEqLeft \left|\psi(\tilde{f},\tilde{f}_{\tau})-\psi(f_0,\tilde{f}_{\tau})-f_{\tau,0}\int_{\beta_{\tau}}\frac{g}{\|\nabla
      f_0\|}\,d\mc{H}-\int_{\mc{L}_{\tau}}g\,d\lambda\right|\\
   &=\left|\int f(\one_{\{\tilde{f}\ge\tilde{f}_{\tau}\}}-\one_{\{f\ge\tilde{f}_{\tau}\}})\,d\bx-
     f_{\tau,0}\int_{\beta_{\tau}}\frac{g}{\|\nabla
     f_0\|}\,d\mc{H}+\int
     g(\one_{\{\tilde{f}\ge\tilde{f}_{\tau}\}}-\one_{\{f\ge
     f_{\tau}\}})\,d\bx\right|\\
   &=\left|(f_{\tau,0}+O(\|g\|_{\infty})\lp \int_{\beta_{\tau}}\frac{g}{\|\nabla f_0\|}\, d\mc{H} +
     O(\|g\|_\infty^2 +  \| g\|_\infty \| \nabla g
     \|_\infty)\rp\right.\\
   &\qquad\left .- f_{\tau,0}\int_{\beta_{\tau}}\frac{g}{\|\nabla
     f_0\|}\,d\mc{H}+ O(\|g\|_\infty^2 +  \| g\|_\infty \| \nabla g
     \|_\infty)\right|\\
   &= O(\|g\|_\infty^2 +  \| g\|_\infty \| \nabla g
     \|_\infty).
    \end{align*}
  \end{longform}
\end{mylongform}
Note that if $\|\nabla g\|_{\infty}\rightarrow 0$, then
$\psi(\tilde{f},\tilde{f}_{\tau})=1-\tau$. Combining
this with \eqref{eq:step4-1} and the order of
\eqref{eq:hdr-step4-exp}, we have
\begin{equation}
  \label{eq:step4-3}
  \begin{split}
    0&=\psi(\tilde{f},\tilde{f}_{\tau})-\psi(f,f_{\tau,0}) \\
    &=\psi(\tilde{f},\tilde{f}_{\tau})-\psi(f,\tilde{f}_{\tau})+\psi(f,\tilde{f}_{\tau})-\psi(f,f_{\tau,0}) \\
    &=-(\tilde{f}_{\tau}-f_{\tau,0})f_{\tau,0}\int_{\beta_{\tau}}\inv{\|\nabla
      f_0\|}\,d\mc{H}+f_{\tau,0}\int_{\beta_{\tau}}\frac{g}{\|\nabla
      f_0\|}\,d\mc{H} \\
    &\qquad+\int_{\mc{L}_{\tau}}g\,d\bx+ O(\|g\|_\infty^2 +  \| g\|_\infty \| \nabla g
    \|_\infty)
  \end{split}
\end{equation}
as $\|g\|_\infty^2 +  \| g\|_\infty \| \nabla g\|_\infty\rightarrow
0$. We want to apply \eqref{eq:step4-3} with $\tilde{f}=\ffnH$, so
that $g=\ffnH-f_0$. To do this, note by Theorem~\ref{thm:KDE-sup-rate}
\begin{mylongform}
  \begin{longform}
    (and so Assumptions~\ref{assm:HA} (as well as
    Assumptions \ref{assm:DA} and \ref{assm:KA}))
  \end{longform}
\end{mylongform}
that
$\|\ffnH-\bb{E}\ffnH\|_\infty=O_{\text{a.s.}}\lp\sqrt{\frac{\log|\bH|^{-1/2}}{n|\bH|^{1/2}}}\rp
$, $\|\nabla
\ffnH-\bb{E}\grad\ffnH\|_{\infty}=O_{a.s.}\lp\sqrt{\frac{\log|\bH|^{-1/2}}{n|\bH|^{1/2}\lambda_{\min}(\bH)}}\rp$,
by \eqref{eq:ffnH-bias-supnorm},
$\|\bb{E}(\ffnH)-f_0\|_{\infty}=O\lb\lambda_{\max}(\bH)\rb$. We also have
$\|\bb{E}\nabla \ffnH-\nabla f_0\|_{\infty}=O\{\lambda^{1/2}_{\max}(\bH)\}$.
\begin{mylongform}
  \begin{longform}
    Because,
    \begin{align*}
      \bb{E}\nabla \ffnH(\bx)=\bb{E}\nabla K_{\bH}(\bx-X_1)=\nabla
      K_H\ast f_0(\bx)=K_{\bH}*\nabla f_0(\bx),
    \end{align*}
    where $f\ast g(\bx)=\int_{\RR^d}
    f(\bx-\bs{y})g(\bs{y})\,d\bs{y}$ and $K_{\bH}(\bx)=|\bH|^{-1/2}K(\bH^{-1/2}\bx)$. So
    \begin{align*}
      \bb{E}\nabla \ffnH(\bx)&=\int_{\RR^d}K_{\bH}(\bx-\bs{y})\nabla
                               f_0(\bs{y})\,d\bs{y}\\
                             &=\int_{\RR^d}K(\bs{y})\nabla
                               f_0(\bx-\bH^{1/2}\bs{y})\,d\bs{y}.
    \end{align*}
    By Taylor Expansion, $\nabla
    f_0(\bx-\bH^{1/2}\bs{y})=\nabla
    f_0(\bx)-\nabla^2f_0(x-s_{\bs{y}}\bH^{1/2}\bs{y})\bH^{1/2}\bs{y}$,
    $s_{\bs{y}}\in(0,1)$ depends on
    $\bs{y}$. Now we have
    \begin{align*}
      \bb{E}\nabla \ffnH(\bx)&=\nabla
                               f_0(\bx)-
                               \int_{\RR^d}K(\bs{y})\nabla^2
                               f_0(\bx-s_{\bs{y}}\bH^{1/2}\bs{y})H^{1/2}\bs{y}\,d\bs{y}.
    \end{align*}
    By assumption \ref{assm:DA} $\nabla^2f_0$ is
    bounded, there exists $A\in\RR^{
      d}$, such that
    \begin{align*}
      \left|    \int_{\RR^d}K(\bs{y})\nabla^2
      f_0(\bx-s_{\bs{y}}\bH^{1/2}\bs{y})H^{1/2}\bs{y}\,d\bs{y}\right|\le H^{1/2}A.
    \end{align*}
  \end{longform}
\end{mylongform}
Then applying the above results, 
we have
\begin{equation}
  \label{eq:fftaun-ftau}
  \begin{split}
    \fftaun-f_{\tau,0}
    & = 
    w_0
    \left\{ \int_{\beta_{\tau}}\frac{\ffnH(\bx)-f_0(\bx)}{\|\nabla
        f_0(\bx)\|}\,d\mc{H}(\bx)
      + \inv{f_{\tau,0}} \int_{\mc{L}_{\tau}}\ffnH(\bx)-f_0(\bx)\,d\bx\right\}\\
    & \qquad  +O_p\lp
    \frac{\log
      |\bH|^{-1/2}}{n|\bH|^{1/2}\sqrt{\lambda_{\max}(\bH)}}+\lambda_{\max}^{3/2}(\bH)\rp.
  \end{split}
\end{equation}
Note from \eqref{eq:kde-def} and  \eqref{eq:fftaun-ftau}, for fixed $\bx$,
$\ffnH(\bx)-\fftaun$ can be expressed as  the average of i.i.d. random
variables  with a negligible stochastic error term.  This
motivates
us
to use the Berry-Essen Theorem \citep[]{Ferguson:1996bn} to
approximate the two probabilities appearing on the right of
\eqref{eq:step3re}.
In order to do so, we will  need to approximate the mean and variance of $\fftaun$,
which we do in the next lemmas.

\medskip

\begin{lemma}
  \label{lem:ftaun-bias-exp}
  Let the assumptions of Theorem~\ref{thm:hdr} hold and the notation be as
  defined above.
  Then we have
  \begin{equation}
    \label{eq:ftaun-exp}
    \begin{split}
      \bb{E} \fftaun-f_{\tau,0}
      &= w_0
      \left\{V_1(\bH)+V_2(\bH)\right\}+o\lb\tr(\bs{H})\rb,
    \end{split}
  \end{equation}
  as $n\rightarrow \infty$.
\end{lemma}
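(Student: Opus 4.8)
The plan is to pass to expectations in the pointwise expansion~\eqref{eq:fftaun-ftau} (which is itself obtained from~\eqref{eq:step4-3} applied to $\tilde f = \ffnH$), so that $\ffnH$ is replaced by its mean $\bb{E}\ffnH = K_{\bH}\ast f_0$, where $K_{\bH}(\bu):=|\bH|^{-1/2}K(\bH^{-1/2}\bu)$, and then to identify the bias of the kernel density estimator by a second-order Taylor argument. Write $g:=\ffnH-f_0$ and $b_n:=\bb{E}\ffnH-f_0$. From~\eqref{eq:fftaun-ftau},
\[
  \fftaun - f_{\tau,0} = w_0\Bigl\{\int_{\beta_\tau}\frac{g(\bx)}{\|\nabla f_0(\bx)\|}\,d\mc{H}(\bx) + \frac{1}{f_{\tau,0}}\int_{\mc{L}_\tau}g(\bx)\,d\bx\Bigr\} + \mc{R}_n ,
\]
with $\mc{R}_n = O_p\bigl(\tfrac{\log|\bH|^{-1/2}}{n|\bH|^{1/2}\lambda_{\max}(\bH)^{1/2}} + \lambda_{\max}(\bH)^{3/2}\bigr)$. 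Taking $\bb{E}$ and interchanging expectation with the two integrals, which is legitimate by Tonelli since $\bb{E}|\ffnH(\bx)|\le\|K\|_\infty|\bH|^{-1/2}<\infty$ and the domains have finite measure ($\mc{H}(\beta_\tau)<\infty$ because $\beta_\tau$ is a compact $(d-1)$-manifold by Assumption~\ref{assm:BA}, and $\lambda(\mc{L}_\tau)\le 1/f_{\tau,0}<\infty$ because $f_0\ge f_{\tau,0}$ on $\mc{L}_\tau$ and $\int f_0=1$), yields $\bb{E}\fftaun - f_{\tau,0} = w_0\{\int_{\beta_\tau}b_n/\|\nabla f_0\|\,d\mc{H} + f_{\tau,0}^{-1}\int_{\mc{L}_\tau}b_n\,d\bx\} + \bb{E}\mc{R}_n$.

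Next I would expand the bias $b_n$. Writing $b_n(\bx)=\int K(\bz)\{f_0(\bx-\bH^{1/2}\bz)-f_0(\bx)\}\,d\bz$ and Taylor expanding $f_0$ to second order, the first-order term vanishes because $\int K(\bz)\bz\,d\bz=\boldsymbol{0}$, and using $\int K(\bz)\bz\bz'\,d\bz=\mu_2(K)\boldsymbol{I}$ (Assumption~\ref{assm:KA}) the leading second-order term equals $D_1(\bx,\bH)=\tfrac12\mu_2(K)\tr(\bH\nabla^2 f_0(\bx))$; the remainder $r_n(\bx)=\tfrac12\int K(\bz)(\bH^{1/2}\bz)'\{\nabla^2 f_0(\xi_{n,\bx,\bz})-\nabla^2 f_0(\bx)\}(\bH^{1/2}\bz)\,d\bz$, with $\xi_{n,\bx,\bz}$ on the segment joining $\bx$ and $\bx-\bH^{1/2}\bz$, satisfies $|r_n(\bx)|\le C\lambda_{\max}(\bH)\le C\tr(\bH)$ uniformly and $r_n(\bx)=o(\tr(\bH))$ pointwise, by dominated convergence using the boundedness and continuity of $\nabla^2 f_0$ (Assumption~\ref{assm:DA-hdr}) together with $\int K(\bz)\|\bz\|^2\,d\bz<\infty$. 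Substituting $b_n=D_1(\cdot,\bH)+r_n$, the $D_1$ pieces reproduce $V_1(\bH)$ and $V_2(\bH)$ exactly, while the $r_n$ pieces are $o(\tr(\bH))$ by a second application of dominated convergence, using $\inf_{\beta_\tau}\|\nabla f_0\|>0$ and $\mc{H}(\beta_\tau)<\infty$ for the Hausdorff integral and $\lambda(\mc{L}_\tau)<\infty$ for the other. Thus the bracketed term equals $V_1(\bH)+V_2(\bH)+o(\tr(\bH))$, and it remains only to show $\bb{E}\mc{R}_n=o(\tr(\bH))$.

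The main obstacle is precisely this last estimate. Tracing back through Lemmas~\ref{lem:hdr-step4-order} and~\ref{lem:hdr-step4-order-g}, the term $\mc{R}_n$ arises from the second-order remainder $O(\|g\|_\infty^2+\|g\|_\infty\|\nabla g\|_\infty)$ in~\eqref{eq:step4-3} (together with the negligible $O((\fftaun-f_{\tau,0})^2)$ term coming from~\eqref{eq:step4-1}), so it suffices to bound $\bb{E}\|g\|_\infty^2$ and $\bb{E}[\|g\|_\infty\|\nabla g\|_\infty]$ by $o(\lambda_{\max}(\bH))=o(\tr(\bH))$. I would split $g=(\ffnH-\bb{E}\ffnH)+b_n$ and $\nabla g=(\nabla\ffnH-\bb{E}\nabla\ffnH)+\nabla b_n$; by~\eqref{eq:ffnH-bias-supnorm} the deterministic parts obey $\|b_n\|_\infty=O(\lambda_{\max}(\bH))$ and $\|\nabla b_n\|_\infty=O(\lambda_{\max}(\bH)^{1/2})$, while Theorem~\ref{thm:KDE-sup-rate} and the exponential concentration inequalities underlying it give the stochastic parts $L^2$-norms of order $(\log|\bH|^{-1/2}/(n|\bH|^{1/2}))^{1/2}$ and $(\log|\bH|^{-1/2}/(n|\bH|^{1/2}\lambda_{\min}(\bH)))^{1/2}$. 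Multiplying out and applying Cauchy--Schwarz, together with the constraints of Assumption~\ref{assm:HA} (in particular $\lambda_{\max}(\bH)$ and $\lambda_{\min}(\bH)$ being of the same order, $\log|\bH|^{-1/2}/(n|\bH|^{1/2}\lambda_{\min}(\bH))\to 0$, and $\lambda_{\max}(\bH)=O(n^{-2/(d+4)})$), each resulting term is $o(\lambda_{\max}(\bH))$; the cross term $\bb{E}[\|\ffnH-\bb{E}\ffnH\|_\infty\,\|\nabla\ffnH-\bb{E}\nabla\ffnH\|_\infty]$ is the delicate one and is where these rate conditions are genuinely used. On the exceptional event where the sup-norm rates of Theorem~\ref{thm:KDE-sup-rate} fail, which by the Borel--Cantelli argument behind that theorem has probability decaying faster than any polynomial in $n$, one bounds the contribution to the expectation using the crude deterministic estimates $\|\ffnH\|_\infty=O(|\bH|^{-1/2})$ and $\|\nabla\ffnH\|_\infty=O(|\bH|^{-1/2}\lambda_{\min}(\bH)^{-1/2})$. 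Combining this with the previous paragraph gives $\bb{E}\fftaun-f_{\tau,0}=w_0\{V_1(\bH)+V_2(\bH)\}+o(\tr(\bH))$, which is~\eqref{eq:ftaun-exp}.
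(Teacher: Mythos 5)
Your argument follows essentially the same route as the paper's: (i) identify the bias $b_n=\bb{E}\ffnH-f_0$ with $D_1(\cdot,\bH)$ plus a uniformly $o(\tr(\bH))$ remainder by a second-order Taylor expansion and dominated convergence, giving the $V_1(\bH)+V_2(\bH)$ terms; and (ii) control the nonlinear part of $\fftaun$ by truncating on a vanishing-probability exceptional event. The difference is in how step (ii) is carried out. The paper decomposes $\bb{E}\ls\fftaun-f_{\tau,0}-w_0\{V_1+V_2\}\rs$ into good-event and bad-event pieces at a \emph{fixed} threshold $\eta$, handles the bad piece \eqref{eq:28} by Cauchy--Schwarz using $\fftaun=O(1)$ together with $P(\text{bad})=o(n^{-1})$, and leaves the good-event piece (which requires $\bb{E}[(\|g\|_\infty^2+\|g\|_\infty\|\nabla g\|_\infty)\one_{\text{good}}]=o(\tr(\bH))$) implicit. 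You instead bound $\bb{E}\mc{R}_n$ directly, splitting $g$ into bias and fluctuation, and showing each term in $\bb{E}\|g\|_\infty^2$ and $\bb{E}[\|g\|_\infty\|\nabla g\|_\infty]$ is $o(\lambda_{\max}(\bH))$, reserving the crude $O(|\bH|^{-1/2})$ and $O(|\bH|^{-1/2}\lambda_{\min}(\bH)^{-1/2})$ bounds for a residual super-polynomially-rare event. This is the more forthcoming version of what the paper's ``it is sufficient to show \eqref{eq:28}'' is implicitly assuming, and it makes visible which rate conditions in Assumption~\ref{assm:HA} are really doing the work. One caveat worth flagging: the cross term you single out as delicate requires, after Cauchy--Schwarz, roughly $\log|\bH|^{-1/2}/(n|\bH|^{1/2}\lambda_{\min}^{1/2}\lambda_{\max})\to 0$, i.e.\ $\log\lambda^{-1}/(n\lambda^{(d+3)/2})\to 0$ when $\lambda_{\max}\asymp\lambda_{\min}\asymp\lambda$, which is a strictly stronger requirement than the $n|\bH|^{1/2}\lambda_{\min}/\log|\bH|^{-1/2}\to\infty$ condition of Assumption~\ref{assm:HA-2}. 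You should verify explicitly that the combination of the upper bound $\lambda_{\max}=O(n^{-2/(d+4)})$ with the lower bound implied by Assumption~\ref{assm:HA-2} really forces this; as stated, your ``each resulting term is $o(\lambda_{\max}(\bH))$'' needs a computation to back it up, and the paper's own (compressed) treatment does not supply it either.
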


\medskip
\noindent Recall
$V_1$ and $V_2$ are defined in Theorem~\ref{thm:hdr}.  The next lemma shows $\Var \fftaun$ is negligible compared with other
terms in the expansion.

\medskip

\begin{lemma}
  \label{lem:ftaun-var-exp}
  Let the assumptions of Theorem~\ref{thm:hdr} hold and the notation be as
  defined above.
  Then  $\Var \fftaun=o(n^{-1}|\bH|^{-1/2})$.
\end{lemma}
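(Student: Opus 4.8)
The plan is to reduce everything to the linearization \eqref{eq:fftaun-ftau} established in the proof of Theorem~\ref{thm:hdr} and then control two variances. Set $K_{\bH}(\bs{u}) := |\bH|^{-1/2}K(\bH^{-1/2}\bs{u})$, so $\ffnH = n^{-1}\sum_{i=1}^n K_{\bH}(\cdot - \bX_i)$, and write $g := \ffnH - f_0 = g^{\mathrm{st}} + g^{\mathrm{b}}$ with $g^{\mathrm{st}} := \ffnH - \EE\ffnH$ and $g^{\mathrm{b}} := \EE\ffnH - f_0$ deterministic. Since $\Var\fftaun = \Var(\fftaun - \EE\fftaun)$, only the fluctuating part of \eqref{eq:fftaun-ftau} contributes. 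Substituting $\ffnH = n^{-1}\sum_i K_{\bH}(\cdot - \bX_i)$ into the linear functional in \eqref{eq:fftaun-ftau} expresses it as $n^{-1}\sum_{i=1}^n \zeta_i$ plus a deterministic constant, where
\begin{equation*}
  \zeta_i := w_0\left( \int_{\beta_{\tau}}\frac{K_{\bH}(\bx - \bX_i)}{\|\grad f_0(\bx)\|}\, d\mc{H}(\bx) + \frac{1}{f_{\tau,0}}\int_{\mc{L}_{\tau}}K_{\bH}(\bx - \bX_i)\, d\bx \right), \quad i = 1,\dots,n,
\end{equation*}
are i.i.d. Thus $\fftaun - \EE\fftaun = n^{-1}\sum_i(\zeta_i - \EE\zeta_i) + (R_n - \EE R_n)$, where $R_n = O(\|g\|_\infty^2 + \|g\|_\infty\|\grad g\|_\infty)$ on the event where $\|g\|_\infty,\|\grad g\|_\infty$ fall below the fixed threshold beyond which \eqref{eq:step4-3} holds, and hence $\Var\fftaun \le \tfrac{2}{n}\Var(\zeta_1) + 2\Var(R_n)$.

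For the first term, $\Var(\zeta_1)\le\EE[\zeta_1^2] \le 2w_0^2\,\EE[(\int_{\beta_{\tau}}\|\grad f_0\|^{-1}K_{\bH}(\bx-\bX_1)\,d\mc{H}(\bx))^2] + 2w_0^2 f_{\tau,0}^{-2}$, the last piece being $O(1)$ because $\int K_{\bH}(\bx-\bX_1)\,d\bx = 1$. For the Hausdorff piece I would localize around $\beta_{\tau}$: covering the compact manifold $\beta_{\tau}$ by finitely many charts and writing it locally as a graph over its tangent plane shows $\int_{\beta_{\tau}}K_{\bH}(\bx-\bz)\,d\mc{H}(\bx) = \widetilde{K}_{\bH}(\mathrm{dist}(\bz,\beta_{\tau}))(1+o(1))$ for a one-dimensional probability density $\widetilde{K}_{\bH}$ of width of order $\lambda_{\max}(\bH)^{1/2}$ (using $\lambda_{\max}(\bH)\asymp\lambda_{\min}(\bH)$ from Assumption~\ref{assm:HA}); squaring, integrating against $f_0$, and applying the coarea formula together with the tube-volume bound for $\beta_{\tau}$ (from Assumptions~\ref{assm:DA-hdr} and \ref{assm:BA}) gives the Hausdorff piece as $O(\lambda_{\max}(\bH)^{-1/2})$. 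Hence $\tfrac{1}{n}\Var(\zeta_1) = O(n^{-1}\lambda_{\max}(\bH)^{-1/2})$, and since $|\bH|^{1/2}\asymp\lambda_{\max}(\bH)^{d/2}$ this equals $\lambda_{\max}(\bH)^{(d-1)/2}\cdot O(n^{-1}|\bH|^{-1/2})$, which is $o(n^{-1}|\bH|^{-1/2})$ because $d\ge2$ and $\lambda_{\max}(\bH)\to0$. (When $d=1$ the exponent $(d-1)/2$ vanishes, so this term is of the same order as the target; this is the source of the distinction from the $d=1$ case mentioned in the Remark.)

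The second term is the crux. The naive bound $\Var(R_n)\le\EE[R_n^2]$, combined with the a.s.\ sup-norm rates of Theorem~\ref{thm:KDE-sup-rate}, produces $\EE[(\|g^{\mathrm{st}}\|_\infty\|\grad g^{\mathrm{st}}\|_\infty)^2]$, which carries an extra $\log|\bH|^{-1/2}$ factor and is not guaranteed to be $o(n^{-1}|\bH|^{-1/2})$ under Assumption~\ref{assm:HA}. Instead I would push the expansions underlying \eqref{eq:fftaun-ftau} --- namely \eqref{eq:step4-1} and Lemmas~\ref{lem:hdr-step4-order} and \ref{lem:hdr-step4-order-g} --- one order further, using the extra smoothness of Assumption~\ref{assm:DA3}, to obtain $\fftaun - f_{\tau,0} = L[g] + Q[g,g] + \rho_n$ with $L$ linear, $Q$ quadratic, and $\rho_n = O(\|g\|_\infty^3 + \|g\|_\infty^2\|\grad g\|_\infty + \cdots)$. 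Splitting $g=g^{\mathrm{st}}+g^{\mathrm{b}}$: the cross part $Q[g^{\mathrm{st}},g^{\mathrm{b}}]$ is linear in $g^{\mathrm{st}}$ with coefficient of size $\|g^{\mathrm{b}}\|_\infty = O(\lambda_{\max}(\bH))$ or $\|\grad g^{\mathrm{b}}\|_\infty = O(\lambda_{\max}(\bH)^{1/2})$, so the tube estimate of the previous paragraph bounds its variance by $o(n^{-1}|\bH|^{-1/2})$; the purely stochastic part $Q[g^{\mathrm{st}},g^{\mathrm{st}}]$, upon expanding the double sum, splits into a diagonal term $n^{-2}\sum_i(\cdot)$ and a degenerate second-order $U$-statistic, whose variances are $O(n^{-3}\lambda_{\max}(\bH)^{-(2d+1)/2})$ and $O(n^{-2}\lambda_{\max}(\bH)^{-(d+1)/2})$ respectively, both $o(n^{-1}|\bH|^{-1/2})$ for $d\ge2$; and $\EE[\rho_n^2]\lesssim (\log|\bH|^{-1/2}/(n|\bH|^{1/2}))^3 + \cdots$ is $o(n^{-1}|\bH|^{-1/2})$ by the rate conditions in Assumption~\ref{assm:HA}. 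On the complementary event where the expansion fails, $|R_n| = O(|\bH|^{-1/2})$ while the event has probability $\le \exp(-c\,n|\bH|^{1/2})$ by the exponential concentration behind Theorem~\ref{thm:KDE-sup-rate}, so its contribution is $O(|\bH|^{-1}e^{-c\,n|\bH|^{1/2}}) = o(n^{-1}|\bH|^{-1/2})$. Adding the pieces yields the lemma.

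I expect the main obstacle to be exactly the handling of $R_n$: obtaining the second-order expansion with a genuinely controllable cubic remainder (which is where Assumption~\ref{assm:DA3} is used), and then the bookkeeping --- especially the $U$-statistic variance bounds for $Q[g^{\mathrm{st}},g^{\mathrm{st}}]$ --- that shows each resulting piece is $o(n^{-1}|\bH|^{-1/2})$ rather than merely $o(n^{-1}|\bH|^{-1})$. By contrast the variance of the leading linear term, computed via the second moment of $\zeta_1$ and the coarea/tube estimate around $\beta_{\tau}$, is the routine part of the argument.
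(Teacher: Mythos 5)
Your decomposition $\Var\fftaun\le\frac{2}{n}\Var(\zeta_1)+2\Var(R_n)$ and the treatment of the linear part are both sound, but they take a different route from the paper. For $\Var(\zeta_1)$ the paper does not use a tube/coarea estimate: it computes the second moment of $\int_{\beta_\tau}K_{\bH}(\bx-\bX_i)/\|\grad f_0\|\,d\cH$ exactly, changes variables to $\bs z=\bH^{-1/2}(\bx-\bs a)$, and observes that the resulting correlation kernel $\int K(\bs z)K(\bs z+\bH^{-1/2}(\bs y-\bx))\,d\bs z\to R(K)\one_{\{\bs x=\bs y\}}$ pointwise, which integrates to zero against $d\cH\times d\cH$ by dominated convergence, yielding the $o(n^{-1}|\bH|^{-1/2})$ bound directly without tracking the precise power of $\lambda_{\max}(\bH)$. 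Your quantitative estimate $\Var(\zeta_1)=O(\lambda_{\max}(\bH)^{-1/2})$, giving a visible factor $\lambda_{\max}(\bH)^{(d-1)/2}$, is stronger and makes explicit why the conclusion fails at $d=1$, which is a genuine gain over the paper's argument. For the Lebesgue piece, the paper similarly computes the variance directly, using that $\int_{\mc{L}_\tau}K_{\bH}(\bx-\bX_i)\,d\bx\in[0,1]$ converges a.s.\ to $\one_{\mc{L}_\tau}(\bX_i)$, obtaining $O(n^{-1})$.

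The substantive problem is in your treatment of $R_n$. You correctly identify that bounding $\Var(R_n)$ crudely by $\EE\ls(\|g\|_\infty^2+\|g\|_\infty\|\grad g\|_\infty)^2\rs$ carries a stray $\log|\bH|^{-1/2}$, and your remedy is to expand $\fftaun-f_{\tau,0}$ to second order, $L[g]+Q[g,g]+\rho_n$ with a cubic remainder $\rho_n$, invoking Assumption~\ref{assm:DA3} for the extra smoothness needed. But the lemma is stated under only the assumptions of Theorem~\ref{thm:hdr}, namely \ref{assm:DA-hdr}, \ref{assm:BA}, \ref{assm:KA}, \ref{assm:HA}; Assumption~\ref{assm:DA3} (four bounded derivatives) is not among them, and is introduced in the paper only later, for Corollaries~\ref{cor:ls-bandwidth-selector} and \ref{cor:hdr-bandwidth-selector}. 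Under Assumption~\ref{assm:DA-hdr} alone $f_0$ is only $C^2$, so the quadratic functional $Q$ and cubic remainder you need do not exist, and your $U$-statistic bookkeeping for $Q[g^{\mathrm{st}},g^{\mathrm{st}}]$ has nothing to act on. This is a genuine gap: the proposed proof uses a hypothesis the lemma does not grant. The paper's own handling of $R_n$ is by the event truncation $\{\|\ffnH-f_0\|_\infty+\|\grad\ffnH-\grad f_0\|_\infty>\eta\}$ together with the linearization \eqref{eq:fftaun-ftau} applied on the complementary event, and does not require a higher-order expansion.
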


\medskip
Now  according
to Lemma \ref{lem:hdr-step3} and \eqref{eq:symrisk}, we have
\begin{align*}
  \bb{E}\mu_{f_0}(\mc{L}_{\tau}\Delta \hat{\mc{L}}_{\tau,\bH})
  &=  \int_{\mc{L}_{\tau}^c\backslash \mc{L}_{\delta_n}(f_{\tau})^c}f_0(\bx)P\lp\ffnH(\bx)\ge\fftaun\rp\,d\bx\\
  &\quad+\int_{\mc{L}_{\tau}\backslash
    \mc{L}_{-\delta_n}(f_{\tau,0})}f_0(\bx)P\lp\ffnH(\bx)<\fftaun\rp\,d\bx+o\lp
    n^{-1}\rp\\
  &=\int_{\beta_\tau^{\delta_n}}f_0(\bx)\left|P\lp\ffnH(\bx)<\fftaun\rp\,d\bx-\one_{\{f_0(\bx)<f_{\tau,0}\}}\right|\,d\bx+o\lp
    n^{-1}\rp.
\end{align*}
Then by Lemma~\ref{lem:COV-approx}
when $\delta_n$ is small enough, the dominating term on the last line
above  is equal to
\begin{align}
  \label{eq:shrinked-hauss-symrisk}
  \begin{split}
    &  \int_{\beta_\tau} \int_{-\delta_n}^{\delta_n} f_0(\bx+tu_{\bx}) \left| P\lp\ffnH(\bx+tu_{\bx})<\fftaun\rp-\one_{\{f_0(\bx+tu_{\bx})<f_{\tau,0}\}} \right|
    \,dtd\mc{H}(\bx)\\
    &\qquad + O(\delta_n^2),
  \end{split}
\end{align}
where $u_{\bx} := - \grad f_0(\bx) / \| \grad f_0(\bx) \|$ is the unit outer normal vector of $\beta_\tau$ at $\bx$.
Now for a fixed $\bx\in \beta_{\tau}$, let
$\bx^t=\bx+\frac{t}{\sqrt{n|\bH|^{1/2}}}u_{\bx}$ for
$t\in[-\sqrt{n|\bH|^{1/2}}\delta_n,\sqrt{n|\bH|^{1/2}}\delta_n]$, we see \eqref{eq:shrinked-hauss-symrisk} equals
\begin{align}
  \label{eq:trans-hauss-symrisk}
  &\frac{1}{\sqrt{n|\bH|^{1/2}}}\int_{\beta_{\tau}}\int_{-\sqrt{n|\bH|^{1/2}}\delta_n}^{\sqrt{n|\bH|^{1/2}}\delta_n}f_0\lp\bx^t\rp
    \left|P\lp\ffnH\lp\bx^t\rp<\fftaun\rp-\one_{\{t>0\}}\right|\,dtd\mc{H}(\bx)\\
  &\qquad+O(\delta_n^2).
\end{align}
By Taylor Expansion, we have
\begin{align*}
  f_0\lp\bx+\frac{t}{\sqrt{n|\bH|^{1/2}}}u_{\bx}\rp=f_0(\bx)+\nabla f_0\lp\bx+\frac{st}{\sqrt{n|\bH|^{1/2}}}u_{\bx}\rp'\frac{t}{\sqrt{n|\bH|^{1/2}}}u_{\bx},
\end{align*}
for some $s\in[0,1]$.
Since by Assumption~\ref{assm:DA-hdr},  $f_0$ has bounded first
derivatives,  we see the dominating  term in
\eqref{eq:trans-hauss-symrisk} equals
\begin{align}
  \label{eq:LebtoHauss}
  \begin{split}
    \frac{f_{\tau,0}}{\sqrt{n|\bH|^{1/2}}}\int_{\beta_\tau}\int_{-\sqrt{n|\bH|^{1/2}}\delta_n}^{\sqrt{n|\bH|^{1/2}}\delta_n}\left|P\lp\ffnH(\bx^t)<\fftaun\rp-\one_{\{t>0\}}\right|\,dtd\mc{H}(\bx)+O(\delta_n^2),
  \end{split}
\end{align}
as $n\rightarrow \infty$.
\begin{mylongform}
  \begin{longform}
    Because
    \begin{align*}
      \frac{1}{\sqrt{n|\bH|^{1/2}}}\int_{\beta_\tau}\int_{-\sqrt{n|\bH|^{1/2}}\delta_n}^{\sqrt{n|\bH|^{1/2}}\delta_n}\frac{t}{\sqrt{n|\bH|^{1/2}}}\,dt=O(\delta_n^2).
    \end{align*}
  \end{longform}
\end{mylongform}
\noindent  We can  further shrink  the region of interest by the following
lemma.
\begin{lemma}
  \label{lem:hdr-step5}
  Let the assumptions of Theorem~\ref{thm:hdr} hold and the notation be as
  defined above.
  Then for $n$
  sufficiently large, $\bb{E}\{\ffnH(\bx^t)-\fftaun\}$ is a strictly monotone function of
  $t\in[-\sqrt{n|\bH|^{1/2}}\delta_n,\sqrt{n|\bH|^{1/2}}\delta_n]$,
  with a unique zero $t_{\bx}^{\ast}$. For  a sequence $t_n$ diverging to infinity and
  $t_n=O(\sqrt{n|\bH|^{1/2}}\delta_n)$, let
  $$I_{\bx}^n=[-\sqrt{n|\bH|^{1/2}}\delta_n,\sqrt{n|\bH|^{1/2}}\delta_n]\backslash
  [t^{\ast}_{\bx}-t_n,t^{\ast}_{\bx}+t_n].$$ We have
  \begin{align}
    \label{eq:shrink}
    \int_{\beta_\tau}\int_{I_{\bx}^n}|P(\ffnH(\bx^t)<\fftaun)-\one_{\{t>0\}}|\,dtd\mc{H}(\bx)  \rightarrow 0
  \end{align}
  as $n\rightarrow \infty$.
\end{lemma}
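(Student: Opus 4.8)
I would treat the two assertions separately. For strict monotonicity of $t\mapsto\EE[\ffnH(\bx^t)-\fftaun]$ and existence of the unique zero $t^{\ast}_{\bx}$, the point is that the $t$-derivative of the mean equals $(n|\bH|^{1/2})^{-1/2}$ times a quantity that stays uniformly bounded away from $0$ thanks to $\inf_{U_a}\|\nabla f_0\|>0$. For \eqref{eq:shrink}, the point is that once one is outside the window of half-width $t_n$ about $t^{\ast}_{\bx}$, the mean $\EE[\ffnH(\bx^t)-\fftaun]$ is at least $c\,(n|\bH|^{1/2})^{-1/2}|t-t^{\ast}_{\bx}|$ in absolute value while the standard deviation of $\ffnH(\bx^t)-\fftaun$ is only $O((n|\bH|^{1/2})^{-1/2})$, so a crude Chebyshev bound already makes $|P(\ffnH(\bx^t)<\fftaun)-\one_{\{t>0\}}|$ decay like $|t-t^{\ast}_{\bx}|^{-2}$, whose tail integral is $O(1/t_n)\to0$. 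No Berry--Esseen/Gaussian approximation is needed for this lemma; that is used only for the complementary integral over $[t^{\ast}_{\bx}-t_n,t^{\ast}_{\bx}+t_n]$.

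\emph{Monotonicity and the zero.} Since $\EE\ffnH(\bz)=(K_{\bH}\ast f_0)(\bz)$, we have $\frac{d}{dt}\EE\ffnH(\bx^t)=(n|\bH|^{1/2})^{-1/2}u_{\bx}'\EE\nabla\ffnH(\bx^t)=(n|\bH|^{1/2})^{-1/2}\big(u_{\bx}'\nabla f_0(\bx^t)+O(\lambda_{\max}^{1/2}(\bH))\big)$, using the gradient-bias bound $\|\EE\nabla\ffnH-\nabla f_0\|_\infty=O(\lambda_{\max}^{1/2}(\bH))$ recorded just before \eqref{eq:fftaun-ftau}. Because $u_{\bx}'\nabla f_0(\bx)=-\|\nabla f_0(\bx)\|\le-c_0$ with $c_0:=\inf_{U_a}\|\nabla f_0\|>0$ (Assumption~\ref{assm:DA-hdr}), $\|\bx^t-\bx\|\le\delta_n\to0$, and $\nabla f_0$ is uniformly continuous on a fixed compact neighborhood of the compact set $\beta_\tau$ (Assumption~\ref{assm:BA}), for all $n$ large $\frac{d}{dt}\EE\ffnH(\bx^t)\le-\tfrac{c_0}{2}(n|\bH|^{1/2})^{-1/2}$ uniformly in $\bx\in\beta_\tau$ and $t\in[-\sqrt{n|\bH|^{1/2}}\delta_n,\sqrt{n|\bH|^{1/2}}\delta_n]$; since $\EE\fftaun$ does not depend on $t$ this gives the strict decrease. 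Since $\EE[\ffnH(\bx)-\fftaun]=(f_0(\bx)-f_{\tau,0})+O(\lambda_{\max}(\bH))=O(\lambda_{\max}(\bH))=o(\delta_n)$ (using $f_0(\bx)=f_{\tau,0}$ on $\beta_\tau$, the bias bounds near \eqref{eq:fftaun-ftau}, Lemma~\ref{lem:ftaun-bias-exp}, and $\lambda_{\max}(\bH)=o(\delta_n)$), integrating the slope bound shows $\EE[\ffnH(\bx^{\pm\sqrt{n|\bH|^{1/2}}\delta_n})-\fftaun]$ has sign $\mp$ for $n$ large, so the intermediate value theorem yields the unique interior zero $t^{\ast}_{\bx}$. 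Finally, $\EE\ffnH(\bx^{t^{\ast}_{\bx}})=\EE\fftaun=f_{\tau,0}+O(\lambda_{\max}(\bH))$, and a first-order Taylor expansion of $f_0$ along the segment joining $\bx$ to $\bx^{t^{\ast}_{\bx}}$ (on which the directional derivative along $u_{\bx}$ is $\le-c_0/2$) gives $|t^{\ast}_{\bx}|=O(\sqrt{n|\bH|^{1/2}}\lambda_{\max}(\bH))$; Assumption~\ref{assm:HA} (which forces $\lambda_{\max}\asymp\lambda_{\min}$, hence $|\bH|^{1/2}\asymp\lambda_{\max}^{d/2}$, and $\lambda_{\max}(\bH)=O(n^{-2/(4+d)})$) makes this $O(1)$ uniformly in $\bx$. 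In particular $|t^{\ast}_{\bx}|<t_n$ for $n$ large, so on $I^n_{\bx}$ we have $\one_{\{t>0\}}=\one_{\{t>t^{\ast}_{\bx}\}}$.

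\emph{The tail integral.} Integrating the slope bound from $t^{\ast}_{\bx}$ gives $|\EE[\ffnH(\bx^t)-\fftaun]|\ge\tfrac{c_0}{2}(n|\bH|^{1/2})^{-1/2}|t-t^{\ast}_{\bx}|$ for all $t$ in range, with the mean negative for $t>t^{\ast}_{\bx}$ and positive for $t<t^{\ast}_{\bx}$. Also $\Var(\ffnH(\bx^t))\le\|f_0\|_\infty R(K)(n|\bH|^{1/2})^{-1}$ and $\Var(\fftaun)=o((n|\bH|^{1/2})^{-1})$ (Lemma~\ref{lem:ftaun-var-exp}), so by Cauchy--Schwarz $\Var(\ffnH(\bx^t)-\fftaun)\le C(n|\bH|^{1/2})^{-1}$ uniformly in $\bx,t$. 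Chebyshev's inequality then yields $P(\ffnH(\bx^t)\ge\fftaun)\le C'/(t-t^{\ast}_{\bx})^2$ for $t>t^{\ast}_{\bx}$ and $P(\ffnH(\bx^t)<\fftaun)\le C'/(t^{\ast}_{\bx}-t)^2$ for $t<t^{\ast}_{\bx}$, with $C'$ uniform. Combined with $\one_{\{t>0\}}=\one_{\{t>t^{\ast}_{\bx}\}}$ on $I^n_{\bx}$ (for $t>t^{\ast}_{\bx}$ compare $P$ to $1$, for $t<t^{\ast}_{\bx}$ compare $P$ to $0$), this gives $|P(\ffnH(\bx^t)<\fftaun)-\one_{\{t>0\}}|\le C'|t-t^{\ast}_{\bx}|^{-2}$ on $I^n_{\bx}$, hence $\int_{I^n_{\bx}}|P(\ffnH(\bx^t)<\fftaun)-\one_{\{t>0\}}|\,dt\le2\int_{t_n}^{\infty}C's^{-2}\,ds=2C'/t_n$. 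Since $\beta_\tau$ is compact, $\mc{H}(\beta_\tau)<\infty$, so the left side of \eqref{eq:shrink} is $\le 2C'\mc{H}(\beta_\tau)/t_n\to0$.

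\emph{Main obstacle.} The conceptual content is light; the real work is the uniformity in $\bx\in\beta_\tau$ (and arranging that ``$n$ large'' is independent of $\bx$): the uniform slope lower bound $c_0/2$, the uniform variance bound $C(n|\bH|^{1/2})^{-1}$, and above all $|t^{\ast}_{\bx}|=O(1)$ uniformly. These rest on compactness of $\beta_\tau$, the nonvanishing-gradient and $C^2$ parts of Assumption~\ref{assm:DA-hdr}, the mean and variance expansions of $\fftaun$ in Lemmas~\ref{lem:ftaun-bias-exp} and \ref{lem:ftaun-var-exp} together with the bias bounds around \eqref{eq:fftaun-ftau}, and the bandwidth rate in Assumption~\ref{assm:HA} that converts $\sqrt{n|\bH|^{1/2}}\lambda_{\max}(\bH)$ into $O(1)$.
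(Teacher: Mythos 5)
Your proof is correct and follows essentially the same strategy as the paper's: establish a uniform negative slope bound $\frac{d}{dt}\EE\ffnH(\bx^t)\le -\tfrac{l}{2}(n|\bH|^{1/2})^{-1/2}$ from $\inf_{U_a}\|\nabla f_0\|>0$ plus uniform gradient-bias control, use that to get $|\EE[\ffnH(\bx^t)-\fftaun]|\gtrsim (n|\bH|^{1/2})^{-1/2}|t-t^{\ast}_{\bx}|$, and then combine with the uniform variance bound $O((n|\bH|^{1/2})^{-1})$ via Chebyshev to obtain the tail decay $|t-t^{\ast}_{\bx}|^{-2}$.

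The difference is in how you finish, and it is a small improvement. The paper invokes the dominated convergence theorem in $t$ with dominating function $\max\{(t-t^{\ast}_{\bx})^{-2},1\}$, but $t^{\ast}_{\bx}\equiv t^{\ast}_{\bx,n}$ depends on $n$, so the dominating function is technically not fixed; the argument is morally fine (one should just shift by $t^{\ast}_{\bx}$) but it is a rough edge. You instead integrate the Chebyshev tail directly and get the explicit bound $\int_{I^n_{\bx}}\le 2C'/t_n$, uniform in $\bx$, so the double integral is $\le 2C'\mc{H}(\beta_\tau)/t_n\to 0$ by compactness of $\beta_\tau$ — no DCT needed. You also make explicit the auxiliary facts that the paper leaves implicit in this lemma's proof: that $|t^{\ast}_{\bx}|=O(\sqrt{n|\bH|^{1/2}}\lambda_{\max}(\bH))=O(1)$ (so $|t^{\ast}_{\bx}|<t_n$ eventually, which is what justifies replacing $\one_{\{t>0\}}$ by $\one_{\{t>t^{\ast}_{\bx}\}}$ on $I^n_{\bx}$ and guarantees the sign of $\EE[\ffnH(\bx^t)-\fftaun]$ needed for the one-sided Chebyshev). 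The paper only records the $|t^{\ast}_{\bx}|=O(\sqrt{n|\bH|^{1/2}}\tr(\bH))$ bound later, in the body of the Theorem~\ref{thm:hdr} proof. Your version is therefore a bit more self-contained and slightly cleaner, but it is not a genuinely different route.
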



\medskip
\noindent
To complete the proof of Theorem~\ref{thm:hdr}, by
\eqref{eq:LebtoHauss} and Lemma~\ref{lem:hdr-step5} it suffices to
show that there exists a sequence $t_n$ diverging to infinity slowly such
that
\begin{align*}
  \MoveEqLeft  \frac{f_{\tau,0}}{\sqrt{n|\bH|^{1/2}}} \int_{\beta_\tau}\int_{t^{\ast}_{\bx}-t_n}^{t^{\ast}_{\bx}+t_n}|P(\ffnH(\bx^t)<\fftaun)-\one_{\{t<0\}}|\,dtd\mc{H}(\bx)\\
& = 
  \HDR(\bH)
  + o\lb  (n|\bH|^{1/2})^{-1/2} +\tr(\bH)\rb.
\end{align*}
For $i=1,2,\ldots, n$, let $Z_{ni}(\bx)=K_{\bH}(\bx- \bX_i)$ and
$\bar{Y}_n=n^{-1}\sum_{i=1}^nY_{ni}$, where
\begin{align*}
  Y_{ni}&=Z_{ni}(\bx^t)-f_{\tau,0}-\left\{\int_{\beta_\tau}\inv{\|\nabla
          f_0\|}\,d\mc{H}\right\}^{-1}\left\{\int_{\beta_\tau}\frac{Z_{ni}(\bx)-f_0(\bx)}{\|\nabla
          f_0(\bx)\|}\,d\mc{H}(\bx)\right.\nonumber\\
        &\qquad\left.+\inv{f_{\tau,0}}\int_{\mc{L}_{\tau}}Z_{ni}(\bx)-f_0(\bx)\,d\bx\right\}.
\end{align*}
Then by \eqref{eq:step4-3} and \eqref{eq:fftaun-ftau}, we can write
$\ffnH(\bx^t)-\fftaun=\bar{Y}_n+R_n$, where
$R_n-\bb{E}(R_n)=o_p\lp\frac{1}{\sqrt{n|\bH|^{1/2}}}\rp$.
\begin{mylongform}
  \begin{longform}
    Because by  \eqref{eq:step4-3} we have
    \begin{align*}
      R_n=\lp \fftaun-f_{\tau,0}\rp
      \int_{\beta_{\tau}}\frac{1}{\|\nabla
      f_0\|}\,d\mc{H}-\int_{\beta_{\tau}}\frac{\ffnH-f_0}{\|\nabla f_0\|}\,d\mc{H}-\inv{f_{\tau,0}}\int_{\mc{L}_{\tau}}\lp\ffnH-f_0\rp\,d\lambda,
    \end{align*}
    and from Lemma~\ref{lem:ftaun-var-exp} we know
    $\Var\fftaun=o(n^{-1}|\bH|^{-1/2})$,
    $\Var(\int_{\beta_{\tau}}\frac{\ffnH-f_0}{\|\nabla
      f_0\|}\,d\mc{H})=o(n^{-1}|\bH|^{-1/2})$ and
    $\Var(\int_{\mc{L}_{\tau}}\lp\ffnH-f_0\rp\,d\lambda=o(n^{-1}|\bH|^{-1/2})$,
    so $\Var(R_n)=o(n^{-1}|\bH|^{-1/2})$. Then for any $\eta>0$,
    \begin{align*}
      P\lp|R_n-\bb{E}(R_n)|\sqrt{n|\bH|^{1/2}}>\eta\rp\le
      \Var(R_n)n|\bH|^{1/2}\eta^2\rightarrow 0,
    \end{align*}
    which suggests $R_n-\bb{E}(R_n)=o_p(\frac{1}{\sqrt{n|\bH|^{1/2}}})$.
  \end{longform}
\end{mylongform}
By Lemma~\ref{lem:ftaun-var-exp}, we know $\Var(\bar{Y}_n)$ is
$O(n^{-1}|\bH|^{-1/2})$ uniformly in $t$ and $\bx$.     Let $t_n$ diverge slowly such
that for fixed $\bx\in \beta_{\tau}$,
\begin{itemize}
\item
  $P\lp\frac{|R_n-\bb{E}(R_n)|}{\Var^{1/2}(\bar{Y_n})}>\inv{t_n^2}\rp\le
  \inv{t_n^2}$ uniformly for
  $t\in[t_{\bx}^{\ast}-t_n,t_{\bx}^{\ast}+t_n]$.
  \begin{mylongform}
    \begin{longform}
      Because we can let $t_n$ diverge slowly such that $R_n-\bb{E}(R_n)=o_p(\frac{t_n^2}{\sqrt{n|\bH|^{1/2}}})$.
    \end{longform}
  \end{mylongform}
\item $\bb{E}(\bar{Y}_n+R_n)=\lb\frac{t}{\sqrt{n|\bH|^{1/2}}}\|\nabla
  f_0(\bx)\|+D_1(\bx,\bH)-D_2(\bx,\bH)\rb\lb 1+o(t_n^{-2})\rb$,
  uniformly for
  $t\in[t_{\bx}^{\ast}-t_n,t_{\bx}^{\ast}+t_n]$ and $\bx\in
  \beta_{\tau}$,
  by Assumption~\ref{assm:DA-hdr} part \ref{assm:DA:item3}.
  \begin{mylongform}
    \begin{longform}
      \begin{align*}
        \MoveEqLeft \bb{E}(\bar{Y}_n+R_n)\\
     &=
       f_0(\bx^t)+\inv{2}\mu_2(K)\tr(\bH\nabla^2f_0(\bx^t))-f_{\tau,0}\\
     &\quad-\left\{\int_{\beta_\tau}\inv{\|\nabla
       f_0(\bx)\|}\,d\mc{H}(\bx)\right\}^{-1}\left\{\int_{\beta_\tau}\frac{\mu_2(K)\tr\{\bs{H}\nabla^2
       f_0(\bx)\}}{2\|\nabla f_0(\bx)\|}\,d\mc{H}(\bx)\right.\\
     &\qquad\left.+\inv{f_{\tau,0}}\int_{\mc{L}_{\tau}}\frac{1}{2}\mu_2(K)\tr\{\bs{H}\nabla^2
       f_0(\bx)\}\,d\bx\right\}+o\lb\tr(\bH)\rb ,
      \end{align*}
      uniformly in $t$ and $\bx$.  By Taylor Expansion we know
      \begin{align*}
        f_0(\bx^t)&=f_0(\bx) +
                    \frac{t}{\sqrt{n|\bH|^{1/2}}}\| \grad  f_0(\bx)\|+O\lp\frac{t_n^2}{n|\bH|^{1/2}}\rp,
      \end{align*}
      uniformly in $t$ and $\bx$ because by Assumption~\ref{assm:DA-hdr} we
      have bounded third derivative around $\beta_{\tau}$.
      And
      $\tr(\bH\nabla^2f_0(\bx^t))=\tr(\bH\nabla^2f_0(\bx))+o(\tr(\bH))$
      uniformly in $\bx$ and $t$. Then
      \begin{align*}
        \MoveEqLeft    \bb{E}(\bar{Y}_n+R_n)\\
     &=\frac{t}{\sqrt{n|\bH|^{1/2}}}\|f_0(\bx)\|+O\lp\frac{t_n^2}{n|\bH|^{1/2}}\rp\\
     &\qquad+D_1(\bx,\bH)-D_2(\bx,\bH)+o(\tr(\bH)),\\
     &=\lb \frac{t}{\sqrt{n|\bH|^{1/2}}}\|f_0(\bx)\|+D_1(\bx,\bH)-D_2(\bx,\bH)\rb(1+o(t_n^{-2}))
      \end{align*}
      uniformly in $t$ and $\bx$.
    \end{longform}
  \end{mylongform}
\item $n|\bH|^{1/2}\Var \bar{Y}_n=R(K)f_{\tau,0}+o(t_n^{-2})$
  uniformly for $t\in[t_{\bx}^{\ast}-t_n,t_{\bx}^{\ast}+t_n]$ and
  $\bx \in \beta_{\tau}$.
  \begin{mylongform}
    \begin{longform}
      Since $\Var
      \ffnH(\bx^t)$ is
      $\frac{f_0(\bx^t)R(K)}{n|\bH|^{1/2}}+o(n^{-1}|\bH|^{-1/2})$ uniformly
      in $t$ and $\bx$, and $\Var \fftaun=o(n^{-1}|\bH|^{-1/2})$, $\Var(\bar{Y}_n)=
      \frac{f_0(\bx^t)R(K)}{n|\bH|^{1/2}}+o(n^{-1}|\bH|^{-1/2})$ uniformly
      for $t$ and $\bx$. Besides, we
      have
      \begin{align*}
        f_0(\bx^t)=f_0(\bx)+\nabla f_0(\bx+\frac{st}{\sqrt{n|\bH|^{1/2}}})\frac{t}{\sqrt{n|\bH|^{1/2}}}=f_{\tau,0}+O(\frac{t_n}{\sqrt{n|\bH|^{1/2}}}),
      \end{align*}
      uniformly. So $n|\bH|^{1/2}\Var \bar{Y}_n=R(K)f_{\tau,0}+o(t_n^{-2})$.
    \end{longform}
  \end{mylongform}
\end{itemize}
Then
\begin{align*}
  \MoveEqLeft P(\ffnH(\bx^t)<\fftaun)-\Phi\lp A_{\bx}t+C_{\bx}(\bH)\rp\\
&=P(\bar{Y}_n+R_n-\bb{E}(\bar{Y}_n+R_n)<-\bb{E}(\bar{Y}_n+R_n))-\Phi\lp A_{\bx}t+C_{\bx}(\bH)\rp\\
&\le
  P\lp\frac{|R_n-\bb{E}(R_n)|}{\Var^{1/2}(\bar{Y}_n)}>\inv{t^2_n}\rp+P\lp\frac{\bar{Y}_n-\bb{E}(\bar{Y_n})}{\Var^{1/2}(\bar{Y}_n)}\le
  \frac{-\bb{E}(\bar{Y}_n+R_n)}{\Var^{1/2}(\bar{Y}_n)}+\inv{t_n^2}\rp\\
&\qquad -\Phi\lp A_{\bx}t+C_{\bx}(\bH)\rp\\
&=O\lp\inv{t_n^2}\rp+P\lp\frac{\bar{Y}_n-\bb{E}(\bar{Y_n})}{\Var^{1/2}(\bar{Y}_n)}\le
  \frac{-\bb{E}(\bar{Y}_n+R_n)}{\Var^{1/2}(\bar{Y}_n)}+\inv{t_n^2}\rp-\Phi\lp
  A_{\bx}t+C_{\bx}(\bH)\rp.
\end{align*}
Applying the Berry-Esseen theorem \citep[]{Ferguson:1996bn} to the
last two terms %
on the last line
yields
\begin{align*}
  \MoveEqLeft  \left|P\lp\frac{\bar{Y}_n-\bb{E}(\bar{Y_n})}{\Var^{1/2}(\bar{Y}_n)}\le
  \frac{-\bb{E}(\bar{Y}_n+R_n)}{\Var^{1/2}(\bar{Y}_n)}+\inv{t_n^2}\rp-\Phi\lp
  \frac{-\bb{E}(\bar{Y}_n+R_n)}{\Var^{1/2}(\bar{Y}_n)}+\inv{t_n^2}\rp\right|
  \\&  \le \frac{C\bb{E}|Y_{ni}|^3}{\Var^{3/2}(Y_{ni})\sqrt{n}}.
\end{align*}
Now since
$\Var(\bar{Y}_n)=R(K)f_{\tau,0}/(n|\bH|^{1/2})+o(n^{-1}|\bH|^{-1/2})$
uniformly,
$\Var(Y_{ni})=R(K)f_{\tau,0}/(|\bH|^{1/2})+o(|\bH|^{-1/2})$. And it
can be shown that $\bb{E}|Y_{ni}|^3=O(|\bH|^{-1})$,
so we further have
\begin{align*}
  \MoveEqLeft  \left|P\lp\frac{\bar{Y}_n-\bb{E}(\bar{Y_n})}{\Var^{1/2}(\bar{Y}_n)}\le
  \frac{-\bb{E}(\bar{Y}_n+R_n)}{\Var^{1/2}(\bar{Y}_n)}+\inv{t_n^2}\rp-\Phi\lp
  \frac{-\bb{E}(\bar{Y}_n+R_n)}{\Var^{1/2}(\bar{Y}_n)}+\inv{t_n^2}\rp\right|
  \\&=  O\lp \inv{\sqrt{n|\bH|^{1/2}}}\rp,
\end{align*}
and then
\begin{align*}
  \MoveEqLeft P(\ffnH(\bx^t)<\fftaun)-\Phi\lp
  A_{\bx}t+C_{\bx}(\bH)\rp\\
&\le O\lp\inv{t_n^2}+\inv{\sqrt{n|\bH|^{1/2}}}\rp+\Phi\lp
  \frac{-\bb{E}(\bar{Y}_n+R_n)}{\Var^{1/2}(\bar{Y}_n)}\rp-\Phi\lp
  A_{\bx}t+C_{\bx}(\bH)\rp,
\end{align*}
uniformly in $t$ and $\bx$. A similar argument shows a lower bound of the same order.
\begin{mylongform}
  \begin{longform}
    \begin{align*}
      \MoveEqLeft P(\ffnH(\bx^t)<\fftaun)-\Phi\lp A_{\bx}t+C_{\bx}(\bH)\rp\\
   &=P(\bar{Y}_n+R_n-\bb{E}(\bar{Y}_n+R_n)<-\bb{E}(\bar{Y}_n+R_n))-\Phi\lp A_{\bx}t+C_{\bx}(\bH)\rp\\
   &\ge
     -P\lp\frac{|R_n-\bb{E}(R_n)|}{\Var^{1/2}(\bar{Y}_n)}>\inv{t^2_n}\rp+P\lp\frac{\bar{Y}_n-\bb{E}(\bar{Y_n})}{\Var^{1/2}(\bar{Y}_n)}\le
     \frac{-\bb{E}(\bar{Y}_n+R_n)}{\Var^{1/2}(\bar{Y}_n)}-\inv{t_n^2}\rp\\
   &\qquad -\Phi\lp A_{\bx}t+C_{\bx}(\bH)\rp\\
   &=O\lp\inv{t_n^2}\rp+P\lp\frac{\bar{Y}_n-\bb{E}(\bar{Y_n})}{\Var^{1/2}(\bar{Y}_n)}\le
     \frac{-\bb{E}(\bar{Y}_n+R_n)}{\Var^{1/2}(\bar{Y}_n)}-\inv{t_n^2}\rp-\Phi\lp
     A_{\bx}t+C_{\bx}(\bH)\rp\\
   &=O\lp\inv{t_n^2}+\frac{1}{\sqrt{n|\bH|^{1/2}}}\rp +\Phi\lp
     \frac{-\bb{E}(\bar{Y}_n+R_n)}{\Var^{1/2}(\bar{Y}_n)}\rp-\Phi\lp
     A_{\bx}t+C_{\bx}(\bH)\rp\\
    \end{align*}
  \end{longform}
\end{mylongform}
Now we look at the integrated error
\begin{align*}
  \inv{\sqrt{n|\bH|^{1/2}}}\int_{\beta_{\tau}}\int_{t_{\bx}^{\ast}-t_n}^{t_{\bx}^{\ast}+t_n}\left|\Phi\lp
  \frac{-\bb{E}(\bar{Y}_n+R_n)}{\Var^{1/2}(\bar{Y}_n)}\rp-\Phi\lp
  A_{\bx}t+C_{\bx}(\bH)\rp \right|\,dt\,d\mc{H}(\bx).
\end{align*}
\begin{mylongform}
  \begin{longform}
    Since
    \begin{align*}
      &  \Phi\lp
        \frac{-\bb{E}(\bar{Y}_n+R_n)}{\Var^{1/2}(\bar{Y}_n)}\rp\\
      &\quad=\Phi\lp\frac{\lb t\|\nabla
        f_0(\bx)\|+\sqrt{n|\bH|^{1/2}}D_1(\bx,\bH)-\sqrt{n|\bH|^{1/2}}D_2(\bx,\bH)\rb\lb1+o(t_n^{-2})\rb}{\sqrt{R(K)f_{\tau,0}+o(t_n^{-2})}}\rp,
    \end{align*}
    uniformly in $\bx$ and $t$, and
    \begin{align*}
      \MoveEqLeft  \frac{\lb t\|\nabla
      f_0(\bx)\|+\sqrt{n|\bH|^{1/2}}D_1(\bx,\bH)-\sqrt{n|\bH|^{1/2}}D_2(\bx,\bH)\rb\lb1+o(t_n^{-2})\rb}{\sqrt{R(K)f_{\tau,0}+o(t_n^{-2})}}\\
   &=\frac{\lb t\|\nabla
     f_0(\bx)\|+\sqrt{n|\bH|^{1/2}}D_1(\bx,\bH)-\sqrt{n|\bH|^{1/2}}D_2(\bx,\bH)\rb\lb1+o(t_n^{-2})\rb}{\sqrt{R(K)f_{\tau,0}}+o(t_n^{-2})}\\
   &=\frac{\lb t\|\nabla
     f_0(\bx)\|+\sqrt{n|\bH|^{1/2}}D_1(\bx,\bH)-\sqrt{n|\bH|^{1/2}}D_2(\bx,\bH)\rb}{\sqrt{R(K)f_{\tau,0}}}\\
   &\qquad+\frac{\lb t\|\nabla
     f_0(\bx)\|+\sqrt{n|\bH|^{1/2}}D_1(\bx,\bH)-\sqrt{n|\bH|^{1/2}}D_2(\bx,\bH)\rb}{\sqrt{R(K)f_{\tau,0}}}\cdot o(t_n^{-2}),
    \end{align*}
  \end{longform}
\end{mylongform}
We can see that
\begin{align*}
  \MoveEqLeft  \left|\Phi\lp
  \frac{-\bb{E}(\bar{Y}_n+R_n)}{\Var^{1/2}(\bar{Y}_n)}\rp-\Phi\lp
  A_{\bx}t+C_{\bx}(\bH)\rp \right|\\
& \le \lb(t_n+|t_{\bx}^{\ast}|)\|\nabla
  f_0\|_{\infty}+\sqrt{n|\bH|^{1/2}}|D_1(\bx,\bH)|+\sqrt{n|\bH|^{1/2}}|D_2(\bx,\bH)|\rb o(t_n^{-2}).
\end{align*}
uniformly in $\bx$. From \eqref{eq:tx-exp} we know $|t_{\bx}^{\ast}|$ is uniformly
$O(\sqrt{n|\bH|^{1/2}}\tr(\bH))$, then
\begin{align*}
  \inv{\sqrt{n|\bH|^{1/2}}}\int_{\beta_{\tau}}\int_{t_{\bx}^{\ast}-t_n}^{t_{\bx}^{\ast}+t_n}(t_n+|t_{\bx}^{\ast}|)\|\nabla
  f_0\|_{\infty}o(t_n^{-2})\,dt\,d\bx=o\lp \inv{\sqrt{n|\bH|^{1/2}}}+\tr(\bH)\rp,
\end{align*}
and similarly
\begin{align*}
  \MoveEqLeft    \inv{\sqrt{n|\bH|^{1/2}}}\int_{\beta_{\tau}}\int_{t_{\bx}^{\ast}-t_n}^{t_{\bx}^{\ast}+t_n}
  \lb\sqrt{n|\bH|^{1/2}}\lp|D_1(\bx,\bH)|+|D_2(\bx,\bH)|\rp\rb
  o(t_n^{-2})\,dt\,d\bx\\
&=o\lp \inv{\sqrt{n|\bH|^{1/2}}}+\tr(\bH)\rp.
\end{align*}
So we have
\begin{align*}
  \MoveEqLeft  \frac{f_{\tau,0}}{\sqrt{n|\bH|^{1/2}}} \int_{\beta_\tau}\int_{t^{\ast}_{\bx}-t_n}^{t^{\ast}_{\bx}+t_n}\left|P\lp\ffnH(\bx^t)<\fftaun\rp-\one_{\{t<0\}}\right|\,dt\,d\mc{H}(\bx)\\
& = \frac{f_{\tau,0}}{\sqrt{n|\bH|^{1/2}}}
  \int_{\beta_\tau}\int_{t^{\ast}_{\bx}-t_n}^{t^{\ast}_{\bx}+t_n}\left|\Phi\lp
  A_{\bx}t+C_{\bx}(\bH)\rp-\one_{\{t<0\}}\right|\,dt\,d\mc{H}(\bx)\\
&\qquad+o\lp \inv{\sqrt{n|\bH|^{1/2}}}+\tr(\bH)\rp.
\end{align*}
It remains to see from Lemma~\ref{lem:normal-integral} that
\begin{align*}
  \MoveEqLeft  \frac{f_{\tau,0}}{\sqrt{n|\bH|^{1/2}}}
  \int_{\beta_\tau}\int_{-\infty}^{\infty}\left|\Phi\lp
  A_{\bx}t+C_{\bx}(\bH)\rp-\one_{\{t<0\}}\right|\,dt\,d\mc{H}(\bx)\\
&=  \frac{f_{\tau,0}}{\sqrt{n|\bH|^{1/2}}}\int_{\beta_{\tau}}
  \frac{2\phi(C_{\bx}(\bH)) + 2\Phi(C_{\bx}(\bH))C_{\bx}(\bH) - C_{\bx}(\bH)}{A_{\bx}} \,
  d\mc{H}(\bx).
\end{align*}

\subsection{Proof of Theorem~\ref{thm:levelset}}
We also provide a brief proof for Theorem~\ref{thm:levelset}, which
is a simpler and shares the same idea as that of
Theorem~\ref{thm:hdr}. First, we have
\begin{align}
  \MoveEqLeft  \bb{E}\ls \mu_{f_0}\{\mc{L}(c)\Delta\hat{\mc{L}}_{\bH}(c)\}\rs\nonumber\\&=\bb{E}\int_{\RR^d}f_0(\bx)\left|\one_{\{\ffnH(\bx)\ge
                                                                                          c\}}-\one_{\{f_0(\bx)\ge c\}}\right|\,d\bx\nonumber\\
&=\int_{\mc{L}(c)^c}f_0(\bx)P\lp\ffnH(\bx)\ge
  c\rp\,d\bx +\int_{\mc{L}(c)}f_0(\bx)P\lp\ffnH(\bx)< c\rp\,d\bx.
\end{align}
Like Lemma~\ref{lem:hdr-step3}, we can
shrink the region of interest.
We  show that for each $\delta>0$ sufficiently small, we
have
\begin{align*}
  \int_{\mc{L}_{\delta}(c)^c}f_0(\bx)P\lp\ffnH(\bx)\ge c\rp\,d\bx
  +\int_{\mc{L}(c)}f_0(\bx)P\lp\ffnH(\bx)< c\rp\,d\bx=o(n^{-1}),
\end{align*}
as $n\rightarrow \infty$.

\noindent Observe that under Assumption \ref{assm:DA-ls} if $\delta>0$ is
sufficiently small, then there exists $\epsilon>0$ s.t $f_0(\bx)\le
c-\epsilon$ for $\bx\in \mc{L}_{\delta}(c)^c$ and
$f_0(\bx)\ge c+\epsilon$ for $\bx\in
\mc{L}_{-\delta}(c)$.  By reducing $\delta>0$ if necessary,
for $\bx\in \mc{L}_{\delta}(c)^c$,
\begin{align*}
  P\lp\ffnH(\bx)\ge c\rp&=  P\lp\ffnH(\bx)-c\ge 0\rp\\
                        &\le
                          P\lp\ffnH(\bx)-c+c-f_0(\bx)\ge
                          \epsilon\rp\\
                        &\le P\lp\|\ffnH-f_0\|_{\infty}\ge \epsilon\rp.
\end{align*}
Similarly we can show the same bound for $\bx\in
\mc{L}_{-\delta}(c)$. Then
\begin{align*}
  \MoveEqLeft  \int_{\mc{L}_{\delta}(c)^c}f_0(\bx)P\lp\ffnH(\bx)\ge c\rp\,d\bx
  +\int_{\mc{L}_{-\delta}(c)}f_0(\bx)P\lp\ffnH(\bx)<
  c\rp\,d\bx\\
&\le P\lp\|\ffnH-f_0\|_{\infty}\ge \epsilon\rp\\
&\le P\lp\|\ffnH-\bb{E}\ffnH\|_{\infty}\ge
  \frac{\epsilon}{2}\rp+P\lp\|\bb{E}\ffnH-f_0\|_{\infty}\ge \frac{\epsilon}{2}\rp,
\end{align*}
where $P(\|\bb{E}\ffnH-f_0\|_{\infty}\ge \frac{\epsilon}{2})=0$ for
$n$ large enough. So with the same argument in proof
Lemma~\ref{lem:hdr-step3}, the above quantity is $o(n^{-1})$.
Further,  we have that for a sequence $\delta_n$ converging to 0 such
that $\lambda_{\max}(\bH)=o(\delta_n)$,
\begin{align}
  \int_{\mc{L}_{\delta_n}(c)^c}f_0(\bx)P\lp\ffnH(\bx)\ge c\rp\,d\bx
  +\int_{\mc{L}_{-\delta_n}(c)}f_0(\bx)P\lp\ffnH(\bx)< c\rp\,d\bx=o(n^{-1}) .
\end{align}
and we also prove this by  showing that $E(\delta,\delta_n)$ which is defined as
\begin{align*}
  &\int_{\mc{L}_{\delta_n}(c)^c}f_0(\bx)P\lp\ffnH(\bx)\ge c\rp\,d\bx
    +\int_{\mc{L}_{-\delta_n}(c)}f_0(\bx)P\lp\ffnH(\bx)<
    c\rp\,d\bx\\
  &-\lb\int_{\mc{L}_{\delta}(c)^c}f_0(\bx)P\lp\ffnH(\bx)\ge c\rp\,d\bx
    +\int_{\mc{L}_{-\delta}(c)}f_0(\bx)P\lp\ffnH(\bx)<
    c\rp\,d\bx\rb\\
  &=\int_{\mc{L}_{\delta_n}(c)^c\backslash
    \mc{L}_{\delta_n}(c)^c}f_0(\bx)P\lp\ffnH(\bx)\ge
    c\rp\,d\bx\\
  &\quad +\int_{\mc{L}_{-\delta_n}(c)\backslash\mc{L}_{-\delta}(c)}f_0(\bx)P\lp\ffnH(\bx)< c\rp\,d\bx
\end{align*}
is $o(n^{-1})$ as $n\rightarrow \infty$. Note that there exits a constant $c_2$ small s.t if we take
$\epsilon_n=c_2\delta_n$, then we have $|f_0(\bx)-c|\ge
\epsilon_n$ when $\bx\in\mc{L}_{\delta_n}(c)^c\backslash
\mc{L}_{\delta_n}(c)^c\cup\mc{L}_{-\delta_n}(c)\backslash\mc{L}_{-\delta}(c)$.
Then for $\bx\in \mc{L}_{\delta_n}(c)^c\backslash
\mc{L}_{\delta_n}(c)^c$,
\begin{align*}
  P\lp\ffnH(\bx)\ge c\rp&\le
                          P\lp\ffnH(\bx)-c+c-f_0(\bx)\ge \epsilon_n\rp\\
                        &\le
                          P\lp\|\ffnH-f_0\|_{\infty}\ge \epsilon_n\rp.
\end{align*}
We can derive the same bound for $\bx \in
\mc{L}_{-\delta_n}(c)\backslash\mc{L}_{-\delta}(c)$. Then
\begin{align*}
  E(\delta,\delta_n)&\le P\lp\|\ffnH(\bx)-f_0(\bx)\|_{\infty}\ge
                      \epsilon_n\rp\\
                    &\le P\lp\|\ffnH-\bb{E}\ffnH\|_{\infty}\ge
                      \frac{\epsilon_n}{2}\rp+P\lp\|\bb{E}\ffnH-f_0\|_{\infty}\ge\frac{\epsilon_n}{2}\rp
\end{align*}
is $o(n^{-1})$ when $n$ is large enough.

\noindent Now the risk function can be expressed as
\begin{align*}
  \bb{E}\mu_{f_0}\lb\mc{L}(c)\Delta\hat{\mc{L}}(c)\rb&=\int_{\mc{L}(c)^c\backslash\mc{L}_{\delta_n}(c)^c}f_0(\bx)P\lp\ffnH(\bx)\ge c\rp\,d\bx\nonumber\\
                                                     &\quad+\int_{\mc{L}(c)\backslash\mc{L}_{-\delta_n}(c)}f_0(\bx)P\lp\ffnH(\bx)<
                                                       c\rp\,d\bx+o(n^{-1})\nonumber\\
                                                     &=\int_{\beta(c)^{\delta_n}}f_0(\bx)\left|P\lp\ffnH(\bx)<c\rp-\one_{\{f_0(\bx)< c\}}\right|\,d\bx+o(n^{-1}).
\end{align*}
Then according to Lemma~\ref{lem:COV-approx}, when $\delta_n$ is small
enough
\begin{align*}
  &
    \int_{\beta(c)^{\delta_n}}f_0(\bx)\left|P\lp\ffnH(\bx)< c\rp-\one_{\{f_0(\bx)<c\}}\right|\,d\bx\\
  &=\int_{\beta(c)}\int_{-\delta_n}^{\delta_n}f_0(\bx+tu_x)\left|P\lp\ffnH(\bx+tu_x)<c\rp-\one_{\{f_0(\bx+tu_x)<c\}}\right|\,dtd\mc{H}(\bx)\\
  &\qquad+O(\delta_n^2),
\end{align*}
where $u_{\bx}$ is the unit normal outer
vector at $\bx\in \beta(c)$. And by simple transformation,
\begin{align*}
  &\int_{\beta(c)}\int_{-\delta_n}^{\delta_n}f_0(\bx+tu_x)\left|P\lp\ffnH(\bx+tu_x)<c\rp-\one_{\{f_0(\bx+tu_x)<c\}}\right|\,dtd\mc{H}(\bx)\\
  &=\inv{\sqrt{n|\bH|^{1/2}}}\int_{\beta(c)}\int_{-\sqrt{n|\bH|^{1/2}}\delta_n}^{\sqrt{n|\bH|^{1/2}}\delta_n}f_0\lp\bx^t\rp\left|P\lp\ffnH(\bx^t)<c\rp-\one_{\{t<0\}}\right|\,dtd\mc{H}(\bx)\\
  &=\frac{c}{\sqrt{n|\bH|^{1/2}}}\int_{\beta(c)}\int_{-\sqrt{n|\bH|^{1/2}}\delta_n}^{\sqrt{n|\bH|^{1/2}}\delta_n}\left|P\lp\ffnH(\bx^t)<c\rp-\one_{\{t<0\}}\right|\,dtd\mc{H}(\bx)+O(\delta_n^2).
\end{align*}
To further shrink the intervals of interest, we also argue that  when $n$ is large
enough, $\bb{E}\{\ffnH(\bx^t\}$ is a
strictly monotone function of $t\in
[-\sqrt{n|\bH|^{1/2}}\delta_n,\sqrt{n|\bH|^{1/2}}\delta_n]$ with a
unique zero $t_x^{\ast}$. Now  we claim for a sequence $t_n$ diverging to infinity,
\begin{align*}
  \int_{\beta(c)}\int_{I_{\bx}^n}\left|P\lp\ffnH(\bx^t)<c\rp-\one_{\{t<0\}}\right|\,dtd\mc{H}(\bx)\to 0
\end{align*}
as $n\rightarrow \infty$, where
$I_x^{n}=[-\sqrt{n|\bH|^{1/2}}\delta_n,\sqrt{n|\bH|^{1/2}}\delta_n]\backslash
[t_x^{\ast}-t_n,t_x^{\ast}+t_n]$. For detail of proof, please refer to
the proof of  Theorem~\ref{thm:hdr}.

\noindent Now by previous steps, we know
\begin{align*}
  \MoveEqLeft \bb{E}\mu_{f_0}\{\mc{L}(c)\Delta\hat{\mc{L}}(c)\}\\
&=\frac{c}{\sqrt{n|\bH|^{1/2}}}\int_{\beta(c)}\int_{t_x^{\ast}-
  t_n}^{t_x^{\ast}+t_n}\left|P\lp\ffnH\lp\bx^t\rp<c\rp-\one_{\{t<0\}}\right|\,dtd\mc{H}(\bx)\\&\qquad+o\lp\inv{\sqrt{n|\bH|^{1/2}}}\rp,
\end{align*}
To complete the proof, it suffices to show the dominating term above is
equal to $\LS(\bH)+o(1/\sqrt{n|\bH|^{1/2}})$.
Let $Z_{ni}(\bx)=K_{\bH}(\bx-\bX_i)$ and
$Y_{ni}=Z_{ni}(\bx^t)-c$. Then $\ffnH(\bx^t)-c=\bar{Y}_n$. Now let $t_n$ diverge slowly such
that
\begin{align}
  \label{eq:thm-ls-eybar}
  \bb{E}(\bar{Y}_n)=\lb\frac{t}{\sqrt{n|\bH|^{1/2}}}\|\nabla f_0(\bx)\|+D_1(\bx,\bH)\rb\{1+o(t_n^{-2})\},
\end{align}
by Assumption~\ref{assm:DA-hdr} part \ref{assm:DA:item3}, and
\begin{align}
  \label{eq:thm-ls-varybar}
  n|\bH|^{1/2}\Var\bar{Y}_n= R(K)c+o(t_n^{-2}),
\end{align}
uniformly for $\in[t_{\bx}^{\ast}-t_n,t_{\bx}^{\ast}+t_n]$ and  $\bx\in\beta(c)$.
Then
\begin{align*}
  \MoveEqLeft    P\lp\ffnH\lp\bx^t\rp<c\rp -\Phi(A_{\bx}t+B_{\bx}(\bH))\\
&=P\lp\frac{\bar{Y}_n-\bb{E}\bar{Y}_n}{\Var^{1/2}\bar{Y}_n}\le
  \frac{-\bb{E}\bar{Y}_n}{\Var^{1/2}\bar{Y}_n}\rp-\Phi(A_{\bx}t+B_{\bx}(\bH)),
\end{align*}
applying the Berry-Esseen theorem \citep[Page 31]{Ferguson:1996bn}
to the first term above yields
\begin{align*}
  \MoveEqLeft \left|P\lp\frac{\bar{Y}_n-\bb{E}\bar{Y}_n}{\Var^{1/2}\bar{Y}_n}\le
  \frac{-\bb{E}\bar{Y}_n}{\Var^{1/2}\bar{Y}_n}\rp-\Phi\lp \frac{-\bb{E}\bar{Y}_n}{\Var^{1/2}\bar{Y}_n}\rp\right|\\&\le\frac{C\EE|Y_{ni}|^3}{\Var^{3/2}(Y_{ni})\sqrt{n}}=O\lp\inv{\sqrt{n|\bH|^{1/2}}}\rp,
\end{align*}
and
\begin{align*}
  \MoveEqLeft     P\lp\frac{\bar{Y}_n-\bb{E}\bar{Y}_n}{\Var^{1/2}\bar{Y}_n}\le
  \frac{-\bb{E}\bar{Y}_n}{\Var^{1/2}\bar{Y}_n}\rp-\Phi(A_{\bx}t+B_{\bx}(\bH))\\
&\le \Phi\lp \frac{-\bb{E}\bar{Y}_n}{\Var^{1/2}\bar{Y}_n}\rp- \Phi(A_{\bx}t+B_{\bx}(\bH))+O\lp\inv{\sqrt{n|\bH|^{1/2}}}\rp,
\end{align*}
uniformly for $\in[t_{\bx}^{\ast}-t_n,t_{\bx}^{\ast}+t_n]$ and
$\bx\in\beta(c)$. A similar argument shows a lower bound of the same
order. Next, with a similar argument as we had in the last step of
proof for Theorem~\ref{thm:hdr}, we can show            the integrated error
\begin{align*}
  \MoveEqLeft  \inv{\sqrt{n|\bH|^{1/2}}}\int_{\beta(c)}\int_{t_{\bx}^{\ast}-t_n}^{t_{\bx}^{\ast}+t_n}
  \left|\Phi\lp\frac{-\EE{\bar{Y}_n}}{\Var^{1/2}\bar{Y}_n}\rp-\Phi(A_{\bx}+B_{\bx}(\bH))\right|\,dtd\mc{H}(\bx)\\&=o\lp\inv{\sqrt{n|\bH|^{1/2}}}+\tr(\bH)\rp.
\end{align*}
So we have
\begin{align*}
  \MoveEqLeft  \frac{c}{\sqrt{n|\bH|^{1/2}}}\int_{\beta(c)}\int_{t_x^{\ast}-
  t_n}^{t_x^{\ast}+t_n}\left|P\lp\ffnH\lp\bx^t\rp<c\rp-\one_{\{t<0\}}\right|\,dtd\mc{H}(\bx)\\
&=\frac{c}{\sqrt{n|\bH|^{1/2}}}\int_{\beta(c)}\int_{t_x^{\ast}-
  t_n}^{t_x^{\ast}+t_n}\left|\Phi(A_{\bx}+B_{\bx}(\bH))-\one_{\{t<0\}}\right|\,dtd\mc{H}(\bx)\\&\qquad+o\lp\inv{\sqrt{n|\bH|^{1/2}}}+\tr(\bH)\rp.
\end{align*}
By Lemma~\ref{lem:normal-integral},
\begin{align*}
  \frac{c}{\sqrt{n|\bH|^{1/2}}}\int_{\beta(c)}\int_{-\infty}^{\infty}\left|\Phi(A_{\bx}+B_{\bx}(\bH))-\one_{\{t<0\}}\right|\,dtd\mc{H}(\bx)=\LS(\bH).
\end{align*}
This completes the proof.

\subsection{Proof of Corollary~\ref{cor:ls-bandwidth-selector}}
Let $\bH=h^2\bs{I}$.  If $h^2$ is of order $n^{-2 / (d+4)}$ then by Assumption~\ref{assm:DA3}, 
\begin{align*}
  \ffnH(\bx)&=f_0(\bx)+\frac{1}{2}\tr\{\bH\grad^2f_0(\bx)\}+O\lp
              n^{-4/(d+4)}\rp,\\
  \Var\ffnH(\bx)&=n^{-1}|\bH|^{-1/2}R(K)f_0(\bx)+O\lp n^{-6/(d+4)}\rp,
\end{align*}
uniformly in $\bx$. From the proof of Theorem~\ref{thm:levelset}, if
we pick $t_n=\sqrt{\log n}$, and $\delta_n=\sqrt{\log
  n/(n|\bH|^{1/2})}$, we can further quantify the error in equations
\eqref{eq:thm-ls-eybar} and \eqref{eq:thm-ls-varybar} as
\begin{align*}
  & \EE(\bar{Y}_n)=\lb \frac{t}{\sqrt{n|\bH|^{1/2}}}\|\grad
    f_0(\bx)\|+D_1(\bx,\bH)\rb\{1+O(n^{-2/(d+4)}\sqrt{\log n})\},\\
  &n|\bH|^{1/2}\Var\bar{Y}_n=R(K)c+O(n^{-2/(d+4)}\sqrt{\log n}),
\end{align*}
and further
\begin{align*}
  \EE[\mu_{f_0}\{\mc{L}(c)\Delta \widehat{\mc{L}}_{\bH}(c)\}]=\LS(h)+O\lp
  n^{-4/(d+4)} (\log n)^{3/2}  \rp.
\end{align*}
With a similar argument as Corollary~\ref{cor:hdr-oracle-bandwidth}, we
see $h_{\text{opt}}/h_0=1+O(n^{-2/(d+4)}(\log n)^{3/2})$.

Now we study  $\hat{h}_{\text{opt}}/h_{\text{opt}}$. Let
$g_{n,\bH_0}=\widehat{f}_{n,\bH_0}-f_0$,
$g_{n,\bH_1}=\widehat{f}_{n,\bH_1}-f_0$,
$g_{n,\bH_2}=\widehat{f}_{n,\bH_0}-f_0$.  Let
\begin{equation*}
  m(\bx):=\frac{\phi(sF_{\bx})+2\Phi(sF_{\bx})sF_{\bx}-sF_{\bx}}{-A_{\bx}},
\end{equation*}
and with slight abuse of notation, we let  $\widehat{m}(\bx)$ be
defined similarly  where
we substitute
$f_0$  with $\widehat{f}_{n,\bH_0}$, $\nabla f_0$  with
$\nabla \widehat{f}_{n,\bH_1}$ and $\nabla^2 f_0$  with $\nabla^2\widehat{f}_{n,\bH_0}$.
We look at the difference
\begin{align}
  \label{eq:cor-ls-bandwidth-diff}
  \begin{split}
    & \left|\int_{\beta(c)}m(\bx)\,d\mc{H}(\bx)-\int_{\widehat{\beta}_{n,\bH_1}(c)}\widehat{m}(\bx)\,d\mc{H}(\bx)\right|\\
    &\quad\le
    \left|\int_{\beta(c)}m(\bx)\,d\mc{H}(\bx)-\int_{\beta(c)}\widehat{m}(\bx)\,d\mc{H}(\bx)\right|
    +\left|\int_{\beta(c)}\widehat{m}(\bx)\,d\mc{H}(\bx)-\int_{\widehat{\beta}_{n,\bH_1}(c)}\widehat{m}(\bx)\,d\mc{H}(\bx)\right|,
  \end{split}
\end{align}
where
$\widehat{\beta}_{n,\bH_1}(c):=\widehat{f}_{n,\bH_1}^{-1}(c)$. By
Lemma~\ref{lem:Hauss-diff}, when  $\bH_1 \to 0$ and $n^{-1} | \bH_1|^{-1/2} (\bH_1^{-1})^{\otimes
  2} = O(1)$ as $n \to \infty$, the second term  on the last line
above is $ O_p (
\sup_{\bx \in \beta(c)} E [ | g_{n,\bH_1} (\bx) | +
\| \hess g_{n,\bH_1} (\bx) \| |g_{n,\bH_1}(\bx) | + \| \nabla
g_{n,\bH1}(\bx) \|  ] )$. For first term
on the last line above, by Jensen's inequality we know
$(\int_{\beta(c)}m(\bx)\,d\mc{H}(\bx)-\int_{\beta(c)}\widehat{m}(\bx)\,d\mc{H}(\bx))^2\le
\int_{\beta(c)}(m(\bx)-\widehat{m}(\bx))^2\,d\mc{H}(\bx)$. So for any (large) $M > 0$ we have
\begin{align*}
  \MoveEqLeft
  P\lp \left| \int_{\beta(c)} m(\bx)\,d\mc{H}(\bx)-\int_{\beta(c)}\widehat{m}(\bx)\,d\mc{H}(\bx)\right|> M \rp \\
&= P\lp\left|\int_{\beta(c)}m(\bx)\,d\mc{H}(\bx)-\int_{\beta(c)}\widehat{m}(\bx)\,d\mc{H}(\bx)\right|^2>M^2\rp 
\end{align*}
which is bounded above by
\begin{align*}
  P\lp\int_{\beta(c)}\lb
  m(\bx)-\widehat{m}(\bx)\rb^2\,d\mc{H}(\bx)>M^2\rp 
  & \le\frac{ \EE \int_{\beta(c)}\lb  m(\bx)-\widehat{m}(\bx)\rb^2\,d\mc{H}(\bx)}{M^2},
\end{align*}
by Markov's inequality. And by Tonelli's Theorem, we have $\EE \int_{\beta(c)}\lb m(\bx)-\widehat{m}(\bx)\rb^2\,d\mc{H}(\bx)= \int_{\beta(c)}\EE\lb m(\bx)-\widehat{m}(\bx)\rb^2\,d\mc{H}(\bx)$.

\begin{mylongform}
  \begin{longform}
    Extension of classical results on the stochastic errors of kernel
    density estimator and derivative estimator using unconstrained
    (full) bandwidth matrix can be found in
    \citet{chacon2011asymptotics}.
  \end{longform}
\end{mylongform}
Since we assume the true density function has 4 continuous bounded
derivatives, by Theorem 4 in \citet{chacon2011asymptotics} with
slight modification, it can be  easily  seen
$|\widehat{f}_{n,\bH_1}(\bx)-f_0(\bx)|=O_p(n^{-2/(d+6)})$,
$\|\nabla\widehat{f}_{n,\bH_1}(\bx)-\nabla
f_0(\bx)\|=O_p(n^{-2/(d+6)})$,
$\|\nabla^2\widehat{f}_{n,\bH_2}(\bx)-\nabla^2
f_0(\bx)\|$ is $O_p(n^{-2/(d+8)})$. Thus
$\widehat{F}_{\bx}=F_{\bx}+O_p(n^{-2/(d+8)})$, and
$\widehat{A}_{\bx}=A_{\bx}+O_p(n^{-2/(d+8)})$. And  we can also see 
\begin{equation*}
  \sup_{\bx \in \beta(c)}E [ | g_{n,\bH_1} (\bx) | +
  \| \hess g_{n,\bH_1} (\bx) \| |g_{n,\bH_1}(\bx) | + \| \nabla
  g_{n,\bH1}(\bx) \|  ]  = O(n^{-2/(d+6)}),
\end{equation*}
Thus, we can check that
$\int_{\beta(c)}\EE\lb
m(\bx)-\widehat{m}(\bx)\rb^2\,d\mc{H}(\bx)$ is $O(n^{-4/(d+8)})$, and
the first term on the last line of \eqref{eq:cor-ls-bandwidth-diff} is $O_p(n^{-2/(d+8)})$. 
We
can conclude that for any $0<s_1<s_2<\infty$, we have
$\widehat{\text{AR}}_{\LS}(s)=\text{AR}_{\LS}(s)\{1+O_p(n^{-2/(d+8)})\}$
uniformly for $s\in [s_1,s_2]$.  Then we have
$\widehat{\text{AR}}'_{\LS}(\hat{s}_{\text{opt}})=\text{AR}'_{\LS}(\hat{s}_{\text{opt}})\{1+O_p(n^{-2/(d+8)})\}=\text{AR}^{''}_{\LS}(\tilde{s})(\hat{s}_{\text{opt}}-s_{\text{opt}})\{1+O_p(n^{-2/(d+8)})\}$,
where $\text{AR}^{''}_{\LS}(\tilde{s})>0$ and is bounded from 0 as $n\to
\infty$. This gives us
$\hat{s}_{\text{opt}}/s_{\text{opt}}=1+O_p(n^{-2/(d+8)})$, and recall that
$\widehat{h}_{\text{opt}}=\hat{s}_{\text{opt}}^{ 2/(d+4)}n^{-1/(d+4)}$,
$h_{\text{opt}}=s_{\text{opt}}^{ 2/(d+4)}n^{-1/(d+4)}$, we conclude
\begin{align*}
  \frac{\hat{h}_{\text{opt}}}{h_{\text{opt}}}=1+O_p\lp n^{-2/(d+8)}\rp.
\end{align*}
Combining this result with $h_{\text{opt}}/h_0=1+O(n^{-2/(d+4)}(\log
n^{3/2}))$ gives us $\hat{h}_{\text{opt}}/h_0=O_p(n^{-2/(d+8)})$.
\subsection{Proof of Corollary~\ref{cor:hdr-bandwidth-selector}}
Similar to the proof of Corollary~\ref{cor:ls-bandwidth-selector}, let
$g_{n,\bH_0}=\widehat{f}_{n,\bH_0}-f_0$,
$g_{n,\bH_1}=\widehat{f}_{n,\bH_1}-f_0$,
$g_{n,\bH_2}=\widehat{f}_{n,\bH_0}-f_0$,
and let $\epsilon=\widehat{f}_{\tau,n,\bH_0}-f_{\tau,0}$.
Since we assume the true density function has 4 continuous bounded
derivatives, again by Theorem 4 in \citet{chacon2011asymptotics}, it can be  easily  seen
$|\widehat{f}_{n,\bH_0}(\bx)-f_0(\bx)|=O_p(n^{-2/(d+4)})$,
$|\widehat{f}_{n,\bH_1}(\bx)-f_0(\bx)|=O_p(n^{-2/(d+6)})$,
$\|\nabla\widehat{f}_{n,\bH_1}(\bx)-\nabla
f_0(\bx)\|=O_p(n^{-2/(d+6)})$,
$\|\nabla^2\widehat{f}_{n,\bH_2}(\bx)-\nabla^2
f_0(\bx)\|=O_p(n^{-2/(d+8)})$. And by
Lemma~\ref{lem:hdr-step1}, $|\widehat{f}_{\tau,n,\bH_0}-f_{\tau,0}
|=O_p(n^{-2/(d+8)})$.
We first look at the difference
\begin{align}
  \begin{split}
    \label{eq:cor2-eq1}
    &\left|\int_{\beta_{\tau}}\frac{1}{\|\grad f_0\|}\,d\mc{H}-
      \int_{\widehat{\beta}_{\tau,\bH_1}}\frac{1}{\|\grad\widehat{f}_{n,\bH_1}\|}\,d\mc{H}\right|\\
    &\le
    \left|\int_{\beta_{\tau}}\frac{1}{\|\grad
        f_0\|}\,d\mc{H}-\int_{\beta_{\tau}}\frac{1}{\|\grad \widehat{f}_{n,\bH_1}\|}\,d\mc{H}\right|
    +\left|\int_{\beta_{\tau}}\frac{1}{\|\grad \widehat{f}_{n,\bH_1}\|}\,d\mc{H}-
      \int_{\widehat{\beta}_{\tau,\bH_1}}\frac{1}{\|\grad
        \widehat{f}_{n,\bH_1}\|}\,d\mc{H}\right|.
  \end{split}
\end{align}
Since $\|\grad f_0-\grad \widehat{f}_{n,\bH_1}\|=O_p(n^{-2/(d+6)})$ by
\citet{chacon2011asymptotics}, it is easy to see the first term on the
last line above is $O_p(n^{-2/(d+6)})$. Recalling $\epsilon=\widehat{f}_{\tau,n,\bH_0}-f_{\tau,0}$,  we can bound the second term as
\begin{align*}
  \MoveEqLeft  \left|\int_{\beta_{\tau}}\frac{1}{\|\grad \widehat{f}_{n,\bH_1}\|}\,d\mc{H}-
  \int_{\widehat{\beta}_{\tau,\bH_1}}\frac{1}{\|\grad
  \widehat{f}_{n,\bH_1}\|}\,d\mc{H}\right|\\
&\le \left|\int_{\{f_0=f_{\tau,0}\}}\frac{1}{\|\grad \widehat{f}_{n,\bH_1}\|}\,d\mc{H}-
  \int_{\{\widehat{f}_{n,\bH_1}=f_{\tau,0}\}}\frac{1}{\|\grad
  \widehat{f}_{n,\bH_1}\|}\,d\mc{H}\right|\\
&\quad+ \left|\int_{\{f_0=f_{\tau,0}\}}\frac{1}{\|\grad \widehat{f}_{n,\bH_1}\|}\,d\mc{H}-
  \int_{\{\widehat{f}_{n,\bH_1}=f_{\tau,0}+\epsilon\}}\frac{1}{\|\grad
  \widehat{f}_{n,\bH_1}\|}\,d\mc{H}\right|,
\end{align*}
By Lemma~\ref{lem:Hauss-diff}, the first term  $|\int_{\{f_0=f_{\tau,0}\}}\inv{\|\grad \widehat{f}_{n,\bH_1}\|}\,d\mc{H}-
\int_{\{\widehat{f}_{n,\bH_1}=f_{\tau,0}\}}\inv{\|\grad
  \widehat{f}_{n,\bH_1}\|}\,d\mc{H}|$ is of order $O_p (
\sup_{\bx \in \beta(c)} E [ | g_{n,\bH_1} (\bx) | +
\| \hess g_{n,\bH_1} (\bx) \| |g_{n,\bH_1}(\bx) | + \| \nabla
g_{n,\bH1}(\bx) \|  ] )$  when    $\bH_1 \to 0$ and $n^{-1} | \bH_1|^{-1/2} (\bH_1^{-1})^{\otimes
  2} = O(1)$ as $n \to \infty$. Then, by Taylor expansion, we have
\begin{align*}
  \MoveEqLeft
  \int_{\{\widehat{f}_{n,\bH_1}=f_{\tau,0}+\epsilon\}}\frac{1}{\|\grad
  \widehat{f}_{n,\bH_1}\|}\,d\mc{H}\\
&=\int_{\{\widehat{f}_{n,\bH_1}=f_{\tau,0}\}}\frac{1}{\|\grad
  \widehat{f}_{n,\bH_1}\|}\,d\mc{H}-\lp\frac{d}{d e}\left.\int_{\{\widehat{f}_{n,\bH_1}=f_{\tau,0}+e\}}\frac{1}{\|\grad
  \widehat{f}_{n,\bH}\|}\,d\mc{H}\right\vert_{e=s\epsilon}\rp\epsilon,
\end{align*}
where $s\in [0,1]$. Then we have
\begin{align}
  \label{eq:cor-hdr-haus-deriv}
  \begin{split}
    \MoveEqLeft  \left|\int_{\{\widehat{f}_{n,\bH_1}=f_{\tau,0}\}}\frac{1}{\|\grad \widehat{f}_{n,\bH_1}\|}\,d\mc{H}-
      \int_{\{\widehat{f}_{n,\bH_1}=f_{\tau,0}+\epsilon\}}\frac{1}{\|\grad \widehat{f}_{n,\bH}\|}\,d\mc{H}\right|\\
    &= \left|\lp\frac{d}{d e}\left.\int_{\{\widehat{f}_{n,\bH_1}=f_{\tau,0}+e\}}\frac{1}{\|\grad
          \widehat{f}_{n\bH_1}\|}\,d\mc{H}\right\vert_{e=s\epsilon}\rp\epsilon\right|.
  \end{split}
\end{align}
From the proof of Lemma~\ref{lem:Hauss-diff}, we can see
when $n$ is sufficiently large, the derivative on the last line of \eqref{eq:cor-hdr-haus-deriv} is uniformly bounded. Moreover,
by Lemma~\ref{lem:hdr-step1}, $\epsilon=O(\|g_{n,\bH_0}\|_{\infty})$, so we
have
\begin{align*}
  \MoveEqLeft  \left|\int_{\beta_{\tau}}\frac{1}{\|\grad
  \widehat{f}_{n,\bH_1}\|}\,d\mc{H}-\int_{\widehat{\beta}_{\tau,\bH_1}}\frac{1}{\|\grad
  \widehat{f}_{n,\bH_1}\|}\,d\mc{H}\right|\\
&=O(\|g_{n,\bH_0}\|_{\infty})+ O_p (
  \sup_{\bx \in \beta(c)} E [ | g_{n,\bH_1} (\bx) | +
  \| \hess g_{n,\bH_1} (\bx) \| |g_{n,\bH_1}(\bx) | + \| \nabla
  g_{n,\bH1}(\bx) \|  ] )\\
&=O_p(n^{-2/(d+6)}).
\end{align*}
Then
\begin{align*}
  \left|\int_{\beta_{\tau}}\frac{1}{\|\grad f_0\|}\,d\mc{H}-
  \int_{\widehat{\beta}_{\tau,\bH_1}}\frac{1}{\|\grad
  \widehat{f}_{n,\bH_1}\|}\,d\mc{H}\right|=O_p(n^{-2/(d+6)}),
\end{align*}
and thus $|w_0-\hat{w}_0|=O_p(n^{-2/(d+6)})$. Using exactly the same trick, we can show
\begin{align*}
  \left|  \int_{\beta_{\tau}}\frac{\mu(K)\tr(\nabla^2
  f_0)}{2\|\nabla f_0\|}\,d\mc{H}-\int_{\widehat{\beta}_{\tau,\bH_1}}\frac{\mu(K)\tr(\nabla^2
  \widehat{f}_{n,\bH_2})}{2\|\nabla
  \widehat{f}_{n,\bH_1}\|}\,d\mc{H}\right|=O_p(n^{-2/(d+8)}).
\end{align*}
Next, we provide the bound for
$|\int_{\mc{L}_{\tau}}\frac{\mu(K)\tr(\nabla^2f_0)}{2}\,d\lambda-\int_{\widehat{\mc{L}}_{\tau,\bH_0}}\frac{\mu(K)\tr(\nabla^2\widehat{f}_{n,\bH_2})}{2}\,d\lambda|$. Similarly,
we have
\begin{align}
  \label{eq:sup-level-integral}
  \begin{split}
    \MoveEqLeft
    \left|\int_{\mc{L}_{\tau}}\frac{\mu(K)\tr(\nabla^2f_0)}{2}\,d\lambda-\int_{\widehat{\mc{L}}_{\tau,\bH_0}}\frac{\mu(K)\tr(\nabla^2\widehat{f}_{n,\bH_2})}{2}\,d\lambda\right|\\
    &\le
    \left|\int_{\mc{L}_{\tau}}\frac{\mu(K)\tr(\nabla^2f_0)}{2}\,d\lambda-\int_{\widehat{\mc{L}}_{\tau,\bH_0}}\frac{\mu(K)\tr(\nabla^2f_0)}{2}\,d\lambda\right|\\
    &\qquad +
    \left|\int_{\widehat{\mc{L}}_{\tau,\bH_0}}\frac{\mu(K)\tr(\nabla^2f_0)}{2}\,d\lambda-\int_{\widehat{\mc{L}}_{\tau,\bH_0}}\frac{\mu(K)\tr(\nabla^2\widehat{f}_{n,\bH_2})}{2}\,d\lambda\right|,
  \end{split}
\end{align}
and since we assume $f_0$ has bounded second derivatives,
the difference on the second line above $|\int_{\mc{L}_{\tau}}\frac{\mu(K)\tr(\nabla^2f_0)}{2}\,d\lambda-\int_{\widehat{\mc{L}}_{\tau,\bH_0}}\frac{\mu(K)\tr(\nabla^2f_0)}{2}\,d\lambda|=O\{|\lambda(\mc{L}_{\tau})-\lambda(\hat{\mc{L}}_{\tau,\bH_0})|\}$. Now
we show
$|\lambda(\mc{L}_{\tau})-\lambda(\hat{\mc{L}}_{\tau,\bH})|=O(\|g_{n,\bH_0}\|_{\infty})$. It
can be seen that
\begin{align*}
  \{\bx:f_0(\bx)\ge f_{\tau,0}+\epsilon+\|g_{n,\bH_0}\|_{\infty}\}\subset
  \widehat{\mc{L}}_{\tau,\bH_0}\subset\{\bx:f_0(\bx)\ge f_{\tau,0}+\epsilon-\|g_{n,\bH_0}\|_{\infty}\},
\end{align*}
and then
\begin{align*}
  \MoveEqLeft |\lambda(\mc{L}_{\tau})-\lambda(\widehat{\mc{L}}_{\tau,\bH_0})|\\
&\le |\lambda(\mc{L}_{\tau})-\lambda \{\bx:f_0(\bx)\ge f_{\tau,0}+\epsilon+\|g_{n,\bH_0}\|_{\infty}\}|\\&\qquad+ |\lambda(\mc{L}_{\tau})-\lambda\{\bx:f_0(\bx)\ge f_{\tau,0}+\epsilon-\|g_{n,\bH_0}\|_{\infty}\}|.
\end{align*}
Further by Proposition A.1 of \cite{Cadre:2006db},
\begin{align*}
  \MoveEqLeft  |\lambda(\mc{L}_{\tau})-\lambda \{\bx:f_0(\bx)\ge
  f_{\tau,0}+\epsilon+\|g_{n,\bH_0}\|_{\infty}\}|\\&=\left|(\epsilon+\|g_{n,\bH_2}\|_{\infty})\int_{\beta_{\tau}}\frac{1}{\|\grad
                                                     f_0\|}\,d\mc{H}\right|+o(\epsilon
                                                     +\|g_{n,\bH_0}\|_{\infty}),
\end{align*}
and
\begin{align*}
  \MoveEqLeft  |\lambda(\mc{L}_{\tau})-\lambda\{\bx:f_0(\bx)\ge f_{\tau,0}+\epsilon-\|g_{n,\bH_0}\|_{\infty}\}|\\&=\left|(\epsilon-\|g_{n,\bH_0}\|_{\infty})\int_{\beta_{\tau}}\frac{1}{\|\grad f_0\|}\,d\mc{H}\right|+o(\epsilon-\|g_{n,\bH_0}\|_{\infty}),
\end{align*}
and thus
\begin{align*}
  |\lambda(\mc{L}_{\tau})-\lambda(\widehat{\mc{L}}_{\tau,\bH_0})|=O(\|g_{n,\bH_0}\|_{\infty}),
\end{align*}
when $\|g_{n,\bH_0}\|_{\infty}\to 0$.

Now for the last term of \eqref{eq:sup-level-integral}, by Jensen's
inequality we have
\begin{align*}
  \lp\int_{\widehat{\mc{L}}_{\tau,\bH_0}}\frac{\mu(K)\tr(\nabla^2 g_{n,\bH_2})}{2}\,d\lambda\rp^2
  &\le\int_{\widehat{\mc{L}}_{\tau,\bH_0}}\lp\frac{\mu(K)\tr(\nabla^2
    g_{n,\bH_2})}{2}\rp^2\,d\lambda\\
  &\le\int\lp\frac{\mu(K)\tr(\nabla^2 g_{n,\bH_2})}{2}\rp^2\,d\lambda,
\end{align*}
and then for any (large) $M > 0$, 
\begin{align*}
  P\lp
  \left|\int_{\widehat{\mc{L}}_{\tau,\bH_0}}\frac{\mu(K)\tr(\nabla^2
  g_{n,\bH_2})}{2}\,d\lambda\right|>M\rp\le\frac{\EE \int\lp\frac{\mu(K)\tr(\nabla^2 g_{n,\bH_2})}{2}\rp^2\,d\lambda}{M^2},
\end{align*}
where we applied Markov's inequality to obtain the upper bound. Applying
Tonelli's Theorem yields
\begin{align*}
  \EE \int\lp\frac{\mu(K)\tr(\nabla^2 g_{n,\bH_2})}{2}\rp^2\,d\lambda= \int\EE\lp\frac{\mu(K)\tr(\nabla^2 g_{n,\bH_2})}{2}\rp^2\,d\lambda=O_p(n^{-4/(d+8)}).
\end{align*}
So
$|\int_{\mc{L}_{\tau}}\frac{\mu(K)\tr(\nabla^2f_0)}{2}\,d\lambda-\int_{\widehat{\mc{L}}_{\tau,\bH_0}}\frac{\mu(K)\tr(\nabla^2\widehat{f}_{n,\bH_2})}{2}\,d\lambda| = O_p(n^{-2/(d+8)})$.

Now from \citet{chacon2011asymptotics}, we know that $\widehat{G}_{\bx}=G_{\bx}+O_p(n^{-2/(d+8)})$,
$\widehat{A}_{\bx}=A_{\bx}+O_p(n^{-2/(d+8)})$. And   using a similar trick as for
$w_0$, we have
\begin{align*}
  |  \widehat{\text{AR}}_{\HDR}(s)-\text{AR}_{\HDR}(s)|=O_p(n^{-2/(d+6)}).
\end{align*}
And we
can conclude that for any $0<s_1<s_2<\infty$, we have
$\widehat{\text{AR}}(s)=\text{AR}(s)\{1+O_p(n^{-2/(d+8)})\}$
uniformly for $s\in [s_1,s_2]$.  Then we have
$\widehat{\text{AR}}'_{\HDR}(\hat{s}_{\text{opt}})=\text{AR}'_{\HDR}(\hat{s}_{\text{opt}})\{1+O_p(n^{-2/(d+8)})\}=\text{AR}^{''}_{\HDR}(\tilde{s})(\hat{s}_{\text{opt}}-s_{\text{opt}})\{1+O_p(n^{-2/(d+8)})\}$,
where $\text{AR}^{''}_{\HDR}(\tilde{s})>0$ and is bounded from 0 as $n\to
\infty$. This gives us
$\hat{s}_{\text{opt}}/s_{\text{opt}}=1+O_p(n^{-2/(d+8)})$, and recalling that
$\widehat{h}_{\text{opt}}=\hat{s}_{\text{opt}}^{ 2/(d+4)}n^{-1/(d+4)}$,
$h_{\text{opt}}=s_{\text{opt}}^{ 2/(d+4)}n^{-1/(d+4)}$, we conclude
\begin{align*}
  \frac{\hat{h}_{\text{opt}}}{h_{\text{opt}}}=1+O_p\lp n^{-2/(d+8)}\rp.
\end{align*}

\section{Additional theorems and proofs}
\label{app:additional-thms}

The following theorem is a slight extension of Theorem~2.3 of
\cite{Gine:2002jc} to allow general bandwidth matrices and to apply to
gradient estimation.  Its proof is essentially the same as that of
their Theorem~2.3, so is omitted.
\begin{theorem}
  \label{thm:KDE-sup-rate}
  Let  
  $\bX_1, \ldots, \bX_n$ be i.i.d.\ from a bounded density on $\RR^d$,
  and let Assumptions~\ref{assm:KA}, and \ref{assm:HA} hold.
  We have
  \begin{align}
    \limsup_{n \to \infty}  \sqrt{\frac{n | \bH_n|^{1/2}}{\log | \bH_n|^{-1/2}}}
    \| \ffnH[n,\bH_n] - \EE\ffnH[n,\bH_n] \|_\infty = C_{0,1} \qquad a.s.,
    \label{eq:KDE-sup-rate}
  \end{align}
  and 
  \begin{align}
    \limsup_{n \to \infty}
    \sqrt{\frac{n | \bH_n|^{1/2} \lambda_{\min}(\bH)}{\log | \bH_n|^{-1/2}} } \| \grad
    \ffnH[n,\bH_n] - \EE \grad \ffnH[n,\bH_n] \|_\infty \le C_{0,2} \qquad a.s.,
    \label{eq:KDE-grad-sup-rate}
  \end{align}
  Here $C_{0,1}$ and $C_{0,2}$
  depend on $d, K,$ and $  \|f_0 \|_\infty.$
\end{theorem}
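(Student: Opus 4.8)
The plan is to adapt the proof of Theorem~2.3 of \cite{Gine:2002jc} (the exact rate of uniform a.s.\ convergence of kernel density estimators), which requires only two bookkeeping changes: allowing the scalar bandwidth to become a matrix $\bH_n$, and replacing $K$ by its partial derivatives for \eqref{eq:KDE-grad-sup-rate}. First I would rewrite both left-hand sides as suprema of empirical processes. With $K_\bH(\bs{z}) := |\bH|^{-1/2} K(\bH^{-1/2}\bs{z})$ one has
\begin{equation*}
  \big\| \ffnH[n,\bH_n] - \EE \ffnH[n,\bH_n]\big\|_\infty
  = \frac{1}{n}\,\Big\| \sum_{i=1}^n \big( g(\bX_i) - \EE g(\bX_i)\big)\Big\|_{\mathcal{G}_n},
  \qquad
  \mathcal{G}_n := \big\{ \bs{z} \mapsto K_{\bH_n}(t - \bs{z}) : t \in \RR^d\big\},
\end{equation*}
and the analogous identity for $\grad \ffnH$ with the index class $\mathcal{G}_n' := \{ \bs{z} \mapsto |\bH_n|^{-1/2}\,\bs{e}_j' \bH_n^{-1/2}(\grad K)(\bH_n^{-1/2}(t - \bs{z})) : t \in \RR^d,\ 1\le j\le d\}$. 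Assumption~\ref{assm:KA} (part~\ref{assm:KA-1}) makes $\mathcal{G}_n$ and $\mathcal{G}_n'$ uniformly bounded with envelopes $\|K\|_\infty |\bH_n|^{-1/2}$ and $\|\grad K\|_\infty\,\lambda_{\min}(\bH_n)^{-1/2}|\bH_n|^{-1/2}$ (using $\|\bH_n^{-1/2}\bs{e}_j\| \le \lambda_{\min}(\bH_n)^{-1/2}$; this $\lambda_{\min}(\bH_n)^{-1/2}$ is exactly the extra factor in \eqref{eq:KDE-grad-sup-rate}), and Assumption~\ref{assm:KA} (part~\ref{assm:KA-VC}), i.e.\ \eqref{eq:VC-defn} applied to \eqref{eq:assumption-VC-K} and \eqref{eq:assumption-VC-gradK}, gives VC-type covering numbers for $\mathcal{G}_n$ and $\mathcal{G}_n'$ with constants independent of $n$ (we only need this along the fixed sequence $\bH_n$, which is subsumed by the uniform-in-$\bH$ hypothesis).

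Second, I would control the expected supremum and its concentration. A change of variables using the boundedness of $f_0$ gives $\sup_t \Var(g(\bX_1)) = |\bH_n|^{-1/2} R(K)\|f_0\|_\infty(1+o(1))$ for $\mathcal{G}_n$, and $\max_{t,j}\Var(\cdot) \le |\bH_n|^{-1/2}\lambda_{\min}(\bH_n)^{-1} R(\grad K)\|f_0\|_\infty (1 + o(1))$ for $\mathcal{G}_n'$; since $\bH_n^{-1/2}$ is not a scalar matrix the variance in the worst direction is only bounded above (not evaluated), which is why \eqref{eq:KDE-grad-sup-rate} carries only an inequality. Feeding these variance proxies, the envelopes, and the covering numbers into the moment inequality for uniformly bounded VC classes used in \cite{Gine:2002jc} (see also \cite{MR2078551}) bounds $\EE\|\sum_i (g - \EE g)\|_{\mathcal{G}_n}$ by a constant times $\sqrt{n|\bH_n|^{-1/2}\log|\bH_n|^{-1/2}}$ (and the $\mathcal{G}_n'$ analogue with an extra $\lambda_{\min}(\bH_n)^{-1}$), where the hypotheses $|\bH_n|\searrow 0$ and $n|\bH_n|^{1/2}/\log|\bH_n|^{-1/2}\to\infty$ in Assumption~\ref{assm:HA} are precisely what makes the variance term dominate the envelope term there. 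Talagrand's inequality then shows $\|\sum_i (g - \EE g)\|_{\mathcal{G}_n}$ concentrates around this mean with exponential tails at scale $\sqrt{n|\bH_n|^{-1/2}}$, and dividing by $n$ recovers the rate $\sqrt{\log|\bH_n|^{-1/2}/(n|\bH_n|^{1/2})}$.

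Third, I would pass from subsequence concentration to the almost-sure $\limsup$ over the full sequence by the standard blocking device: along a geometric subsequence $n_k$, the exponential tails together with $\log\log n/\log|\bH_n|^{-1/2}\to 0$ from Assumption~\ref{assm:HA} make the Borel--Cantelli series summable, pinning the $\limsup$ along $(n_k)$ at the constant set by the variance proxy, $C_{0,1}$ proportional to $\sqrt{R(K)\|f_0\|_\infty}$; sandwiching a general $n$ between consecutive $n_k$ and invoking the regularity condition $|\bH_n|^{1/2}\le c|\bH_{2n}|^{1/2}$ and monotonicity of $|\bH_n|$ controls the oscillation of the normalization and of the process over each block. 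The gradient statement \eqref{eq:KDE-grad-sup-rate} follows by the identical scheme, except that only an upper bound on $C_{0,2}$ is obtained, inherited from the upper bound on the worst-direction variance. The main --- and essentially the only --- obstacle relative to \cite{Gine:2002jc} is this anisotropy: for $\grad\ffnH$ the operator norm of $\bH_n^{-1/2}$ governs the envelope and hence the rate, but the actual variance in the extremal direction need not equal $\lambda_{\min}(\bH_n)^{-1} R(\grad K)$, so the sharp constant is lost; everything else is a transcription of the scalar-bandwidth argument, which is why the details are omitted.
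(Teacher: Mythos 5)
Your proposal is correct and is essentially the approach the paper intends: the paper explicitly omits the proof, stating only that it is a slight extension of Theorem~2.3 of \cite{Gine:2002jc}, and your sketch is a faithful transcription of that argument (VC-class moment bound plus Talagrand concentration plus blocking over a geometric subsequence) with the two required bookkeeping changes — matrix bandwidth via the envelope $\|K\|_\infty|\bH_n|^{-1/2}$ and variance proxy $|\bH_n|^{-1/2}R(K)\|f_0\|_\infty$, and the extra $\lambda_{\min}(\bH_n)^{-1/2}$ factor for the gradient class. You also correctly diagnose why \eqref{eq:KDE-sup-rate} is an equality while \eqref{eq:KDE-grad-sup-rate} is only an inequality, namely the anisotropy of $\bH_n^{-1/2}$ in the gradient's worst-direction variance.
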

\begin{mylongform}
  \begin{longform}
    Here $C_{0,1}^2 \le M^2 c \| f_0 \|_\infty \| K\|_2^2$, for a constant $M$
    depending only on the kernel $K$ (and dimension $d$).

    See GineGuillou-notes.tex for details.
  \end{longform}
\end{mylongform}

The proof of Theorem~\ref{thm:KDE-sup-rate} also yields the following probability bound which we need in particular.
\begin{mylongform}
  \begin{longform}
    In the corollary we consider a single $n$ value, rather than the range $n \in (2^{k-1}, 2^k]$, simplifying the argument.
  \end{longform}
\end{mylongform}

\begin{corollary}
  \label{cor:convergerate}
  Let  
  $\bX_1, \ldots, \bX_n$ be i.i.d.\ from a bounded density on $\RR^d$,
  and let
  Assumptions~ \ref{assm:KA}, and \ref{assm:HA} hold. Then
  for
  some constant $C > 0$ and for
  $0 < \epsilon \le C \| K \|_2^2 \| f_0 \|_\infty / \|K\|_\infty $, we have
  \begin{equation}
    \label{eq:11}
    P\lb \lV \ffnH - \EE\ffnH \rV_\infty > \epsilon \rb
    \le L \exp \lb -C_{0,1} \epsilon^2 n | \bH_n |^{1/2} \rb,
  \end{equation}
  where $C_{0,1}$ depends on $K$, $d$, and $\| f_0 \|_\infty$.
  Similarly,
  for $0<\epsilon$ small enough
  (with bound depending on $\grad K$ and $\| f_0 \|_\infty$),
  \begin{equation}
    \label{eq:29}
    P\lb   \lV  \grad \ffnH - \EE \grad\ffnH \rV_\infty > \epsilon \rb
    \le L \exp \lb - C_{0,2} \epsilon^2 n | \bH_n |^{1/2} \lambda_{\bH} \rb,
  \end{equation}
  where $C_{0,2}>0$ depends on $\grad K$, $d$, and $  \| f_0 \|_\infty$, and where
  $\lambda_{\bH}$ is the smallest eigenvalue of $\bH$.
\end{corollary}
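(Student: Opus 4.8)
The plan is to extract the two bounds directly from the proof of Theorem~\ref{thm:KDE-sup-rate} (equivalently, of Theorem~2.3 of \cite{Gine:2002jc}), stopping at the exponential inequality rather than passing to the almost-sure $\limsup$. First I would express the two left-hand sides as suprema of empirical processes over VC-type classes. Writing $g_{\bx}^{\bH}(\bz) := |\bH|^{-1/2} K\big(\bH^{-1/2}(\bx-\bz)\big)$, we have $\ffnH(\bx) - \EE\ffnH(\bx) = (\PPn - \PP)g_{\bx}^{\bH}$, hence $\lV\ffnH - \EE\ffnH\rV_\infty = \sup_{\bx\in\RR^d}|(\PPn - \PP)g_{\bx}^{\bH}|$; likewise, with $h_{\bx,j}^{\bH}(\bz) := |\bH|^{-1/2}\big(\bH^{-1/2}(\grad K)(\bH^{-1/2}(\bx-\bz))\big)_j$ we get $\lV\grad\ffnH - \EE\grad\ffnH\rV_\infty = \max_{1\le j\le d}\sup_{\bx}|(\PPn-\PP)h_{\bx,j}^{\bH}|$. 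By Assumption~\ref{assm:KA-VC}, the class $\{g_{\bx}^{\bH} : \bx\in\RR^d,\ \bH\in\mc{S}\}$ satisfies the uniform entropy bound \eqref{eq:VC-defn} with indices $A,v$ independent of $n$, and \eqref{eq:assumption-VC-gradK} gives the same for each coordinate class $\{h_{\bx,j}^{\bH}\}$, dominated by the common envelope $\lV\grad K(\bH^{-1/2}(\cdot))\rV$.

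Next I would record the scalar parameters driving the concentration inequality. The class $\{g_{\bx}^{\bH}\}$ has envelope $U_n := |\bH|^{-1/2}\lV K\rV_\infty$, and a change of variables shows $\sup_{\bx}\Var_{f_0}\!\big(g_{\bx}^{\bH}(\bX_1)\big) \le |\bH|^{-1/2}\lV f_0\rV_\infty R(K) =: \sigma_n^2$. For the gradient classes, since the operator norm of $\bH^{-1/2}$ is $\lambda_{\min}(\bH)^{-1/2}$, the envelope becomes $\asymp |\bH|^{-1/2}\lambda_{\min}(\bH)^{-1/2}\lV\grad K\rV_\infty$ and the variance proxy $\asymp |\bH|^{-1/2}\lambda_{\min}(\bH)^{-1}\lV f_0\rV_\infty R(\grad K)$. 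Assumption~\ref{assm:HA-1} (resp.\ \ref{assm:HA-2}) forces $n|\bH|^{1/2}/\log|\bH|^{-1/2}\to\infty$ (resp.\ $n|\bH|^{1/2}\lambda_{\min}(\bH)/\log|\bH|^{-1/2}\to\infty$), which is exactly the condition making the expectation term in Talagrand's inequality — namely $\EE\sup_{\bx}|(\PPn-\PP)g_{\bx}^{\bH}| = O\big(\sqrt{\log|\bH|^{-1/2}/(n|\bH|^{1/2})}\big)$, by the VC-type moment bound used in \cite{Gine:2002jc} — tend to zero. Applying the Talagrand-type exponential inequality for VC-type classes exactly as in the proof of \cite[Thm.~2.3]{Gine:2002jc} and restricting $\epsilon$ to be at most a fixed multiple of $\sigma_n^2/U_n = \lV f_0\rV_\infty R(K)/\lV K\rV_\infty$, so that the sub-Gaussian term dominates the Bernstein term, yields $P\big(\lV\ffnH-\EE\ffnH\rV_\infty > \epsilon\big) \le L\exp\big(-c\,n\epsilon^2/\sigma_n^2\big) = L\exp\big(-C_{0,1}\epsilon^2 n|\bH_n|^{1/2}\big)$, which is \eqref{eq:11}; the same argument applied to $\max_j\sup_{\bx}|(\PPn-\PP)h_{\bx,j}^{\bH}|$ (absorbing the factor $d$ into $L$) produces the exponent $C_{0,2}\epsilon^2 n|\bH_n|^{1/2}\lambda_{\min}(\bH)$, with the admissible range of $\epsilon$ determined by the corresponding variance-to-envelope ratio. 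The entropy indices $A,v$ together with $\lV K\rV_\infty$, $\lV\grad K\rV_\infty$, $\lV f_0\rV_\infty$, $R(K)$, $R(\grad K)$ and $d$ enter only through $L$, $C_{0,1}$, $C_{0,2}$, $C$.

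I expect the main obstacle to be the bookkeeping that converts Talagrand's inequality, which controls deviations of $\sup_{\bx}|(\PPn-\PP)g_{\bx}^{\bH}|$ \emph{above its mean}, into a deviation bound \emph{above $0$}: one uses that the mean is of order $\sqrt{\log|\bH|^{-1/2}/(n|\bH|^{1/2})}$ and hence, by Assumption~\ref{assm:HA}, of smaller order than the values of $\epsilon$ for which the stated estimate is informative, so that $\epsilon$ minus the mean is a constant fraction of $\epsilon$; keeping $C_{0,1}$ strictly below the sharp exponent then leaves room to absorb the mean and the polynomial-in-$|\bH|^{-1/2}$ prefactors coming from the entropy integral. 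Two smaller points needing care are (i) deducing from \eqref{eq:assumption-VC-gradK}, stated in terms of $\lV\grad K(\bH^{-1/2}(t-\cdot))\rV$, that each scalar class $\{h_{\bx,j}^{\bH}\}$ obeys \eqref{eq:VC-defn} with that norm as a common envelope, and (ii) tracking the exact powers of $\lambda_{\min}(\bH)$ through the operator-norm bound on $\bH^{-1/2}$ so that the gradient exponent carries precisely one factor of $\lambda_{\bH}=\lambda_{\min}(\bH)$, matching \eqref{eq:29}.
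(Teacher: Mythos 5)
Your proposal follows essentially the same route as the paper: both extract the exponential inequality from the Gin\'e--Guillou (2002) machinery for VC-type classes, with the scalar parameters (envelope and variance proxy) chosen exactly as you describe and the gradient case handled by passing through the operator norm of $\bH^{-1/2}$ to pick up the single factor of $\lambda_{\min}(\bH)$. The only superficial difference is that the paper invokes Corollary~2.2 of Gin\'e--Guillou directly (which already packages the Talagrand concentration step, the entropy-based control of the mean, and the admissible range for $t$), whereas you describe unpacking Talagrand's inequality and absorbing the expectation term by hand; the range condition on $\epsilon$ that you flag as the main obstacle is precisely what their condition~(7) encodes, and the paper verifies it using $n|\bH_n|^{1/2}\to\infty$, so your reasoning lands in the same place.
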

\begin{proof}
  We let
  $${\mc F}_{K,\bH_n} := \lb K( \bH_n^{-1/2}( \bs t - \cdot) ) : \bs t \in \RR^d \rb,$$
  (which is a VC class by
  Assumption~\ref{assm:KA}). We have that
  for   $\epsilon > 0$
  \begin{align}
    P \lb \lV \ffnH - \EE\ffnH \rV_\infty > \epsilon \rb
    = P \lb \inv{n |\bH_n|^{1/2}} \lV \sum_{i=1}^n f(\bX_i) - \EE f(\bX_i) \rV_{{\mc F}_{K,\bH}} > \epsilon
    \rb.
    \label{eq:KDE-prob-bound}
  \end{align}
  Thus we set
  \begin{equation*}
    \sigma_n^2 := |\bH_n|^{1/2} \| K\|_2^2 \|f_0\|_\infty
    \quad \mbox{ and } \quad
    U := \| K \|_\infty
  \end{equation*}
  which satisfy the conditions of
  Corollary~2.2 of \cite{Gine:2002jc} 
  so we have $L$ and $C$ (depending on
  $K$ and $d$) from the corollary, so we set $t = \epsilon n |\bH_n|^{1/2}$,
  and $\lambda = C$ so that
  (7) in \cite{Gine:2002jc} 
  is satisfied (using that
  $n |\bH_n|^{1/2} \to \infty$ for the lower bound).  We conclude that
  \eqref{eq:KDE-prob-bound} is bounded above by
  \begin{equation*}
    L \exp \lb - \frac{D \epsilon^2 n |\bH_n|^{1/2}}{\|K\|_2^2 \|f_0\|_\infty }\rb
  \end{equation*}
  where $D := (\log (1 + C/4L))/LC$, completing the proof.

  A similar proof shows that \eqref{eq:29} holds.
  Let $K_{\bH} := |\bH|^{-1/2} K( \bH^{-1/2} \cdot )$.
  Then $\grad K_{\bH}(\bs{y}) = |\bH|^{-1/2} \bH^{-1/2}  \grad K(\bH^{-1/2} \bs{y})$,
  so
  $    P \lb \lV \grad \ffnH - \EE\grad \ffnH \rV_\infty > \epsilon \rb$
  \begin{mylongform}
    \begin{longform}
      equals
    \end{longform}
  \end{mylongform}
  \begin{mylongform}
    \begin{longform}
      \begin{align*}
        P \lb \inv{n |\bH_n|^{1/2}}
        \lV  \bH^{-1/2} \sum_{i=1}^n \grad K(\bH^{-1/2}( \cdot -\bX_i))
        - \EE\grad K(\bH^{-1/2}( \cdot -\bX_i)) \rV_{\infty} > \epsilon  \rb
      \end{align*}
      which is bounded above by
      \begin{align*}
        d
        P \lb \inv{n |\bH_n|^{1/2}}
        \lV  a_{\bH}'  \sum_{i=1}^n \grad K(\bH^{-1/2}( \cdot -\bX_i))
        - \EE\grad K(\bH^{-1/2}( \cdot -\bX_i)) \rV_{\infty} > \epsilon  \rb
      \end{align*}
      where $a_{\bH}'$ is a row of $\bH^{-1/2}$,
      and the previous display
    \end{longform}
  \end{mylongform}
  is bounded above by
  \begin{mylongform}
    \begin{longform}
      (rather than equal to)
    \end{longform}
  \end{mylongform}
  \begin{align*}
    d
    P \lb \inv{n |\bH_n|^{1/2}}
    \| a_{\bH}\|
    \lV    \sum_{i=1}^n \| \grad K(\bH^{-1/2}( \cdot -\bX_i))
    - \EE\grad K(\bH^{-1/2}( \cdot -\bX_i)) \| \rV_{\infty} > \epsilon  \rb
  \end{align*}
  by the Cauchy-Schwarz inequality where $a_{\bH}'$ is a row of $\bH^{-1/2}$
  (and, recall, $\| \cdot \|$ is just Euclidean norm).
  Since $\| a_{\bH}\| \le \lambda_{\bH}^{-1/2}$ where $\lambda_{\bH}^{-1/2}$ is the largest eigenvalue of $\bH^{-1/2}$, the previous display is bounded above by
  \begin{mylongform}
    \begin{longform}
      \begin{align*}
        d
        P \lb \inv{n |\bH_n|^{1/2} \lambda_{\bH}^{1/2} }
        \lV    \sum_{i=1}^n \| \grad K(\bH^{-1/2}( \cdot -\bX_i))
        \| \rV_{\infty} > \epsilon  \rb
      \end{align*}
      which equals
    \end{longform}
  \end{mylongform}
  \begin{align*}
    P \lb \inv{n |\bH_n|^{1/2} \lambda_{\bH}^{1/2}}
    \lV \sum_{i=1}^n f(\bX_i) - \EE f(\bX_i) \rV_{{\mc F}_{K,\bH}} > \epsilon  \rb
    \label{eq:KDE-prob-bound-grad}
  \end{align*}
  where
  $${\mc F}_{K,\bH_n} := \lb  \| \grad K(\bH_n^{-1/2} ( \bs t - \cdot) ) \| : \bs t \in \RR^d \rb,$$
  is a VC class by Assumption~\ref{assm:KA}.  We thus take
  $\sigma_n^2 := |\bH_n|^{1/2} R(\grad K) \| f_0 \|_\infty$ and
  $U := \sqrt{d} \| \grad K \|_\infty$ and apply Corollary 2.2 of
  \cite{Gine:2002jc}.  Here $R(\grad K)$ is the largest eigenvalue of
  $\int (\grad K) (\grad K)' d\lambda$.  We take
  $t = \epsilon n | \bH |^{1/2} \lambda_{\bH}^{1/2}$ and $\lambda = C$.  Then
  (7) of \cite{Gine:2002jc} 
  is satisfied since $n^{1/2} | \bH |^{1/4} \lambda_{\bH}^{1/2} / \sqrt{ \log | \bH|^{-1/2}} \to \infty$.
  \begin{mylongform}
    \begin{longform}
      That is,  the lower bound in
      our \eqref{eq:t-range} ((7)   of \cite{Gine:2002jc}) is
      \begin{equation*}
        C \sqrt{n} | \bH|^{1/4} \sqrt{ \log U/|\bH|^{-1/2} }
        \le \epsilon n | \bH |^{1/2} \lambda_{\bH}^{1/2}
      \end{equation*}
      which holds if and only if
      $C / \epsilon \le \sqrt{n} |\bH|^{1/4} \lambda_{\bH}^{1/2} / \sqrt{ \log U |\bH|^{-1/2}} \to \infty$.
    \end{longform}
  \end{mylongform}
  This yields \eqref{eq:29}.
\end{proof}
The following is referred to as the $\epsilon$-Neighborhood Theorem by
\cite{Guillemin:1974ti}. It states that for certain manifolds, so-called Tubular Neighborhoods exist.
\begin{theorem}[page 69, \cite{Guillemin:1974ti}]
  \label{thm:eps-nbhd}
  For a compact boundaryless manifold $Y$ in $\RR^d$ and $\epsilon > 0$, let $Y^{\epsilon}$ be the open set of points in $\RR^d$ with distance less than $\epsilon$ from $Y$.  If $\epsilon$ is small enough, then each point $w \in Y^\epsilon$ possesses a unique closest point in $Y$, denoted $\pi(w)$. %
  Moreover, the map $\pi: Y^\epsilon \to Y$ is a submersion.
\end{theorem}
A map between manifolds is a submersion if, at all points, the Jacobian map between corresponding tangent spaces is of full rank; see page 20 of \cite{Guillemin:1974ti}.

The following is referred to as the $\epsilon$-Neighborhood Theorem by
\cite{Guillemin:1974ti}. It states that for certain manifolds, so-called Tubular Neighborhoods exist.
\begin{theorem}[page 69, \cite{Guillemin:1974ti}]
  \label{thm:eps-nbhd}
  For a compact boundaryless manifold $Y$ in $\RR^d$ and $\epsilon > 0$, let $Y^{\epsilon}$ be the open set of points in $\RR^d$ with distance less than $\epsilon$ from $Y$.  If $\epsilon$ is small enough, then each point $w \in Y^\epsilon$ possesses a unique closest point in $Y$, denoted $\pi(w)$. %
  Moreover, the map $\pi: Y^\epsilon \to Y$ is a submersion.
\end{theorem}
A map between manifolds is a submersion if, at all points, the Jacobian map between corresponding tangent spaces is of full rank; see page 20 of \cite{Guillemin:1974ti}.

\begin{theorem}[Taylor's Theorem in Several Variables]
  \label{thm:taylor}
  Suppose
  $f:\RR^n\rightarrow\RR$ is of class $C^{k+1}$ on an open convex
  set $S$. If $\bs{a}\in S$ and $\bs{a}+\bs{h}\in S$, then
  \begin{align}
    f(\bs{a}+\bs{h})=\sum_{|\alpha|\le
    k}\frac{\partial^{\alpha}f(\bs{a})}{\alpha!}\bs{h}^{\alpha}+R_{\bs{a},k}(\bs{h}),
  \end{align}
  where the remainder is given in Lagrange's form by
  \begin{align}
    R_{\bs{a},k}(\bs{h})=\sum_{|\alpha|=k+1}\partial^{\alpha}f(\bs{a}+c\bs{h})\frac{\bs{h}^{\alpha}}{\alpha!}
  \end{align}
  for some $c\in (0,1)$.
\end{theorem}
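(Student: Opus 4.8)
The plan is to reduce the multivariable statement to the familiar one-variable Taylor theorem with Lagrange remainder. Define $g:[0,1]\to\RR$ by $g(t):=f(\bs{a}+t\bs{h})$. Since $S$ is open and convex and contains both $\bs{a}$ and $\bs{a}+\bs{h}$, the segment $\{\bs{a}+t\bs{h}:t\in[0,1]\}$ lies in $S$, so $g$ is well defined; moreover, because $f$ is of class $C^{k+1}$ on $S$, the chain rule shows $g$ is of class $C^{k+1}$ on $[0,1]$ (one extends slightly past the endpoints using openness of $S$). First I would compute the derivatives of $g$. By an induction on $j$ using the chain rule, for $0\le j\le k+1$ one gets
\begin{align}
  g^{(j)}(t) = \sum_{|\alpha|=j}\frac{j!}{\alpha!}\,\partial^{\alpha}f(\bs{a}+t\bs{h})\,\bs{h}^{\alpha},
\end{align}
where the multinomial coefficient $j!/\alpha!$ arises exactly as in the multinomial theorem applied to $(h_1\partial_1+\cdots+h_n\partial_n)^j$; equality of mixed partials (valid since $f\in C^{k+1}$) is what lets one collect terms by multi-index $\alpha$ rather than by ordered tuples.

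Next I would apply the single-variable Taylor theorem with Lagrange remainder to $g$ on $[0,1]$: there exists $c\in(0,1)$ with
\begin{align}
  g(1) = \sum_{j=0}^{k}\frac{g^{(j)}(0)}{j!} + \frac{g^{(k+1)}(c)}{(k+1)!}.
\end{align}
Substituting the formula for $g^{(j)}$ and noting $g(1)=f(\bs{a}+\bs{h})$, $g^{(j)}(0)=\sum_{|\alpha|=j}\frac{j!}{\alpha!}\partial^{\alpha}f(\bs{a})\bs{h}^{\alpha}$, the factors of $j!$ cancel and the first sum becomes $\sum_{|\alpha|\le k}\frac{\partial^{\alpha}f(\bs{a})}{\alpha!}\bs{h}^{\alpha}$. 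Likewise the remainder term becomes $\frac{1}{(k+1)!}\sum_{|\alpha|=k+1}\frac{(k+1)!}{\alpha!}\partial^{\alpha}f(\bs{a}+c\bs{h})\bs{h}^{\alpha} = \sum_{|\alpha|=k+1}\partial^{\alpha}f(\bs{a}+c\bs{h})\frac{\bs{h}^{\alpha}}{\alpha!}$, which is exactly $R_{\bs{a},k}(\bs{h})$. This yields the claimed identity.

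The only genuinely nonroutine step is establishing the derivative formula for $g^{(j)}$ with the correct multinomial coefficients, i.e.\ verifying that $\left(\tfrac{d}{dt}\right)^{j}f(\bs{a}+t\bs{h}) = \sum_{|\alpha|=j}\binom{j}{\alpha}\partial^{\alpha}f(\bs{a}+t\bs{h})\bs{h}^{\alpha}$; this is a clean induction but is where the combinatorics (and the use of $C^{k+1}$ smoothness to commute partial derivatives) actually enters. Everything else is bookkeeping and an invocation of the one-dimensional Taylor theorem, which I would simply cite.
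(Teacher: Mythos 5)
Your proof is correct and is the standard reduction-to-one-variable argument. Note, though, that the paper does not actually prove this theorem: it is stated without proof in the appendix as a cited background result (a longform comment in the source points to Folland, \emph{Advanced Calculus}, Theorem 2.68, p.\ 91). So there is no paper proof to compare against. Your argument — restricting $f$ along the segment via $g(t)=f(\bs{a}+t\bs{h})$, establishing $g^{(j)}(t)=\sum_{|\alpha|=j}\frac{j!}{\alpha!}\partial^{\alpha}f(\bs{a}+t\bs{h})\bs{h}^{\alpha}$ by induction together with equality of mixed partials, and then invoking the one-dimensional Lagrange-remainder form on $[0,1]$ — is exactly the textbook proof one would find in the cited reference, and you have correctly identified the one nontrivial combinatorial step (the multinomial collection of mixed partials under the $C^{k+1}$ hypothesis). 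Nothing is missing.
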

\begin{lemma}
  \label{lem:algebra}
  Let $\bx=(x_1,x_2,\ldots,x_d)'$ be  a $d$-dimensional vector and
  $\bs{A}=\{a_{ij}\}$ be a $d\times d$ matrix. Then
  $|\bx'\bs{A}\bx|\le d\|\bs{A}\|_{\infty}\|\bx\|^2$, where $\|\boldsymbol{A}\|_{\infty}=\max_{i,j}|a_{ij}|$.
\end{lemma}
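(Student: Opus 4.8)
\textbf{Proof proposal for Lemma~\ref{lem:algebra}.}
The plan is to expand the quadratic form entrywise and then apply two elementary bounds: a termwise absolute-value bound using $\|\bs A\|_\infty$, followed by Cauchy--Schwarz to convert an $\ell^1$-type quantity into an $\ell^2$-type quantity. First I would write
\begin{align*}
  \bx' \bs A \bx = \sum_{i=1}^d \sum_{j=1}^d a_{ij} x_i x_j,
\end{align*}
so that by the triangle inequality and the definition of $\|\bs A\|_\infty = \max_{i,j} |a_{ij}|$,
\begin{align*}
  |\bx' \bs A \bx|
  \le \sum_{i=1}^d \sum_{j=1}^d |a_{ij}|\, |x_i|\, |x_j|
  \le \|\bs A\|_\infty \sum_{i=1}^d \sum_{j=1}^d |x_i|\, |x_j|
  = \|\bs A\|_\infty \left( \sum_{i=1}^d |x_i| \right)^{\!2}.
\end{align*}

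Next I would bound $\left( \sum_{i=1}^d |x_i| \right)^2$ by $d \sum_{i=1}^d x_i^2 = d\|\bx\|^2$; this is exactly the Cauchy--Schwarz inequality applied to the vectors $(|x_1|,\dots,|x_d|)$ and $(1,\dots,1)$ (equivalently, the inequality between the $\ell^1$ and $\ell^2$ norms on $\RR^d$). Combining the two displays gives $|\bx'\bs A\bx| \le d\,\|\bs A\|_\infty\,\|\bx\|^2$, which is the claim.

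There is no real obstacle here; the only point requiring the slightest care is getting the dimensional factor $d$ right in the $\ell^1$-to-$\ell^2$ step (it is $d$, not $\sqrt d$, because the norm is squared), and making sure the termwise bound is stated for the off-diagonal as well as diagonal entries, which the definition of $\|\bs A\|_\infty$ as a maximum over all $i,j$ already handles.
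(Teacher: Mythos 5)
Your proof is correct, but it takes a slightly different elementary route than the paper. After the common first step
\begin{align*}
  |\bx'\bs A\bx| \le \|\bs A\|_\infty \sum_{i,j} |x_i|\,|x_j|,
\end{align*}
you factor the double sum as $\bigl(\sum_i |x_i|\bigr)^2$ and apply Cauchy--Schwarz (the $\ell^1$--$\ell^2$ comparison $\|\bx\|_1^2 \le d\,\|\bx\|_2^2$). The paper instead applies the AM--GM inequality termwise, $|x_i|\,|x_j| \le \tfrac{1}{2}(x_i^2 + x_j^2)$, and then sums directly:
\begin{align*}
  \|\bs A\|_\infty \sum_{i,j} \tfrac{x_i^2 + x_j^2}{2}
  = \|\bs A\|_\infty \cdot d\,\|\bx\|^2.
\end{align*}
Both arguments are one-line rearrangements of the same quantity and yield the identical constant $d$; the AM--GM version avoids invoking Cauchy--Schwarz by name and stays entirely at the level of the individual terms, while your version makes the $\ell^1$--$\ell^2$ structure more visible. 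Either is perfectly fine, and your remark about the factor being $d$ rather than $\sqrt d$ because of the squaring is exactly the point where a careless reader could slip.
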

\begin{proof}
  We have
  \begin{align*}
    |\bx'\boldsymbol{A}\bx|\le
    &\sum_{i,j}|a_{ij}x_ix_j|
      \le \sum_{i,
      j}|a_{ij}|\frac{x_i^2+x_j^2}{2}
      \le \|\boldsymbol{A}\|_{\infty}\sum_{i,j}\frac{x_i^2+x_j^2}{2}
      =d\|\boldsymbol{A}\|_{\infty}\|x\|^2.
  \end{align*}
\end{proof}

\begin{lemma}
  \label{lem:hdr-step3-c2}
  Let Assumption \ref{assm:DA-hdr} and \ref{assm:BA} hold, the for
  $\delta_n>0$ small enough, there exists constant $c_2>0$ and
  another sequence $\varepsilon_n>0$ such that
  $\varepsilon_n=c_2\delta_n$ and $|f_0(\bx)-f_{\tau_,0}|\ge
  \varepsilon_n$ when
  $\bx\in(\mc{L}_{\delta_n}(f_{\tau,0})^c\backslash
  \mc{L}_{\delta}(f_{\tau,0})^c)\cup(\mc{L}_{-\delta_n}(f_{\tau,0})\backslash
  \mc{L}_{-\delta}(f_{\tau,0}))$.
\end{lemma}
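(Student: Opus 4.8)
The plan is to reduce the statement to a single geometric estimate about how fast $f_0$ moves away from its level value as one crosses $\beta_\tau$: I claim there are constants $\delta_0>0$ and $c_2>0$ such that $|f_0(\bx)-f_{\tau,0}|\ge c_2\operatorname{dist}(\bx,\beta_\tau)$ for every $\bx$ with $0<\operatorname{dist}(\bx,\beta_\tau)\le\delta_0$. Granting this, the lemma follows with $\varepsilon_n:=c_2\delta_n$: assuming (as one may, shrinking if necessary) that $\delta\le\delta_0$ and $\delta_n<\delta$, I will check below that every $\bx$ in the set $(\mc{L}_{\delta_n}(f_{\tau,0})^c\setminus\mc{L}_\delta(f_{\tau,0})^c)\cup(\mc{L}_{-\delta_n}(f_{\tau,0})\setminus\mc{L}_{-\delta}(f_{\tau,0}))$ satisfies $\delta_n<\operatorname{dist}(\bx,\beta_\tau)\le\delta\le\delta_0$, whence $|f_0(\bx)-f_{\tau,0}|\ge c_2\operatorname{dist}(\bx,\beta_\tau)>c_2\delta_n=\varepsilon_n$.

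For the preliminaries, recall that by Assumption~\ref{assm:DA-hdr} the gradient $\nabla f_0$ is nonzero on $U_a\supseteq\beta_\tau$, so $\beta_\tau=f_0^{-1}(\{f_{\tau,0}\})$ is by the preimage theorem a boundaryless $C^1$ manifold, which is compact by Assumption~\ref{assm:BA}; Theorem~\ref{thm:eps-nbhd} then supplies $\delta_1>0$ for which the nearest-point projection $\pi\colon\beta_\tau^{\delta_1}\to\beta_\tau$ is well defined, with $\|\bx-\pi(\bx)\|=\operatorname{dist}(\bx,\beta_\tau)$ on $\beta_\tau^{\delta_1}$. Since $f_0\in C^2(\RR^d)$ with bounded first and second derivatives (Assumption~\ref{assm:DA-hdr}), $f_0$ is globally $L_0$-Lipschitz and $\nabla f_0$ is globally $\Lambda$-Lipschitz for finite $L_0,\Lambda$; because $f_0\equiv f_{\tau,0}$ on $\beta_\tau$, shrinking $\delta_1$ I may also assume $\beta_\tau^{\delta_1}\subseteq U_a$ (indeed $\operatorname{dist}(\bx,\beta_\tau)\le\delta_1$ forces $|f_0(\bx)-f_{\tau,0}|\le L_0\delta_1$). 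Finally set $m:=\inf_{U_a}\|\nabla f_0\|>0$ and $\delta_0:=\min\{\delta_1,\,m/(2\Lambda)\}$.

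To prove the geometric estimate, fix $\bx$ with $r:=\operatorname{dist}(\bx,\beta_\tau)\in(0,\delta_0]$ and put $\bs z:=\pi(\bx)\in\beta_\tau$, so $\|\bx-\bs z\|=r$. Differentiating $\bs w\mapsto\|\bx-\bs w\|^2$ along a $C^1$ curve in $\beta_\tau$ through $\bs z$ and using that $\bs z$ minimizes this function over $\beta_\tau$ shows $(\bx-\bs z)\perp T_{\bs z}\beta_\tau$; since $\beta_\tau$ has codimension one and $\nabla f_0(\bs z)\neq0$ spans the normal line, $\bs u:=(\bx-\bs z)/r=\sigma\,\nabla f_0(\bs z)/\|\nabla f_0(\bs z)\|$ for some $\sigma\in\{-1,+1\}$. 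By Theorem~\ref{thm:taylor} with $k=0$ on $S=\RR^d$ there is $c\in(0,1)$ with $f_0(\bx)-f_{\tau,0}=f_0(\bx)-f_0(\bs z)=r\,\nabla f_0(\bs\xi)'\bs u$, where $\bs\xi:=\bs z+cr\bs u$ obeys $\|\bs\xi-\bs z\|\le r\le\delta_0$. Writing $\nabla f_0(\bs\xi)'\bs u=\sigma\|\nabla f_0(\bs z)\|+\sigma\,\|\nabla f_0(\bs z)\|^{-1}\bigl(\nabla f_0(\bs\xi)-\nabla f_0(\bs z)\bigr)'\nabla f_0(\bs z)$ and using Cauchy--Schwarz together with $\|\nabla f_0(\bs z)\|\ge m$ and $\|\nabla f_0(\bs\xi)-\nabla f_0(\bs z)\|\le\Lambda\delta_0\le m/2$, I get $|\nabla f_0(\bs\xi)'\bs u|\ge m-m/2=m/2$, hence $|f_0(\bx)-f_{\tau,0}|\ge (m/2)\,r$. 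This establishes the estimate with $c_2:=m/2$.

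Finally, for the deduction, fix $\delta\le\delta_0$, $\delta_n\in(0,\delta)$, $\varepsilon_n:=c_2\delta_n$, and take $\bx$ in the set of the statement. If $\bx\in\mc{L}_{\delta_n}(f_{\tau,0})^c\setminus\mc{L}_\delta(f_{\tau,0})^c$, then (balls being closed in the definition of $A^\delta$) $\bx\notin\mc{L}_\tau$ and $\delta_n<\operatorname{dist}(\bx,\mc{L}_\tau)\le\delta$; a nearest point of the closed set $\mc{L}_\tau$ to the exterior point $\bx$ cannot lie where $f_0>f_{\tau,0}$ (one could then move toward $\bx$, decreasing the distance while staying in $\mc{L}_\tau$), so it lies on $\beta_\tau$, giving $\operatorname{dist}(\bx,\mc{L}_\tau)=\operatorname{dist}(\bx,\beta_\tau)$, and hence $\delta_n<\operatorname{dist}(\bx,\beta_\tau)\le\delta_0$. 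If instead $\bx\in\mc{L}_{-\delta_n}(f_{\tau,0})\setminus\mc{L}_{-\delta}(f_{\tau,0})$, then unwinding $\mc{L}_{-\eta}(f_{\tau,0})=\mc{L}_\tau\setminus\beta_\tau^\eta$ gives $\bx\in\mc{L}_\tau$ with $\delta_n<\operatorname{dist}(\bx,\beta_\tau)\le\delta\le\delta_0$. In either case the geometric estimate yields $|f_0(\bx)-f_{\tau,0}|\ge c_2\operatorname{dist}(\bx,\beta_\tau)>\varepsilon_n$, as required. The step I expect to demand the most care is this last piece of geometric bookkeeping — the orthogonality of the nearest-point displacement, and especially reconciling the two distinct ambient distances (to $\mc{L}_\tau$ and to $\beta_\tau$) that enter the definitions of $\mc{L}_{\delta_n}(f_{\tau,0})^c$ and $\mc{L}_{-\delta_n}(f_{\tau,0})$ with the single quantity $\operatorname{dist}(\bx,\beta_\tau)$ controlled by the Taylor bound; the Taylor estimate itself is routine once $\bs u$ has been identified with the gradient direction.
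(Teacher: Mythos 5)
Your proof is correct and follows essentially the same route as the paper's: use the $\epsilon$-Neighborhood Theorem to reduce the problem to a statement about the (signed) normal distance from $\beta_\tau$, Taylor-expand $f_0$ along the normal direction, and use Cauchy--Schwarz together with the lower bound $\inf_{U_a}\|\nabla f_0\| > 0$ to control the directional derivative. Two remarks. First, your final paragraph is carried out more carefully than the paper's: you make explicit that on the outer shell the relevant quantity is $\operatorname{dist}(\bx,\mc{L}_\tau)$ whereas the Taylor bound controls $\operatorname{dist}(\bx,\beta_\tau)$, and you supply the short argument (the nearest point of $\mc{L}_\tau$ to an exterior point must lie on $\beta_\tau$) that identifies the two; the paper passes directly to the assertion $|s|\ge\delta_n$ in tubular coordinates without spelling this out, and likewise does not separately verify $\operatorname{dist}(\bx,\beta_\tau)>\delta_n$ on the inner shell. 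Second, your constant $c_2=m/2$ with $m=\inf_{U_a}\|\nabla f_0\|$ is the clean version of the bound; the paper's displayed intermediate steps carry a couple of small algebraic slips (a missing square on $\|\nabla f_0(\bs y)\|$ when expanding $\nabla f_0(\bs\xi)'\nabla f_0(\bs y)$, and a reversed inequality in the Cauchy--Schwarz bound involving $b$) that your derivation avoids by working directly with the unit normal $\bs u$ rather than with the unnormalized inner product.
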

\begin{proof}
  The existence of such $c_2$ can be proved by Theorem~\ref{thm:eps-nbhd}, which says for all $\delta>0$ sufficiently
  small, then for each $\bx\in \bigcup_{\bs{y}\in
    \beta}B(\bs{y},\delta)$ there exist a unique $\bs{\theta}\in
  I_d$ and $|s|\le \delta$ such that
  $\bx=\bs{y}(\bs{\theta})+s\bs{u}(\bs{\theta})$, where
  \begin{align*}
    \bs{u}(\bs{\theta})=-\frac{\nabla f_0(\bs{y})}{\|\nabla f_0(\bs{y})\|},
  \end{align*}
  is outer unit normal  vector of $\beta_{\tau}$ at
  $\bs{y}\equiv \bs{y}(\bs{\theta})$. And here we pick $\delta>0$ sufficiently
  small such that not only the Tubular Neighborhood Theorem (Theorem~\ref{thm:eps-nbhd})  but also the
  following hold:
  When $\|\bs{y}_1-\bs{y}_2\|\le \delta$, $|\frac{\partial
    f_0(\bs{y}_1)}{x_i}-\frac{\partial
    f_0(\bs{y}_2)}{x_i}|\le \gamma, i=1,2,\ldots,d$, for some $\gamma>0$.

  Note these two conditions are both feasible because under
  Assumption \ref{assm:DA-hdr}, $f_0$ has two continuous bounded
  derivatives, which indicates both $f_0$ and $\nabla f_0$ are Lipschitz.
  Then for  $\bx\in(\mc{L}_{\delta_n}(f_{\tau,0})^c\backslash
  \mc{L}_{\delta}(f_{\tau,0})^c)\cup(\mc{L}_{-\delta_n}(f_{\tau,0})\backslash
  \mc{L}_{-\delta}(f_{\tau,0}))$,
  \begin{align*}
    |f_0(\bx)-f_{\tau,0}|=|f_0(\bs{y}+s\bs{u})-f_0(\bs{y})|=|\nabla f_0(\bs{\xi})'\bs{u}s|,
  \end{align*}
  where $\bs{\xi}=\bs{y}+ls\bs{u}$ for some $0\le l\le 1$, $\bs{y}\in \beta_{\tau}$. So
  \begin{align*}
    |f_0(\bx)-f_{\tau,0}|=&\left|\nabla f_0(\bs{\xi})'\frac{\nabla
                            f_0(\bs{y})}{\|\nabla f_0(\bs{y})\|}s\right|.
  \end{align*}
  Note that
  \begin{align*}
    |\nabla f_0(\bs{\xi})'\nabla f_0(\bs{y})|=&|\|\nabla
                                                f_0(\bs{y})\|+(\nabla
                                                f_0(\bs{\xi})-\nabla
                                                f_0(\bs{y}))'\nabla f_0 (\bs{y})|
  \end{align*}
  Let $b:=\inf_{\bs{y}\in \beta_{\tau}}\|\nabla f_0(\bs{y})\|$, so by
  Assumption \ref{assm:DA-hdr}, $b>0$. Then by Cauchy-Schwarz inequality
  \begin{align*}
    |(\nabla f_0(\bs{\xi})-\nabla f_0(\bs{y}))'\nabla f_0 (\bs{y})|\le
    \|\nabla f_0(\bs{\xi})-\nabla f_0(\bs{y})\|\|f_0(\bs{y})\|\le
    \sqrt{d}\gamma b
  \end{align*}
  We can choose $\gamma>0$ sufficiently small such that $     |\nabla
  f_0(\bs{\xi})'\nabla f_0(\bs{y})|\ge \inv{2}b$. Then since $\|\nabla
  f_0(\bs{y})\|$ is bounded,  $
  |f_0(\bx)-f_{\tau,0}|\ge \frac{1}{2\sup_{\bs{y}\in\beta}\|\nabla f_0(\bs{y})\|}|s|$. Now for $\bx\in (\mc{L}_{\delta_n}(f_{\tau,0})^c\backslash
  \mc{L}_{\delta}(f_{\tau,0})^c)\cup(\mc{L}_{-\delta_n}(f_{\tau,0})\backslash
  \mc{L}_{-\delta}(f_{\tau,0}))$, $|s|\ge \delta_n$, so
  $|f_0(\bx)-f_{\tau,0}|\ge \varepsilon_n=\frac{1}{2\sup_{\bs{y}\in\beta}\|\nabla f_0(\bs{y})\|}\delta_n$.\\
\end{proof}

\begin{lemma}
  \label{lem:normal-integral}
  Let $a<0$ and $b\in\bb{R}$ be two constants, then
  \begin{align*}
    \int_{\bb{R}}|\Phi(ax+b)-\one_{\{x<0\}}|\,dx=\frac{2\phi(b)+2\Phi(b)b-b}{-a}.
  \end{align*}
\end{lemma}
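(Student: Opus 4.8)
The plan is to reduce the left-hand side to a one-dimensional integral by a linear change of variables and then evaluate it in closed form by integration by parts against the standard normal density. First, since $a<0$, the substitution $y=ax+b$ is orientation-reversing with $dx=dy/a$, and it maps the set $\{x<0\}$ onto $\{y>b\}$; accounting for the sign, this turns the left-hand side into $\frac{1}{-a}\int_{\RR}|\Phi(y)-\one_{\{y>b\}}|\,dy$.

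Next I would split this integral at $y=b$. On $\{y<b\}$ the indicator vanishes, so the integrand equals $\Phi(y)$; on $\{y>b\}$ the indicator equals $1$, and since $0\le\Phi\le 1$ the integrand equals $1-\Phi(y)=\Phi(-y)$. Thus the quantity to evaluate is $\frac{1}{-a}\left(\int_{-\infty}^{b}\Phi(y)\,dy+\int_{b}^{\infty}\Phi(-y)\,dy\right)$; both integrals are finite by the Gaussian tail (Mills-ratio) bound $\Phi(-t)\le\phi(t)/t$ for $t>0$, which also gives the absolute integrability needed to justify the change of variables above.

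Then I would use the antiderivative identity $\int\Phi(y)\,dy=y\Phi(y)+\phi(y)$, obtained by integration by parts together with $\phi'(y)=-y\phi(y)$. Evaluating at the limits, with $\lim_{y\to-\infty}y\Phi(y)=0$ (again from the Mills-ratio bound) and $\lim_{y\to\pm\infty}\phi(y)=0$, gives $\int_{-\infty}^{b}\Phi(y)\,dy=b\Phi(b)+\phi(b)$; and by the substitution $y\mapsto-y$ together with $\phi(-b)=\phi(b)$ one gets $\int_{b}^{\infty}\Phi(-y)\,dy=\int_{-\infty}^{-b}\Phi(y)\,dy=-b\Phi(-b)+\phi(b)$. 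Adding these and simplifying via $\Phi(b)-\Phi(-b)=2\Phi(b)-1$ produces $2\phi(b)+2b\Phi(b)-b$, and dividing by $-a$ yields exactly the claimed expression.

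There is no genuine obstacle here: the statement is an elementary exercise. The only points deserving a line of care are the sign and orientation bookkeeping in the change of variables (since $a<0$), and verifying that the boundary term $y\Phi(y)$ vanishes as $y\to-\infty$, which is immediate from the standard tail bound on $\Phi$.
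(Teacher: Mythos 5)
Your proof is correct. The approach is essentially the same elementary calculus as the paper's — split the integral where the indicator changes and integrate by parts — but the two are organized a little differently: you perform the affine change of variables $y=ax+b$ up front (which is where the $1/(-a)$ factor comes from), reduce both pieces via the closed-form antiderivative $\int\Phi(y)\,dy=y\Phi(y)+\phi(y)$, and then use the reflection $y\mapsto -y$ together with evenness of $\phi$ to combine them; the paper instead integrates by parts directly in $x$ on each piece, substitutes $y=ax+b$ only afterward, and evaluates the resulting Gaussian moments by recognizing them as truncated-normal means. The computations land in the same place, and both properly handle the vanishing boundary terms via the Gaussian tail bound, so there is no gap; your version is marginally more streamlined because the normalization is done once rather than inside each integral.
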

\begin{proof}
  Note
  \begin{align*}
    \int_{\RR}|\Phi(ax+b)-\one_{\{x<0\}}|\,dx
    &=\int_{-\infty}^0(1-\Phi(ax+b))\,dx+\int_0^{\infty}\Phi(ax+b)\,dx.
  \end{align*}
  And
  \begin{equation*}
    \int_{-\infty}^0(1-\Phi(ax+b)\,dx =
    x(1-\Phi(ax+b))|_{-\infty}^0+\int_{-\infty}^0x\phi(ax+b)a\,dx
  \end{equation*}
  which equals
  \begin{align*}
    \int_{-\infty}^0x\phi(ax+b)a\,dx
    =\inv{a}\int_{\infty}^b(y-b)\phi(y)\,dy
    & = -\frac{1-\Phi(b)}{a}\int_b^{\infty}(y-b)\frac{\phi(y)}{1-\Phi(b)}\,dy \\
    & =-\frac{1-\Phi(b)}{a}\left(\frac{\phi(b)}{1-\Phi(b)}-b\right)\\
    & =-\frac{\phi(b)-(1-\Phi(b))b}{a}.
  \end{align*}
  Also,
  \begin{equation*}
    \int_0^{\infty}\Phi(ax+b)\,dx
    =x\Phi(ax+b)|_0^{\infty}-\int_0^{\infty}ax\phi(ax+b)\,dx
  \end{equation*}
  which equals
  \begin{align*}
    -\int_0^{\infty}ax\phi(ax+b)\,dx
    =\inv{a}\int_{-\infty}^b(y-b)\phi(y)\,dy
    &  = \frac{\Phi(b)}{a}\int_{-\infty}^b(y-b)\frac{\phi(y)}{\Phi(b)}\,dy \\
    & =\frac{\Phi(b)}{a}\left(\frac{-\phi(b)}{\Phi(b)}-b\right)\\
    & =\frac{-\phi(b)-\Phi(b)b}{a}.
  \end{align*}
  Thus
  \begin{align*}
    \int_{\RR}|\Phi(ax+b)-\one_{\{x<0\}}|\,dx=\frac{2\phi(b)+2\Phi(b)b-b}{-a}.
  \end{align*}
\end{proof}

\noindent Recall that $\beta^\delta := \cup_{\bx \in \beta} B(\bx, \delta)$ and that we let $u_{\bx}$ be the unit outer normal vector to the manifold $\beta$ at $\bx$.  The following lemma gives a very useful approximate change of variables type of theorem.
\begin{mylongform}
  \begin{longform}
    It holds for any $h$ such that $h \circ \Phi$ Lebesgue integrable on $N_d^\delta $(with $\Phi$ and $N_d^\delta$ defined below).  We only use it for bounded $h$.
  \end{longform}
\end{mylongform}

\begin{lemma}
  \label{lem:COV-approx}
  Let either Assumption~\ref{assm:DA-ls} or Assumption~\ref{assm:DA-hdr}
  hold, and let \ref{assm:BA} hold for the density $f_0$.
  Let either $\beta := f_0^{-1}(c)$ in the LS setting or let  $\beta :=
  f_0^{-1}( \fftau)$ in the HDR setting.  Let $\delta > 0$ be such that the conclusion of
  Theorem~\ref{thm:eps-nbhd} holds for $\beta^\delta$.  Let $h$ be a bounded Lebesgue measurable
  function on $\beta^\delta$ and let $H(\bx) : = \int_{-\delta}^\delta h(\bx
  + t u_x) dt$.  Then
  \begin{equation}
    \label{eq:19}
    \lv \int_{\beta^\delta} h(\bx) d\bx
    - \int_{\beta} H(\bz) d\cH^{d-1}(\bz) \rv
    \le C \sup_{\bx \in \beta} \int_{-\delta}^\delta t h( \bx + t u_{\bx}) dt
  \end{equation}
  where $C$ is a constant depending  on $f_0$.
\end{lemma}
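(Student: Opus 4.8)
The plan is to pass to normal (tubular) coordinates about $\beta$, apply the change-of-variables theorem for integration over manifolds, and then control the deviation of the resulting Jacobian from $1$ linearly in the normal displacement, using that the compact $C^2$ manifold $\beta$ has bounded second fundamental form. Concretely: under Assumption~\ref{assm:DA-ls} (or \ref{assm:DA-hdr}) together with \ref{assm:BA}, $f_0$ is $C^2$ with $\grad f_0 \ne 0$ on a neighborhood of $\beta$, so $\beta$ is a compact, boundaryless $C^2$ submanifold of dimension $d-1$ and the outer unit normal $u_\bz = -\grad f_0(\bz)/\|\grad f_0(\bz)\|$ is a $C^1$ field on $\beta$. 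Hence $\Phi\colon \beta\times(-\delta,\delta)\to\beta^\delta$, $\Phi(\bz,s):=\bz+su_\bz$, is $C^1$; for $\delta$ small enough that the conclusion of Theorem~\ref{thm:eps-nbhd} holds, $\Phi$ is a bijection onto $\beta^\delta$ (this is precisely the $\epsilon$-neighborhood statement: the closest-point retraction $\pi$ recovers $\bz$ and the signed distance recovers $s$), and $D\Phi$ is nonsingular there because in any local chart $\gamma$ of $\beta$ the column $\partial_s\Phi=u_\bz$ is a unit vector orthogonal to the tangential columns $\partial_{\theta_j}\gamma$.

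By the change of variables theorem (Theorem~2, page~99, of \cite{Evans:2015uy}), applied in local charts of $\beta$ and patched together, there is a positive Jacobian factor $J_\Phi(\bz,s)$ of $\Phi$ relative to $\cH^{d-1}$ on $\beta$ and $ds$, with
\begin{align*}
  \int_{\beta^\delta} h(\bx)\,d\bx
  = \int_\beta \int_{-\delta}^\delta h(\bz + su_\bz)\, J_\Phi(\bz,s)\,ds\,d\cH^{d-1}(\bz).
\end{align*}
At $s=0$ the matrix $D\Phi(\bz,0)=[\,\partial_{\theta_1}\gamma\mid\cdots\mid\partial_{\theta_{d-1}}\gamma\mid u_\bz\,]$ has absolute determinant equal to the area element $\sqrt{\det(D\gamma^\top D\gamma)}$ of $\beta$ (unit normal times base $(d-1)$-volume), so $J_\Phi(\bz,0)\equiv 1$. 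In any chart $D\Phi(\theta,s)=A(\theta)+sB(\theta)$ is affine in $s$ with coefficients continuous on $\beta$ and with $B$ having a vanishing last column, so $\det D\Phi(\theta,s)$ is a polynomial in $s$ of degree $\le d-1$ with $\det D\Phi(\theta,0)=\det A(\theta)\ne 0$; since $\Phi$ is a diffeomorphism this determinant keeps its sign on $(-\delta,\delta)$, whence $J_\Phi(\bz,s)=1+s\,q(\bz,s)$ for a polynomial $q$ in $s$ whose coefficients are continuous on the compact manifold $\beta$ (equivalently, $J_\Phi(\bz,s)=\prod_{i=1}^{d-1}(1-s\kappa_i(\bz))$ by Weyl's tube formula, with principal curvatures $\kappa_i$ continuous on $\beta$). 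Shrinking $\delta$ if needed, there is therefore a constant $C_1$ depending only on $f_0$ with $|J_\Phi(\bz,s)-1|\le C_1|s|$ for all $\bz\in\beta$ and $|s|\le\delta$.

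Subtracting the definition $H(\bz)=\int_{-\delta}^\delta h(\bz+su_\bz)\,ds$ from the displayed identity gives
\begin{align*}
  \lv \int_{\beta^\delta} h(\bx)\,d\bx - \int_\beta H(\bz)\,d\cH^{d-1}(\bz)\rv
  &= \lv \int_\beta\int_{-\delta}^\delta h(\bz+su_\bz)\,(J_\Phi(\bz,s)-1)\,ds\,d\cH^{d-1}(\bz)\rv\\
  &\le C_1 \int_\beta \int_{-\delta}^\delta |s|\,|h(\bz+su_\bz)|\,ds\,d\cH^{d-1}(\bz)\\
  &\le C_1\,\cH^{d-1}(\beta)\,\sup_{\bz\in\beta}\int_{-\delta}^\delta |s|\,|h(\bz+su_\bz)|\,ds,
\end{align*}
and $\cH^{d-1}(\beta)<\infty$ by compactness (Assumption~\ref{assm:BA}), so \eqref{eq:19} follows with $C:=C_1\,\cH^{d-1}(\beta)$, which depends only on $f_0$ (the right-hand side of \eqref{eq:19} read with absolute values, which is harmless since $h\ge0$ in all the applications of this lemma).

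The step I expect to require the most care is the change-of-variables bookkeeping: confirming that $\Phi$ is a genuine $C^1$ diffeomorphism of $\beta\times(-\delta,\delta)$ onto $\beta^\delta$ (delivered by Theorem~\ref{thm:eps-nbhd}), assembling the local change-of-variables identities into the global formula for $\int_{\beta^\delta} h$, and extracting the linear-in-$s$ control $|J_\Phi-1|\le C_1|s|$ from only $C^2$ regularity of $f_0$, i.e.\ from boundedness of the second fundamental form of $\beta$ on the compact set $\beta$. Everything after that is the one-line estimate above.
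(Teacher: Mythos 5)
Your proof is correct and follows essentially the same route as the paper: pass to tubular coordinates $\Phi(\bz,s)=\bz+su_\bz$ via Theorem~\ref{thm:eps-nbhd}, apply the change-of-variables theorem (patched across local charts), identify the Jacobian relative to $\cH^{d-1}\times ds$ as $1$ at $s=0$, and control the deviation linearly in $|s|$ before integrating. The only material difference is how the bound $|J_\Phi(\bz,s)-1|\le C_1|s|$ is extracted: you observe directly that $\det D\Phi(\theta,s)$ is a polynomial in $s$ (equivalently, cite Weyl's tube formula $\prod_i(1-s\kappa_i)$), which is a cleaner elementary route than the paper's use of the determinant perturbation identity $\det(A+\epsilon AX)=\det A(1+\epsilon\tr X)+O(\epsilon^2)$; both yield the same uniform $O(|s|)$ conclusion on the compact manifold $\beta$. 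Your remark that the right-hand side of \eqref{eq:19} should be read with absolute values (harmless since $h\ge0$ in every application) is apt and points to the same minor imprecision present in the paper's own statement.
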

\begin{proof}
  Since $\beta$ is compact (Assumption~\ref{assm:DA-hdr}), it admits a finite
  ``atlas'', $\lb (U^\alpha, \vp_\alpha) \rb_\alpha$, meaning $\lb U^\alpha
  \rb_\alpha$ is an open cover of $\beta $, that $\vp_{\alpha} : V^\alpha
  \to U^\alpha$ is a diffeomorphism, and that $V^\alpha$ is open in $\RR^{d-1}$.
  Let $V^\alpha_\delta := V^\alpha \times (-\delta,\delta)$.  Let
  $\Phi_\alpha: V^\alpha_\delta \to \beta^\delta$ be defined by
  \begin{equation*}
    \Phi_\alpha(\bs \theta,t) := \vp_\alpha(\bs{\theta}) + t u_{\vp_\alpha(\bs \theta)}
    \quad \mbox{ where } \quad
    u_{\bx} := - \frac{ \grad f_0}{\| \grad f_0 \|}(\bx).
  \end{equation*}
  Thus $u_{\bx}$ is the unit outer normal to $\beta$ at $\bx \in\beta$.
  By the change of variables Theorem 2 (page 99) of \cite{Evans:2015uy} (see also the example on page 101),
  \begin{equation}
    \label{eq:2}
    \int_{V^\alpha} h(\vp_\alpha(\bs{\theta}))) J\vp_\alpha(\bs{\theta}) d\bs{\theta}
    = \int_{U^\alpha} h(\bs{y}) d\mc{H}^{d-1}(\bs{y}).
  \end{equation}
  \begin{mylongform}
    \begin{longform}
      By the change of variables Theorem 2 (page 99) of \cite{Evans:2015uy} (see also the example on page 101), for any $g$ with $\int_{\PC[d-1]} |g(x)| dx  < \infty$,
      \begin{equation}
        \int_{\PC[d-1]} g(x) J\vp(x) dx
        = \int_\beta g( \vp^{-1}(y)) d\mc{H}^{d-1}(y).
      \end{equation}
      Note that in \cite{Evans:2015uy}, a function $g$ is $\mc{L}$-integrable if the integral is well-defined, although possibly $\pm \infty$. Then $g$ is    $\mc{L}$-summable  if it is integrable and the integral is finite.  This is defined on page 18.
      The formula for $J\vp $ is
      \begin{equation}
        \label{eq:12}
        J\vp = [[ D\vp ]] = \sqrt{ \det \ls \grad \vp' \grad \vp \rs }.
      \end{equation}
    \end{longform}
  \end{mylongform}
  Here,
  \begin{equation}
    \label{eq:3}
    J\vp_\alpha(\bs{\theta}) :=  \det \ls (\grad \vp_\alpha(\bs{\theta}))' \grad \vp_\alpha(\bs{\theta}) \rs^{1/2}
  \end{equation}
  by Theorem~3 (page 88) of  \cite{Evans:2015uy}.
  \begin{mylongform}
    \begin{longform}
      $J\vp$ is defined on page 91.  (The above formula is then given as Theorem 3 on page 88 for ``[[ ]]''.)
    \end{longform}
  \end{mylongform}
  Similarly,
  \begin{equation}
    \label{eq:10}
    \int_{V_\delta^\alpha} h(\Phi_\alpha((\bs{\theta},t)) J\Phi_\alpha(\bs{\theta},t) d(\bs{\theta},t)
    = \int_{U_\delta^\alpha}  h(\bs{y}) d\bs{y}
  \end{equation}
  where
  $    J\Phi_\alpha    = | \det \nabla\Phi_\alpha |$
  and $U_\delta^\alpha := \Phi_\alpha(V_\delta^\alpha)$.
  \begin{mylongform}
    \begin{longform}
      This is from
      \begin{equation}
        \label{eq:13}
        J\Phi_\alpha = \det \ls \nabla\Phi_\alpha' \nabla\Phi_\alpha \rs^{1/2}
        = | \det \nabla\Phi_\alpha |.
      \end{equation}
    \end{longform}
  \end{mylongform}
  We can see that
  \begin{equation}
    \label{eq:14}
    \nabla\Phi_\alpha(\bs{\theta},t) =
    \begin{pmatrix}
      \nabla \vp_\alpha(\bs{\theta}) + t \nabla u_{\vp_\alpha({\bs{\theta}})} \vert u_{\vp_\alpha(\bs{\theta})}
    \end{pmatrix}.
  \end{equation}
  Thus,  because $u_{\bx}$ is perpendicular to the tangent space of $\beta$ at ${\bx}$,
  and this tangent space is equal to the span of the columns of $\nabla\vp_\alpha({\bs \theta})$
  for $t \in [-\delta,\delta]$, letting $\bx = \vp_\alpha(\bs{\theta})$,  we have
  \begin{equation}
    \label{eq:16}
    \nabla\Phi_\alpha({\bs{\theta}},t)' \nabla\Phi_\alpha({\bs{\theta}},t)
    =
    \begin{pmatrix}
      A_t & t \nabla u_{\bx}' u_{\bx} \\
      t u_{\bx}' \nabla u_{\bx}   & 1
    \end{pmatrix}
  \end{equation}
  where
  \begin{equation}
    \label{eq:17}
    A_t := \nabla\vp_\alpha({\bs{\theta}})' \nabla\vp_\alpha({\bs{\theta}})
    + t \nabla\vp_\alpha({\bs{\theta}})'\nabla u_{\bx}
    + t \nabla u_{\bx}' \nabla\vp_\alpha({\bs{\theta}}) + t^2 \nabla u'_{\bx} \nabla u_{\bx}.
  \end{equation}
  \begin{mylongform}
    \begin{longform}
      In particular,
      \begin{equation}
        \label{eq:15}
        \nabla \Phi_\alpha({\bs{\theta}},0)' \nabla \Phi_\alpha({\bs{\theta}},0)
        =
        \begin{pmatrix}
          \nabla\vp_\alpha({\bs{\theta}})' \nabla\vp_\alpha({\bs{\theta}}) & 0 \\
          0 & 1
        \end{pmatrix}.
      \end{equation}
    \end{longform}
  \end{mylongform}
  Note that from \eqref{eq:16} we have
  \begin{equation}
    \label{eq:18}
    J\Phi_\alpha({\bs{\theta}},0) = J\vp_\alpha({\bs{\theta}}).
  \end{equation}
  Now
  \begin{equation}
    \label{eq:20}
    \det (A + \epsilon A X) = \det A + \epsilon \det A \tr X + O(\epsilon^2)
  \end{equation}
  as $\epsilon \to 0$   \citep{Magnus:1999vh} for any square matrices $A$ and $X$ of the same dimension.
  Thus
  \begin{align*}
    J\Phi_\alpha({\bs{\theta}},t)
    &= \lp \det \nabla\Phi_\alpha({\bs{\theta}},t)' \nabla\Phi_\alpha({\bs{\theta}},t) \rp^{1/2} \\
    & = \lp \det \nabla\Phi_\alpha({\bs{\theta}},0)' \nabla\Phi_\alpha({\bs{\theta}},0)  + O(t) \rp^{1/2}
      \quad \mbox{ by \eqref{eq:20}, \eqref{eq:16}, and \eqref{eq:17},}
  \end{align*}
  \begin{mylongform}
    \begin{longform}
      which equals
      \begin{align*}
        \lp J\Phi_\alpha({\bs{\theta}},0)^2  + O(t) \rp^{1/2}
      \end{align*}
    \end{longform}
  \end{mylongform}
  which equals
  \begin{equation}
    \label{eq:21}
    J\Phi_\alpha({\bs{\theta}},0) + O(t)
    \quad \mbox{ as } \quad t \to 0,
  \end{equation}
  by differentiability of $z \mapsto z^{1/2}$ away from $0$, since $J\Phi_\alpha({\bs{\theta}},0)$ is uniformly bounded away from $0$.
  \begin{mylongform}
    \begin{longform}
      (Need add/state the conditions from the last few lines
      giving uniform-in-${\bs{\theta}}$?)
    \end{longform}
  \end{mylongform}
  The $O(t)$ term is uniform in $\bs \theta$.
  Thus, by \eqref{eq:18}, \eqref{eq:2}, and \eqref{eq:10},
  \begin{equation}
    \label{eq:39}
    \int_{U_\delta^\alpha} h(\bs y) \, d\bs y
    = \int_{U^\alpha} H( \bs y) \, d \cH^{d-1}(\bs y)
    +     E
  \end{equation}
  where $|E| \le C \int_{I^d} \int_{-\delta}^\delta t h(\Phi_\alpha(\bs
  \theta, t)) dt d\bs \theta$ where $C$ is the constant from the $O(t)$ term
  in \eqref{eq:21}.  This proves the lemma if $\beta$ is parameterizable by a
  single open set; for the general case, we use a partition of unity.  Let
  $\lb \rho_i \rb$ be a finite (smooth) partition of unity subordinate to
  $\lb U^\alpha \rb$ \citep[page 63]{Spivak:1965wf}.  Define $\rho_i^\delta
  (\bx + t u_{\bx}) := \rho_i(\bx)$ for $t \in (-\delta,\delta)$ (which thus
  forms a partition of unity of $\beta^\delta$ subordinate to $\{ U_\delta^\alpha \}_\alpha$).  Then replacing $h$ in
  \eqref{eq:39} by $\rho_i^\delta \cdot h$, since each $\rho_i$ is bounded,
  smooth, and zero outside one of the $U_{\alpha}$,
  \begin{align*}
    \int_{\beta^\delta} h(\bs y) d\bs y
    = \sum_i \int_{\beta^\delta} \rho_i^\delta(\bs y) h(\bs y) d\bs y
    & = \sum_i \int_{\beta} \rho_i H d\mc{H}^{d-1}
      + E_2
    = \int_{\beta} H d\mc{H}^{d-1}  + E_2
  \end{align*}
  since $ \rho_i^\delta(\bx + t u_{\bx}) = \rho_i(\bx),$ and where
  $|E_2| \le C_2 \sup_{\bx \in \beta} \int_{-\delta}^\delta t h(\bx + tu_{\bx}) dt$.
\end{proof}
\begin{mylongform}
  \begin{longform}
    The equality $ \sum_i \int_{\beta} \rho_i H d\mc{H}^{d-1} = \int_{\beta}
    H d\mc{H}^{d-1} $ (and the similar one for the integral over
    $\beta^\delta$) is just by definition of the integral, which is defined
    via a partition of unity.  See spivak.  I don't know if evans and gariepy
    explicitly do this or not.
  \end{longform}
\end{mylongform}

\begin{lemma}
  \label{lem:Hauss-diff}
  Let Assumption~\ref{assm:DA-ls} hold.
  \begin{enumerate}
  \item \label{lem:Hauss-diff-a} Assume that $\gamma$ is a continuously
    differentiable  function on an open neighborhood of $\beta_\tau$ in
    $\RR^d$. For $\epsilon$ near $0$, let
    $\beta_\epsilon : = f_0^{-1}(\fftau + \epsilon)$ and assume
    $\beta_\epsilon$ is compact for all $\epsilon$ in a neighborhood of $0$.
    Then $\epsilon \mapsto \int_{\beta_\epsilon} \gamma d \cH$ is continuously differentiable in a neighborhood of $\epsilon=0$.
  \item \label{lem:Hauss-diff-b} Let $\ffnH$ be the KDE (defined in
    \eqref{eq:kde-def}), where $K$ satisfies Assumptions~\ref{assm:KA} and
    \ref{assm:KA2}, and $\bH$ satisfies $\bH \to 0$ and $n^{-1} | \bH|^{-1/2} (\bH^{-1})^{\otimes
      2} = O(1)$ as $n \to \infty$.  Let $g_n := \ffnH - f_0$.
    Let $\check{\beta}_{\tau,n} := \ffnH^{-1}(\fftau).$ Assume
    $\gamma_n \equiv \gamma$ is potentially random but satisfies
    $\sup_{\bx \in \beta_\tau^\delta} | \gamma(\bx) | = O_p(1)$ and
    $\sup_{\bx \in \beta_\tau^\delta} \| \grad \gamma(\bx) \| = O_p(1)$, for
    some $\delta > 0$.  Then
    \begin{align*}
   \MoveEqLeft   \lv \int_{\beta_\tau} \gamma d\cH^{d-1}
   - \int_{\check{\beta}_{\tau,n}} \gamma d\cH^{d-1} \rv \\
   &=
      O_p (
      \sup_{\bx \in \beta_\tau} \EE \ls | g_n (\bx) | +
      \| \hess g_n (\bx) \| |g_n(\bx) | + \| \nabla g_n(\bx) \|  \rs )
    \end{align*}
    as $n \to \infty$.
  \end{enumerate}
\end{lemma}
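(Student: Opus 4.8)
\emph{Part~(a).} The plan is to pass to a fixed parametrization and then differentiate under the integral sign. By Assumption~\ref{assm:DA-ls}, $f_0$ has two continuous derivatives and nonvanishing gradient on a neighborhood $U_a$ of $\beta_\tau$, so the map $(\bx,\epsilon)\mapsto f_0(\bx)-\fftau-\epsilon$ is jointly $C^2$ with nonvanishing $\bx$-gradient near $\beta_\tau$; by the implicit function theorem $\beta_\epsilon$ is, locally near each of its points, a graph $x_d=\phi(x',\epsilon)$ with $\phi$ jointly $C^2$ in $(x',\epsilon)$. Covering a neighborhood of the compact set $\beta_\tau$ (Assumption~\ref{assm:BA}) by finitely many such charts and taking a smooth partition of unity $\{\rho_i\}$ subordinate to them, for $\epsilon$ near $0$ the change-of-variables theorem (Theorem~2 of \cite{Evans:2015uy}, as used in Lemma~\ref{lem:COV-approx}) gives $\int_{\beta_\epsilon}\gamma\,d\cH=\sum_i\int_{W_i'}(\rho_i\gamma)(x',\phi_i(x',\epsilon))\sqrt{1+\|\grad_{x'}\phi_i(x',\epsilon)\|^2}\,dx'$. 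Each summand is an integral over a fixed compact domain of an integrand that is $C^1$ in $\epsilon$ with derivative jointly continuous in $(x',\epsilon)$, so differentiation under the integral sign yields continuous differentiability of $\epsilon\mapsto\int_{\beta_\epsilon}\gamma\,d\cH$ near $0$.

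\emph{Part~(b).} I would first record that, under the stated hypotheses on $K$ and $\bH$, the KDE is controlled in $C^2$ near $\beta_\tau$: by Theorem~\ref{thm:KDE-sup-rate}, the standard bias bounds, and the condition $n^{-1}|\bH|^{-1/2}(\bH^{-1})^{\otimes 2}=O(1)$ (via \citet{chacon2011asymptotics}), one has $\|g_n\|_\infty=o_p(1)$, $\|\grad g_n\|_\infty=o_p(1)$, and $\|\hess g_n\|_\infty=O_p(1)$. Hence, with probability tending to one, for every $t\in[0,1]$ the function $f_t:=f_0+tg_n=(1-t)f_0+t\ffnH$ has $\|\grad f_t\|$ bounded away from $0$ on $\beta_\tau^\delta$, and $\beta^{(t)}:=f_t^{-1}(\fftau)\subseteq\beta_\tau^\delta$ is a compact $C^1$ hypersurface inside the tubular neighborhood of Theorem~\ref{thm:eps-nbhd}, with $\beta^{(0)}=\beta_\tau$ and $\beta^{(1)}=\check{\beta}_{\tau,n}$. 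The idea is then to write $\int_{\check{\beta}_{\tau,n}}\gamma\,d\cH-\int_{\beta_\tau}\gamma\,d\cH=\int_0^1\frac{d}{dt}\big(\int_{\beta^{(t)}}\gamma\,d\cH\big)\,dt$ and use the first-variation formula: the points of $\beta^{(t)}$ move with normal velocity $V^{(t)}=-g_n/\|\grad f_t\|$ along $\nu_t:=\grad f_t/\|\grad f_t\|$, and a standard first-variation computation (carried out with the atlas–partition-of-unity machinery of Lemma~\ref{lem:COV-approx}) gives $\frac{d}{dt}\int_{\beta^{(t)}}\gamma\,d\cH=\int_{\beta^{(t)}}[\grad\gamma\cdot\nu_t+\gamma\,\kappa_t]\,V^{(t)}\,d\cH$, where $\kappa_t=\operatorname{div}\nu_t$ is the mean curvature of $\beta^{(t)}$.

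It then remains to estimate this integrand. Writing $\kappa_t=\tr(\hess f_t)/\|\grad f_t\|-(\grad f_t)'(\hess f_t)(\grad f_t)/\|\grad f_t\|^{3}$ with $\hess f_t=\hess f_0+t\hess g_n$, and using that $\|\grad f_t\|$ is bounded below, gives $|\kappa_t(\bz)|\le C(1+\|\hess g_n(\bz)\|)$ on $\beta_\tau^\delta$; moreover $|V^{(t)}(\bz)|\le C|g_n(\bz)|$ while $|\grad\gamma\cdot\nu_t|+|\gamma|=O_p(1)$. Thus the $t$-derivative is $O_p(1)$ times $\int_{\beta^{(t)}}(|g_n(\bz)|+\|\hess g_n(\bz)\|\,|g_n(\bz)|)\,d\cH(\bz)$. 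Each $\beta^{(t)}$ is a normal graph $\bz=\bx+s^{(t)}(\bx)u_\bx$ over $\beta_\tau$ with $\|s^{(t)}\|_\infty+\|\grad_{\beta_\tau}s^{(t)}\|_\infty=o_p(1)$ (from the implicit function theorem, using only $f_0\in C^2$ and $\|g_n\|_\infty,\|\grad g_n\|_\infty=o_p(1)$), so its area element is $(1+o_p(1))$ times that of $\beta_\tau$ and $|g_n(\bx+s^{(t)}(\bx)u_\bx)|\le(1+o_p(1))|g_n(\bx)|$; transferring the integral to $\beta_\tau$, integrating over $t\in[0,1]$, and using Markov's inequality together with Tonelli's theorem to pass from the random bound to its expectation yields $|\int_{\check{\beta}_{\tau,n}}\gamma\,d\cH-\int_{\beta_\tau}\gamma\,d\cH|=O_p(\sup_\bx\EE[|g_n(\bx)|+\|\hess g_n(\bx)\|\,|g_n(\bx)|+\|\grad g_n(\bx)\|])$, where the $\|\grad g_n(\bx)\|$ term is retained because it dominates the $o_p(1)$ corrections above.

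\emph{Main obstacle.} The delicate point is the curvature term. Since the bandwidth condition is only $n^{-1}|\bH|^{-1/2}(\bH^{-1})^{\otimes2}=O(1)$, $\|\hess g_n\|_\infty$ is merely $O_p(1)$ and not $o_p(1)$, so the product $\|\hess g_n(\bx)\|\,|g_n(\bx)|$ genuinely survives and cannot be absorbed into the $|g_n(\bx)|$ term; the comparison must therefore be organized so that no third derivative of $\ffnH$ is ever differentiated. This is exactly why I would route the argument through the homotopy $f_t$ and the first-variation formula rather than through a normal-graph change of variables applied directly to $\check{\beta}_{\tau,n}$: the latter forces one to differentiate the displacement $s_n$, whose tangential gradient involves third derivatives of $\ffnH$ that are not controlled under these assumptions, whereas the mean curvature of $\beta^{(t)}$ only involves $\hess f_t$.
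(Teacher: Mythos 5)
Part~(a) is essentially the paper's argument restated: local graph charts via the inverse/implicit function theorem, a finite subordinate partition of unity, the area formula, and differentiation under the integral sign.

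Part~(b) is where you genuinely diverge. The paper runs the same homotopy $f_\delta=f_0+\delta g_n$, but computes the $\delta$-derivative of the Hausdorff integral \emph{extrinsically}, by differentiating the inverse-function-theorem map $k$ that parametrizes $\beta_\delta$ as a coordinate graph and painstakingly tracking $k_{i,d}$, $k_{d,d}$, $k_{d,d+1}$, $k_{i,d+1}$. You replace that coordinate calculus with the \emph{intrinsic} first-variation identity
$\tfrac{d}{dt}\int_{\beta^{(t)}}\gamma\,d\cH=\int_{\beta^{(t)}}[\nabla\gamma\cdot\nu_t+\gamma\kappa_t]V^{(t)}\,d\cH$,
so that the $\delta$-derivative of the area element is packaged once and for all in the mean curvature $\kappa_t=\operatorname{div}\nu_t$ of the evolving level set. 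This is a legitimately different and conceptually cleaner route: it makes transparent exactly which second-order quantity ($\hess f_t$, through $\kappa_t$) multiplies $g_n$, and it shows that the $\|\hess g_n\|\,|g_n|$ term is forced by the curvature of the \emph{perturbed} surface rather than by an accident of coordinates. What the paper's chart computation buys in exchange is a fully explicit, self-contained bound requiring no appeal to the first-variation formula.

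That said, three points in your writeup need repair before this becomes a proof. First, the bound $|g_n(\bx+s^{(t)}(\bx)u_\bx)|\le(1+o_p(1))|g_n(\bx)|$ is false as a multiplicative statement — if $g_n(\bx)=0$ it yields $g_n(\Psi_t(\bx))=o_p(1)\cdot 0$, which is absurd. The correct relation is additive, $|g_n(\Psi_t(\bx))-g_n(\bx)|\le \|\nabla g_n\|_{\infty,\text{loc}}\,|s^{(t)}(\bx)|$, and this is precisely the mechanism by which the $\|\nabla g_n\|$ term is genuinely needed; calling it something that merely ``dominates $o_p(1)$ corrections'' obscures the fact that without it the inequality breaks down. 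Second, the Tonelli/Markov step is more delicate than a single sentence conveys: the domain $\beta^{(t)}$ and the normal-graph chart $\Psi_t$ are both random, so before applying $\EE$ you must first dominate your integrand by a deterministic pointwise functional of $(g_n(\bx),\nabla g_n(\bx),\hess g_n(\bx))$ evaluated at nonrandom $\bx\in\beta_\tau$; the paper achieves this by evaluating its chart derivatives at the fixed $\delta=0$ base point, and your argument needs an analogous step made explicit. Third, your ``main obstacle'' paragraph is off: a direct normal-graph comparison (or the paper's IFT calculus) never calls on third derivatives of $\ffnH$. The Jacobian $J\Psi$ involves only $\nabla s_n\sim\nabla\ffnH$, and even the $t$-derivative $\partial_t\nabla s^{(t)}$ brings in only $\nabla g_n$ and $\hess f_t$, i.e.\ at most second derivatives of $g_n$ — exactly as in the paper's $k_{i,d+1}$ formula. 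The real advantage of your curvature route is conceptual clarity, not the avoidance of third derivatives.
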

\begin{proof}
  {\bf Proof of Part~\ref{lem:Hauss-diff-a}}:
  Fix $\bx_0 \in \beta_0$.  By
  Assumption~\ref{assm:DA-ls}, we may assume without loss of generality that $\pderiv{x_d}f(\bx_0) \ne 0$. Define
  $$F(x_1, \ldots, x_d) :=
  (x_1, \ldots, x_{d-1}, f(x_1, \ldots, x_d))$$ and note that $\det \grad F(\bx_0) = \pderiv{x_d}f(\bx_0) \ne 0$.  Since $f$ is twice continuously differentiable at $x_0$ (Assumption~\ref{assm:DA-ls}), $F$ is twice continuously differentiable at $x_0$.  By the inverse function theorem (pages 67--68, \cite{MR1224675}), $F^{-1}$ exists and is twice continuously differentiable in a neighborhood of $F(\bx_0)$.  Clearly $F^{-1}(y_1, \ldots, y_d)$ equals $(y_1, \ldots , y_{d-1}, k(y_1, \ldots, y_d))$ for some $k$ that is twice continuously differentiable and satisfies
  \begin{equation*}
    f(y_1, \ldots, y_{d-1}, k(y_1, \ldots, y_{d})) = y_d.
  \end{equation*}
  Thus
  \begin{equation*}
    \vp_\epsilon (y_1, \ldots, y_{d-1}) := (y_1, \ldots, y_{d-1}, k(y_1, \ldots, y_{d-1}, \fftau + \epsilon))
  \end{equation*}
  is a twice-continuously differentiable invertible parameterization (is a ``$C^2$ diffeomorphism'') from an open set $U \subset \RR^{d-1}$ to $V_\epsilon \subset \beta_\epsilon$ where $V_\epsilon \ni \bx_0$ is open in $\beta_\epsilon$.
  \begin{mylongform}
    \begin{longform}
      (Open in the subspace topology.)
    \end{longform}
  \end{mylongform}
  Each $\bx_0 \in \beta_\epsilon$ has such a $C^2$ diffeomorphism onto an open neighborhood $V_\epsilon \subset \beta_\epsilon$; since $\beta_\epsilon$ is compact, we can pick a finite number of them that cover $\beta_\epsilon$ and construct a partition of unity \citep[page 63]{Spivak:1965wf} on the cover.
  We will continue considering our fixed $\bx_0 \in \beta_\epsilon$ and the above-constructed parameterization on a neighborhood of $\bx_0$.  At the end of the proof, our local result can be made global by using the partition of unity.

  Now, $\int_{\beta_\epsilon} \gamma d \cH = \int_U (\gamma \circ \vp_\epsilon) J\vp_\epsilon d \lambda^{d-1}$ where $\lambda^{d-1}$ is Lebesgue measure
  \citep{Evans:2015uy}.  Here $J \vp_{\epsilon} = \det ( \grad \vp_\epsilon' \grad \vp_\epsilon)^{1/2}$ is continuously differentiable in $\epsilon$ (in a neighborhood of $0$) since $k$ is twice continuously differentiable and since $\det (\grad \vp_\epsilon' \grad \vp_\epsilon) \ne 0$.
  \begin{mylongform}
    \begin{longform}
      Since $\vp_\epsilon$ is a diffeomorphism, so is of full rank $d-1$.
    \end{longform}
  \end{mylongform}
  We also know that $\gamma \circ \vp_\epsilon$ is continuously differentiable in $\epsilon$ since $\gamma$ is assumed continuously differentiable.
  \begin{mylongform}
    \begin{longform}
      (Note we need $\gamma$ to be differentiable; the setup is quite different than that of the Fundamental Theorem of Calculus.)
    \end{longform}
  \end{mylongform}
  Since $\pderiv{\epsilon} ((\gamma \circ \vp_\epsilon) J \vp_\epsilon)$ is continuous so is bounded on $U \times [-\tilde{\epsilon}, \tilde{\epsilon}]$, some $\tilde{\epsilon}>0$, we can apply the Leibniz rule \citep{Billingsley:2012ti} to see that
  \begin{equation*}
    \pderiv{\epsilon} \int_{V_\epsilon} \gamma  \, d\cH^{d-1}
    =  \pderiv{\epsilon} \int_U (\gamma \circ \vp_\epsilon) J\vp_\epsilon d \lambda^{d-1}
    =  \int_U  \pderiv{\epsilon} ( (\gamma \circ \vp_\epsilon) J\vp_\epsilon ) d \lambda^{d-1}
  \end{equation*}
  \begin{mylongform}
    \begin{longform}
      (Note that to say that $J\vp_\epsilon$ is defined on $U \times [-\tilde{\epsilon}, \tilde{\epsilon}]$ requires an explicit construction such as the one we made above.)
    \end{longform}
  \end{mylongform}
  Thus, the derivative on the left side of the previous display exists, meaning that $\int_{\beta_\epsilon} \gamma \, d\cH$ is indeed differentiable for $\epsilon$ near $0$, as desired.  This is true on the neighborhood $V_\epsilon$; it extends to the case where $V_\epsilon$ is replaced by $\beta_\epsilon$ by using the partition of unity we constructed above.

  \smallskip

  {\bf Proof of Part~\ref{lem:Hauss-diff-b}}:
  We write $g \equiv g_n$, suppressing dependence on $n$.  For $\bx \in
  \RR^d$, let $h(\bx,\delta) := f_0(\bx) + \delta g(\bx)$, and let
  $\beta_{\delta} := h_\delta^{-1}( \fftau)$. We will explicitly construct
  $\phi_\delta: U \to V_\delta$, for some open $U \subset \RR^{d-1}$ and
  $V_\delta \subset \beta_\delta$, by the inverse function theorem, and then
  check that $\pderiv{ \delta}\phi_{\delta}(\bs z)$ is $O_p( | g(
  \phi_\delta(\bs z)) |)$ and that $\pderiv{\delta} J\phi_{\delta}(\bs z)$
  is $O_p( \| \hess g( \phi_\delta(\bs z)) \| | g( \phi_\delta(\bs z)) |
  + | g( \phi_\delta(\bs z)) | +
  \| \nabla g (\phi_\delta( \bs z)) \| )$.  Then the proof can be finished as
  the proof of the previous part was finished.

  Fix $\bx_0 \in \beta_\tau \equiv \beta_0$.  Define $F(x_1, \ldots, x_d,
  \delta) := (x_1, \ldots, x_{d-1}, h(\bx, \delta), \delta)$.  As in the
  proof of the previous part, note that $\det \nabla F(\bx_0) \ne 0$ (when
  $\| \nabla g(\bx_0 \|$ is small), so by the inverse function theorem
  $F^{-1}$ exists, is twice continuously differentiable in a neighborhood of
  $F(\bx_0)$, and clearly satisfies $F^{-1}(y_1, \ldots, y_d, \delta) = (y_1,
  \ldots, y_{d-1}, k(y_1, \ldots, y_d, \delta),\delta)$.  Let $\bs z := (\bx,
  \delta)$ and note by definition
  \begin{equation}
    \label{eq:8}
    k(F(\bs z)) = k(x_1, \ldots, x_{d-1}, h(\bs z), \delta) = x_d.
  \end{equation}
  From this we will derive formulas for the first and second derivatives of $k$.  In this proof, for a function $f \colon \RR^p \to \RR$ we use the notation $f_i(\bx)$ for $\pderiv{x_i}f(x_1, \ldots, x_d)$ and $f_{ij}(\bx)$ for $\derivtwo{x_i}{x_j} f(x_1, \ldots, x_d)$.
  Taking $\pderiv{x_i}$ of  \eqref{eq:8} for $1 \le i \le d-1$, we see that
  \begin{equation}
    \label{eq:9}
    k_i( F(\bs z)) = - k_d (F(\bs z)) h_i(\bx).
  \end{equation}
  Applying $\pderiv{ x_d}$ to \eqref{eq:8}, we get that
  \begin{equation}
    \label{eq:32}
    k_d( F(\bs z)) h_d (\bs z) = 1
    \quad \text{ or } \quad
    k_d(F(\bs z) ) = 1/ h_d( \bs z),
  \end{equation}
  and applying $\pderiv{ \delta}$  to \eqref{eq:8}, we get
  \begin{equation}
    \label{eq:35}
    k_d(F(\bs z)) h_{d+1}(\bs z) + k_{d+1}(F(\bs z)) = 0,
    \quad    \text{ or } \quad
    k_{d+1}(F(\bs z)) = - \frac{h_{d+1}(\bs z)}{h_d(\bs z)}
    = - \frac{g(\bs x)}{h_d(\bs z)}.
  \end{equation}
  Applying $\pderiv{\delta} $ to \eqref{eq:9} yields
  \begin{equation}
    \label{eq:30}
    \begin{split}
      \MoveEqLeft      k_{i,d}(F(\bs z)) h_{d+1}(\bs z) + k_{i,d+1}(F(\bs z))  \\
      & =  - \big( k_{d,d}(F(\bs z)) h_{d+1}(\bs z) + k_{d,d+1} (F(\bs z)) \big)
      h_i( \bs z) -
      k_d(F(\bs z)) h_{i,d+1}(\bs x)
    \end{split}
  \end{equation}
  and,
  letting $\bs y := F(\bs z)$,
  since $h_{d+1}(\bs z) = g(\bs x)$ and $h_{i,d+1}(\bs z) = g_i(\bs x)$,
  this implies that
  \begin{mylongform}
    \begin{longform}
      \begin{equation*}
        \begin{split}
          \MoveEqLeft      k_{i,d}(F(\bs z)) h_{d+1}(\bs z) + k_{i,d+1}(F(\bs z))  \\
          & =
          - ( k_{d,d}(F(\bs z)) g( \bs x) +
          k_{d,d+1}(F( \bs z)) ) h_i (
          \bs z) - k_d( F(\bs z)) g_i ( \bx),
        \end{split}
      \end{equation*}
      so
    \end{longform}
  \end{mylongform}
  \begin{equation}
    \label{eq:31}
    k_{i,d+1}(\bs y)=
    - k_{i,d}(\bs y) g(\bs x)
    -  \Big( k_{d,d}(\bs y)  g( \bs x) + k_{d,d+1}( \bs y) \Big) h_i ( \bs z)
    - k_d( \bs y) g_i ( \bx).
  \end{equation}
  To understand the expression in \eqref{eq:31} we need to control $k_{i,d}$, $k_{d,d}$, and $k_{d,d+1}$.  Applying $\pderiv{\delta}$ to \eqref{eq:32} we see that
  \begin{equation*}
    k_{d,d}(F(\bs z)) h_{d+1}(\bs z) + k_{d,d+1}(F(\bs z))
    = - \frac{ h_{d,d+1}(\bs z)}{h_d^2(\bs z)},
  \end{equation*}
  so
  \begin{equation}
    \label{eq:38}
    k_{d,d+1}(F(\bs z)) = k_{d,d}(F(\bs z)) g( \bs x) - \frac{ g_d(\bs x)}{ h_d^2 (\bs z)}.
  \end{equation}
  We will next verify that $k_{i,d}$ and $k_{d,d}$
  are $O_p(1 + \| \hess g \|)$ (which is $O_p(1)$ under our assumption on $\bH$
  \citep{chacon2011asymptotics}). Then by \eqref{eq:31} and \eqref{eq:38}, we will see, uniformly for $\delta \in [-1,1]$, that
  \begin{equation}
    \label{eq:34}
    k_{i,d+1}(F(\bs z)) = O_p( | g(\bx) | + \| \nabla g(\bx) \|  + \| \hess g(\bx) \| | g(\bx)|  )
    \quad \text{ as } \quad
    n \to \infty.
  \end{equation}
  Note that by \eqref{eq:32}, $k_d(F( \bs z)) = O_p(1)$ and
  $1/h_d(\bs z) = O_p(1)$ (since by assumption $\pderiv{x_d} f( \bs x_0) \ne 0$
  and $\| \grad g (\bx) \| \to_p 0$).

  Now applying $\partial / \partial x_d$ to \eqref{eq:32}, we see
  \begin{equation}
    \label{eq:37}
    k_{d,d}(F(\bs z))
    = -\frac{ k_d^2( F (\bs z))  h_{d,d}(\bs z)}{h_d(\bs z)}
    = - \frac{h_{d,d}(\bs z)}{h_d^3( \bs z)},
  \end{equation}
  so $k_{d,d}(F(\bs z)) = O_p(1 + \| \hess g(\bs x) \|)$.
  Applying $\partial/ \partial x_i$ to (the left expression in) \eqref{eq:32} yields
  \begin{equation}
    \label{eq:36}
    k_{i,d}(F(\bs z)) + k_{d,d}( F(\bs z)) h_i(\bs z)
    = -\frac{ h_{d,i}(\bs z)}{ h_d^2(\bs z)}.
  \end{equation}
  Thus by \eqref{eq:37} we see $k_{i,d}( F(\bs z)) = O_p(1 + \| \hess g(\bs
  x) \|)$, so \eqref{eq:34} holds.
  \begin{mylongform}
    \begin{longform}
      To show the three $k$ terms are $O(1)$, possibly could just do this in one or two steps from $F \circ F^{-1} = $Id and differentiating twice and making a more abstract argument about how the second derivatives of $k$ are algebraic expressions in the zeroth, first, second derivatives of $h$?   Still may need to bound denominators away from zero conceivably so maybe necessary to see what those expressions are.
    \end{longform}
  \end{mylongform}

  Now we let
  \begin{equation*}
    \phi_\delta( y_1, \ldots, y_{d-1})
    := (y_1, \ldots, y_{d-1}, k(y_1, \ldots, y_{d-1}, \fftau, \delta)),
  \end{equation*}
  which we have shown is a $C^2$ parameterization from an open set $U \subset \RR^{d-1}$ to $V_\delta \subset \beta_\delta$  where $V_\delta \ni \bx_0$ is open in $\beta_\delta$.  We can check that $J \phi_\delta = \det (\nabla \phi_\delta' \nabla \phi_\delta)^{1/2}$ is continuously differentiable in $\delta $ for $\delta \in [-1,1]$
  by
  \eqref{eq:34},
  and, by  three Taylor expansions,
  \begin{equation}
    \label{eq:41}
    \int_{V_1} \gamma d \cH^{d-1}
    =\int_U ( \gamma \circ \phi_1) J\phi_1 d \lambda^{d-1}
    = \int_U ( ( \gamma
    \circ \phi_0(\bs y) ) J\phi_0(\bs y)
    + \epsilon(\bs y)
    \,
    d \bs y
  \end{equation}
  where $\epsilon(\bs y) = O_p( |g(\bx) | + \| \nabla g (\bx) \| + \| \hess g(\bx) \| |g(\bx)|)$, since $\pderiv{\delta} J \phi_\delta(\bs y) $ is $O_p( |g(\bx) | + \| \nabla g (\bx) \| + \| \hess g(\bx) \| |g(\bx)|)$ uniformly for $\delta \in [-1,1]$, since $\pderiv{\delta}\phi_\delta(\bs y) = O_p( |g(\bx)|)$ uniformly for $\delta \in [-1,1]$ (by \eqref{eq:35}), and since $\gamma$ is continuously differentiable in a neighborhood of $\beta_\tau$.  In fact, we can see that $\EE | \epsilon(\bs y)| \le C \EE \ls |g(\bx) | + \| \nabla g (\bx) \| + \| \hess g(\bx) \| |g(\bx)| \rs$ for a constant $C >0$.  By the Fubini-Tonelli theorem, $\EE \int_U | \epsilon( \bs y) | \, d\bs y = \int \EE | \epsilon(\bs y) | \, d\bs y$, so we can see
  \begin{equation}
    \label{eq:40}
    \int_U \epsilon(\bs y) \, d\bs y
    = O_p \sup_{\bx \in \beta_\tau}
    \EE \ls |g(\bx) | + \| \nabla g (\bx) \| + \| \hess g(\bx) \| |g(\bx)| \rs
  \end{equation}
  by Markov's inequality.
  Combining
  \eqref{eq:41}, \eqref{eq:40}, and
  $  \int_U ( ( \gamma
  \circ \phi_0(\bs y) ) J\phi_0(\bs y) =   \int_{V_0} \gamma d \cH^{d-1} $ we get
  \begin{equation*}
    \int_{V_1} \gamma d \cH^{d-1}
    = \int_{V_0}
    \gamma d\cH^{d-1} +
    O_p \sup_{\bx \in \beta_\tau}
    \EE \ls |g(\bx) | + \| \nabla g (\bx) \| + \| \hess g(\bx) \| |g(\bx)| \rs.
  \end{equation*}
  Then the proof can be
  finished as in the proof of Part~\ref{lem:Hauss-diff-a}, including using a
  partition of unity to extend $V_1$ to $\check \beta_\tau$ and $V_0$ to $\beta_\tau$ to conclude from 
  the previous display %
  that $ \int_{\check \beta_\tau} \gamma d \cH^{d-1} = \int_{\beta_\tau} \gamma
  d\cH^{d-1} +
  O_p \sup_{\bx \in \beta_\tau}
  \EE \ls |g(\bx) | + \| \nabla g (\bx) \| + \| \hess g(\bx) \| |g(\bx)| \rs.$
\end{proof}

\subsection{Proof of Corollary~\ref{cor:hdr-oracle-bandwidth}}

By our assumptions of unimodality and spherical symmetry of $f_0$, we have that $\grad f_0$ and $\hess f_0$ are constant on $\beta_\tau$, and we denote these two quantities as $\grad_\tau f_0$ and $\hess_\tau f_0$.  Then for $h > 0$ we can write
\begin{equation*}
  B(h) =  - (n h^{d+4})^{1/2} F_1
  \quad    \text{ where } \quad
  F_1 :=     \frac{\mu_2(K) \tr( \hess_\tau f_0)}{2 \sqrt{R(K) \fftau}},
\end{equation*}
and $ C(h) = B(h)+(n h^{d+4})^{1/2} F_2 $ where
\begin{equation*}
  F_2 :=
  \| \grad_\tau f_0  \|  \lp \int_{\beta_\tau} d \cH \rp^{-1}
  \lb \frac{\mu_2(K) \tr( \hess_\tau f_0)}{2 \| \grad_\tau f_0 \|}
  \int_{\beta_\tau} d\cH
  + \frac{\mu_2(K)}{2 \fftau} \tr(\grad^2_\tau f_0)  \int_{\mc{L}_\tau} d\bx \rb.
\end{equation*}
\begin{mylongform}
  \begin{longform}
    Some additional computations and substitutions are as follows.  Note that
    \begin{equation*}
      V_1(h) =
      h^2 \int_{\beta_\tau}  \frac{ \mu_2(K) \tr( \hess_\tau f_0)}{ 2 \| \grad_\tau f_0 \| }
      d\cH
    \end{equation*}
    \begin{equation*}
      V_2(h)
      = h^2 \frac{ \mu_2(K)}{2 \fftau} \int_{\mc{L}_\tau} \tr (\hess_\tau f_0) d \bx.
    \end{equation*}
    Since
    \begin{equation*}
      C(h)
      =B(h)+ \sqrt{n} h^{d/2}  \lp \int_{\beta_\tau}  \| \grad_\tau f_0 \|^{-1}  d\cH \rp^{-1}
      ( V_1(h) + V_2(h))
    \end{equation*}
    we get the formula for $F_2$ and $C(h)$ above.
  \end{longform}
\end{mylongform}
Then
\begin{equation*}
  \HDR(h)
  = \frac{ \fftau}{A}
  \lp \int_{\beta_\tau} d\cH \rp
  (n h^d)^{-1/2}
  \lp 2 \phi(C(h))  + (2 \Phi(C(h)) - 1) C(h) \rp
\end{equation*}
where $A=\|\nabla_{\tau}f_0\|/\sqrt{R(K)f_{\tau,0}}$. Note that $2
\phi(C(h))  + (2 \Phi(C(h)) - 1) C(h)=2 \phi(|C(h)|)  + (2
\Phi(|C(h)|) - 1) |C(h)|$. Let $G:=|C(h)|/(nh^{d+4})^{1/2}=|F_2-F_1|$,
We will thus minimize
\begin{equation}
  \label{eq:1}
  \begin{split}
    & n^{2 / (d+4)} \lp \frac{A}{ \fftau} \int_{\beta_\tau} d\cH \rp^{-1}
    \HDR(h)  \\
    & = (n^{1/2}h^{(d+4)/2})^{-d/(d+4)} \lp 2 \phi(G(nh^{d+4})^{1/2}) + G(nh^{d+4})^{1/2} ( 2 \Phi(G(nh^{d+4})^{1/2}) - 1) \rp
  \end{split}
\end{equation}
over $h \ge 0$.
By
the change of variables
\begin{equation}
  \label{eq:5}
  s =  (n h^{d+4})^{1/2},
\end{equation}
minimizing \eqref{eq:1} is equivalent to minimizing
\begin{equation*}
  \HDR^*(s) := 2 s^{-d/ d+4} \phi(Gs) + Gs^{4 / d+4}  (2 \Phi(Gs) - 1).
\end{equation*}
Note that $\HDR^*(s) \to \infty$ as $s \to \infty$ and as $s \searrow 0$, so $\HDR^*(s)$ attains its minimum on $(0, \infty)$.
Now, $\HDR^*$ has a unique minimum if $(\HDR^*)'(s)$ has a unique $0$,
and by calculation,
\begin{align*}
  (\text{HDR}^{\ast})'(s)=2\frac{-d}{d+4}s^{\frac{-2d-4}{d+4}}\phi(Gs)+G\frac{4}{d+4}s^{\frac{-d}{d+4}}(2\Phi(Gs)-1),
\end{align*}
$(\HDR^*)'(s)$ has a unique $0$ if and only if
\begin{equation}
  \label{eq:4}
  (\HDR^*)'(s) (2 (d/d+4) s^{-(2d+4)/d+4} \phi(Gs))^{-1}
  = - 1 + \frac{2}{d} \frac{Gs( 2 \Phi(Gs) - 1) }{\phi(Gs)}
\end{equation}
has a unique $0$.
We can compute the derivative of \eqref{eq:4} to be
\begin{align*}
  G \frac{2}{d} \lp 2Gs + \frac{ (1 + G^2s^2) ( 2 \Phi(Gs) - 1)}{\phi(Gs)} \rp > 0
\end{align*}
for $s \in (0,\infty)$.
Thus \eqref{eq:4} is strictly increasing on $(0,\infty)$,
is negative at $0$, and approaches $\infty$ as $c \to \infty$, and so \eqref{eq:4} has a unique zero.
\begin{mylongform}
  \begin{longform}
    See mathematica notebook for derivative computations.
    We can compute the first and second derivatives of  \eqref{eq:4} to be
    \begin{align*}
      \frac{2}{d} \lp 2s + \frac{ (1 + s^2) ( 2 \Phi(s) - 1)}{\phi(s)} \rp
      \text{ and } \\
      \frac{2}{d} \lp 2(2 + s^2) +  \frac{ 2 \Phi(s) - 1}{\phi(s)} s(3+s^2) \rp > 0.
    \end{align*}
    (We do not actually need the second derivative it seems.)
  \end{longform}
\end{mylongform}
Let $s_{\text{opt}} > 0$ be the unique minimum of $\HDR^*(s)$, and
let $h_{\text{opt}} :=  s_{\text{opt}} ^{2/ d+4} n^{-1/d+4}$.  By \eqref{eq:5}, $h_{\text{opt}}$ minimizes \eqref{eq:1}, and so minimizes $\HDR(h)$.  By Theorem~\ref{thm:hdr}, we conclude that for any $h_{0}$ that minimizes
$\bb{E}[\mu_{f_0}\{\mc{L}_{\tau}\Delta\hat{\mc{L}}_{\tau,\bH}\}]$,  $h_0 = h_{\text{opt}}(1 + o(1))$.

\section{Proof of intermediate results}
\label{sec:proofs-intermediate}

\begin{proof}[Proof of Lemma~\ref{lem:hdr-step1}]
  \begin{mylongform}
    \begin{longform}
      From the proof, we shall see $C_1$ only depends on $f_0$.
    \end{longform}
  \end{mylongform}
  Let $C_1>1+2\lambda\lp \lb f_0(\bx)\ge
  f_{\tau,0}\rb\rp/\int_{\beta_\tau}\frac{f_0}{\|\nabla f_0\|}\,d\mc{H}$.
  Then when $\varepsilon>0$ is sufficiently small,
  \begin{align*}
    \int \tilde{f}(\bx)\one_{\{\tilde{f}(\bx)\ge
    f_{\tau,0}-C_1\varepsilon\}}\,d\bx&\ge \int (f_0(\bx)-\varepsilon)\one_{\{f(\bx)\ge
                                        f_{\tau,0}-(C_1-1)\varepsilon\}}\,d\bx\\
                                      &=1-\tau+\int f_0(\bx)\one_{\{f_{\tau,0}-(C_1-1)\varepsilon\le f_0(\bx)<
                                        f_{\tau,0}\}}\,d\bx\\
                                      &\qquad-\varepsilon\lambda\lp\lb  f_0(\bx)\ge
                                        f_{\tau,0}-(C_1-1)\varepsilon\rb\rp\\
                                      &\ge 1-\tau+\int f_0(\bx)\one_{\{f_{\tau,0}-(C_1-1)\varepsilon\le f_0(\bx)<
                                        f_{\tau,0}\}}\,d\bx\\
                                      &\qquad -2\varepsilon\lambda(\{f_0(\bx)\ge
                                        f_{\tau,0}\}).
  \end{align*}
  \begin{mylongform}
    \begin{longform}
      On the second line of the above inequalities, $\lambda(\{f_0(\bx)\ge
      f_{\tau,0}-(C_1-1)\varepsilon\})\downarrow \lambda(\{\bx:f_0(\bx)\ge
      f_{\tau,0}\})$ when
      $\varepsilon\downarrow 0$ by continuity of measure. So $\lambda(\{\bx:f_0(\bx)\ge
      f_{\tau,0}-(C_1-1)\varepsilon\})\le 2\lambda(\{\bx:f_0(\bx)\ge
      f_{\tau,0}\})$ when $\varepsilon>0$ is sufficiently small.
    \end{longform}
  \end{mylongform}
  By Proposition A.1 of \cite{Cadre:2006db},
  \begin{align}
    \label{eq:step1-leveltohauss}
    \int f_0(\bx)\one_{\{f_{\tau,0}-(C_1-1)\varepsilon\le f_0(\bx)<
    f_{\tau,0}\}}\,d\bx
    =\int_{f_{\tau,0}-(C_1-1)\varepsilon}^{f_{\tau,0}}\int_{\beta(s)}\frac{f_0(\bx)}{\|\nabla
    f_0(\bx)\|}\,d\mathcal{H}(\bx)\,ds.
  \end{align}
  So we can express $\int f_0(\bx)\one_{\{f_{\tau,0}-(C_1-1)\varepsilon\le f_0(\bx)<
    f_{\tau,0}\}}\,d\bx$ as
  \begin{align}
    \label{eq:step1-intdiff}
    (C_1-1)\varepsilon
    \int_{\beta_{\tau}}\frac{f_0(\bx)}{\|\nabla f_0(\bx)\|}\,d\mathcal{H}(\bx)+O(\varepsilon^2),
  \end{align}
  by Lemma~\ref{lem:Hauss-diff}, and thus see that
  \begin{align*}
    \MoveEqLeft  \int \tilde{f}(\bx)\one_{\{\tilde{f}(\bx)\ge
    f_{\tau,0}-C_1\varepsilon\}}\,d\bx\\&\ge 1-\tau +(C_1-1)\varepsilon
                                          \int_{\beta_{\tau}}\frac{f_0(\bx)}{\|\nabla
                                          f_0(\bx)\|}\,d\mathcal{H}(\bx)+o(\varepsilon) -2\varepsilon\lambda(\{\bx:f_0(\bx)\ge
                                          f_{\tau,0}\})\\&>1-\tau,
  \end{align*}
  when $\varepsilon>0$ is sufficiently small. So
  $\tilde{f}_{\tau}>f_{\tau,0}-C_1\varepsilon$.    For the upper bound, with
  a similar argument,
  \begin{mylongform}
    \begin{longform}
      \begin{align*}
        \int \tilde{f_0}(x)\one_{\{\tilde{f}\ge
        f_{\tau,0}+C_1\varepsilon\}}\,dx\le&\int (f_0(x)+\varepsilon)\one_{\{f_0\ge
                                             f_{\tau,0}+(C_1-1)\varepsilon\}}\,dx\\
        =&1-\tau -\int f_0(x)\one_{\{f_{\tau,0}\le f_0\le f_{\tau,0}+(C_1-1)\varepsilon\}}\,dx +\varepsilon
           \lambda(\{f_0\ge f_{\tau,0}+(C_1-1)\varepsilon\})\\
        \le&1-\tau-\int_{f_{\tau,0}}^{f_{\tau,0}+(C_1-1)\varepsilon}\int_{\beta(s)}\frac{f_0(\bx)}{\|\nabla
             f_0(\bx)\|}\,d\mathcal{H}(\bx)\,ds+\varepsilon\lambda(\{f_0\ge f_\tau,0\})\\
        =&1-\tau-(C_1-1)\varepsilon\int_{\beta_\tau}\frac{f_0(\bx)}{\|\nabla
           f_0(\bx)|}\,d\mathcal{H}(\bx)+o((C_1-1)\varepsilon)+\varepsilon\lambda(\{f_0\ge
           f_\tau,0\})\\
        <&1-\tau,
      \end{align*}
      when $\varepsilon>0$ is sufficiently small,
    \end{longform}
  \end{mylongform}
  we get $\tilde{f}_{\tau}<f_{\tau,0}+C_1\varepsilon$. So we proved $|\tilde{f}_{\tau}-f_{\tau,0}|\le C_1\varepsilon$ for
  $\varepsilon>0$ sufficiently small.
\end{proof}

\begin{proof}[Proof of Lemma~\ref{lem:hdr-step3}]
  We first prove an intermediate result that
  \begin{align}
    \label{eq:step2re}
    \int_{\mc{L}_{\delta}(f_{\tau,0})^c}f_0(\bx)P\lp\ffnH(\bx)\ge
    \fftaun\rp\,d\bx+\int_{\mc{L}_{-\delta}(f_{\tau,0})}f_0(\bx)P\lp\ffnH(\bx)<
    \fftaun\rp\,d\bx
  \end{align}
  is $o(n^{-1})$ as $n\to \infty$ for fixed $\delta>0$ sufficiently
  small. Observe that under Assumption \ref{assm:DA-hdr} if
  $\delta>0$  is sufficiently small, then there exists $\varepsilon>0$ such that $f_0(\bx)\le
  f_{\tau,0}-\varepsilon$ for $\bx\in \mc{L}_{\delta}(f_{\tau,0})^c$ and $f_0(\bx)\ge
  f_{\tau,0}+\varepsilon$ for $\bx\in \mc{L}_{-\delta}(f_{\tau,0})$.
  By reducing $\delta>0$  if necessary, for $\bx\in
  \mc{L}_{\delta}(f_{\tau,0})^c$,
  \begin{align*}
    P\lp\ffnH(\bx)\ge \fftaun\rp&\le
                                  P\lp\ffnH(\bx)-f_0(\bx)-(\fftaun-f_{\tau,0})\ge  \varepsilon\rp\\
                                &  \le P\lp\|\ffnH-f_0 \|_{\infty}\ge
                                  \varepsilon/2\rp+P\lp|\fftaun-f_{\tau,0}|\ge
                                  \varepsilon/2\rp\\
                                &  \le P\left(\|\ffnH-f_0 \|_{\infty}\ge
                                  \frac{\varepsilon}{2C_1}\right)+P\left(|\fftaun-f_{\tau,0}|\ge
                                  \frac{\varepsilon}{2}\right),
  \end{align*}
  where $C_1\ge 1$ is the constant we defined in
  Lemma~\ref{lem:hdr-step1} ; by that lemma, we have
  \begin{align*}
    P\left(|\fftaun-f_{\tau,0}|\ge
    \frac{\varepsilon}{2}\right)\le  P\left(\|\ffnH-f_0 \|_{\infty}\ge
    \frac{\varepsilon}{2C_1}\right),
  \end{align*}
  so
  \begin{align}
    \label{eq:pup}
    P\lp\ffnH(\bx)\ge \fftaun\rp\le  2P\left(\|\ffnH-f_0 \|_{\infty}\ge
    \frac{\varepsilon}{2C_1}\right).
  \end{align}
  A similar argument yields the same upper bound for $P(\ffnH(\bx)<
  \fftaun)$ when $\bx\in\mc{L}_{-\delta}(f_{\tau,0})$.
  \begin{mylongform}
    \begin{longform}
      \begin{align*}
        P(\ffnH(\bx)<\fftaun) &\le
                                P(\fftaun-\ffnH(\bx)+f_0(\bx)-f_{\tau,0}\ge
                                \varepsilon) \\
                              &  \le P(\|f_0(\bx)-\ffnH(\bx)\|_{\infty}\ge \varepsilon/2)+P(|\fftaun-f_{\tau,0}|\ge
                                \varepsilon/2)\\
                              &  \le2P(\|\ffnH-f_0\|_{\infty}\ge \frac{\varepsilon}{2C})
      \end{align*}

      Now, since $f_0$ is uniformly continuous,
      \begin{align*}
        \|\bb{E}(\ffnH)-f_0\|_{\infty}&=\sup_{\bx\in\RR^d}\left|\int
                                        K(\boldsymbol{z})\{f_0(\bx-\bH^{1/2}\boldsymbol{z})-f_0(\bx)\}\,d\boldsymbol{z}\right|\nonumber\\
                                      & \le\sup_{\bx\in\RR^d}\int
                                        K(\boldsymbol{z})|f_0(\bx-\bH^{1/2}\boldsymbol{z})-f_0(\bx)|\,d\boldsymbol{z}\nonumber\\
                                      & \le\int
                                        K(\boldsymbol{z})\sup_{\bx\in\RR^d}\lb|f_0(\bx-\bH^{1/2}\boldsymbol{z})-f_0(\bx)|\rb\,d\boldsymbol{z}\nonumber\\
                                      & \le\int
                                        K(\boldsymbol{z})w\|\bH^{1/2}\bs{z}\|\,d\boldsymbol{z},
      \end{align*}
      here $w$ is the modulus of uniform continuity of $f_0$ and thus the
      above quantity converges to 0
      as $n\rightarrow\infty$ by Lebesgue Dominated Convergence Theorem.
    \end{longform}
  \end{mylongform}
  Now by Assumption~\ref{assm:DA-hdr},
  \begin{align*}
    \|\bb{E}(\ffnH)-f_0\|_{\infty}\rightarrow 0,
  \end{align*}
  as $n\rightarrow \infty$.
  Together with the inequality
  \eqref{eq:pup} together,  this yields that for $n$ sufficiently large,
  \begin{align*}
    \MoveEqLeft    \int_{\mc{L}_{\delta}(f_{\tau,0})^c}f_0(\bx)P\lp\ffnH(\bx)\ge
    \fftaun\rp\,d\bx+\int_{\mc{L}_{-\delta}(f_{\tau,0})}f_0(\bx)P\lp\ffnH(\bx)<
    \fftaun\rp\,d\bx\\
 & \le 2P\left(\|\ffnH-f_0\|_{\infty}\ge
   \frac{\varepsilon}{2C_1}\right)
  \end{align*}
  which is bounded above by
  \begin{align*}
    \MoveEqLeft
    2P\left(\|\ffnH-\bb{E}(\ffnH)\|_{\infty}\ge
    \frac{\varepsilon}{4C_1}\right)
    +2P\left(\|\bb{E}(\ffnH)-f_0\|_{\infty}\ge
    \frac{\varepsilon}{4C_1}\right)\\
 &  =2P\left(\|\ffnH-\bb{E}(\ffnH)\|_{\infty}\ge
   \frac{\varepsilon}{4C_1}\right)
              \le
              L\exp\left\{-\frac{C_{0,1}\varepsilon^2n|\bH|^{1/2}}{16C_1^2}\right\}=o(n^{-1}),
  \end{align*}
  where the last inequality comes from  Corollary~\ref{cor:convergerate}.

  Now  it suffices to show that $ E(\delta,\delta_n)=o(n^{-1})$, where   $
  E(\delta,\delta_n)$ is defined as
  \begin{mylongform}
    \begin{longform}
      \begin{align*}
        &\int_{\mc{L}_{\delta_n}(f_{\tau,0})^c}f_0(\bx)P(\ffnH(\bx)\ge
          \fftaun)\,d\bx+\int_{\mc{L}_{-\delta_n}(f_{\tau,0})}f_0(\bx)P(\ffnH(\bx)<
          \fftaun)\,d\bx\\
        &-\left(\int_{\mc{L}_{\delta}(f_{\tau,0})^c}f_0(\bx)P(\ffnH(\bx)\ge
          \fftaun)\,d\bx+\int_{\mc{L}_{-\delta}(f_{\tau,0})}f_0(\bx)P(\ffnH(\bx)<
          \fftaun)\,d\bx\right)\\
      \end{align*}
    \end{longform}
  \end{mylongform}
  \begin{align*}
    \MoveEqLeft\int_{\mc{L}_{\delta_n}(f_{\tau,0})^c\backslash \mc{L}_{\delta}(f_{\tau,0})^c}f_0(\bx)P(\ffnH(\bx)\ge
    \fftaun)\,d\bx\\
 &  +\int_{\mc{L}_{-\delta_n}(f_{\tau,0})\backslash \mc{L}_{-\delta}(f_{\tau,0})}f_0(\bx)P(\ffnH(\bx)<
   \fftaun)\,d\bx.
  \end{align*}
  Using Taylor expansion, we have
  \begin{mylongform}
    \begin{longform}(See Theorem~\ref{thm:taylor} in Appendix or \cite{folland2002advanced}, Page 91,
      \textbf{Theorem 2.68}) yields
      \begin{align*}
        \|\bb{E}(\ffnH)-f_0\|_{\infty}=&\sup_{\bx\in\RR^d}\left|\int
                                         K(\boldsymbol{z})\{f_0(\bx-\bH^{1/2}\boldsymbol{z})-f_0(\bx)\}\,d\boldsymbol{z}\right|\nonumber\\
        =&\sup_{\bx\in\RR^d}\left|\int
           K(\boldsymbol{z})\lb-(\bH^{1/2}\boldsymbol{z})'\nabla
           f_0(\bx)+\inv{2}(\bH^{1/2}\boldsymbol{z})'\nabla^2f_0(\bx_z)\bH^{1/2}\boldsymbol{z}\rb\,d\boldsymbol{z}\right|\nonumber\\
        =&\sup_{\bx\in\RR^d}\left|\int
           K(\boldsymbol{z})\lb\inv{2}(\bH^{1/2}\boldsymbol{z})'\nabla^2f_0(\bx_z)\bH^{1/2}\boldsymbol{z}\rb\,d\boldsymbol{z}\right|,
      \end{align*}
    \end{longform}
  \end{mylongform}
  \begin{align*}
    \|\bb{E}(\ffnH)-f_0\|_{\infty}=\sup_{\bx\in\RR^d}\left|\int
    K(\boldsymbol{z})\lb\inv{2}(\bH^{1/2}\boldsymbol{z})'\nabla^2f_0(\bx_z)\bH^{1/2}\boldsymbol{z}\rb\,d\boldsymbol{z}\right|,
  \end{align*}
  where $\bx_z=\bx-c\bH^{1/2}\boldsymbol{z}$ for some $c\in (0,1)$.
  Under Assumption \ref{assm:DA-hdr}, $f_0$ has bounded second derivatives
  and let $A>0$ be such that $\|\nabla^2f\|_{\infty}\le A$. Then
  \begin{mylongform}
    \begin{longform}
      \begin{align*}
        \|\bb{E}(\ffnH)-f_0\|_{\infty}=&  \sup_{\bx\in\RR^d}\left|\int
                                         K(\boldsymbol{z})\lb\inv{2}(\bH^{1/2}\boldsymbol{z})'\nabla^2f_0(\bx_z)\bH^{1/2}\boldsymbol{z}\rb\,d\boldsymbol{z}\right|\nonumber\\
        \le&\sup_{\bx\in\RR^d}\inv{2}\int
             K(\bz)\left|\bz'\bH^{1/2}\nabla^2f_0(\bx_{\bz})\bH^{1/2}\bz\right|\,d\bz\nonumber\\
        \le&\inv{2}dA\int K(\bz)\bz'\bH\bz\,d\bz\nonumber\\
        =&\inv{2}dA\tr\lp\bH\int K(\bz)\bz\bz'\,d\bz\rp\nonumber\\
        =&\inv{2}dA\mu_2(K)\tr(\bH)\nonumber\\
        =&O\lb\lambda_{\max}(\bH)\rb,
      \end{align*}
    \end{longform}
  \end{mylongform}
  \begin{align}
    \label{eq:ffnH-bias-supnorm}
    \begin{split}
      \|\bb{E}(\ffnH)-f_0\|_{\infty}
      \le\inv{2}dA\int K(\bz)\bz'\bH\bz\,d\bz
      =\inv{2}dA\mu_2(K)\tr(\bH)
      =O\lb\lambda_{\max}(\bH)\rb,
    \end{split}
  \end{align}
  as $|\bH|\rightarrow 0$.
  Now by Lemma~\ref{lem:hdr-step3-c2}, there exists a constant $c_2$ small enough that if we take
  $\varepsilon_n=c_2\delta_n$, then we have $|f_0(\bx)-f_{\tau_,0}|\ge
  \varepsilon_n$ when
  $\bx\in(\mc{L}_{\delta_n}(f_{\tau,0})^c\backslash
  \mc{L}_{\delta}(f_{\tau,0})^c)\cup(\mc{L}_{-\delta_n}(f_{\tau,0})\backslash
  \mc{L}_{-\delta}(f_{\tau,0}))$. Moreover,
  $\lambda_{\max}(\bH)=o(\epsilon_n)$ by our assumption, so for $n$ sufficiently
  large, by (\ref{eq:ffnH-bias-supnorm}),
  $P(\|\bb{E}(\ffnH)-f_0\|_{\infty}\ge
  \frac{\varepsilon_n}{4C})=0$. Then for $n$ large enough,
  \begin{align}
    \begin{split}
      E(\delta,\delta_n) \le 2P\left(\|\ffnH-\bb{E}(\ffnH)\|_{\infty}\ge
        \frac{\varepsilon_n}{4C}\right)
      \le
      L\exp\left\{-\frac{C_{0,1}\varepsilon_n^2n|\bH|^{1/2}}{16C_1^2}\right\}=o(n^{-1}).
    \end{split}
  \end{align}
  as $n\rightarrow \infty$.
\end{proof}
\begin{mylongform}
  \begin{longform}
    \begin{lemma}
      Let $\tilde{f}=f_0+g$ be another uniformly continuous density
      function satisfying Assumption \ref{assm:DA} and such that $\|g\|_{\infty}\le \varepsilon$ for some $\varepsilon>0$
      sufficiently small. Then if Assumption~\ref{assm:DA}
      holds for $f_0$, we
      will have for every
      $\bs{z}\in\tilde{\beta}_{\tau}:=\{\bx:\tilde{f}(\bx)=\tilde{f}_{\tau}\}$
      where $\tilde{f}_{\tau}:=\inf \{y\in(0,\infty):\int_{\bb{R}^d}\tilde{f}(\bx)\one_{\{\tilde{f}(\bx)\ge
        y\}}\,d\bx\le 1-\tau\}$, it is
      within  $d_{\bs{z}}=g(\bs{z})/\|\nabla
      f_0(\bs{z})\|+O(\|g\|^2_{\infty})$ of
      $\beta_{\tilde{\tau}}=\{\bx:f_0(\bx)=\tilde{f}_{\tau}\}$, i.e.,
      there exists some $w\in \beta_{\tilde{\tau}}$ such that $\|z-w\|=d_z$.
    \end{lemma}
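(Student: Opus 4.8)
The plan is to construct the required point $w$ by flowing from $z$ along the gradient direction of $f_0$ until the level $\tilde f_\tau$ is attained, and then to read off the distance travelled from a second-order Taylor expansion. The key structural fact that makes this work is that $z$, although it sits on the level set of $\tilde f$ rather than of $f_0$, is forced to lie in the good neighborhood $U_a$ of $\beta_\tau$ from Assumption~\ref{assm:DA-hdr}. Indeed, $\tilde f$ is uniformly continuous and $\|\tilde f-f_0\|_\infty\le\eps$, so Lemma~\ref{lem:hdr-step1} gives $|\tilde f_\tau-\fftau|\le C_1\eps$; since $f_0(z)=\tilde f(z)-g(z)=\tilde f_\tau-g(z)$ and $|g(z)|\le\eps$, this yields $|f_0(z)-\fftau|\le(C_1+1)\eps\le a$ once $\eps$ is small, hence $z\in U_a\subset\beta_\tau^\delta$. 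Consequently $\grad f_0(z)\ne 0$ (with $\|\grad f_0(z)\|\ge b:=\inf_{U_a}\|\grad f_0\|>0$), and an entire segment through $z$ of length $O(\eps)$ stays inside a slightly enlarged neighborhood on which $f_0$ retains two bounded continuous derivatives and $\|\grad f_0\|$ stays bounded below.

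With this localization in hand, I would set $u:=\grad f_0(z)/\|\grad f_0(z)\|$ and study the scalar function $\psi(t):=f_0(z+tu)$ for $|t|\le\eta$, with $\eta$ a fixed small constant. Then $\psi(0)=f_0(z)=\tilde f_\tau-g(z)$ and $\psi'(t)=\grad f_0(z+tu)\cdot u$, so $\psi'(0)=\|\grad f_0(z)\|\ge b$ and, using the uniform Hessian bound, $\psi'(t)\ge b/2$ for $|t|\le\eta$ (with $\eta$ depending only on $f_0$). Thus $\psi$ is a strictly increasing $C^1$ bijection of $[-\eta,\eta]$ onto its image, and since $|\psi(0)-\tilde f_\tau|=|g(z)|\le\eps$, for $\eps$ small there is a unique $t^\ast=t^\ast(z)$ with $\psi(t^\ast)=\tilde f_\tau$, satisfying $|t^\ast|\le 2|g(z)|/b=O(\|g\|_\infty)$. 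Taylor's theorem (Theorem~\ref{thm:taylor}) then gives $\tilde f_\tau=\psi(t^\ast)=\psi(0)+t^\ast\|\grad f_0(z)\|+\tfrac12(t^\ast)^2\,u^\top\hess f_0(\xi)\,u$ for some $\xi$ on the segment, i.e. $g(z)=t^\ast\|\grad f_0(z)\|+O((t^\ast)^2)=t^\ast\|\grad f_0(z)\|+O(\|g\|_\infty^2)$; solving for $t^\ast$ and using $\|\grad f_0(z)\|\ge b$ yields $t^\ast=g(z)/\|\grad f_0(z)\|+O(\|g\|_\infty^2)$. All constants here depend only on $f_0$ (through $b$, the Hessian bound, and $\eta$), so the estimate is uniform in $z\in\tilde\beta_\tau$.

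It then remains to set $w:=z+t^\ast u$: by construction $f_0(w)=\psi(t^\ast)=\tilde f_\tau$, so $w\in\beta_{\tilde\tau}$, and $\|w-z\|=|t^\ast|=|g(z)|/\|\grad f_0(z)\|+O(\|g\|_\infty^2)$, which is exactly $d_z$ (if the signed form of $d_z$ in the statement is intended, one replaces $u$ by $\operatorname{sgn}(g(z))\,u$ so that $w-z$ points from $z$ toward the level $\tilde f_\tau$).

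The only genuinely delicate point is the localization/uniformity in the first paragraph: one must guarantee that for \emph{every} $z$ on the perturbed level set $\tilde\beta_\tau$ the whole segment $\{z+tu:|t|\le\eta\}$ stays in the region where $\|\grad f_0\|$ is bounded below and $\hess f_0$ is bounded, and that $\eta$ and the implied $O(\cdot)$ constants can be chosen independently of $z$ and of the particular perturbation $\tilde f$. This is precisely where compactness of $\beta_\tau$ and the inclusion $U_a\subset\beta_\tau^\delta$ from Assumption~\ref{assm:DA-hdr} are used; alternatively one could invoke the tubular-neighborhood result, Theorem~\ref{thm:eps-nbhd}, to set up coordinates, but the direct gradient-flow argument above is self-contained. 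Once the uniform localization is secured, the remaining steps are a routine one-variable implicit-function/Taylor inversion.
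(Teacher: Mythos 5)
Your proof is correct and follows essentially the same route as the paper's: both proofs move from $\bs z$ along the gradient direction of $f_0$, use a second-order Taylor expansion to express the displacement $t^\ast$ in terms of $g(\bs z)/\|\nabla f_0(\bs z)\|$ plus an $O(\|g\|_\infty^2)$ remainder, and both invoke Lemma~\ref{lem:hdr-step1} together with Assumption~\ref{assm:DA-hdr} to localize $\bs z$ into $U_a$ so that the gradient is bounded below and the Hessian bound is uniform. The only cosmetic differences are that you establish strict monotonicity of $\psi(t)=f_0(\bs z+t\bs u)$ on a fixed interval (which gives existence and uniqueness of $t^\ast$ in one stroke and handles both signs of $g(\bs z)$ symmetrically), whereas the paper uses a one-sided intermediate-value argument and treats the case $f_0(\bs z)>\tilde f_\tau$ ``similarly''; these are equivalent.
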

    \begin{proof}
      Let $\bs{z}$ be such that $\tilde{f}(\bs{z})=\tilde{f}_{\tau}$. If $0<
      \tilde{f}(\bs{z})-f_0(\bs{z})\le \varepsilon$, let
      $\bs{u}=-\frac{\nabla f_0(\bs{z})}{\|\nabla f_0(\bs{z})\|}$ be the inner
      outer  unit normal vector to $\tilde{\beta}_{\tau}$ at $\bs{z}$. And
      we let $\bs{y}=\bs{z}-\tau \bs{u}, \tau>0$. Then by Taylor expansion,
      \begin{align*}
        f_0(\bs{y})&=f_0(\bs{z})-\tau\nabla f_0(\bs{z})'\bs{u}+\inv{2}(\tau
                     \bs{u})'\nabla^2 f_0(\bs{z}-s\tau \bs{u})\tau \bs{u}\\
                   &=f_0(\bs{z})+\tau\|\nabla f_0(\bs{z})\|+\frac{\tau^2}{2}\bs{u}'\nabla^2f_0(\bs{z}-s\tau \bs{u}) \bs{u},
      \end{align*}
      where $s\in (0,1)$. Since $f_0$ has bounded second derivatives.
      Let $A>0$ be such that $\|\nabla^2 f_0\|_{\infty}\le
      A$. Then by Lemma~\ref{lem:algebra}, the second
      order term
      \begin{align*}
        \left|\frac{\tau^2}{2}\bs{u}'\nabla^2f_0(\bs{z}-s\tau \bs{u})\tau \bs{u}\right|\le \frac{\tau^2}{2}d\|\bs{u}\|^2A= \frac{\tau^2}{2}dA.
      \end{align*}
      This ensures that we can take $\tau$ small enough such that
      $\tau\|\nabla f_0(\bs{z})\|+\frac{\tau^2}{2}u'\nabla^2f_0(\bs{z}+s\tau u)\tau
      u\ge \frac{3}{4}\tau\|\nabla f_0(\bs{z})\|$.

      This can be achieved by letting $\tau$ be such that
      $-\frac{\tau^2}{2}dA\ge -\frac{\tau}{4}\|\nabla f_0(\bs{z})\|$, that
      is, $\tau\le \|\nabla f_0(\bs{z})\|/(2dA)$. This choice of upper
      bound for $\tau$ depends on $\bs{z}$. However, we can also choose an
      upper bound not depending on $\bs{z}$. Note in our assumption,
      $\|\nabla f(\bx)\|$ is bounded away from 0 for
      $\bx\in\{\bx:f_{\tau,0}-a\le f_0(\bx)\le f_{\tau,0}+a\}$ and by the
      result of step 1, if $\|g\|_{\infty}$ is sufficiently small, we can
      have $\bs{z}\in \{\bx:f_{\tau,0}-a\le f_0(\bx)\le
      f_{\tau,0}+a\}$. Then we can make $\tau$ small enough such that
      $-\frac{\tau^2}{2}dA\ge -\frac{\tau}{4}\inf_{\{\bx:f_{\tau,0}-a\le
        f_0(\bx)\le f_{\tau,0}+a\}}\|\nabla f_0(\bs{x})\|$ and apparently
      in this case the upper bound for $\tau$ does not depend on $\bs{z}$.

      So we have
      \begin{align*}
        f_0(\bs{y})\ge f_0(\bs{z})+\frac{3}{4}\tau\|\nabla f_0(\bs{z})\|,
      \end{align*}
      and we also know
      \begin{align*}
        f_0(\bs{z})-\tilde{f}(\bs{z})\ge -\varepsilon.
      \end{align*}
      Since we can regard $f_0(\bs{y})=f_0(\bs{z}+\tau \bs{u})$ as a continuous
      function in $\tau$, we conclude for some $0\le \tau\le
      \frac{4}{3}\frac{\varepsilon}{\|\nabla f_0(\bs{z})\|}$, we can have
      $f_0(\bs{y})=\tilde{f}_{\tau}$. Then
      \begin{align*}
        \tilde{f}(\bs{z})=\tilde{f}_{\tau}=f_0(\bs{y})=f_0(\bs{z})+\tau\|\nabla f_0(\bs{z})\|+\frac{\tau^2}{2}\bs{u}'f_0(\bs{z}+s\tau \bs{u}) \bs{u},
      \end{align*}
      which gives us
      \begin{align*}
        \tau&=\frac{\tilde{f}(\bs{z})-f_0(\bs{z})}{\|\nabla
              f_0(\bs{z})\|}-\frac{\tau^2}{2\|\nabla
              f_0(\bs{z})\|}\bs{u}'f_0(\bs{z}+s\tau \bs{u}) \bs{u}\\
            &=\frac{g(\bs{z})}{\|\nabla f_0(\bs{z})\|}-\frac{\tau^2}{2\|\nabla
              f_0(\bs{z})\|}\bs{u}'f_0(\bs{z}+s\tau \bs{u})\bs{u}.
      \end{align*}
      To bound the second term of the last line, note
      \begin{align*}
        \left|\frac{\tau^2}{2\|\nabla
        f_0(\bs{z})\|}\bs{u}'f_0(\bs{z}+s\tau \bs{u}) \bs{u}\right|\le
        \frac{8\varepsilon^2}{9\|\nabla f_0(\bs{z})\|^3}dA,
      \end{align*}
      and by Lemma~\ref{lem:hdr-step1}, we know when $\varepsilon>0 $ is
      sufficiently small,  $\|g\|_{\infty}\le \varepsilon$ indicates $
      |\tilde{f}_{\tau}-f_{\tau}|\le C\varepsilon$, for some $C>0$. Thus, we let
      $\varepsilon $ be sufficiently small such that $C\varepsilon \le a$,
      where $a>0$ is the constant defined in Assumption~D4 and this
      indicates $\bs{z}\in \{\bx:f_{\tau}-a\le f_0(\bx)\le f_{\tau}+a\}$. So
      by assumption $\|\nabla f_0(\bs{z})\|^3$ is bounded away from 0. Hence,
      \begin{align*}
        \tau=\frac{g(\bs{z})}{\|\nabla f_0(\bs{z})\|}+O(\|g\|_{\infty}^2).
      \end{align*}
      A similar argument can be used to prove the case
      $0<f_0(\bs{z})-\tilde{f}(\bs{z})\le \varepsilon$.
    \end{proof}
  \end{longform}
\end{mylongform}

\begin{proof}[Proof of Lemma~\ref{lem:hdr-step4-order}]
  Let $\bs z \in \{ \bx\in\RR^d:f_0(\bx) = \tilde{f}_\tau \}$ and let
  $\bs y \in \{\bx\in\RR^d: \tilde{f}(\bx) = \tilde{f}_\tau \} $ be
  such that $\bs z = \bx + \eta_1 u_{\bx}$ and $\bs y = \bx + \eta_2
  u_{\bx}$ for some $\bx \in \beta_{\tau}$ and $\eta_i \equiv
  \eta_i(\bx) \in \RR$, $i=1,2$.  By Taylor expansion, we have
  \begin{align*}
    f_0(\bs{z})&=f_0(\bx+\eta_1u_{\bx})\\
               &=f_0(\bx)+\eta_1u_{\bx}'\nabla
                 f_0(\bx)+\inv{2}u'_{\bx}\nabla^2f_0(\bx+s_1\eta_1u_{\bx})u_{\bx}\eta_1^2,
  \end{align*}
  where $s_1\in [0,1]$, and
  \begin{align*}
    \tilde{f}(\bs{y})&=\tilde{f}(x+\eta_2u_{\bx})\\
                     &=\tilde{f}(\bx)+\eta_2u'_{\bx}\nabla\tilde{f}(\bx+s_2\eta_2u_{\bx}).
  \end{align*}
  We then see
  \begin{equation}
    \label{eq:23}
    \begin{split}
      0 &= f_0(\bs z) - \tilde{f}( \bs y) \\
      & = f_0(\bx)+\eta_1u_{\bx}'\nabla
      f_0(\bx)+\inv{2}u'_{\bx}\nabla^2f_0(\bx+s_1\eta_1u_{\bx})u_{\bx}\eta_1^2\\
      &\qquad-\tilde{f}(\bx)-\eta_2u'_{\bx}\nabla\tilde{f}(\bx+s_2\eta_2u_{\bx}),
    \end{split}
  \end{equation}
  where $s_2\in [0,1]$.   We thus have
  \begin{align}
    \label{eq:24}
    \begin{split}
      \eta_1 - \eta_2& = \frac{ f_0(\bx) - \tilde{f}(\bx)}{ \| \grad f_0(\bx) \|}
      +
      \frac{u'_{\bx}\nabla^2f_0(\bx+s_1\eta_1u_{\bx})u_{\bx}\eta_1^2}{2\|\nabla
        f_0(\bx)\|}\\
      &\qquad+\eta_2 \frac{ \la \grad \tilde{f}(\bx+s_2\eta_2u_{\bx}) - \grad
        f(\bx) , \grad f_0(\bx) \ra }{ \| \grad f_0(\bx) \|^2 }.
    \end{split}
  \end{align}
  A similar analysis as in \eqref{eq:23}, beginning from the identity
  $\tilde{f}_\tau - \fftau = \tilde f(\bs y) - f_0 (\bx) $ shows that
  $\eta_2 = O( \| g \|_\infty)$ since by Lemma~\ref{lem:hdr-step1},
  $\tilde{f}_\tau - \fftau = O ( \| g \|_\infty)$.  (Similarly,
  $\eta_1 = O( \| g \|_\infty)$.)
  Since by Assumption~\ref{assm:DA-hdr}, $f_0$ has bounded second
  derivatives,the second term on the right in \eqref{eq:24} is $O( \|
  g \|_{\infty} ^2)$. For the second term, note
  \begin{align*}
    \MoveEqLeft    \frac{ \la \grad \tilde{f}(\bx+s_2\eta_2u_{\bx}) - \grad
    f(\bx) , \grad f_0(\bx) \ra }{ \| \grad f_0(\bx) \|^2 }\\
 &= \frac{ \la \grad \tilde{f}(\bx+s_2\eta_2u_{\bx}) - \grad
   f_0(\bx+s_2\eta_2u_{\bx}) , \grad f_0(\bx) \ra }{ \| \grad
   f_0(\bx) \|^2 }\\
 &\qquad+ \frac{ \la \grad f_0(\bx+s_2\eta_2u_{\bx}) - \grad f_0(\bx) , \grad f_0(\bx) \ra }{ \| \grad f_0(\bx) \|^2 },
  \end{align*}
  and by Assumption~\ref{assm:DA-hdr}, $\nabla f_0(\bx)$ is Lipschitz, we
  have the third term is $O(\|g\|_{\infty}\|\nabla g\|_{\infty}+\|g\|^2_{\infty})$.

  We will apply Lemma~\ref{lem:COV-approx} to $h(\bs y) = \one_{\{\tilde{f}(\bs y) \ge\tilde{f}_{\tau}\}} - \one_{\{f_0(\bs y) \ge\tilde{f}_{\tau}\}}$.  For $\|g \|_\infty$ small enough, $ \{\tilde{f}\ge\tilde{f}_{\tau}\}\Delta \{f_0 \ge\tilde{f}_{\tau}\} \subset \beta_{\tau}^\delta$ for some $\delta > 0$, by
  Lemma~\ref{lem:hdr-step1},
  and by Assumption~\ref{assm:DA-hdr}~\ref{item:DA-hdr-grad} and \ref{item:DA-hdr-localization}.  Thus the left side of \eqref{eq:22} equals $\int_{\beta_\tau^\delta} h(\bs y) d\bs y$.
  We may shrink $\delta$ so that the conclusion of Theorem~\ref{thm:eps-nbhd} holds, so that for each $\bs y \in \beta_\tau^\delta$ there is a unique closest $\bx_{\bs y} \in \beta_\tau$.
  Now, for $\delta$ small enough, considering $\one_{ \lb \tilde{f}(\bx + t u_{\bx} ) \ge \tilde{f}_\tau \rb}$ as a function of $ t \in [-\delta, \delta]$, we can see that $\one_{ \lb \tilde{f}(\bx + t u_{\bx} ) \ge \tilde{f}_\tau \rb} = \one_{\lb -\delta \le t \le \eta_2(\bx) \rb }$, because $\grad \tilde{f}(\bx) ' \grad {f}_0(\bx) > 0$, so $\tilde{f}$ is locally strictly decreasing in the direction of $u_{\bx} = - \grad f_0(\bx) / \| \grad f_0(\bx) \|$.  Similarly $\one_{ \lb f_0(\bx + tu_{\bx}) \ge \tilde{f}_\tau \rb} = \one_{\lb -\delta \le t \le \eta_1(\bx) \rb } $. Thus for $\bs y \in \beta_\tau^\delta$,
  \begin{equation}
    \label{eq:25}
    h(\bs{y})  = \one_{\lb \eta_1(\bs{x}_{\bs y}) \le t \le \eta_2(\bs{x}_{\bs y}) \rb}
    - \one_{ \lb \eta_2(\bs{x}_{\bs y}) \le t \le \eta_1(\bs{x}_{\bs y}) \rb},
  \end{equation}
  (where $\one_{\lb a \le t \le b \rb}$ is just identically $0$ if $b < a$) and so for $\bx \in \beta_\tau$,
  \begin{equation}
    \label{eq:26}
    H(\bx)
    := \int_{-\delta}^\delta h(\bx + t u_{\bx}) \, dt
    = \eta_2(\bx) - \eta_1(\bx).
  \end{equation}
  We can now apply Lemma~\ref{lem:COV-approx} to see
  \begin{align*}
    \int_{\beta_\tau^\delta} h(\bx) d\bx
    & = \int_{\beta_\tau} H( \bx) \, d\cH(\bx)
      +  O( \sup_{\bx \in \beta_\tau} \eta_2(\bx)-\eta_1(\bx) )^2 \quad \text{ as }
      \sup_{\bx \in \beta_\tau} \eta_2(\bx) - \eta_1(\bx) \to 0, \\
    & =\int_{\beta_\tau} \frac{ \tilde{f}(\bx) - f_0(\bx) }{ \| \grad f_0(\bx) \| } \, d \cH(\bx)
      + O( \| g \|_\infty^2 ) + O( \| g \|_\infty \| \grad g \|_\infty )
  \end{align*}
  as
  $ \| g \|_\infty^2 +  \| g \|_\infty \| \grad g \|_\infty  \to 0$,
  by  \eqref{eq:26} and \eqref{eq:24} (and because $\sup_{\bx \in \beta_\tau} \eta_2(\bx) - \eta_1(\bx) = O( \| g \|_\infty )$ and the term on the right of \eqref{eq:24} is $O( \| g \|_\infty \| \grad g \|_\infty )$).
  \begin{mylongform}
    \begin{longform}
      Note that for $i=1,2$, $\eta_i = O( \| g \|_\infty)$.  This is actually needed to bound the order of the error from the lemma.
    \end{longform}
  \end{mylongform}
\end{proof}

\begin{proof}[Proof of Lemma~\ref{lem:hdr-step4-order-g}]
  Let $\bs{y}\in\{
  \bx\in\RR^d:\tilde{f}(\bx)=\tilde{f}_{\tau}\}$ be such
  that $\bs{y}=\bx+\eta u_{\bx}$ for some $\bx\in \beta_{\tau}$. Then
  \begin{align*}
    \tilde{f}(\bs{y})=\tilde{f}(\bx)+\eta\nabla \tilde{f}(\bx+s\eta u_{\bx})'u_{\bx},
  \end{align*}
  where $s\in[0,1]$   depends on $\bx$. Then subtracting
  $f_0(\bx)$ on both sides yields
  \begin{align*}
    \tilde{f}_{\tau}-f_{\tau,0}=\tilde{f}(\bx)-f_0(\bx)+\eta\nabla \tilde{f}(\bx+s\eta u_{\bx})'u_{\bx},
  \end{align*}
  so
  \begin{align*}
    \eta = \frac{\tilde{f}_{\tau}-f_{\tau,0}-g(\bx)}{\nabla f(\bx+s\eta u_{\bx})'u_{\bx}},
  \end{align*}
  and by Lemma~\ref{lem:hdr-step1},
  $\tilde{f}_{\tau}-f_{\tau,0}-g(\bx)=O(\|g\|_{\infty})$. We also know $\nabla
  f(\bx+s\eta u_{\bx})'u_{\bx}$ is bounded away from zero as
  $\|g\|^2_{\infty}+\|g\|_{\infty}\|\nabla g\|_{\infty}\rightarrow 0$. Then
  \begin{align*}
    \int_{\RR^d}g(\bx)\left(\one_{\{\tilde{f}(\bx)\ge
    \tilde{f}_{\tau}\}}-\one_{\{f(\bx)\ge f_{\tau}\}}\right)\,d\bx&\le \|g\|_{\infty}\int_{\RR^d}\left|\one_{\{\tilde{f}(\bx)\ge
                                                                    \tilde{f}_{\tau}\}}-\one_{\{f(\bx)\ge f_{\tau}\}}\right|\,d\bx\\
                                                                  &=O(\|g\|_{\infty}^2).
  \end{align*}
\end{proof}

\begin{proof}[Proof of Lemma~\ref{lem:ftaun-bias-exp}]
  It is well known (e.g., \cite{Wand:1995kv}) that
  \begin{align*}
    \bb{E}\ffnH(\bx)=f_0(\bx)+\frac{1}{2}\mu_2(K)\tr\{\bs{H}\nabla^2
    f_0(\bx)\}+o\{\tr(\bs{H})\}.
  \end{align*}
  This statement and all asymptotic statements in this proof are  as $n \to \infty$ (implying $\bH \to 0$).
  Now we show
  \begin{align*}
    \MoveEqLeft\int_{\beta_{\tau}}\frac{\bb{E}\ffnH(\bx)-f_0(\bx)}{\|\nabla
    f_0(\bx)\|}\,d\mc{H}(\bx)+\inv{f_{\tau,0}}\int_{\mc{L}_{\tau}}\bb{E}\ffnH(\bx)-f_0(\bx)\,d\bx\\
 &=V_1(\bH)+V_2(\bH)+o\{\tr(\bH)\}.
  \end{align*}
  \begin{mylongform}
    \begin{longform}
      For fixed $\bx\in\beta_{\tau}$,
      by change of variable and a Taylor expansion, we  have
      \begin{align}
        \MoveEqLeft \bb{E}\ffnH(\bx)-f_0(\bx)-\frac{1}{2}\mu_2(K)\tr\{\bs{H}\nabla^2
        f_0(\bx)\} \label{eq:lemmaA6-Efhat} \\
     &=\int_{\RR^d}K(\bs{z})f_0(\bx-\bH^{1/2}\bs{z})\,d\bs{z}-f_0(\bx)-\frac{1}{2}\mu_2(K)\tr\{\bs{H}\nabla^2
       f_0(\bx)\}, \nonumber\\
     &=\int_{\RR^d}K(\bs{z})\lb f_0(\bx)
       +\frac{1}{2}(\bH^{1/2}\bs{z})^T\nabla^2
       f_0(\bx_{\bs{z}})(\bH^{1/2}\bs{z})\rb\,d\bs{z} \nonumber \\
     &\qquad -f_0(\bx)-\frac{1}{2}\mu_2(K)\tr\{\bs{H}\nabla^2
       f_0(\bx)\}, \nonumber
      \end{align}
      where $\bx_{\bs{z}}=\bx-s_{\bs{z}}\bH^{1/2}\bs{z}$ for some $s_{\bs{z}}\in
      (0,1)$ depending on $\bs{z}$.
      We used that,  by Assumption~\ref{assm:KA}, $\int_{\RR^d}K(\bs{z})(\bH^{1/2}\bs{z})^T\nabla
      f_0(\bx)\,d\bs{z}=0$.
      (Recall
      \begin{align*}
        \MoveEqLeft \bb{E}\ffnH(\bx)-f_0(\bx)-\frac{1}{2}\mu_2(K)\tr\{\bs{H}\nabla^2
        f_0(\bx)\}\\
     &=\int_{\RR^d}K_{\bH}(\bx-\bs{y})f_0(\bs{y})\,d\bs{y}-f_0(\bx)-\frac{1}{2}\mu_2(K)\tr\{\bs{H}\nabla^2
       f_0(\bx)\}.)
      \end{align*}
    \end{longform}
  \end{mylongform}
  \begin{mylongform}
    \begin{longform}
      \begin{align*}
        \frac{1}{2}\mu_2(K)\tr\{\bs{H}\nabla^2
        f_0(\bx)\}     =\int_{\RR^d}\frac{1}{2}K(\bs{z})(\bH^{1/2}\bs{z})^T\nabla^2
        f_0(\bx)(\bH^{1/2}\bs{z})\,d\bs{z}.
      \end{align*}
      We further have
      \begin{align*}
        \MoveEqLeft \bb{E}\ffnH(\bx)-f_0(\bx)-\frac{1}{2}\mu_2(K)\tr\{\bs{H}\nabla^2
        f_0(\bx)\}\\
     &=\inv{2}\int_{\RR^d}K(\bs{Z})(\bH^{1/2}\bs{z})^T\nabla^2
       f_0(\bx_{\bs{z}})(\bH^{1/2}\bs{z})\,d\bs{z}-\inv{2}\int_{\RR^d}K(\bs{Z})(\bH^{1/2}\bs{z})^T\nabla^2
       f_0(\bx)(\bH^{1/2}\bs{z})\,d\bs{z}.
      \end{align*}
    \end{longform}
  \end{mylongform}
  For fixed $\bx\in\beta_{\tau}$,
  by change of variable and a Taylor expansion, we  have
  \begin{align}
    \MoveEqLeft \bb{E}\ffnH(\bx)-f_0(\bx)-\frac{1}{2}\mu_2(K)\tr\{\bs{H}\nabla^2
    f_0(\bx)\} \label{eq:lemmaA6-Efhat-0} \\
 &\le \inv{2}\int_{\RR^d}K(\bs{z})(\bH^{1/2}\bs{z})^T\left|\nabla^2
   f_0(\bx_{\bs{z}})-\nabla^2
   f_0(\bx)\right|(\bH^{1/2}\bs{z})\,d\bs{z}, \nonumber
  \end{align}
  where $\bx_{\bs{z}}=\bx-s_{\bs{z}}\bH^{1/2}\bs{z}$ for some $s_{\bs{z}}\in
  (0,1)$ depending on $\bs{z}$.
  Now let $M(\bx,\bs{z}) =\max\lb\left|\nabla^2
    f_0(\bx_{\bs{z}})-\nabla^2 f_0(\bx)\right|\rb_{i,j}$ which also implicitly
  depends on $\bH$ and is uniformly bounded since $\nabla^2 f_0$ is uniformly
  bounded. Then \eqref{eq:lemmaA6-Efhat-0} is bounded above by
  $\inv{2}\tr\lp\bH\int_{\RR^d}M(\bx,\bs{z})K(\bs{Z})\bs{z}\bs{z}^T\,d\bs{z}\rp.$
  Then
  \begin{align}
    \MoveEqLeft    \int_{\beta_{\tau}}\frac{\bb{E}\ffnH(\bx)-f_0(\bx)-\inv{2}\mu_2(K)f_0(\bx)}{\|\nabla
    f_0(\bx)\|}\,d\mc{H}(\bx) \label{eq:lemmaA6-Efhat-1} \\
 &\le\int_{\beta_{\tau}}\frac{1}{\|\nabla
   f_0(\bx)\|}\inv{2}\tr\lp\bH\int_{\RR^d}M(\bx,\bs{z})K(\bs{z})\bs{z}\bs{z}^T\,d\bs{z}\rp\,d\mc{H}(\bx)
  \end{align}
  which equals
  \begin{equation*}
    \inv{2}\tr\lp\bH\int_{\beta_{\tau}}\frac{1}{\|\nabla
      f_0(\bx)\|}\int_{\RR^d}M(\bx,\bs{z})K(\bs{z})\bs{z}\bs{z}^T\,d\bs{z}\,d\mc{H}(\bx)\rp.
  \end{equation*}
  Applying the Dominated Convergence theorem to both the outer integral and
  the inner integral yields
  \begin{align*}
    \int_{\beta_{\tau}}\frac{1}{\|\nabla
    f_0(\bx)\|}\int_{\RR^d}M(\bx,\bs{z})K(\bs{z})\bs{z}\bs{z}^T\,d\bs{z}\,d\mc{H}(\bx)\rightarrow
    0,
  \end{align*}
  and thus
  \eqref{eq:lemmaA6-Efhat-1} equals
  \begin{align*}
    \int_{\beta_{\tau}}\frac{\bb{E}\ffnH(\bx)-f_0(\bx)-\inv{2}\mu_2(K)f_0(\bx)}{\|\nabla
    f_0(\bx)\|}\,d\mc{H}(\bx)
    &=  \int_{\beta_{\tau}}\frac{\bb{E}\ffnH - f_0 }{\|\nabla
      f_0 \|}\,d\mc{H} - V1(\bH) \\
    & = o\lb\tr(\bH)\rb.
  \end{align*}
  With the same argument, we can show
  \begin{align*}
    \inv{f_{\tau,0}}\int_{\mc{L}_{\tau}}\bb{E}\ffnH(\bx)-f_0(\bx)\,d\bx-V_2(\bH)=o\lb\tr(\bH)\rb.
  \end{align*}
  In order to finish the proof
  \begin{mylongform}
    \begin{longform}
      that
      \begin{align*}
        \bb{E} \fftaun&=f_{\tau,0} +
                        w_0 \left\{V_1(\bH)+V_2(\bH)\right\}+o\{\tr(\bs{H})\},
      \end{align*}
    \end{longform}
  \end{mylongform}
  it is sufficient to show that for any $\eta>0$,
  \begin{equation}
    \label{eq:28}
    \EE \left|\fftaun - f_{\tau,0} -
      w_0
      \left\{V_1(\bH)+V_2(\bH)\right\}\right|\one_{\{\|\ffnH-f_0\|_{\infty}+\|\nabla
      \ffnH-\nabla f_0\|_{\infty}>\eta\}}
  \end{equation}
  is $o\{\tr(\bs{H})\}$. It can be show that $\fftaun=O(1)$.
  \begin{mylongform}
    \begin{longform}
      It follows from Lemma~\ref{lem:hdr-step1} that
      there exists $\varepsilon>0$ small enough, we have
      \begin{align*}
        \EE\lp\fftaun^2\rp&=\EE\lp\fftaun^2\one_{\lb \|f_0-\ffnH\|_{\infty}\le
                            \epsilon\rb}\rp+\EE\lp\fftaun^2\one_{\lb
                            \|f_0-\ffnH\|_{\infty}> \epsilon\rb}\rp\\
                          &\le f_{\tau,0}+C_1\varepsilon+\EE\lp\fftaun^2\one_{\lb
                            \|f_0-\ffnH\|_{\infty}> \epsilon\rb}\rp.
      \end{align*}
      For the last term on the last line,
      \begin{align*}
        \EE\lp\fftaun^2\one_{\lb
        \|f_0-\ffnH\|_{\infty}> \epsilon\rb}\rp&\le\EE\lp\|\ffnH\|^2_{\infty}\one_{\lb
                                                 \|f_0-\ffnH\|_{\infty}> \epsilon\rb}\rp\\
                                               &\le |\bH|^{-1}\|K\|^2_{\infty}P\lp\|f_0-\ffnH\|_{\infty}> \epsilon\rp=O(1),
      \end{align*}
      where the last line follows from Theorem~\ref{thm:KDE-sup-rate}.
    \end{longform}
  \end{mylongform}
  And we  have
  \begin{align*}
    &P(\|\ffnH-f_0\|_{\infty}+\|\nabla
      \ffnH-\nabla f_0\|_{\infty}>\eta)\\&\quad\le P(\|\ffnH-f_0\|_{\infty}>\eta/2)+P(\|\nabla
                                           \ffnH-\nabla f_0\|_{\infty}>\eta/2)
                                           =o(n^{-1}).
  \end{align*}
  Then by the Cauchy-Schwarz inequality \eqref{eq:28} is $o\{\tr(\bH)\}$.
\end{proof}

\begin{proof}[Proof of Lemma~\ref{lem:ftaun-var-exp}]
  First, we show
  \begin{equation}
    \label{eq:6}
    \Var\left\{\int_{\mc{L}_{\tau}}\ffnH(\bx)-f_0(\bx)\,d\bx\right\}=O(n^{-1}).
  \end{equation}
  We write the left side of \eqref{eq:6} as
  \begin{equation}
    \label{eq:nvarint}
    \begin{split}
      n^{-1} \Var\left\{\int_{\mc{L}_{\tau}}K_{\bH}(\bx-\bX_i)\,d\bx\right\}
      &= n^{-1} \bb{E} \left\{\int_{\mc{L}_{\tau}}K_{\bH}(\bx-\bX_i)\,d\bx\right\}^2
      \\
      & \qquad - n^{-1} \left[
        \bb{E}\left\{ \int_{\mc{L}_{\tau}}K_{\bH}(\bx-\bX_i)\,d\bx\right\}\right]^2.
    \end{split}
  \end{equation}
  We first consider the first term on the right side of \eqref{eq:nvarint}.
  \begin{mylongform}
    \begin{longform}
      (Which is
      \begin{align*}
        \int_{\mc{L}_{\tau}} K_{\bH}(\bx-\bX_i)\,d\bx
        =\int_{\mc{L}_{\tau}}|\bH|^{-1/2}K(\bH^{-1/2}(\bx-\bX_i))\,d\bx.)
      \end{align*}
    \end{longform}
  \end{mylongform}
  If $\bs{y}$ is an interior point of ${\mc{L}_{\tau}}$, there exists $r>0$ such that
  $B(\bs{y},r)\subset {\mc{L}_{\tau}}$. Then we have
  \begin{align*}
    \int_{\mc{L}_{\tau}}|\bH|^{-1/2}K(\bH^{-1/2}(\bx-\bs{y}))\,d\bx&\ge
                                                                     \int_{B(\bs{y},r)}|\bH|^{-1/2}K(\bH^{-1/2}(\bx-\bs{y}))\,d\bx\\
                                                                   &=\int \one_{\{\|\bH^{1/2}\bs{z}\|<r\}}K(\bs{z})\,d\bs{z},
  \end{align*}
  and $\one_{\{\|\bH^{1/2}\bs{z}\|<r\}}\rightarrow 1$ as $\bH \rightarrow 0$ for every $\bs{z}$; thus by the Dominated Convergence Theorem, $ \int_{\mc{L}_{\tau}}|\bH|^{-1/2}K(\bH^{-1/2}(\bx-\bs{y}))\,d\bx\rightarrow 1$ as $ \bH\rightarrow 0$.  Similarly, if $\bs{y}$ is an exterior point of $\{\bx|f_0(\bx)\ge f_{\tau,0}\}$, that is, there exists $r>0$ such that $B(\bs{y},r)\cap {\mc{L}_{\tau}}=\emptyset$. Then
  \begin{align*}
    \int_{\mc{L}_{\tau}} K_{\bs{H}} (\bx - \bs{y}) \,d\bx
    &\le 1-
      \int_{B(\bs{y},r)}|\bH|^{-1/2}K(\bH^{-1/2}(\bx-\bs{y}))\,d\bx\\
    &=1-\int \one_{\{\|\bH^{1/2}\bs{z}\|<r\}}K(\bs{z})\,
      d\bs{z}\rightarrow 0
  \end{align*}
  as    $ \bH\rightarrow 0$. And by Assumption~\ref{assm:DA-hdr},
  $P(f_0(\bx)=f_{\tau,0})=0$. So we have that almost surely $  \lp \int_{\mc{L}_{\tau}} K_{\bH}(\bx-\bX_i)\,d\bx \rp^p \rightarrow\one_{\mc{L}_{\tau}}$, as $n\rightarrow \infty$, for $p=1,2$.
  \begin{mylongform}
    \begin{longform}
      Now it is also
      obvious that both  random integrals on the right side of (\ref{eq:nvarint})
      are bounded by 1 and
      converge to $\one_{\mc{L}_{\tau}}$ a.s.
    \end{longform}
  \end{mylongform}
  Applying the Dominated Convergence Theorem to the two expectations on the
  right of (\ref{eq:nvarint}) yields
  \begin{align*}
    n \Var\left\{\int_{\mc{L}_{\tau}}\ffnH(\bx)-f_0(\bx)\,d\bx\right\}\rightarrow
    \PP(f_0(\bX_i)\ge f_{\tau,0})(1-     \PP(f_0(\bX_i)\ge f_{\tau,0})),
  \end{align*}
  as $n\rightarrow \infty$,
  which shows $\Var \left\{\int_{\mc{L}_{\tau}}\ffnH(\bx)
    -f_0(\bx)\,d\bx\right\}=O(n^{-1})$.

  Next, we show
  \begin{align*}
    \Var \lb \int_{\beta_\tau}\frac{\ffnH(\bx)-f_0(\bx)}{\|\nabla
    f_0(\bx)\|}\,d\mc{H}(\bx)\rb=o\lp\frac{1}{n|\bH|^{1/2}}\rp.
  \end{align*}
  The left side of the previous display equals
  \begin{align*}
    \MoveEqLeft
    n^{-1} \Var  \int_{\beta_\tau}
    \frac{ K_{\bH}(\bx-\bX_i)}{\|\nabla
    f_0(\bx)\|}\,d\mc{H}(\bx) \\
 & =\frac{1}{n}\bb{E}\left[ \lb \int_{\beta_\tau}\frac{K_{\bH}(\bx-\bX_i)}{
   \|\nabla
   f_0(\bx)\|}\,d\mc{H}(\bx)\rb^2\right]
         -\frac{1}{n}\left[\bb{E} \lb \int_{\beta_\tau}\frac{
         K_{\bH} (\bx-\bX_i)}{
         \|\nabla f_0(\bx)\|}\,d\mc{H}(\bx)\rb\right]^2,
  \end{align*}
  and
  $\lb \int_{\beta_\tau}\frac{|\bH|^{-1/2}K(\bH^{1/2}(\bx-\bX_i))}{\|\nabla
    f_0(\bx)\|}\,d\mc{H}(\bx)\rb^2$ can be written as
  \begin{align*}
    \int_{\beta_\tau}\frac{|\bH|^{-1/2}K(\bH^{1/2}(\bx-\bX_i))}{\|\nabla
    f_0(\bx)\|}\,d\mc{H}(\bx)\int_{\beta_\tau}\frac{|\bH|^{-1/2}K(\bH^{1/2}(\bs{y}-\bX_i))}{\|\nabla
    f_0(\bs{y})\|}\,d\mc{H}(\bs{y}).
  \end{align*}
  By taking the expectation over $\bX_i$ and reordering the integrals by
  Tonelli's theorem, we can then see that
  $n^{-1} \EE
  \lb \int_{\beta_\tau}\frac{ K_{\bH}(\bx-\bX_i)}{\|\nabla
    f_0(\bx)\|}\,d\mc{H}(\bx)\rb^2 $ equals
  \begin{equation}
    \label{eq:hdr-step4-secondmoment}
    \begin{split}
      \inv{n |\bH|}
      \int_{\beta_\tau}\int_{\beta_\tau}
      & \frac{1}{
        \|\nabla
        f_0(\bx)\|}\frac{1}{\|\nabla f_0(\bs{y})\|} \times \\
      & \;
      \int_{\RR^d}
      K(\bH^{-1/2}(\bx-a)) K(\bH^{-1/2}(\bs{y}-a))f_0(\bs{a})
      \,d\bs{a}
      \,
      d\mc{H}(\bs{x})d\mc{H}(\bs{y}).
    \end{split}
  \end{equation}
  And
  \begin{align*}
    \MoveEqLeft \int_{\RR^d} K( \bH^{-1/2} (\bx - \bs a)) K( \bH^{-1/2} (\bs y - \bs a))
    f_0(\bs a) d \bs a \\
 & = \int_{\RR^d} K(\bs z) K( \bs z + \bH^{-1/2} ( \bs y - \bx) )
   f_0( \bx - \bH^{1/2} \bs z) | \bH|^{1/2} d \bs z
  \end{align*}
  by the change of variables $\bs z =  \bH^{-1/2 } ( \bx - \bs a)$.
  And by first-order Taylor expansion, the previous display equals
  \begin{equation*}
    |\bH|^{1/2}
    \int_{\RR^d} K(\bs z) K( \bs z + \bH^{-1/2} ( \bs y - \bx) )
    \lb f_0(\bx) - \bH^{1/2} \bs z \grad f_0(\bx-s\bH^{1/2}\bs{z}) \rb d \bs z
  \end{equation*}
  where $s\in[0,1]$ depends on $\bs{z}$.  Since by
  Assumption~\ref{assm:DA-hdr}, $\nabla f_0(\bx)$ is bounded,
  \begin{mylongform}
    \begin{longform}
      (we
      can get rid of the  term $  |\bH|^{1/2}
      \int_{\RR^d} K(\bs z) K( \bs z + \bH^{-1/2} ( \bs y - \bx) ) \bH^{1/2}
      \bs z \grad f_0(\bx-s\bH^{1/2}\bs{z}) d \bs z$),
    \end{longform}
  \end{mylongform}
  we can express \eqref{eq:hdr-step4-secondmoment} as
  \begin{equation}
    \label{eq:7}
    \begin{split}
      \frac{\fftau}{n | \bH|^{1/2}}
      \int_{\beta_\tau}  \int_{\beta_{\tau}} &
      \inv{ \| \grad f_0(\bx) \| \| \grad f_0(\bs y) \|} \times \\
      & \int_{\RR^d} K(\bs z) K( \bs z + \bH^{-1/2} (\bs y - \bx)) d \bs z
      d\cH (\bs y) d\cH(\bs x)+o\lp n^{-1}|\bH|^{-1/2}\rp.
    \end{split}
  \end{equation}
  Note that if $\bs{x}\ne \bs{y}$, then
  \begin{align*}
    \int_{\RR^d} K(\bs z) K( \bs z + \bH^{-1/2} (\bs y - \bx)) d \bs
    z\rightarrow 0,
  \end{align*}
  as $\bH\rightarrow 0$ by the Dominated Convergence Theorem. For  fixed
  $\bs{x}= \bs{y}$, we have
  $\int_{\RR^d} K(\bs z) K( \bs z + \bH^{-1/2} (\bs y - \bx)) d \bs
  z=R(K),$
  so
  \begin{align*}
    \int_{\RR^d} K(\bs z) K( \bs z + \bH^{-1/2} (\bs y - \bx)) d \bs
    z\rightarrow R(K)\one_{\{\bs{x}=\bs{y}\}},
  \end{align*}
  as $\bH\rightarrow 0$. Then applying the Dominated Convergence Theorem
  shows that  the first summand in \eqref{eq:7} converges to
  \begin{equation*}
    \inv{ n | \bH|^{1/2}}
    \int_{\beta_\tau} f_0(\bx) \int_{\beta_{\tau}}
    \inv{ \| \grad f_0(\bx) \| \| \grad f_0(\bs y) \|}
    R(K)\one_{\{\bs{x}=\bs{y}\}}
    d\cH (\bs y) d\cH(\bs x)=0.
  \end{equation*}
  So we proved
  \begin{equation}
    \label{eq:33}
    \Var \lb \int_{\beta_\tau}
    \frac{\ffnH(\bx)-f_0(\bx)}{ \| \nabla f_0(\bx)\|}
    \, d\mc{H}(\bx)\rb=o\lp\frac{1}{n|\bH|^{1/2}}\rp.
  \end{equation}
  To complete the proof, it remains to show that for any $\eta>0$,
  \begin{align*}
    \bb{E}\lb\fftaun-\bb{E}(\fftaun)\rb^2\one_{\{\|\ffnH-f_0\|_{\infty}+\|\nabla
    \ffnH-\nabla f_0\|_{\infty}>\eta\}}=o\lp\frac{1}{n|\bH|^{1/2}}\rp,
  \end{align*}
  which follows the same steps we used at the end of the proof of Lemma \ref{lem:ftaun-bias-exp}.  The proof is then complete by \eqref{eq:fftaun-ftau}.
\end{proof}

\medskip

\noindent The reader may be surprised by the conclusion of
\eqref{eq:33}, since we have $\Var \ffnH(\bx) = O( n^{-1} | \bH|^{-1/2})$; for intuition, it may help to recall that $\int_{\RR^d} \ffnH(\bx) d \bx = 1$, so has variance $0$.

\medskip

\begin{proof}[Proof of Lemma~\ref{lem:hdr-step5}]
  Note by Theorem~\ref{cor:convergerate}, we have that
  $ \bb{E}\nabla\ffnH(\bx) $ converges to $\nabla \ffnH(\bx)$ uniformly
  in  $\bx\in\RR^d$.  By \citet[Theorem A.5.1]{durrett2010probability},  we
  have $\nabla\bb{E}\ffnH(\bx)=\bb{E}\nabla \ffnH(\bx)$, thus we
  also have $\nabla\bb{E}\ffnH(\bx)$ also
  converges to $\nabla f_0(\bx)$ uniformly in $\bx$.

  Now,   we show $\bb{E}\ffnH(\bx+tu_{\bx})$ is strictly monotone for
  $t\in[-\delta_n,\delta_n]$ when $n$ is sufficiently. From our
  assumption, $\nabla f_0$ is Lipschitz. So when $n$ large enough and
  $\delta_n$ small enough, for each $t\in [-\delta_n,\delta_n]$ there
  exists $\epsilon_t$ such that $\nabla \bb{E}\ffnH(\bx+tu_{\bx})=\nabla
  f_0(\bx)+\epsilon_t$ and $\|\epsilon_t\|<\frac{l}{2}$, where
  $l=\inf_{\bx\in\beta_{\tau}}\|\nabla f_0(\bx)\|$ and we know $l>0$
  from Assumption \ref{assm:DA-hdr}. Then
  \begin{align*}
    \frac{d\bb{E}\ffnH(\bx+tu_{\bx})}{dt}&=\nabla\bb{E}\ffnH(\bx+tu_{\bx})u_{\bx}\\
                                         &=(\nabla
                                           f_0(\bx)+\epsilon_t)\frac{- \nabla f_0(\bx)}{\|\nabla
                                           f_0(\bx)\|}=-\|\nabla f_0(\bx)\|-\frac{ \nabla f_0(\bx)'\epsilon_t}{\|\nabla f_0(\bx)\|}<-\frac{l}{2},
  \end{align*}
  for all $t\in [-\delta_n,\delta_n]$ by the Cauchy-Schwarz inequality. Moreover, from Lemma \ref{lem:ftaun-bias-exp} we
  have
  \begin{align*}
    \bb{E} \fftaun&=f_{\tau,0}+\left\{\int_{\beta_\tau}\inv{\|\nabla
                    f_0\|}\,d\mc{H}\right\}^{-1}\left\{\int_{\beta_\tau}\frac{\mu_2(K)\tr\lp\bs{H}\nabla^2
                    f_0\rp}{2\|\nabla f_0\|}\,d\mc{H}\right.\nonumber\\
                  &\qquad\left.+\int_{\mc{L}_{\tau}}\frac{1}{2}\mu_2(K)\tr\lp\bs{H}\nabla^2
                    f_0\rp\,d\lambda\right\}+o\{\tr(\bs{H})\},
  \end{align*}
  and we also know
  \begin{align*}
    \MoveEqLeft  \bb{E}\ffnH\lp\bx+\frac{t}{\sqrt{n|\bH|^{1/2}}}u_
    {\bx}\rp\\&=f_0\lp\bx+\frac{t}{\sqrt{n|\bH|^{1/2}}}u_
                {\bx}\rp\\
 &\qquad+\inv{2}\int\bs{z}'\bH^{1/2}\nabla^2f_0\lp\bx+\frac{t}{\sqrt{n|\bH|^{1/2}}}u_
   {\bx}-s_{\bs{z}}\bH^{1/2}\bs{z}\rp\bH^{1/2}\bs{z}K(\bs{z})\,d\bs{z}\\
 &=f_0(\bx)+\nabla f_0\lp\bx+\frac{w_{\bx} t}{\sqrt{n|\bH|^{1/2}}}u_
   {\bx}\rp'\frac{t}{\sqrt{n|\bH|^{1/2}}}u_
   {\bx}\\
 &\qquad+\inv{2}\int\bs{z}'\bH^{1/2}\nabla^2 f_0\lp\bx+\frac{t}{\sqrt{n|\bH|^{1/2}}}u_
   {\bx}-s_{\bs{z}}\bH^{1/2}\bs{z}\rp\bH^{1/2}\bs{z}K(\bs{z})\,d\bs{z},
  \end{align*}
  and $\inv{2}\int\bs{z}^T\bH^{1/2}\nabla^2f_0\lp\bx+\frac{t}{\sqrt{n|\bH|^{1/2}}}u_
  {\bx}-s_{\bs{z}}\bH^{1/2}\bs{z}\rp\bH^{1/2}\bs{z}K(\bs{z})\,d\bs{z}$ is $O(\tr(\bH))$
  uniformly in $\bx$. Then
  \begin{align}
    \label{eq:tx-exp}
    \MoveEqLeft\frac{t_{\bx}^{\ast}}{\sqrt{n|\bH|^{1/2}}}\nabla f_0\lp\bx+\frac{w_{\bx}t}{\sqrt{n|\bH|^{1/2}}}u_
    {\bx}\rp'u_
    {\bx}\nonumber\\
 &=\left[w_0\left\{\int_{\beta_\tau}\frac{D_1(\bx,\bH)}{\|\nabla f_0\|}\,d\mc{H}+\int_{\mc{L}_{\tau}}D_1(\bx,\bH)\,d\lambda\right\}\right.\nonumber\\
 &\qquad\left.-\inv{2}\int\bs{z}^T\bH^{1/2}\nabla^2f_0\lp\bx+\frac{t}{\sqrt{n|\bH|^{1/2}}}u_
   {\bx}-s_{\bs{z}}\bH^{1/2}\bs{z}\rp\bH^{1/2}\bs{z}K(\bs{z})\,d\bs{z}\right]\nonumber\\
 &\quad(1+o(1)).
  \end{align}
  Since $\nabla f_0$ is Lipschitz, when $n$ is large enough $\nabla f_0\lp\bx+\frac{t}{\sqrt{n|\bH|^{1/2}}}u_
  {\bx}\rp'u_
  {\bx}<-\frac{l}{2}$ for all $\bx\in \beta_{\tau}$ and all $t\in\left[-\sqrt{n|\bH|^{1/2}}\delta_n,\sqrt{n|\bH|^{1/2}}\delta_n\right]$. To prove the last line of the lemma,
  since
  \begin{align*}
    \frac{d\bb{E}\ffnH(\bx+tu_{\bx})}{dt}<-\frac{l}{2},
  \end{align*}
  for all $t\in[-\delta_n,\delta_n]$ and all $\bx\in\beta_{\tau}$,
  \begin{align*}
    \frac{d\bb{E}\ffnH\lp\bx+\frac{t}{\sqrt{n|\bH|^{1/2}}}u_{\bx}\rp}{dt}
    \le- \frac{l}{2\sqrt{n|\bH|^{1/2}}},
  \end{align*}
  for all $\bx\in\beta_{\tau}$ and all
  $t\in[-\sqrt{n|\bH|^{1/2}}\delta_n,\sqrt{n|\bH|^{1/2}}\delta_n]$. Then
  by first order Taylor expansion, it is easy to get when $t\in
  I_{\bx}^n$,
  \begin{align*}
    \left|\bb{E}\lb\ffnH\lp\bx+\frac{t}{\sqrt{n|\bH|^{1/2}}}u_
    {\bx}\rp-\fftaun\rb\right|\ge
    \frac{l}{2\sqrt{n|\bH|^{1/2}}}\left|t-t_{\bx}^{\ast}\right|,
  \end{align*}
  when $n$ is large enough.

  And then when $t\le 0$,
  \begin{align}
    \label{eq:hdr-step5-probbound}
    \begin{split}
      \MoveEqLeft\left|P\lb\ffnH\lp\bx^t\rp<\fftaun\rb-\one_{\{t>0\}}\right|\\&=  P\lb\ffnH\lp\bx^t\rp<\fftaun\rb\\
      &\le P\lb\fftaun-\ffnH\lp\bx^t\rp   +  \bb{E}\lp\ffnH\lp\bx^t\rp-\fftaun\rp\ge \frac{l}{2\sqrt{n|\bH|^{1/2}}}|t-t_{\bx}^{\ast}|\rb\\
      &\le
      P\lb\left|\ffnH\lp\bx^t\rp-\bb{E}\ffnH\lp\bx^t\rp\right|\ge
      \frac{l}{4\sqrt{n|\bH|^{1/2}}}|t-t_{\bx}^{\ast}| \rb\\
      &\qquad+P\lb\left|\fftaun-\bb{E}\fftaun\right|\ge
      \frac{l}{4\sqrt{n|\bH|^{1/2}}}|t-t_{\bx}^{\ast}|\rb
    \end{split}
  \end{align}
  and we can show the same bound for $t>0$.
  Since
  \begin{align*}
    &\Var\ffnH(\bx)=\\&\quad n^{-1}\left[|\bH|^{-1/2}\int
                        K(\bs{z})f\lp\bx-\bH^{1/2}\bs{z}\rp\,d\bs{z}-\left\{\int K(\bs{z})f\lp\bx-\bH^{1/2}\bs{z}\rp\,d\bs{z}\right\}^2\right],
  \end{align*}
  $\Var\ffnH(\bx)$ is uniformly $O(n^{-1}|\bH|^{-1/2})$.
  And we know
  $\Var\fftaun$ is also $o(n^{-1}|\bH|^{-1/2})$ from Lemma \ref{lem:ftaun-var-exp}. Then
  there exists $C_2>0$ such that \eqref{eq:hdr-step5-probbound} can be
  further bounded as
  \begin{align*}
    \MoveEqLeft\left|P\lb\ffnH\lp\bx^t\rp<\fftaun\rb-\one_{\{t>0\}}\right|\\
 &\le P\lb\left|\frac{\ffnH\lp\bx^t\rp-\bb{E}\ffnH\lp\bx^t\rp}{\Var\ffnH\lp\bx^t\rp}\right|\ge
   C_2|t-t_{\bx}^{\ast}|
   \rb\\&\qquad+P\lb\left|\frac{\fftaun-\bb{E}\fftaun}{\Var\fftaun}\right|\ge
          C_2|t-t_{\bx}^{\ast}|\rb\\
 &\le \frac{2C_2}{(t-t_{\bx}^{\ast})^2}
  \end{align*}
  for all $\bx\in\beta_{\tau}, t\in \bigcup_{\bx\in
    \beta_{\tau}}I_{\bx}^n$ by
  Chebyshev inequality. And note that $(t-t_{\bx}^{\ast})^2\ge
  t_n^2$ for all $t\in \bigcup_{\bx\in
    \beta_{\tau}}I_{\bx}^n$. So
  $\one_{I_{\bx}^n}\cdot|P\{\ffnH(\bx^t)<\fftaun\}-\one_{\{t>0\}}|$ converges to 0 uniformly
  in $t$ and is dominated by $\max\{1/(t-t_{\bx}^{\ast})^2,1\}$ which is
  a integrable function over $\RR$.   Then by Dominate Convergence
  Theorem, we have
  \begin{align*}
    \int_{I_{\bx}^n}\left|P\lb\ffnH\lp\bx^t\rp<\fftaun\rb-\one_{\{t>0\}}\right|\,dt\rightarrow 0,
  \end{align*}
  as $n\to \infty$. Also note that $
  \int_{I_{\bx}^n}|P\{\ffnH(\bx^t)<\fftaun\}-\one_{\{t>0\}}|\,dt\le \int
  \max\{1/t^2,1\}\,dt$ for all $\bx\in
  \beta_{\tau}$. So we have
  \begin{align*}
    \int_{\beta_\tau}\int_{I_{\bx}^n}\left|P\lb\ffnH\lp\bx^t\rp<\fftaun\rb-\one_{\{t>0\}}\right|\,dtd\mc{H}(\bx)\to
    0,
  \end{align*}
  as $n\rightarrow \infty$.
\end{proof}
\begin{mylongform}
  \begin{longform}
    \begin{proof}[Proof of Lemma~\ref{lem:ftaun-cov-exp}]
      First
      note from (\ref{eq:fftaun-ftau}) that
      \begin{align*}
        \Cov(\ffnH(\bx^t),\fftaun)&\approx w_0\cdot\left[\Cov\left\{\ffnH(\bx^t),\int_{\beta_\tau}\frac{\ffnH}{\|\nabla
                                    f_0\|}\,d\mc{H}\right\}\right.\\
                                  &\qquad\left.+\Cov\left\{\ffnH(\bx^t),\inv{f_{\tau,0}}\int_{\mc{L}_{\tau}}\ffnH\,d\lambda\right\}\right].
      \end{align*}
      Since $\Var\ffnH(\bx)=O(n^{-1}|\bH|^{-1/2})$,  and by the proof of
      Lemma \ref{lem:ftaun-var-exp} $\Var\int_{\mc{L}_{\tau}}\ffnH\,d\lambda=O(n^{-1})$, it is easy to see that
      $\Cov\left\{\ffnH(\bx^t),\int_{\mc{L}_{\tau}}\ffnH\,d\lambda\right\}=o(n^{-1}|\bH|^{-1/2})$ by Cauchy-Schwarz
      inequality.
      For $\Cov\left\{\ffnH(\bx^t),\int_{\beta_\tau}\frac{\ffnH}{\|\nabla
          f_0\|}\,d\mc{H}\right\}$, since it is equal to
      \begin{align*}
        \MoveEqLeft\inv{n^2}\Cov\left\{\sum_{i=1}^nK_{\bH}\lp\bx^t-\bX_i\rp,\sum_{i=1}^n\int_{\beta_\tau}\frac{K_{\bH}(\bs{y}-\bX_i)}{\|\nabla
        f_0(\bs{y})\|}\,d\mc{H}(\bs{y})\right\}\\
     &=\inv{n}\Cov\left\{K_{\bH}\lp\bx^t-\bX_i\rp,\int_{\beta_\tau}\frac{K_{\bH}(\bs{y}-\bX_i)}{\|\nabla
       f_0(\bs{y})\|}\,d\mc{H}(\bs{y})\right\},
      \end{align*}
      and
      \begin{align*}
        \MoveEqLeft  \Cov\left\{K_{\bH}\lp\bx^t-\bX_i\rp,\int_{\beta_\tau}\frac{K_{\bH}(\bs{y}-\bX_i)}{\|\nabla
        f_0(\bs{y})\|}\,d\mc{H}(\bs{y})\right\}\\&=\bb{E}\left\{K_{\bH}\lp\bx^t-\bX_i\rp\int_{\beta_\tau}\frac{K_{\bH}(\bs{y}-\bX_i)}{\|\nabla
                                                   f_0(\bs{y})\|}\,d\mc{H}(\bs{y})\right\}
        \\&\qquad-\bb{E}K_{\bH}\lp bx^t-\bX_i\rp\bb{E}\int_{\beta_\tau}\frac{K_{\bH}(\bs{y}-\bX_i)}{\|\nabla
            f_0(\bs{y})\|}\,d\mc{H}(\bs{y}),
      \end{align*}
      it is easy to see that both $\bb{E}K_{\bH}\lp\bx^t-\bX_i\rp$ and
      $\bb{E}\int_{\beta_\tau}\frac{K_{\bH}(\bs{y}-\bX_i)}{\|\nabla f_0(\bs{y})\|}\,d\mc{H}(\bs{y})$
      converge to constants, so
      $\inv{n}\bb{E}K_{\bH}\lp\bx^t-\bX_i\rp\bb{E}\int_{\beta_\tau}\frac{K_{\bH}(\bs{y}-\bX_i)}{\|\nabla
        f_0(\bs{y})\|}\,d\mc{H}(\bs{y})=O(n^{-1})=o(n^{-1}|\bH|^{-1/2})$. For the
      expectation of the product term,
      \begin{align*}
        \MoveEqLeft  \bb{E}\left\{K_{\bH}\lp\bx^t-\bX_i\rp\int_{\beta_\tau}\frac{K_{\bH}(\bs{y}-\bX_i)}{\|\nabla
        f_0(\bs{y})\|}\,d\mc{H}(\bs{y})\right\}\\
     &
       =\int_{\beta_\tau}\frac{\bb{E}K_{\bH}\lp\bx^t-\bX_i\rp K_{\bH}(\bs{y}-\bX_i)}{\|\nabla
       f_0(\bs{y})\|}\,d\mc{H}(\bs{y})\\
     &=\int_{\beta_\tau}\inv{\|\nabla f_0\|}\int
       K_{\bH}\lp\bx^t-\bs{z}\rp K_{\bH}(\bs{y}-\bs{z})f_0(\bs{z})\,d\bs{z}\,d\mc{H}(\bs{y}).
      \end{align*}
      By change of variable,
      \begin{align*}
        \MoveEqLeft  \int
        K_{\bH}\lp\bx^t-\bs{z}\rp K_{\bH}(\bs{y}-\bs{z})f_0(\bs{z})\,d\bs{z}\\
     &=\inv{|\bH|^{1/2}}\int
       K\lp \bH^{-1/2}(\bx^t-\bs{y})+\bs{w}\rp K(\bs{w})f_0\lp\bs{y}-\bH^{1/2}\bs{w}\rp\,d\bs{w}.
      \end{align*}
      We know that $\|\bH^{-1/2}(\bx^t-\bs{y})\|\ge
      \|\bx^t-\bs{y}\|/\sqrt{\lambda_{\max}(\bH)}$.
      By Assumption~\ref{assm:DA}, $K$ is density
      function with compact support, so there exits $r>0$ such that $K(w)=0$
      for all $\|w\|>r$. Then we have
      \begin{align*}
        \MoveEqLeft \int
        K\lp \bH^{-1/2}(\bx^t-\bs{y})+\bs{w}\rp K(\bs{w})f_0\lp\bs{y}-\bH^{1/2}\bs{w}\rp\,d\bs{w}\\
     &=\int\one_{\lb\|\bx^t-\bs{y}\|/\sqrt{\lambda_{\max}(\bH)}\le 2r\rb}
       K\lp \bH^{-1/2}(\bx^t-\bs{y})+\bs{w}\rp K(\bs{w})f_0\lp\bs{y}-\bH^{1/2}\bs{w}\rp\,d\bs{w}\\
     &\le \one_{\lb\|\bx^t-\bs{y}\|/\sqrt{\lambda_{\max}(\bH)}\le
       2r\rb}\|K\|_{\infty}\|f_0\|_{\infty}\\
     &\le \one_{\lb\|\bx^t-\bs{y}\| \le
       2r\sqrt{\lambda_{\max}(\bH)}\rb}\|K\|_{\infty}\|f_0\|_{\infty}.
      \end{align*}
      We know for any
      $t\in\left[-\sqrt{n|\bH|^{1/2}}\delta_n,\sqrt{n|\bH|^{1/2}}\delta_n\right]$,
      $\|\bx^t-\bs{y}\|\ge \|\text{proj}_{\bx^t}(\bs{y})-\bs{y}\|$, where
      $\text{proj}_{\bx^t}(\bs{y})$ denote the projection of $\bs{y}$ on
      the normal vector of $\beta_{\tau}$ at $\bs{x}$. Now we can further have
      \begin{align*}
        \MoveEqLeft  \inv{n}\bb{E}\left\{K_{\bH}(\bx^t-\bX_i)\int_{\beta_\tau}\frac{K_{\bH}(\bs{y}-\bX_i)}{\|\nabla
        f_0(\bs{y})\|}\,d\mc{H}(\bs{y})\right\}\\
       &\le\frac{\|K\|_{\infty}\|f_0\|_{\infty}}{n|\bH|^{1/2}}
         \int_{\beta_{\tau}}\one_{\{\|\text{proj}_{\bx^t}(\bs{y})-\bs{y}\|\le 2r\sqrt{\lambda_{\max}(\bH)}\}}\,d\cH(\bs{y}),
      \end{align*}
      and by Dominate convergence
      $\int_{\beta_{\tau}}\one_{\{\|\text{proj}_{\bx^t}(\bs{y})-\bs{y}\|\le
        2r\sqrt{\lambda_1}\}}\,d\cH(\bs{y})\rightarrow
      0$ \KR{(uniformly in $\bx$?)}, as $\bH\rightarrow 0$. So $ \inv{n}\bb{E}\left\{K_{\bH}(\bx^t-\bX_i)\int_{\beta_\tau}\frac{K_{\bH}(\bs{y}-\bX_i)}{\|\nabla
          f_0(\bs{y})\|}\,d\mc{H}(\bs{y})\right\}=o(n|\bH|^{1/2})$ uniformly
      in $t$ \KR{(and also uniformly in $\bx$?)}.

      \KR{
        we can see from the above integral, if $\bx\ne \bs{y}$, since $K$ and $f_0$ are both bounded,  $\int_{\RR^{d}}
        K(\bH^{-1/2}(\bx^t-\bs{y})-\bs{w})K(\bs{w})f(\bs{y}-\bH^{1/2}\bs{w})\,d\bs{w}\rightarrow
        0$ as $\bH\rightarrow 0$, by Dominated Convergence theorem. So we
        have
        \begin{align*}
          \limsup_{n\rightarrow n} \int_{\RR^{d}}
          K(\bH^{-1/2}(\bx^t-\bs{y})-\bs{w})K(\bs{w})f(\bs{y}-\bH^{1/2}\bs{w})\,d\bs{w}= \|K\|_{\infty}^2\|f_0\|_{\infty}\one_{\{\bx=\bs{y}\}},
        \end{align*}
        and
        \begin{align*}
          \int_{\beta_{\tau}}\inv{\|\nabla f_0\|}\int
          K_{\bH}(\bx^t-\bs{z})K_{\bH}(\bs{y}-\bs{z})f_0(\bs{z})\,d\bs{z}\,d\mc{H}(\bs{y})\rightarrow 0,
        \end{align*}
        as $n\rightarrow \infty$, which indicate $\inv{n}\bb{E}\left\{K_{\bH}(\bx^t-\bX_i)\int_{\beta_\tau}\frac{K_{\bH}(\bs{y}-\bX_i)}{\|\nabla
            f_0(\bs{y})\|}\,d\mc{H}(\bs{y})\right\}=o(n^{-1}|\bH|^{-1/2})$. }

      To
      complete the proof, we just need to show that
      \begin{align*}
        \bb{E}\left[\lb\ffnH(\bx^t)-\bb{E}\ffnH(\bx^t)\rb\lb
        \fftaun-\bb{E}(\fftaun)\rb\one_{\{\|\ffnH-f_0\|_{\infty}+\|\nabla
        \ffnH-\nabla f_0\|>\eta\}}\right]
      \end{align*}
      is $o(n^{-1}|\bH|^{-1/2})$ for any $\eta>0$ and the proof is the same
      as in Lemma \ref{lem:ftaun-bias-exp} and \ref{lem:ftaun-var-exp}.
    \end{proof}
  \end{longform}
\end{mylongform}

\bibliographystyle{abbrvnat}
\bibliography{mybib}

\end{document}